% Refinement modal logic - journal submission
% (only) difference with 11 August version is section RML mu soundness (6.1) that is Tim's version 22 August plus Hans' edits 24 August, and a typo p -> x on page 6

\documentclass[12pt]{article} 

\usepackage{a4wide}

\usepackage{graphics,graphicx,epsfig} 
\usepackage{amsfonts, amsmath, amssymb, wasysym}
\usepackage{pstricks, pst-node, pst-tree}
\usepackage{stmaryrd}

\usepackage[thmmarks]{ntheorem}
\theoremsymbol{\ensuremath{\dashv}}
\usepackage{newproof}  % to have qed symbol, newproof must be after ntheorem...

\newtheorem{theorem}{Theorem}
\newtheorem{example}[theorem]{Example}
\newtheorem{definition}[theorem]{Definition}

\newtheorem{proposition}[theorem]{Proposition}

\newtheorem{corollary}[theorem]{Corollary}

\newtheorem{lemma}[theorem]{Lemma}
\newtheorem{Cla}[theorem]{Claim}

%added for Sophie
\newtheorem{remark}[theorem]{Remark}

\renewcommand{\phi}{\varphi}
\newcommand{\Et}{\bigwedge}
\newcommand{\Vel}{\bigvee}

\newcommand{\Agents}{A}
\newcommand{\Group}{B}
\newcommand{\Atoms}{P}

\newcommand{\States}{S}
\newcommand{\imp}{\rightarrow}

\newcommand{\Imp}{\Rightarrow}

\newcommand{\state}{s}
\newcommand{\agent}{a}
\newcommand{\atom}{p}
\newcommand{\atomb}{q}
\newcommand{\lang}{\mathcal{L}}
\newcommand{\bisrel}{\ensuremath{\mathfrak{R}}}
\newcommand{\T}{\top}
\newcommand{\proves}{\vdash}
\newcommand{\eq}{\leftrightarrow}
\newcommand{\bisim}{\simeq}
\newcommand{\et}{\wedge}
\newcommand{\vel}{\vee}
\newcommand{\agentb}{b}
\newcommand{\stateb}{t}

\newcommand{\G}{\forall}
\newcommand{\F}{\exists}

\newcommand{\B}{\bot}
\newcommand{\M}{\raisebox{-1pt}{$\Diamond$}}
\newcommand{\know}{\raisebox{-1pt}{$\Box$}}
\newcommand{\knows}{\know}
\newcommand{\susp}{\raisebox{-1pt}{$\Diamond$}}
\newcommand{\simul}{\preceq}
\newcommand{\lumis}{\succeq}

\newcommand{\covers}{\nabla}

\newcommand{\amodel}{\ensuremath{\mathsf{M}}}
\newcommand{\arel}{\ensuremath{\mathsf{R}}}
\newcommand{\Actions}{\ensuremath{\mathsf{S}}}
\newcommand{\action}{\ensuremath{\mathsf{s}}}

\newcommand{\actionb}{\ensuremath{\mathsf{t}}}

\newcommand{\pre}{\mathsf{pre}}

\newcommand{\weg}[1]{}
\newcommand{\logicK}{{\mathsf{K}}}

\newcommand{\logicRML}{{\mathsf{RML}}}
\newcommand{\axiomRML}{{\mathbf{RML}}}
\renewcommand{\phi}{\varphi}
\newcommand{\bqall}{{\tilde{\forall}}}
\newcommand{\bqis}{{\tilde{\exists}}}

%% to get tilde on bisim quantifiers

\newcommand{\Dia}{\Diamond}
\newcommand{\dia}[1]{\langle #1 \rangle}
\newcommand{\union}{\cup}

\newcommand{\Eq}{\Leftrightarrow}

%%%%%%%%%%%%%%%%%%%%%%%%%%%% NEW MACRO INTRODUCED BY LAURA %%%%%%%%%%%%%%%%%%%%%%%%%%

\usepackage{xspace}

\newcommand{\CTL}{{\mathsf{CTL}}}
\newcommand{\CTLfrag}{{\mathsf{CTL^-}}}
\newcommand{\QLTL}{{\mathsf{QPTL}}}
\newcommand{\RCTL}{\mathsf{CTL_\G^-}}

\newcommand{\AGR}{{\mathsf {AG}}} %%universal always
\newcommand{\EGR}{{\mathsf {EG}}} %%existential always
\newcommand{\EFR}{{\mathsf {EF}}} %%existential eventually
\newcommand{\AFR}{{\mathsf {AF}}} %%universal eventually
\newcommand{\eventually}{{\mathsf {F}}} %%eventually
\newcommand{\always}{{\mathsf {G}}} %% always
\newcommand{\Next}{{\mathsf {X}}} %%next

% Complexity classes

 \def\EXPSPACE{{\sc Expspace}}

\def\words{\text{{\sffamily words}}\xspace}
\def\proj{\text{{\sffamily proj}}\xspace}
%%%%%%%%%%%%%%%%%%%%%%%%%%%% END MACRO OF LAURA %%%%%%%%%%%%%%%%%%%%%%%%%%

% sophie 14 09 11
\newcommand{\statec}{u}   %%% @@ WAS r and q, MADE INTO u and v, 27 Jan
\newcommand{\stated}{v}

\newcommand{\calG}[1]{\mathcal{G}_{#1}}

\newcommand{\Spoiler}{\text{\tt Spoiler}}
\newcommand{\Duplicator}{\text{\tt Duplicator}}
\newcommand{\refrel}{\ensuremath{\mathfrak{R}}}

\newcommand{\AG}{{\bf A\!G}}       

\newcommand{\translate}[1]{\widetilde{#1}}

\newcommand{\CTLtoLmu}{\tau}
\newcommand{\seqQ}{\chi}
\newcommand{\ad}{\delta}
\newcommand{\weq}[1]{\equiv_{#1}} % words equivalence

%%% various Laura commands

\def\TWOEXPTIME{{\sc 2Exptime}}

\def\code{\text{{\sffamily code}}\xspace}
\def\Traces{\text{{\sffamily Traces}}\xspace}

\def\B{{\cal B}}
\def\N{{\mathbb{N}}}
\newcommand{\PSAA}{{\mathsf{PSAA}}}
\newcommand \tpl[1]{\langle #1 \rangle}
\newcommand{\true}{\texttt{true}}
\newcommand{\false}{\texttt{false}}

\newcommand{\atomc}{r}

% to make Laura's quantifiers different from the others (added by Hans)

\newcommand{\qis}{\overline{\exists}}
\newcommand{\qall}{\overline{\forall}}

% a last one from JELIA
\def\LINAEXPTIME{{\sffamily AEXP$_{\text{\sffamily pol}}$}}

% new stuff from Hans November 2013

\newcommand{\Union}{\bigcup}
\newcommand{\agentc}{c}

\begin{document}

\title{Refinement Modal Logic}

\author{Laura Bozzelli\thanks{Inform\'atica, Universidad Polit\'ecnica de Madrid, Spain, laura.bozzelli@fi.upm.es}, Hans van Ditmarsch\thanks{LORIA, CNRS -- Universit\'e de Lorraine, France, hans.van-ditmarsch@loria.fr}, Tim French\thanks{Computer Science and Software Engineering, University of Western Australia, tim@csse.uwa.edu.au}, James Hales\thanks{Computer Science and Software Engineering, University of Western Australia, james@csse.uwa.edu.au}, Sophie Pinchinat\thanks{IRISA, University of Rennes, Sophie.Pinchinat@irisa.fr}}

\date{\today}

\maketitle

\begin{abstract} 
In this paper we present {\em refinement modal logic}. A refinement is like a bisimulation, except that from the three relational requirements only `atoms' and `back' need to be satisfied. Our logic contains a new operator $\G$ in addition to the standard modalities $\Box$ for each agent. The operator $\G$ acts as a quantifier over the set of all refinements of a given model. As a variation on a bisimulation quantifier, this refinement operator or {\em refinement quantifier} $\G$ can be seen as quantifying over a variable not occurring in the formula bound by it. The logic combines the simplicity of multi-agent modal logic with some powers of monadic second-order quantification. We present a sound and complete axiomatization of multi-agent refinement modal logic. We also present an extension of the logic to the modal $\mu$-calculus, and an axiomatization for the single-agent version of this logic. Examples and applications are also discussed: to software verification and design (the set of agents can also be seen as a set of actions), and to dynamic epistemic logic. We further give detailed results on the complexity of satisfiability, and on succinctness.

\end{abstract}

%\begin{keyword}
%Bisimulation Quantifier, Modal Logic, Temporal Epistemic Logic, Multi-Agent System 
%\end{keyword}

\tableofcontents

\section{Introduction}

Modal logic is frequently used for modelling knowledge in multi-agent
systems. The semantics of modal logic uses the notion of ``possible
worlds'', between which an agent is unable to distinguish. In dynamic
systems agents acquire new knowledge (say by an announcement, or the
execution of some action) that allows agents to distinguish between worlds
that they previously could not separate. From the agent's point of view,
what were ``possible worlds'' become inconceivable. Thus, a future
informative event may be modelled by a reduction in the agent's
accessibility relation.  In \cite{hvdetal.loft:2009} the {\em future
  event logic} is introduced. It augments the multi-agent logic of
knowledge with an operation $\G \phi$ that stands for ``$\phi$
holds after all informative events'' --- the diamond version $\F \phi$
stands for ``there is an informative event after which $\phi$.'' The proposal was a generalization of a so-called arbitrary public announcement logic with an operator for ``$\phi$ holds after all announcements'' \cite{balbianietal:2008}. The semantics of informative events encompasses action model
execution \`a la Baltag et al.\ \cite{baltagetal:1998}: on finite models,
it can be easily shown that a model resulting from action model
execution is a refinement of the initial model, and for a given
refinement of a model we can construct an action model such that the result of its execution is bisimilar to that refinement. In \cite{hvdetal.felax:2010} an axiomatization of the single-agent version of this logic is presented, and also expressivity and complexity results. These questions were visited in both the context of modal logic, and of the modal $\mu$-calculus.

In the original motivation, the main operator $\F$ had a rather temporal sense --- therefore the `future event' name. However, we have come to realize that the structural transformation that interprets this operator is of much more general use, on many very different kinds of modal logic, namely anywhere where more than a mere model restriction or pruning is required. We have therefore come to call this the refinement operator, and the logic refinement modal logic.

Thus we may consider {\em refinement modal logic} to be a more abstract perspective of future event logic \cite{hvdetal.loft:2009} applicable to other modal logics. To {\em any} other modal logic!
This is significant in that it motivates the application of the new
operator in many different settings. In logics for games \cite{pauly:2001,alur98} or in control theory
\cite{ramadge89,tsitsiklis89}, it may correspond to a player
discarding some moves; for program logics \cite{hareletal:2000} it may
correspond to operational refinement \cite{Morgan94}; and for logics for spatial reasoning it may correspond to sub-space projections \cite{parikhetal:2007}.

\begin{quote}
{\em 
Let us give an example. Consider the following structure. The $\circ$ state is the designated point. The arrows can be associated with a modality.

\psset{border=2pt, nodesep=4pt, radius=2pt, tnpos=a}
\pspicture(0,-.5)(9,.5)
$
\rput(4.5,0){\rnode{3}{\circ}}
\rput(6,0){\rnode{4}{\bullet}}
\rput(7.5,0){\rnode{5}{\bullet}}
\rput(9,0){\rnode{6}{\bullet}}
\ncline{->}{3}{4}
\ncline{->}{4}{5}
\ncline{->}{5}{6}
$
\endpspicture

\noindent E.g., $\Dia\Dia\Dia\Box\bot$ is true in the point. From the point of view of the modal language, this structure is essentially the same structure (it is bisimilar) as

\psset{border=2pt, nodesep=4pt, radius=2pt, tnpos=a}
\pspicture(0,-.5)(9,.5)
$
\rput(0,0){\rnode{0}{\bullet}}
\rput(1.5,0){\rnode{1}{\bullet}}
\rput(3,0){\rnode{2}{\bullet}}
\rput(4.5,0){\rnode{3}{\circ}}
\rput(6,0){\rnode{4}{\bullet}}
\rput(7.5,0){\rnode{5}{\bullet}}
\rput(9,0){\rnode{6}{\bullet}}
\ncline{<-}{0}{1}
\ncline{<-}{1}{2}
\ncline{<-}{2}{3}
\ncline{->}{3}{4}
\ncline{->}{4}{5}
\ncline{->}{5}{6}
$
\endpspicture

\noindent This one also satisfies $\Dia\Dia\Dia\Box\bot$ and any other modal formula for that matter. A more radical structural transformation would be to consider submodels, such as

\psset{border=2pt, nodesep=4pt, radius=2pt, tnpos=a}
\pspicture(0,-.5)(9,.5)
$
\rput(4.5,0){\rnode{3}{\circ}}
\rput(6,0){\rnode{4}{\bullet}}
\rput(7.5,0){\rnode{5}{\bullet}}
\ncline{->}{3}{4}
\ncline{->}{4}{5}
$
\endpspicture

A distinguishing formula between the two is $\Dia\Dia\Box\bot$, which is true here and false above. Can we consider other `submodel-like' transformations that are {\em neither} bisimilar structures {\em nor} strict submodels? Yes, we can. Consider

\psset{border=2pt, nodesep=4pt, radius=2pt, tnpos=a}
\pspicture(0,-.5)(9,.5)
$
\rput(3,0){\rnode{2}{\bullet}}
\rput(4.5,0){\rnode{3}{\circ}}
\rput(6,0){\rnode{4}{\bullet}}
\rput(7.5,0){\rnode{5}{\bullet}}
\ncline{<-}{2}{3}
\ncline{->}{3}{4}
\ncline{->}{4}{5}
$
\endpspicture

It is neither a submodel of the initial structure, nor is it bisimilar. It satisfies the formula $\Dia\Dia\Box\bot\et\Dia\Dia\Dia\Box\bot$ that certainly is false in any submodel. We call this structure a refinement (or `a refinement of the initial structure'), and the original structure a simulation of the latter. Now note that if we consider the three requirements `atoms', `forth', and `back' of a bisimulation, that `atoms' and `back' are satisfied but not `forth', e.g., from the length-three path in the original structure the last arrow has no image. There seems to be still some `submodel-like' relation with the original structure. Look at its bisimilar duplicate (the one with seven states). The last structure is a submodel of that copy. Such a relation always holds: a refinement of a given structure can always be seen as the model restriction of a bisimilar copy of the given structure. This work deals with the semantic operation of refinement, as in this example, in full generality, and also applied to the multi-agent case.}
\end{quote}

Previous works \cite{faginetal:1995,lomuscioetal:1998b} employed a notion of refinement. In
\cite{lomuscioetal:1998b} it was shown that model restrictions were
not sufficient to simulate informative events, and they introduced
{\em refinement trees} for this purpose --- a precursor of the dynamic epistemic logics developed later (for an overview, see \cite{hvdetal.del:2007}). This usage of refinement as a more general operation than model restriction is similar to ours.

%@hans part on modal specificiations reduced, as also treated in separate subsection 

In formal methods literature, see e.g.\ \cite{woodcocketal:1996}, {\em refinement of datatypes} is considered such that (datatype) $C$ refines $A$ if $A$ simulates $C$. This usage of refinement as the converse of simulation \cite{aczel:1988,blackburnetal:2001} comes close to ours --- in fact, it inspired us to propose a similar notion, although the correspondence is otherwise not very close. A similar usage of refinement as in \cite{woodcocketal:1996} is found in \cite{alur98alternating,AHLNW-08}. In the theory of modal specifications a refinement preorder is used, known as \emph{modal refinement} \cite{Raclet2007b,ryanetal:2002}. Modal specifications  are deterministic automata equipped with {\emph{may}}-transitions and {\emph{must}}-transitions. A must-transition is available in every component that implements the modal specification, while a may-transition need not be.  This is close to
our definition of refinement, as it also is some kind of submodel quantifier, but the two notions are incomparable, because `must' is a subtype of `may'.

We incorporate implicit quantification over informative events directly into the language using, again, a notion of {\em refinement}; also in our case a refinement is the converse of simulation. Our work is closely related to some recent work on bisimulation quantified modal logics
\cite{dagostinoetal:2008,french:2006}. The refinement operator, seen as refinement quantifier, is weaker than a bisimulation quantifier \cite{hvdetal.loft:2009}, as it is {\em only} based on simulations rather than bisimulations, and as it {\em only} allows us to vary the interpretation of a propositional variable that does not occur in the formula bound by it. Bisimulation quantified modal logic has
previously been axiomatized by providing a provably correct
translation to the modal $\mu$-calculus \cite{dagostinoetal:2005}. This is reputedly a very complicated one. The axiomatization for the refinement operator, in stark contrast, is quite simple and elegant.

\paragraph*{Overview of the paper} 

Section \ref{sec.tech} gives a wide overview of our technical apparatus: modal logic, cover logic, modal $\mu$-calculus, and bisimulation quantified logic.  Section \ref{sec.refinement} introduces the semantic operation of refinement. This includes a game and (modal) logical characterization. Then, in Section \ref{sec-synt-sem}, we introduce two logics with a refinement quantifier that is interpreted with the refinement relation: refinement modal logic and refinement $\mu$-calculus. Section \ref{sec-FEL} contains the axiomatization of that refinement modal logic and the completeness proof. We demonstrate that it is equally expressive as modal logic. We mention results for model classes ${\mathcal KD}45$ and ${\mathcal S}5$. Section \ref{sec-FEL-mu} gives the axiomatization of refinement $\mu$-calculus. Again, we have a reduction here, to standard $\mu$-calculus. In Section \ref{sec-complexity} we show that, although the use of refinement quantification does not change the expressive power of the logics, they do make each logic exponentially more succinct. We give a non-elementary complexity bound for refinement modal $\mu$-calculus.

%@@Sophie says: model checking is not addressed@@

\section{Technical preliminaries} \label{sec.tech}

Throughout the paper we assume a finite set of agents $\Agents$ and a countable set of propositional variables $\Atoms$ as background parameters when defining the structures and the logics. Agents are named $\agent,\agentb,\agent',\agentb',\dots$, and propositional variables are $\atom,\atomb,\atomc,\atom',\atom'',\atom_1,\atom_2,\dots$. Agent $\agent$ is assumed female, and $\agentb$ male.

\paragraph{Structures}
%The structures are standard multi-agent Kripke models.
%\begin{definition}[Structures]
A {\em model} $M = (\States, R, V)$ consists of a {\em domain} $\States$ of (factual) {\em states} (or {\em worlds}), an {\em accessibility} function $R : \Agents \imp {\mathcal P}(\States \times \States)$, and a {\em valuation} $V: \Atoms \imp {\mathcal P}(\States)$. States are $\state,\stateb,\statec,\stated,\state',\dots,\state_1,\dots$ A pair consisting of a model $M$ (with domain $\States$) and a state $\state\in\States$ is called a {\em pointed model}, for which we write $M_\state$. 
%\end{definition}
For $R(\agent)$ we write $R_\agent$; accessibility function $R$ can be seen as a set of accessibility relations $R_\agent$, and $V$ as a set of valuations $V(\atom)$. Given two states $\state,\state'$ in the
domain, $R_\agent(\state,\state')$ means that in state $\state$ agent
$\agent$ considers $\state'$ a possibility. We will also use a relation $R_\agent$ simply as a set of pairs $\subseteq \States \times \States$, and use the abbreviation $\state R_\agent = \{\stateb\in\States \mid (\state,\stateb)\in R_\agent\}$. As we will be often required to discuss several models at once, we will use the convention that $M = (\States^M, R^M, V^M)$, $N = (\States^N, R^N, V^N)$, etc. The class of all models (given parameter sets of agents $\Agents$ and propositional variables $\Atoms$) is denoted ${\mathcal K}$. The class of all models where for all agents the accessibility relation is reflexive, transitive and symmetric is denoted ${\mathcal S}5$, and the model class with a serial, transitive and euclidean accessibility relation is denoted ${\mathcal KD}45$. %, and the class with a transitive accessibility relation is denoted ${\mathcal K}4$. 

The {\em restriction} $M'$ of a model $M$, notation $M' \subseteq M$, is a model $M' = (\States', R', V')$ such that $\States'\subseteq\States$, for each $\agent\in\Agents$, $R'_\agent = R_\agent \cap (\States' \times \States')$, and for each $\atom\in\Atoms$, $V'(\atom) = V(\atom) \cap \States'$.

\paragraph{Multi-agent modal logic}

%Assume a countable set of agents $\Agents$ and a countable set of propositional variables $\Atoms$. 
%We define the language of multi-agent modal logic, and provide the semantics.
 
%\begin{definition}[Language $\lang$]
The {\em language $\lang$ of multi-agent modal logic} is inductively defined as
\[ \begin{array}{l}
\phi ::= \atom \ | \ \neg \phi \ | \ (\phi \et \phi) \ | \ \know_\agent \phi \end{array} \]
where $\agent\in\Agents$ and $\atom\in\Atoms$. Without the construct $\know_\agent \phi$ we get the {\em language $\lang_0$ of propositional logic}. 
%\end{definition}
Standard abbreviations are: 
$\phi \vel \psi$ iff $\neg(\neg\phi\et\neg\psi)$, 
$\phi\imp\psi$ iff $\neg\phi\vel \psi$, $\T$ iff $\atom\vel\neg\atom$, $\bot$ iff $\atom\et\neg\atom$, and $\M_\agent\phi$ iff $\neg \know_\agent \neg \phi$. If there is a single agent only ($|\Agents| = 1$), we may write $\know \phi$ instead of $\know_\agent \phi$. \weg{, and the language is then called $\lang^1$ (similarly for various extensions of our languages, in later sections).} Formula variables are $\phi,\psi,\chi,\phi',\dots,\phi_1,\dots$ and for sets of formulas we write $\Phi,\Psi,\dots$ For a finite set $\Phi$ of $\lang$ formulas we let the {\em cover} operator $\covers_\agent \Phi$ be an abbreviation for $\know_\agent\bigvee_{\phi\in\Phi}\phi\land\bigwedge_{\phi\in\Phi}\susp_\agent \phi$; we note $\bigvee_{\phi\in \emptyset}\phi$ is always false, whilst $\bigwedge_{\phi\in\emptyset} \phi$ is always true. 

Let a finite set of formulas $\Psi = \{\psi_1, \dots, \psi_n \}$ and a formula $\phi$ with possible occurrences of a propositional variable $\atom$ be given. Let $\phi[\psi\backslash\atom]$ denote the substitution of all occurrences of $\atom$ in $\phi$ by $\psi$. Then $\phi[\Psi\backslash\atom]$ abbreviates $\{\phi[\psi_1\backslash\atom], \dots, \phi[\psi_n\backslash\atom] \}$, and similarly $\Vel \phi[\Psi\backslash\atom]$ stands for $\phi[\psi_1\backslash\atom] \vel \dots \vel \phi[\psi_n\backslash\atom]$ and $\Et \phi[\Psi\backslash\atom]$ stands for $\phi[\psi_1\backslash\atom] \et \dots \et \phi[\psi_n\backslash\atom]$. For example, $\Diamond_\agent \Phi$ abbreviates
$\{\Diamond_\agent \phi \mid \phi \in \Phi\}$, and the definition of $\covers_\agent \Phi$, above, is then written as $\know_\agent\bigvee\Phi\land\bigwedge\susp_\agent \Phi$.

%\begin{definition}[Semantics of $\logicK$] \label{def:satisfaction1}

We now define the semantics of modal logic.
Assume a model $M = (\States, R, V)$. The interpretation of $\phi\in\lang$ is defined by induction.
\[ \begin{array}{l}
M_\state \models \atom \ \mbox{iff} \  \state \in V_\atom \\

M_\state \models \neg \phi \ \mbox{iff} \  M_\state \not \models \phi \\
M_\state \models \phi \et \psi \ \mbox{iff}  \  M_\state \models \phi \text{ and } M_\state \models \psi \\

M_\state \models \know_\agent \phi \ \mbox{iff} \  \text{for all } \stateb \in \States: (\state, \stateb) \in R_\agent \text{ implies } M_\stateb \models \phi
\end{array} \]
A formula $\phi$ is valid on a model $M$, notation $M \models \phi$, iff for all $\state\in\States$, $M_\state \models \phi$; and $\phi$ is valid iff $\phi$ is valid on all $M$ (in the model class ${\mathcal K}$, given agents $\Agents$ and basic propositions $\Atoms$). The set of validities, i.e., the {\em logic} in the stricter sense of the word, is called $\logicK$.
%\end{definition}

\paragraph{Cover logic} \label{sec.coverlogic}

The cover operator $\covers$ has also been used as a syntactic primitive in modal logics \cite{dagostinoetal:2005}. It has recently been axiomatized \cite{bilkovaetal:2008}. %We will later use a particular cover logic to simplify our proofs, which warrants a separate introduction here.
%\begin{definition}
The language $\lang_\covers$ of cover logic is defined as $$ \phi \ ::= \ \atom \mid \neg \phi \mid (\phi \et \phi) \mid \covers_\agent \{ \phi, \dots, \phi \}, $$ where $\atom \in\Atoms$, and $\agent \in \Agents$.
%\end{definition}
The semantics of $\covers_\agent \Phi$ is the obvious one if we recall
our introduction by abbreviation of the cover operator: %\begin{definition} 
\begin{quote}
$M_s \models \covers_\agent \Phi$ iff for all $\phi\in\Phi$ there is a $t \in sR_\agent$ such that $M_t \models \phi$, and for all $t \in sR_\agent$ there is a $\phi\in\Phi$ such that $M_t \models \phi$. 
\end{quote}
%\end{definition}  
The set of validities of cover logic is called ${\mathsf K}_\covers$. The conjunction of two cover formulas is again equivalent to a cover formula: $$\covers_\agent \Phi \et \covers_\agent \Psi \ \ \Eq \ \  \covers_\agent ((\Phi\et\Vel\Psi)\union (\Psi\et\Vel\Phi)) \ . $$ The modal box and diamond are definable as $\Box_\agent \phi$ iff $\covers_\agent \emptyset \vel \covers_\agent \{ \phi \}$, and $\Diamond_\agent \phi$ iff $\covers_\agent \{ \phi, \T \}$, respectively. Cover logic ${\mathsf K}_\covers$ is equally expressive as modal logic ${\mathsf K}$ (also in the multi-agent version) \cite{bilkovaetal:2008,kupkeetal:2008}. We use cover operators in the presentation of the axioms.

\paragraph{Modal $\mu$-calculus}

For the modal $\mu$-calculus, apart from the set of propositional variables $\Atoms$ we have another parameter set $X$ of variables to be used in the fixed-point construction. The language $\lang^\mu$ of modal $\mu$-calculus is defined as follows.
%\begin{definition}
\[ \begin{array}{l} \phi ::= x \mid \atom \ | \ \neg \phi \ | \ (\phi \et \phi) \ | \ \know_\agent \phi \ | \ \mu x.\phi \end{array} \]
where $\agent\in\Agents$, $x \in X$, $\atom\in\Atoms$,
and where in $\mu x.\phi$ the variable $x$ only occurs positively (i.e. in the scope of an even number of negations) in the formula $\phi$.
%\end{definition} 
We will refer to a variable $x$ in an expression $\mu x.\phi$ as a {\em fixed-point variable}. The formula $\nu x.\phi$ is an abbreviation for $\neg\mu x.\neg\phi[\neg x\backslash x]$. Here, we extend the notion of substitution to modal $\mu$-calculus by ruling out the substitution of bound variables, i.e., to give the crucial clauses: $(\mu x.\phi)[\psi\backslash x] = \mu x.\phi$ whereas $(\mu x.\phi)[\psi\backslash y] = \mu x. \phi[\psi\backslash y]$.

For the semantics of the $\mu$-calculus, the valuation $V$ of propositional variables is extended to include fixed-point variables. We write $V^{[x \mapsto T]}$ for the operation that changes a given valuation $V$ into one wherein $V(x) = T$ (where $T \subseteq \States$) and the valuation of all other fixed-point and propositional variables remains the same. Given a model $M = (S,R,V)$, we similarly write $M^{[x \mapsto T]}$ for the model $M = (S,R,V^{[x \mapsto T]})$. The semantics of $\mu x.\phi$ (the top-down presentation, not the bottom-up presentation) is now as follows: %\begin{definition} \label{def-fixedpoint} 
Let $\phi \in \lang^\mu$ and model $M$ be given. \[ M_s \models \mu x.\phi \text{ iff } s \in \bigcap \{ T \subseteq S \mid \{ u \mid M_u^{[x \mapsto T]} \models \phi \} \subseteq T \} \] 
%\end{definition}

\paragraph{Disjunctive formula}

An important technical definition we require later on is that of a disjunctive formula.
%\begin{definition}[Disjunctive formula]
%\label{def:df}
A {\em disjunctive $\lang^\mu$ formula} is specified by the following abstract syntax:
%\[ \begin{array}{lcl} 
\begin{equation} \label{def:df} 
%\phi_0 & ::= & \atom \ | \ \neg \atom \ | \ \phi_0 \et \phi_0 \\
\phi \ ::= \ x \ | \ (\phi \lor \phi) \ | \ (\phi_0 \et \Et_{\agent\in\Group} \covers_\agent \{\phi, \dots, \phi\}) \ | \ \mu x.\phi \ |\ \nu x.\phi  
\end{equation}
%\end{array} \]
where $x \in X$, $\phi_0 \in \lang_0$ (propositional logic), and $\Group\subseteq\Agents$. To get the {\em disjunctive $\lang$ formula} (of modal logic) we omit the clauses containing $\mu$-calculus variables $x$:
\[ \begin{array}{lcl} 
\phi & ::= & (\phi \lor \phi) \ | \ (\phi_0 \et \Et_{\agent\in\Group} \covers_\agent \{\phi, \dots, \phi\}). \end{array} \]
If the context of the logic is clear, we simply write {\em disjunctive formula} (or {\em df}).
%\end{definition}
If $\Group=\emptyset$, we have that $\Et_{\agent\in\Group} \covers_\agent \{\phi_1, \dots, \phi_n\} = \top$, as expected.
%\begin{proposition} \label{prop:dnf} \
\begin{subequations} % @@
\begin{equation}
\text{Every $\lang^\mu$ formula is equivalent to a disjunctive $\lang^\mu$ formula \cite{janin96}.} \label{prop:dnf}
\end{equation}
\begin{equation}
\text{Every $\lang$ formula is equivalent to a disjunctive $\lang$ formula \cite{venema:2012}.} \label{prop.yde}
\end{equation}
\end{subequations}
%\end{proposition}

\paragraph{Bisimulation quantified modal logic}

The language $\lang_\bqall$ is defined as
\[ \begin{array}{l} \phi ::= \atom \ | \ \neg \phi \ | \ (\phi \et \phi) \ | \ \know_\agent \phi \ | \ \bqall \atom \phi \end{array} \]
where $\agent\in\Agents$ and $\atom\in\Atoms$.
We let $\bqis \atom\phi$ abbreviate $\neg \bqall \atom\neg\phi$. We write $\bqall$ and $\bqis$ for the bisimulation quantifiers in order to distinguish them from the {\em refinement} quantifiers $\forall$ and $\exists$, to be introduced later. Given an atom $\atom$ and a formula $\phi$, the expression $\bqis \atom\phi$ means that there exists a denotation of propositional variable $\atom$ such that $\phi$. It is interpreted as follows (restricted bisimulation $\bisim^\atom$ is introduced further below in Definition \ref{def.bisim}):
\[ M_s\models \bqall \atom\phi \text{ iff for all } N_t \text{ such that } N_t \bisim^\atom M_s, N_t\models\phi \] 
In \cite[Lemma 2.43]{french:2006} a bisimulation quantifier characterization of fixed points is given (the details of which are deferred to Section \ref{sec-FEL-mu} on refinement $\mu$-calculus, where they are pertinent), and from \cite{dagostinoetal:2000} we know that bisimulation quantifiers are also expressible in the modal $\mu$-calculus. For more information on the modal $\mu$-calculus, see \cite{dagostinoetal:2005,venema:2012}.

\section{Refinement} \label{sec.refinement}

In this section we define the notion of structural refinement, investigate its properties, give a game characterization in (basic) modal logic, and compare refinement to bisimulation and other established semantic notions in the literature.

\subsection{Refinement and its basic properties}

%@hans more details on simulation
\begin{definition}[Bisimulation, simulation, refinement] \label{def.bisim}
Let two models $M= (\States, R, V)$ and $M'= (\States', R', V')$ be given. A non-empty relation $\bisrel \subseteq \States
\times \States'$ is a bisimulation if for all $(\state,\state') \in
\bisrel$ and $\agent\in\Agents$:
\begin{description}
\item[atoms] $\state \in V(p)$ iff $\state' \in V'(p)$ for all $p \in \Atoms$;
\item[forth-$\agent$] for all $\stateb \in \States$, if
$R_\agent(\state,\stateb)$, then there is a $\stateb'\in
\States'$ such that $R'_\agent(\state',\stateb')$ and
$(\stateb,\stateb') \in \bisrel$;
\item[back-$\agent$] for all $\stateb' \in \States'$,
if $R'_\agent(\state',\stateb')$, then there is a
$\stateb \in \States$ such that $R_\agent(\state,\stateb)$
and $(\stateb,\stateb') \in \bisrel$.
\end{description}
We write $M \bisim M'$ ($M$ and $M'$ are bisimilar) iff there is a bisimulation between $M$ and $M'$, and we write $M_\state \bisim M'_{\state'}$ ($M_\state$ and $M'_{\state'}$ are bisimilar) iff there is a bisimulation between $M$ and $M'$ linking $\state$ and $\state'$. A {\em restricted bisimulation} ${\mathfrak R}^\atom: M_\state \bisim^\atom M'_{\state'}$ is a bisimulation that satisfies atoms for all variables except $\atom$. A {\em total bisimulation} is a bisimulation such that all states in the domain and codomain occur in a pair of the relation.

A relation ${\mathfrak R}_\Group$ that satisfies {\bf atoms},  {\bf back-$\agent$}, and {\bf forth-$\agent$} for every $\agent\in\Agents\setminus\Group$, and that satisfies {\bf atoms}, and {\bf back-$\agentb$} for every $\agentb\in\Group$, is a $\Group$-{\em refinement}, we say that $M'_{\state'}$ {\em refines} $M_\state$ for group of agents $\Group$, and we write $M_\state \lumis_\Group M'_{\state'}$.\footnote{We will overload the meaning of refinement and also say that $M'_{\state'}$ is a {\em refinement} of $M_\state$} An $\Agents$-{\em refinement} we call a {\em refinement} (plain and simple) and for $\{\agent\}$-{\em refinement} we write $\agent$-{\em refinement}. 

Dually to refinement, we similarly define $\Group$-{\em simulation} ${\mathfrak R}_\Group$. I.e., a relation ${\mathfrak R}_\Group$ that satisfies {\bf atoms},  {\bf back-$\agent$}, and {\bf forth-$\agent$} for every $\agent\in\Agents\setminus\Group$, and that satisfies {\bf atoms}, and {\bf forth-$\agentb$} for every $\agentb\in\Group$, is a $\Group$-{\em simulation}. 

Restricted refinement and restricted simulation are defined similarly to restricted bisimulation.
\end{definition}

The definition of simulation varies slightly from the one given by Blackburn {\em et al.} \cite[p.110]{blackburnetal:2001}. Here we ensure that simulations (and refinements) preserve the interpretations (i.e., the truth and falsity) of atoms, whereas \cite{blackburnetal:2001} has them only preserve the truth of propositional variables in a simulation---and presumably preserve their falsity in a refinement. We prefer to preserve the entire interpretation, as we feel it suits our applications better. For example, in the case where refinement represents information change, we would not wish basic facts to become false in the process. The changes are supposed to be merely of information, and not factual. Another, inessential, difference with \cite{blackburnetal:2001} is that in their case {\bf atoms} and {\bf forth} are required for all modalities (in the similarity type), i.e., they consider ${\mathfrak R}_\Group$ for $\Group = \Agents$ only. 

If ${\mathfrak R}_\Group: M_\state \lumis_\Group M'_{\state'}$ is a $\Group$-refinement, then the converse relation ${\mathfrak R}^-_\Group ::= \{ (\state,\state') \mid (\state',\state) \in {\mathfrak R}_\Group \}$ is a $\Group$-simulation, and if $M'_{\state'}$ refines $M_\state$ then we can also say that $M_\state$ simulates $M'_{\state'}$.

In an epistemic setting a refinement corresponds to the diminishing uncertainty of agents. This means that there is a potential decrease in the number of states and transitions in a model. On the other hand, the number of states as a consequence of refinement may also increase, because the uncertainty of agents over the extent of decreased uncertainty in other agents may still increase. This is perhaps contrary to the concept of program refinement \cite{Morgan94} where detail is added to a specification. However, in program refinement the added detail requires a more detailed state space (i.e., extra atoms) and as such is more the domain of bisimulation quantifiers, rather than refinement quantification. Still, the consequence of program refinement is a more deterministic system which agrees with the notion of diminishing uncertainty.  

\begin{proposition}\label{lem:lumis}
The relation $\lumis_\agent$ is reflexive and transitive (a pre-order), and satisfies the Church-Rosser property.
\end{proposition}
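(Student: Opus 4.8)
The plan is to prove the three properties of $\lumis_\agent$ separately, each by exhibiting an explicit refinement relation.

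\paragraph{Reflexivity.} First I would show $M_\state \lumis_\agent M_\state$ by checking that the identity relation $\Delta = \{(\state,\state) \mid \state \in \States\}$ is an $\{\agent\}$-refinement. Indeed, \textbf{atoms} holds trivially since each state has the same valuation as itself; \textbf{forth-$\agentb$} and \textbf{back-$\agentb$} hold for all $\agentb \neq \agent$ because if $R_\agentb(\state,\stateb)$ then $R_\agentb(\state,\stateb)$ and $(\stateb,\stateb)\in\Delta$; and \textbf{back-$\agent$} holds for the same reason. (Note we do \emph{not} need \textbf{forth-$\agent$}, which is exactly why reflexivity is unproblematic.)

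\paragraph{Transitivity.} Suppose $\refrel : M_\state \lumis_\agent N_\stateb$ and $\refrel' : N_\stateb \lumis_\agent O_\statec$. I would form the relational composition $\refrel'' = \refrel' \circ \refrel = \{(\state_1,\state_3) \mid \exists \state_2\ (\state_1,\state_2)\in\refrel \text{ and } (\state_2,\state_3)\in\refrel'\}$, which is non-empty since $(\state,\statec)\in\refrel''$, and verify it is an $\{\agent\}$-refinement from $M_\state$ to $O_\statec$. \textbf{Atoms} propagates through the intermediate state. For $\agentb \neq \agent$, \textbf{forth-$\agentb$} and \textbf{back-$\agentb$} follow by the usual two-step chase (apply \textbf{forth} of $\refrel$ then \textbf{forth} of $\refrel'$, and symmetrically for \textbf{back}). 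For the \textbf{back-$\agent$} clause: given $R^O_\agent(\statec,\statec')$, apply \textbf{back-$\agent$} of $\refrel'$ to get $\stateb'$ with $R^N_\agent(\stateb,\stateb')$ and $(\stateb',\statec')\in\refrel'$, then \textbf{back-$\agent$} of $\refrel$ to get $\state'$ with $R^M_\agent(\state,\state')$ and $(\state',\stateb')\in\refrel$, hence $(\state',\statec')\in\refrel''$. This is routine diagram-chasing; the only thing to be careful about is that we only ever invoke \textbf{back-$\agent$} (never \textbf{forth-$\agent$}), so the argument stays within what is guaranteed.

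\paragraph{Church--Rosser.} This is the step I expect to be the main obstacle, since it requires \emph{constructing} a common refinement rather than just composing given relations. Suppose $M_\state \lumis_\agent N_{\stateb}$ and $M_\state \lumis_\agent O_{\statec}$; I need a pointed model $P_{\state''}$ with $N_{\stateb}\lumis_\agent P_{\state''}$ and $O_{\statec}\lumis_\agent P_{\state''}$. The natural candidate is a restricted/fibred product of $N$ and $O$ over $M$: take the state space to be the set of triples $(\state_1,\state_2,\state_3) \in \States^M\times\States^N\times\States^O$ such that $\state_1$ is $\refrel$-related to $\state_2$ and $\refrel'$-related to $\state_3$ (where $\refrel,\refrel'$ witness the two given refinements), with valuation inherited (consistently, by \textbf{atoms}), and with $\agent$-edges required to project to $\agent$-edges of \emph{both} $N$ and $O$ \emph{and} to an $\agent$-edge of $M$, while $\agentb$-edges ($\agentb\neq\agent$) are defined so that the projections are bisimulations. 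Then one shows the projection to $N$ is an $\agent$-refinement and likewise for $O$. The delicate points are: (i) verifying \textbf{back-$\agent$} for the projections — given an $\agent$-edge out of the $N$-component, one uses \textbf{back-$\agent$} of $\refrel$ (to travel in $M$) and then \textbf{forth-$\agent$} of $\refrel'$ (which holds since $\refrel'$ is an $\agent$-refinement, and forth for $\agent$ is \emph{not} required of an $\agent$-refinement — so one must instead proceed more carefully, possibly defining $P$ as the full restricted product over all of $\States^M$ rather than fibring, or appealing to the remark in the introduction that a refinement is the restriction of a bisimilar copy); (ii) checking non-emptiness; and (iii) making sure the $\agentb$-structure for $\agentb\neq\agent$ genuinely yields bisimulations on the nose. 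An alternative, cleaner route I would consider is to use the logical characterization of refinement proved later in Section~\ref{sec.refinement} together with a model-theoretic argument, but since Church--Rosser is stated here before that machinery, the explicit product construction is the intended proof, and I would push that through, isolating the edge-lifting for agent $\agent$ as the one lemma needing genuine care.
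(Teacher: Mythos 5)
Your reflexivity and transitivity arguments are correct and are exactly the paper's: the identity relation for the former, relational composition plus the diagram chase you describe for the latter. The problem is Church--Rosser, which you correctly anticipate as the main obstacle and then do not actually prove: you end by listing three possible repairs (``proceed more carefully'', switch to a full product, or appeal to an informal remark) without committing to or verifying any of them. Moreover, the difficulty you flag is partly a red herring. To show that the projection of your product $P$ onto $N$ witnesses $N_\stateb \lumis_\agent P_{\state''}$, the clause you must check for agent $\agent$ is \textbf{back-$\agent$}: every $\agent$-edge \emph{of $P$} must project to an $\agent$-edge of $N$, and that holds by construction, since you defined $P$'s $\agent$-edges to project to $\agent$-edges of all three components. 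What you were trying to do --- lift an $\agent$-edge \emph{out of} the $N$-component into $P$ --- is \textbf{forth-$\agent$}, which an $\agent$-refinement does not require, so the missing \textbf{forth-$\agent$} of $\refrel'$ is never needed. With that sorted, your fibred product can in fact be pushed through (the $\agentb$-clauses for $\agentb\neq\agent$ follow by the two-step chase through $M$ using \textbf{back-$\agentb$} of one witness and \textbf{forth-$\agentb$} of the other), but as written the proposal leaves the decisive step unestablished.

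The paper's proof avoids the product entirely. If $N_\stateb$ and $O_\statec$ are both $\agent$-refinements of $M_\state$, then composing the converse of one witnessing relation with the other yields a relation between $N_\stateb$ and $O_\statec$ satisfying \textbf{atoms} and both \textbf{forth-$\agentb$} and \textbf{back-$\agentb$} for every $\agentb\in\Agents\setminus\{\agent\}$ (the two clauses for $\agent$ are lost in opposite directions, so nothing is claimed for $\agent$). Now take $P$ to be $N$ with the same states, valuation, and $\agentb$-accessibility for $\agentb\neq\agent$, but with $R_\agent^P=\emptyset$. Then $N_\stateb\lumis_\agent P_\stateb$ via the identity, and $O_\statec\lumis_\agent P_\stateb$ via the relation just constructed: \textbf{back-$\agent$} is vacuous because $P$ has no $\agent$-edges, \textbf{forth-$\agent$} is not required, and the remaining clauses are inherited. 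This one-line construction is what your sketch is missing.
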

\begin{proof}
Reflexivity follows from the observation that the identity relation satisfies {\bf atoms}, and {\bf back-$\agent$} and {\bf forth-$\agent$} for all agents $\agent$, and therefore also the weaker requirement for refinement. Similarly, given two $\agent$-refinements $\bisrel_1$, and $\bisrel_2$, we can see that their composition, $\{(x,z)\ |\ \text{there is a } y \text{ for which } (x,y)\in \bisrel_1,\ (y,z)\in \bisrel_2\}$ is also an $\agent$-refinement. This is sufficient to demonstrate transitivity. The Church-Rosser property states that if $N_t \lumis_\agent M_s$ and $N_t \lumis_\agent M'_{s'}$, then there is some model $N'_{t'}$ such that $M_s \lumis_\agent N'_{t'}$ and $M'_{s'} \lumis_\agent N'_{t'}$. From Definition~\ref{def.bisim} it follows that $M_s$ and $M'_{s'}$ must be bisimilar to one another with respect to $\Agents-\{\agent\}$.  We may therefore construct such a model $N'_{t'}$ by taking $M_s$ (or $M'_{s'}$) and setting $R_\agent^{N'} = \emptyset$ and $R_\agentb^{N'} = R_\agentb^M$ for all $\agentb\in\Agents-\{\agent\}$. It can be seen that $N'_{t'}$, where $N' = (\States^M,R^{N'}, V^M)$ and $t'=s$, satisfies the required properties.
\end{proof}

An elementary result is the following.
\begin{proposition}\label{lem:compost}
Let $\Group = \{\agent_1,...,\agent_n\}$, and let $M_s$ and $M_t$ be given. Then $M_s (\lumis_{\agent_1} \circ \dots \circ \lumis_{\agent_n}) M_t$ iff $M_s \lumis_\Group M_t$.
\end{proposition}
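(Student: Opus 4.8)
The plan is to prove both implications by chaining together or decomposing the underlying relations, using the two refinement conditions (\textbf{atoms} plus \textbf{back}) that define $\lumis_\agent$. Recall that $M_s \lumis_\agent M_t$ means there is a relation that is a full bisimulation with respect to every agent in $\Agents \setminus \{\agent\}$ and is merely a simulation (\textbf{atoms} and \textbf{back-$\agent$}) with respect to $\agent$; whereas $M_s \lumis_\Group M_t$ with $\Group = \{\agent_1,\dots,\agent_n\}$ means there is a single relation that is a full bisimulation with respect to every agent \emph{not} in $\Group$ and is merely a simulation with respect to \emph{each} $\agent_i \in \Group$.

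For the right-to-left direction, suppose $\refrel : M_s \lumis_\Group M_t$. I would argue that this very relation $\refrel$ witnesses each link in the composition, but the subtlety is that the composition $\lumis_{\agent_1} \circ \dots \circ \lumis_{\agent_n}$ passes through intermediate models. The clean approach is induction on $n$. For $n=1$ the statement is trivial. For the inductive step, observe that a $\{\agent_1,\dots,\agent_n\}$-refinement is in particular an $\{\agent_1,\dots,\agent_{n-1}\}$-refinement followed trivially (via the identity-style argument of Proposition~\ref{lem:lumis}) — more precisely, I would exhibit an intermediate model $N$ such that $M_s \lumis_{\{\agent_1,\dots,\agent_{n-1}\}} N_u$ and $N_u \lumis_{\agent_n} M_t$. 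The natural candidate for $N$ is a model built on the domain of $M$ (the target) but with the $\agent_n$-accessibility relation replaced so as to ``undo'' the simulation step for $\agent_n$; concretely, take $N$ to agree with $M$ (target) on all relations except $R_{\agent_n}$, which is enlarged along $\refrel$ so that the back-$\agent_n$ requirement for $\refrel$ is upgraded to forth-and-back. Then $\refrel$ itself witnesses $M_s \lumis_{\{\agent_1,\dots,\agent_{n-1}\}} N_u$ (it now behaves as a bisimulation for $\agent_n$ too, since we added exactly the missing forth-edges) while the identity on $M$'s domain, suitably read, witnesses $N_u \lumis_{\agent_n} M_t$. Applying the induction hypothesis to the first conjunct finishes this direction.

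For the left-to-right direction, suppose $M_s (\lumis_{\agent_1} \circ \dots \circ \lumis_{\agent_n}) M_t$, so there are models $M^0 = M, M^1, \dots, M^n = M$ (with appropriate points) and relations $\refrel_i : M^{i-1} \lumis_{\agent_i} M^i$. The idea is to compose: set $\refrel = \refrel_1 \circ \refrel_2 \circ \dots \circ \refrel_n$ and check it is a $\Group$-refinement from $M_s$ to $M_t$. The \textbf{atoms} condition is immediate since each $\refrel_i$ preserves the full interpretation of propositional variables. For an agent $\agentb \notin \Group$, every $\refrel_i$ satisfies both \textbf{forth-$\agentb$} and \textbf{back-$\agentb$}, and these properties are closed under relational composition (this is exactly the transitivity-of-bisimulation argument), so $\refrel$ is a bisimulation for $\agentb$. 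The point requiring care is an agent $\agent_j \in \Group$: along the composite we must verify \textbf{back-$\agent_j$}. Here $\refrel_j$ contributes only \textbf{back-$\agent_j$}, but every other $\refrel_i$ (for $i \neq j$) satisfies the \emph{full} bisimulation clauses for $\agent_j$ since $\agent_j \notin \{\agent_i\}$; in particular each such $\refrel_i$ satisfies \textbf{back-$\agent_j$}, so the composite satisfies \textbf{back-$\agent_j$} by the same closure-under-composition reasoning. We do not — and cannot — claim \textbf{forth-$\agent_j$} for the composite, which is consistent with $\refrel$ being only a $\Group$-refinement. Finally one checks $(s,t) \in \refrel$ by tracking the designated points through the composition.

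I expect the main obstacle to be the right-to-left direction, specifically the construction of the intermediate model $N$ in the inductive step: one must enlarge exactly the $\agent_n$-edges needed to turn the simulation witness into a bisimulation witness without disturbing the other agents' relations, and then verify that the (modified) identity relation on $M$'s domain really is an $\agent_n$-refinement in the reverse direction. This is essentially the same maneuver used in the Church–Rosser part of Proposition~\ref{lem:lumis} — exhibiting a model that sits ``above'' $M$ via a pure $\agent_n$-refinement by manipulating only the $\agent_n$-relation — so the argument there can be adapted. The left-to-right direction, by contrast, is a routine verification that the bisimulation/simulation clauses are preserved under composition, combined with bookkeeping of which agents are ``full'' at which step.
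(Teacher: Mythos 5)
The paper states this proposition without proof (it is introduced as ``an elementary result''), so there is no official argument to compare against; judging your proposal on its own terms, the left-to-right direction is correct and complete in outline: \textbf{atoms}, \textbf{back-$\agentb$} and \textbf{forth-$\agentb$} are each closed under relational composition, the bookkeeping of which agent is ``simulation-only'' at which step is exactly right, and $(s,t)$ lies in the composite because each $\refrel_i$ links the designated points. The right-to-left direction, however, has a genuine gap precisely at the step you yourself flag as the main obstacle.

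The intermediate model $N$ cannot in general be built \emph{on the domain of the target} by enlarging $R_{\agent_n}$ ``along $\refrel$''. When \textbf{forth-$\agent_n$} fails at a pair $(x,x')\in\refrel$ because of an edge $(x,y)\in R^{M}_{\agent_n}$, there need not exist any target state $y'$ with $(y,y')\in\refrel$ --- the state $y$ may simply have been dropped by the refinement --- and there need not even exist a target state agreeing with $y$ on atoms. For instance, let the source have $R_{\agent_n}=\{(s,y)\}$ with $y\in V(\atom)$, let the target be a single $\neg\atom$ point $t$ with no edges, and let $\refrel=\{(s,t)\}$; this is a legitimate $\Group$-refinement, but no edge added inside the target's one-point domain can restore \textbf{forth-$\agent_n$}, so the claim ``$\refrel$ now behaves as a bisimulation for $\agent_n$ too, since we added exactly the missing forth-edges'' fails. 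The repair is to enlarge the \emph{domain} of $N$: graft a fresh copy of the submodel of the source generated by each such missing successor $y$ onto the target, add the edge from $x'$ to that copy, and extend $\refrel$ by the identity on the grafted part --- this is exactly the construction used in the proof of Proposition~\ref{lemma.relatag2}, which is the right template here. Your appeal to the Church--Rosser step of Proposition~\ref{lem:lumis} points in the wrong direction: there one \emph{empties} $R_\agent$ to obtain a common refinement sitting below both models, whereas here one must \emph{add} states to obtain a model sitting between $M_s$ and the target. With that substitution the inductive step goes through and the rest of your argument (including the verification that the identity on the old states witnesses $N_u\lumis_{\agent_n}M_t$) is sound.
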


\begin{example} \label{exxx}
If $N_t \lumis_\agent M_s$ and $M_s \lumis_\agent N_t$, it is not necessarily the case that $M_s\bisim_\agent N_t$. For example, consider the one-agent models $M$ and $N$ where:
\begin{itemize}
\item $S^M = \{1,2,3\}$, $R_\agent^M = \{(1,2),(2,3)\}$ and $V^M(p) = \emptyset$ for all $p\in\Atoms$; and
\item $S^N = \{4,5,6,7\}$, $R_\agent^N = \{(4,5),(5,6),(4,7)\}$ and $V^M(p) = \emptyset$ for all $p\in\Atoms$.
\end{itemize} 
These two models are clearly not bisimilar, although $N_4\lumis_\agent M_1$ via $\{(4,1),(5,2),(6,3)\}$ and $M_1\lumis_\agent N_4$ via $\{(1,4),(2,5),(3,6),(2,7)\}$. See Figure \ref{fig.reffer}.
\end{example}

\begin{figure}[h]
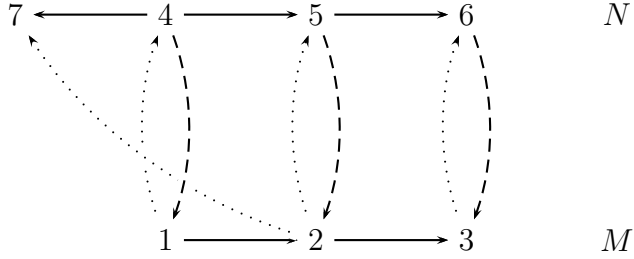

\psset{border=2pt, nodesep=4pt, radius=2pt, tnpos=a}
\pspicture(-.2,-.2)(6.2,3.2)
\rput(2,0){\rnode{1}{1}}
\rput(4,0){\rnode{2}{2}}
\rput(6,0){\rnode{3}{3}}
\rput(0,3){\rnode{7}{7}}
\rput(2,3){\rnode{4}{4}}
\rput(4,3){\rnode{5}{5}}
\rput(6,3){\rnode{6}{6}}
\rput(8,3){\rnode{n}{$N$}}
\rput(8,0){\rnode{m}{$M$}}
\ncline{->}{1}{2}
\ncline{->}{2}{3}
\ncline{->}{4}{5}
\ncline{->}{5}{6}
\ncline{->}{4}{7}
\ncarc[arcangle=20,linestyle=dotted]{->}{1}{4}
\ncarc[arcangle=20,linestyle=dotted]{->}{2}{5}
\ncarc[arcangle=20,linestyle=dotted]{->}{3}{6}
\ncarc[arcangle=20,linestyle=dotted]{->}{2}{7}
\ncarc[arcangle=20,linestyle=dashed]{->}{4}{1} 
\ncarc[arcangle=20,linestyle=dashed]{->}{5}{2} 
\ncarc[arcangle=20,linestyle=dashed]{->}{6}{3}
\endpspicture
\caption{Refinement and simulation, but no bisimulation}
\label{fig.reffer}
\end{figure}

Given that the equivalence $M_s \equiv N_t$ defined by $M_s \simul N_t$ and $M_s \lumis N_t$ is not a bisimulation, an interesting question seems to be {\em what} it then represents. It seems to formalize that two structures are only different in resolvable differences in uncertainty (for the agent of the refinement), but not in hard and necessary facts. So the positive formulas (for that agent) should be preserved under this `equivalence' $\equiv$. Such matters will now be addressed. 

%% new version from Sophie
%% new version from Sophie
\subsection{Game and logical characterization of refinement}
\label{sec-logicalcharacterization-of-ref}

It is folklore to associate a (infinite duration) two-player game with refinement, in the spirit of \cite{alur98alternating}. 

\begin{definition}[Refinement game]
\label{def-refinementgame}
Let $M_s$ and $N_t$ be two models. We define a turn-based game
$\calG{a}(M_s,N_t)$ between two players \Spoiler\ and \Duplicator\ (male and female, respectively) by
$\calG{a}(M_s, N_t)=(V, E, (s,t))$ where the set of positions $V$ is
partitioned into the positions $V_\Spoiler=S^M \times S^N$ of
\Spoiler\ and the positions $V_\Duplicator = S^M \times
[\{forth,back\}\times (\Agents \union P)] \times S^N$ of \Duplicator. Since the initial position $(s,t) \in V_\Spoiler$, \Spoiler\
starts. The set of moves $E \subseteq V_\Spoiler \times V_\Duplicator
\cup V_\Duplicator \times V_\Spoiler$ is the least set such that the
following pairs belong to $E$ (we take the convention that $\agentb
\neq \agent$, and for convenience, we name those moves with names
similar to the properties of refinement in
Definition~\ref{def.bisim}):
\begin{center}
\scalebox{0.9}{
\begin{tabular}{|c|}
\hline
\Spoiler's moves\\
\begin{tabular}{ll|l}
Move & & Name\\ 
\hline
$((s',t'),(s',(forth,p),t')) $ & whenever $s' \in V^M(p)$ & forth-p?   \\
$((s',t'),(s',(back,p),t')) $ & whenever $t' \in V^N(p)$ & back-p? \\
$((s',t'),(s'',(forth,\agentb),t'))$ & whenever $s'' \in R^M_\agentb(s')$  & forth-b?\\
$((s',t'),(s',(back,\agentb),t''))$ & whenever $t'' \in R^N_\agentb(t')$ & back-b?\\
$((s',t'),(s',(back,\agent),t''))$ & whenever $t'' \in R^N_\agent(t')$ & back-a?
\end{tabular}\\
\hline
\Duplicator's moves\\
\begin{tabular}{ll|l}
Move & & Name\\ 
\hline
$((s',(forth,p),t'),(s',t')) $ & whenever $t' \in V^N(p)$ & forth-p!\\
$((s',(back,p),t'),(s',t')) $ & whenever $s' \in V^M(p)$ & back-p!\\
$((s'',(forth,\agentb),t'),(s'',t''))$ & whenever $t'' \in R^N_\agentb(t')$& forth-b!\\
$((s',(back,\agentb),t''),(s'',t''))$ & whenever $s'' \in R^M_\agentb(s')$ & back-b!\\
$((s',(back,\agent),t''),(s'',t''))$ & whenever $s'' \in R^M_\agent(s')$ & back-a!
\end{tabular}\\
\hline
\end{tabular}}
\end{center}
\end{definition}
In the game $\calG{a}(M_s,N_t)$, a \emph{play} is a maximal (possibly infinite) sequence
of consecutive moves, or equivalently a maximal sequence of adjacentes
positions in the arena. The play is \emph{winning for \Duplicator}
if it is infinite or if it is finite and ends in position of \Spoiler, otherwise, the play ends in a position of \Duplicator\ and it is winning for \Spoiler.

A \emph{strategy of \Duplicator\ (resp.\ \Spoiler)} is a mapping
$\sigma : V^*V_\Duplicator \to V$ (resp.\ $\sigma : V^*V_\Spoiler \to
V$) which recommends which moves to choose after each prefix of a play.

A play is an \emph{outcome} of a strategy for \Duplicator\ (resp.\ \Spoiler)
if each time \Duplicator\ (resp.\ \Spoiler) had to play, she (resp.\
he) has selected the move recommended by her (resp.\ his) strategy.
A strategy is \emph{winning} if all its outcomes are winning.

\begin{remark}
\label{remark-parity}
One easily sees that the refinement game of
Definition~\ref{def-refinementgame} is a particular \emph{parity game}
\cite{mazala2002infinite}. Henceforth, according to
\cite{kusters2002memoryless}, the refinement game is
\emph{determined}\footnote{In each position, either \Duplicator\ or \Spoiler\ has
  a winning strategy from that position.}, and \emph{memoryless}\footnote{Strategies
  $\sigma$ that only take into account the current position in the
  game, instead of the entire prefix of the game that is currently
  played.} strategies suffice.
\end{remark}

\noindent Notice that there is no forth-a move in the game $\calG{a}(M_s,N_t)$, which captures the refinement relation between the structures:
\begin{lemma}
\label{lem-game}
$M_s \lumis_\agent N_t$ iff \Duplicator\ has a winning strategy in $\calG{a}(M_s,N_t)$.
\end{lemma}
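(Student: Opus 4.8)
The plan is to prove the equivalence by constructing, from each side, the object witnessing the other: from a refinement relation $\refrel_\agent$ (using the fraktur notation of Definition~\ref{def.bisim}) a winning strategy for \Duplicator, and from a winning strategy a refinement relation. Throughout I will use the observation highlighted just before the lemma statement: the game $\calG{a}(M_s,N_t)$ has moves mirroring every clause of a $\{\agent\}$-refinement \emph{except} forth-$\agent$, so a \Duplicator-position is reachable exactly when \Spoiler\ can challenge one of the conditions that an $\agent$-refinement must satisfy.

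First, the direction from left to right. Assume $\refrel_\agent : M_s \lumis_\agent N_t$, i.e. $(s,t) \in \refrel_\agent$ and $\refrel_\agent$ satisfies \textbf{atoms}, \textbf{forth-$\agentb$} and \textbf{back-$\agentb$} for every $\agentb \neq \agent$, and \textbf{back-$\agent$}. I would define \Duplicator's strategy so as to maintain the invariant that whenever it is \Spoiler's turn at a position $(s',t')$, we have $(s',t') \in \refrel_\agent$. The invariant holds initially. For the inductive step, consider each type of \Spoiler\ move from a position $(s',t')$ with $(s',t') \in \refrel_\agent$: a forth-$p$? or back-$p$? challenge is answered using \textbf{atoms}; a forth-$\agentb$? challenge (a successor $s'' \in R^M_\agentb(s')$) is answered by \textbf{forth-$\agentb$}, which supplies $t'' \in R^N_\agentb(t')$ with $(s'',t'') \in \refrel_\agent$, so \Duplicator\ plays forth-$\agentb$! to $(s'',t'')$; a back-$\agentb$? challenge is answered by \textbf{back-$\agentb$}; and a back-$\agent$? challenge (a successor $t'' \in R^N_\agent(t')$) is answered by \textbf{back-$\agent$}, giving $s'' \in R^M_\agent(s')$ with $(s'',t'') \in \refrel_\agent$. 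In every case \Duplicator\ has a legal reply restoring the invariant. A key point worth spelling out: the \emph{absence} of a forth-$\agent$? move is exactly what makes this work — we never need \textbf{forth-$\agent$}, which an $\agent$-refinement need not satisfy. Since the invariant is preserved forever, \Duplicator\ is never stuck, so every play following this strategy is infinite and hence winning.

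Second, the direction from right to left. Suppose \Duplicator\ has a winning strategy $\sigma$. Let $W \subseteq S^N \times S^M$ be the set of \Spoiler-positions $(s',t')$ that are reachable from $(s,t)$ in some $\sigma$-conforming play (equivalently, the set of positions from which \Duplicator\ still has a winning strategy in the subgame). Set $\refrel_\agent = \{ (s', t') \mid (s',t') \in W \}$ (writing the pair in the $S^M \times S^N$ order required by the definition of refinement). Then $(s,t) \in \refrel_\agent$. I claim $\refrel_\agent$ is an $\agent$-refinement. Take $(s',t') \in \refrel_\agent$. For \textbf{atoms}: if $s' \in V^M(p)$, \Spoiler\ can play forth-$p$?; since \Duplicator\ wins, she must have a legal forth-$p$! reply, which by the move definition requires $t' \in V^N(p)$ — and symmetrically via back-$p$? for the converse, so the valuations agree on $s'$, $t'$. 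For \textbf{forth-$\agentb$}, $\agentb \neq \agent$: given $s'' \in R^M_\agentb(s')$, \Spoiler\ plays forth-$\agentb$? to $(s'',(forth,\agentb),t')$; \Duplicator's winning reply is some forth-$\agentb$! move to $(s'',t'')$ with $t'' \in R^N_\agentb(t')$, and $(s'',t'') \in W$, so $(s'',t'') \in \refrel_\agent$. The cases \textbf{back-$\agentb$} and \textbf{back-$\agent$} are symmetric, using back-$\agentb$? and back-$\agent$? respectively. Since there is no forth-$\agent$? move, no forth-$\agent$ condition is imposed, which is consistent with $\refrel_\agent$ being merely an $\agent$-refinement. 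Hence $\refrel_\agent : M_s \lumis_\agent N_t$.

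The argument is essentially a routine unwinding, so there is no deep obstacle; the part that most needs care is the right-to-left direction, specifically justifying that \Duplicator's strategy being \emph{winning} (not merely non-losing at one step) lets us conclude that the successor \Spoiler-positions it produces are again in $W$ — i.e. that $W$ is closed under the relevant \Spoiler-then-\Duplicator two-step transitions. This is where one should be slightly careful about the order of quantifiers over \Spoiler's choices and \Duplicator's responses; phrasing $W$ as "positions from which \Duplicator\ has a winning strategy" and invoking the standard determinacy-free fact that this set is a \Duplicator-trap makes it clean. A secondary minor point is bookkeeping the coordinate order, since the game writes positions as $(s,t) \in S^N \times S^M$ while Definition~\ref{def.bisim} writes refinement pairs as $(s,s') \in S \times S'$ with $M_s \lumis M'_{s'}$; I would fix this convention once at the start to avoid confusion.
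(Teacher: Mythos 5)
Your proof is correct and follows essentially the same route as the paper's: from a refinement you have \Duplicator\ maintain \Spoiler's positions inside the refinement relation, and from a winning strategy you read off the relation as the set of \Spoiler-positions reachable under it (the paper additionally invokes memoryless determinacy for safety games, which your reachability argument does not actually need). Your version is more detailed than the paper's, and your closing remarks about the coordinate-order bookkeeping and the closure of $W$ are apt; the only quibble is that ``reachable under $\sigma$'' and ``positions from which \Duplicator\ wins'' are not literally the same set, though either one works as the witnessing relation.
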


\begin{proof}
  Assume \Duplicator\ has a winning strategy $\sigma$ in
  $\calG{a}(M_s,N_t)$. By Remark~\ref{remark-parity} and without loss
  of generality, this winning $\sigma$ can be taken to be memoryless.
  Namely, $\sigma : V_\Duplicator \to V_\Spoiler$. % that tells her how to play.
% On the basis of $\sigma$, one can define the
%   binary relation $F_\sigma \subseteq S^M\times S^N$ as the set of
%   pairs $(s',t')$ such that, in the game $\calG{a}(M_s,N_t)$, position
%   $(s',t') \in V_\Spoiler$ is reachable when \Duplicator\ follows her
%   strategy $\sigma$. 
  Now, define the binary relation $\refrel_\sigma \subseteq S^M\times S^N$ as the
  set of pairs $(s',t')\in V_\Spoiler$ that are reachable when
  \Duplicator\ follows her strategy $\sigma$.  Then it is easy to
  check that $\refrel_\sigma$ is an $a$-refinement from $M_s$ to $N_t$.
  Also it is not difficult to see that if some $a$-refinement
  $\refrel_\agent$ exists from $M_s$ to $N_t$, then any strategy of
  \Duplicator\ which maintains \Spoiler's positions in
  $\refrel_\agent$, is winning. Note that by Definition~\ref{def.bisim} 
of a refinement, this is always possible for
  her.
\end{proof}

We now consider a characterization of the refinement in terms of
the logic $\lang_\G$. Namely, given an agent $\agent$, we define the
fragment of the $a$-positive formulas $\lang^{a+} \subseteq \lang$ by
\[ \begin{array}{l}
\lang^{a+} \ \ni \ \phi ::= \atom \ | \ \neg \atom \ | \ (\phi \et \phi) \ | \ (\phi \vel \phi) \ | \ \know_\agentb \phi \ | \ \M_\agentb \phi \ | \
\M_\agent \phi \end{array} \]
where $\agentb\in\Agents\setminus\{\agent\}$ and $\atom\in\Atoms$.

\begin{proposition}
\label{prop-logicalcharacterization-of-ref}
For any finitely branching (every state has only finitely many successors) pointed 
models $M_{s_0}$ and $N_{t_0}$, and for any agent $\agent \in \Agents$,
\begin{center}
$M_{s_0} \lumis_\agent N_{t_0} $ if, and only if, for every $\phi \in \lang^{a+}$, $N_{t_0} \models \phi$ implies $M_{s_0} \models \phi$.
\end{center}
\end{proposition}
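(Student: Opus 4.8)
The plan is to prove the two directions separately, using the game characterization of Lemma~\ref{lem-game} for the left-to-right direction and a contrapositive saturation argument for the right-to-left direction.

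\textbf{From refinement to preservation.} Suppose $M_{s_0} \lumis_\agent N_{t_0}$, witnessed by an $\agent$-refinement $\refrel$. I would argue by structural induction on $\phi \in \lang^{a+}$ that for all $(s,t) \in \refrel$, $N_t \models \phi$ implies $M_s \models \phi$. The atomic cases $\atom$ and $\neg\atom$ are handled by \textbf{atoms}; conjunction and disjunction are immediate. For $\know_\agentb\phi$ with $\agentb \neq \agent$: if $M_s \models \M_\agentb\neg\phi$ via some $s' \in sR^M_\agentb$, then by \textbf{forth-$\agentb$} there is $t' \in tR^N_\agentb$ with $(s',t') \in \refrel$, and since $\phi \in \lang^{a+}$ implies $\neg\phi$ is $\agent$-negative\,---\,actually here it is cleaner to phrase the induction directly on $\lang^{a+}$ formulas and use that $\M_\agentb\psi$, $\M_\agent\psi$, $\know_\agentb\psi$ are all in the fragment, so I push the implication through each modality using the appropriate clause: $\know_\agentb$ and $\M_\agentb$ use \textbf{forth-$\agentb$}/\textbf{back-$\agentb$}, and the crucial clause $\M_\agent\psi$ uses \textbf{back-$\agent$}. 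Concretely, if $N_t \models \M_\agent\psi$ there is $t' \in tR^N_\agent$ with $N_{t'} \models \psi$; by \textbf{back-$\agent$} there is $s' \in sR^M_\agent$ with $(s',t') \in \refrel$, and the induction hypothesis gives $M_{s'} \models \psi$, hence $M_s \models \M_\agent\psi$. Note this direction does not need finite branching.

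\textbf{From preservation to refinement.} This is the direction where finite branching is used and is the main obstacle. I would prove the contrapositive: if $M_{s_0} \not\lumis_\agent N_{t_0}$, exhibit a formula $\phi \in \lang^{a+}$ with $N_{t_0} \models \phi$ but $M_{s_0} \not\models \phi$. The candidate relation $\refrel = \{(s,t) \mid \text{every } \phi \in \lang^{a+} \text{ true at } N_t \text{ is true at } M_s\}$ contains $(s_0,t_0)$ by assumption on the contrapositive's failure, so it suffices to show $\refrel$ is an $\agent$-refinement; then Lemma~\ref{lem-game}'s companion (just the definition of refinement) finishes. \textbf{Atoms} holds since $\atom$ and $\neg\atom$ both lie in $\lang^{a+}$. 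For \textbf{forth-$\agentb$} ($\agentb\neq\agent$): given $(s,t)\in\refrel$ and $s' \in sR^M_\agentb$, suppose toward a contradiction that $(s',t')\notin\refrel$ for every $t'\in tR^N_\agentb$; by finite branching $tR^N_\agentb = \{t_1,\dots,t_k\}$ is finite, so for each $i$ pick $\psi_i \in \lang^{a+}$ with $N_{t_i}\models\psi_i$ but $M_{s'}\not\models\psi_i$. Then $\know_\agentb(\psi_1\vee\dots\vee\psi_k) \in \lang^{a+}$ is true at $N_t$ but (via $s'$) false at $M_s$, contradicting $(s,t)\in\refrel$. \textbf{Back-$\agent$} is the symmetric argument: given $t' \in tR^N_\agent$, if $(s',t')\notin\refrel$ for every $s'\in sR^M_\agent$, use finite branching of $sR^M_\agent = \{s_1,\dots,s_m\}$ to pick for each $j$ a formula $\psi_j\in\lang^{a+}$ true at $N_{t'}$ but false at $M_{s_j}$; then $\M_\agent(\psi_1\wedge\dots\wedge\psi_m)\in\lang^{a+}$ (the conjunction of $\agent$-positive formulas is $\agent$-positive, and $\M_\agent$ of it is too) is true at $N_t$ but false at $M_s$, a contradiction. \textbf{Back-$\agentb$} ($\agentb\neq\agent$) is identical to \textbf{back-$\agent$} but with $\M_\agentb$ in place of $\M_\agent$.

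\textbf{Where the difficulty lies.} The real work is in the back/forth steps of the second direction, and finite branching is exactly what lets a possibly infinite family of distinguishing formulas be compressed into a single finite disjunction or conjunction that still sits inside $\lang^{a+}$; without it one would need an infinitary language or a compactness/saturation argument on $\omega$-saturated models. I should also double-check the bookkeeping that $\lang^{a+}$ is genuinely closed under the operations used (finite $\vee$, finite $\wedge$, $\know_\agentb$, $\M_\agentb$, $\M_\agent$) and that the empty-successor cases are handled correctly ($k=0$ gives $\know_\agentb\bot$-style formulas; here $\know_\agentb$ of an empty disjunction is $\know_\agentb\bot$, which is expressible and $\agent$-positive, and if $sR^M_\agentb=\emptyset$ then \textbf{forth-$\agentb$} is vacuous so that subcase does not arise). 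Everything else is routine induction.
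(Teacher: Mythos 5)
Your proof is correct, and one half of it follows the paper exactly: the direction from $M_{s_0}\lumis_\agent N_{t_0}$ to preservation is, in both cases, a structural induction on $\phi\in\lang^{a+}$ using \textbf{forth-$\agentb$} for $\know_\agentb$, \textbf{back-$\agentb$} for $\M_\agentb$, and \textbf{back-$\agent$} for $\M_\agent$, with no appeal to finite branching. For the converse, however, you take a genuinely different route. The paper goes through the game characterization (Lemma~\ref{lem-game}): if $M_{s_0}\not\lumis_\agent N_{t_0}$ then \Spoiler\ has a winning strategy, the tree of plays under that strategy is finite (finite branching plus K\"onig), and an induction on the maximal play length $k$ produces a single distinguishing formula, using exactly the shapes $\know_\agentb\bigvee_{t'}\phi(s',t')$ and $\M_\agentb\bigwedge_{s'}\phi(s',t')$ that appear in your contradiction steps. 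You instead run the classical Hennessy--Milner argument directly: define the canonical relation $\refrel$ of $\lang^{a+}$-preservation pairs and verify \textbf{atoms}, \textbf{forth-$\agentb$} ($\agentb\neq\agent$) and \textbf{back} for all agents, compressing the failure witnesses into finite disjunctions under $\know_\agentb$ and finite conjunctions under $\M_\agentb$, $\M_\agent$. The combinatorial core (finite branching making these disjunctions and conjunctions finite, and $\lang^{a+}$ being closed under exactly the needed operations, including $\know_\agentb\bot$ and $\M_\agent\top$ for the empty-successor cases) is identical; what differs is the scaffolding. The paper's version yields an explicit distinguishing formula whose modal depth is bounded by the game depth and reuses Lemma~\ref{lem-game} (including memoryless determinacy of safety games); yours is self-contained, avoids the game apparatus entirely, and is arguably the more elementary argument. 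One cosmetic remark: you announce the second direction as a contrapositive, but what you actually give is a direct proof that preservation implies refinement (the contradictions are local to the back/forth checks); the mathematics is unaffected.
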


\begin{proof}  
  Let us first establish that for every $t \in S^N$ and $s \in S^M$,
  if \Spoiler\ has a winning strategy in $\calG{a}(M_s,N_t)$, then
  there exists a formula $\phi(s,t) \in \lang^{a+}$ called a
  \emph{distinguishing formula for $(M_s,N_t)$}, for which $N_t
  \models \phi(s,t)$ but $M_s \not \models \phi(s,t)$. Note that if
  \Spoiler\ has a winning strategy in $\calG{a}(M_s,N_t)$, all plays
  induced by this strategy have finite length and end in a position
  where \Duplicator\ cannot move. Moreover, by a simple application of
  K\"onig's Lemma (as the game graph $\calG{a}(M_s,N_t)$ is finitely
  branching), the length of those plays is bounded.

  We reason by induction on $k$, the maximal length of these plays;
  note that because \Spoiler\ starts, $k>0$.

  If $k=1$, \Spoiler\ has a winning move from $(s,t)$ to some $v\in
  V_\Duplicator$, where \Duplicator\ is blocked. We reason on the form of
  $v$:
\begin{itemize}
\item if $v=(s,(forth,p),t)$ (resp.\ $v=(s,(back,p),t)$), then there is no move
  back to $(s,t)$ because $t \not \in V^N(p)$ (resp.\ $s \not \in
  V^M(p)$). A distinguishing formula is $\neg p$ (resp.\ $p$).
\item if $v=(s',(forth,\agentb),t)$ (resp.\
  $v=(s,(back,\agentb),t')$), then $tR^N_\agentb=\emptyset$ (resp.\
  $sR^M_\agentb=\emptyset$). A distinguishing formula is
  $\know_\agentb \bot$ (resp.\ $\M_\agentb \top$). The case
  $v=(s,(back,a),t')$ is the same as $(s,(back,b),t')$.  Since forth-a
  moves are not allowed in the game, position
  $v=(s',(forth,\agent),t)$ is not reachable in the game
  $\calG{a}(M_s,N_t)$, so that the formula $\know_\agent \bot \not\in
  \lang^{a+}$ is not needed.
  % \item if $v=(s',(back,\agent),t")$, then
  %   $R^M_\agent(s')=\emptyset$. A distinguishing formula is
  %   $\M_\agent \top$.
\end{itemize}

Assume now that $k>1$, and pick a winning strategy of \Spoiler\ in
$\calG{a}(M_s,N_t)$. 

We explore the move from the initial position $(s,t)$ that is
given by this strategy; because $k>1$, this move cannot be either
{\bf forth-p?}, or {\bf back-p?}. Three cases remain.
\begin{description}
\item[forth-b?] The reached position becomes $(s',(forth,\agentb),t)$,
  and from there \Duplicator\ loses. That is, for each $t' \in
  tR_\agentb^N$, \Spoiler\ wins the game $\calG{a}(M_{s'},N_{t'})$
  in at most $k-2$ steps. By the induction hypothesis, there exists a
  distinguishing formula $\phi(s',t') \in \lang^{a+}$ for
  $(M_{s'},N_{t'})$. It is easy to see that $\phi(s,t)= \know_\agentb
  (\bigvee_{t' \in tR_\agentb^N}\phi(s',t'))$ is a distinguishing
  formula for $(M_s,N_t)$; notice that since $N$ is finitely
  branching, the conjunction is finitary.
\item[back-b?] This case applies to $\agentb \neq \agent$ and to $\agentb = \agent$. 

The reached position
  becomes $(s,(back,\agentb),t')$, and from there \Duplicator\
  loses. Using a similar reasoning as for forth-b moves, it is easy
  to establish that there exists a formula $\phi(s',t') \in
  \lang^{a+}$, such that $\phi(s,t)= \M_\agentb (\bigwedge_{s'\in
    sR_\agentb^M} \phi(s',t'))$ is a distinguishing formula for
  $(M_s,N_t)$; here, as $M$ is finitely branching, a finitary disjunction is guaranteed.
% \item[back-a?] This is to be treated exactly like for y a reasoning
%   analogous to the previous case, we obtain the distinguishing formula
%   $\M_\agent (\bigwedge \phi_{s"\in R_\agent^M(s')})$.
\end{description}

Now, according to the game characterization of refinement
(Lemma~\ref{lem-game}) and the determinacy of the refinement games
(Remark~\ref{remark-parity}), the existence of a winning strategy for
\Spoiler\ from position $(s_0,t_0)$ is equivalent to
$M_{s_0}\not\lumis_\agent N_{t_0}$; this provides us with the right to left
direction of the proposition.  \\

For the other direction, assume $M_s
\lumis_\agent N_t$, and let $\phi \in \lang^{a+}$ with $N_t \models
\phi$. We prove that $M_s \models \phi$, by induction over the
structure of the formula. Basic cases where $\phi$ is either $p$ or
$\neg p$, but also the cases $\phi \et \psi$ and $\phi \vel \psi$, are
immediate.

Assume $N_t \models \know_\agentb \phi$. Then for every $t' \in
  tR_\agentb^N$, $N_{t'} \models \phi$. If
  $tR_\agentb^N=\emptyset$, then by Property {\bf forth-b} of
  Definition~\ref{def.bisim} this entails $sR_\agentb^M=\emptyset$ and
  consequently $M_s \models \know_\agentb \phi$ (whatever $\phi$
  is). Otherwise, $tR_\agentb^N\neq\emptyset$. Take an arbitrary $s'
  \in sR^M_\agentb$. By Property {\bf forth-b} of
  Definition~\ref{def.bisim}, there is a $t'_{s'} \in tR_\agentb^M$ with
  $M_{s'} \lumis_\agentb N_{t'_{s'}}$ and $N_{t'_{s'}} \models \phi$. By
  induction hypothesis, $M_{s'}\models \phi$, which entails $M_s
  \models \know_\agentb \phi$.

Assume $N_t \models \M_\agentb \phi$, and let $t'\in
  tR_\agentb^N$ be such that $N_{t'}\models\phi$. By Property {\bf back-b} of
  Definition~\ref{def.bisim}, there is some $s' \in sR_\agentb^M$,
  such that $M_{s'} \lumis_\agentb N_{t'}$. By induction hypothesis,
  $M_{s'}\models\phi$ which entails $M_s \models \M_\agentb \phi$.

Note that the argument still holds if we take $\agentb=\agent$.
\end{proof}

\subsection{Refinement as bisimulation plus model restriction} \label{sec.modelrestriction}

A bisimulation is also a refinement, but refinement allows much more semantic variation. How much more? There is a precise relation. Semantically, a refinement is a bisimulation followed by a model restriction.

An $\agent$-refinement needs to satisfy {\bf back} for that agent, but not {\bf forth}. Let an (`initial') model and a refinement of that model be given. For the sake of the exposition we assume that the initial model and the refined model are minimal, i.e., they are bisimulation contractions. Now take an arrow (a pair in the accessibility relation) in that initial model. This arrow may be missing in the refined model namely when {\bf forth} is not satisfied for that arrow. On the other hand, any arrow in the refinement should be traceable to an arrow in the initial model -- the {\bf back} condition. There may be {\em several} arrows in the refinement that are traceable to the {\em same} arrow in the initial model, because the states in which such arrows finish may be non-bisimilar. In other words, we can see the refined model as a blowup of the initial model of which bits and pieces are cut off.

\begin{example}
A simple example is as follows. Consider the structure 

\psset{border=2pt, nodesep=4pt, radius=2pt, tnpos=a}
\pspicture(0,-.5)(9,.5)
$
%\rput(0,0){\rnode{0}{}}
%\rput(1.5,0){\rnode{1}{}}
%\rput(3,0){\rnode{2}{}}
\rput(4.5,0){\rnode{3}{\underline{\bullet_1}}}
\rput(6,0){\rnode{4}{\bullet_2}}
\rput(7.5,0){\rnode{5}{\bullet_3}}
\rput(9,0){\rnode{6}{\bullet_4}}
\ncline{->}{3}{4}
\ncline{->}{4}{5}
\ncline{->}{5}{6}
$
\endpspicture

\noindent and its refinement 

\psset{border=2pt, nodesep=4pt, radius=2pt, tnpos=a}
\pspicture(0,-.5)(9,.5)
$
%\rput(0,0){\rnode{0}{}}
%\rput(1.5,0){\rnode{1}{}}
\rput(3,0){\rnode{2}{\bullet_{b'}}}
\rput(4.5,0){\rnode{3}{\underline{\bullet_a}}}
\rput(6,0){\rnode{4}{\bullet_b}}
\rput(7.5,0){\rnode{5}{\bullet_c}}
%\rput(9,0){\rnode{6}{\bullet_4}}
\ncline{<-}{2}{3}
\ncline{->}{3}{4}
\ncline{->}{4}{5}
%\ncline{->}{5}{6}
$
\endpspicture

\noindent by way of refinement relation $\bisrel = \{ (1,a), (2,b), (3,c), (2,b') \}$. The arrow $(3,4)$ has no image in the refined model. On the other hand, the arrow $(1,2)$ has two images, namely $(a,b)$ and $(a,b')$. These two arrows cannot be identified, because $b$ and $b'$ are non-bisimilar, because there is yet another arrow from $b$ but no other arrow from $b'$: arrow $(2,3)$ has only one image in the refined model. 
\end{example}

The cutting off phase can be described such that the relation to restricted bisimulation becomes clear. When expanding the initial model, the blowing up phase, make a certain propositional variable false in all states of the blowup that you want to prune (that are not in the refinement relation) and make it true in all states that you want to keep. Therefore, the blown up model is bisimilar to the initial model except for that variable. (In other words, it is a restricted bisimulation.) Then, remove arrows to states where that atom is false.

\begin{example}
Continuing the previous example, consider the following structure bisimilar to the initial model, except for the value of atom $\atom$---in the visualization $\bullet$ represents that $\atom$ is true and $\circ$ represents that $\atom$ is false. 

\psset{border=2pt, nodesep=4pt, radius=2pt, tnpos=a}
\pspicture(0,-.5)(9,.5)
$
\rput(0,0){\rnode{0}{\circ_{d'}}}
\rput(1.5,0){\rnode{1}{\circ_{c'}}}
\rput(3,0){\rnode{2}{\bullet_{b'}}}
\rput(4.5,0){\rnode{3}{\underline{\bullet_a}}}
\rput(6,0){\rnode{4}{\bullet_b}}
\rput(7.5,0){\rnode{5}{\bullet_c}}
\rput(9,0){\rnode{6}{\circ_d}}
\ncline{<-}{0}{1}
\ncline{<-}{1}{2}
\ncline{<-}{2}{3}
\ncline{->}{3}{4}
\ncline{->}{4}{5}
\ncline{->}{5}{6}
$
\endpspicture

The relation $\bisrel = \{ (1,a), (2,b), (3,c), (4,d), (2,b'), (3,c'), (4,d') \}$ is a bisimulation, except for the value of $\atom$. The refinement from the previous example is a restriction of this structure, namely the result of removing the $\circ$ states and the arrows leading to those states. 
\end{example}

Winding up, performing an $\agent$-refinement clearly corresponds to the following operation: 
\begin{quote} 
Given a pointed model, first choose a bisimilar pointed model, then remove some pairs from the accessibility relation for $\agent$ in that model.
\end{quote} 
Given a propositional variable $\atomb$, this has the same semantic effect as
\begin{quote} 
Given a pointed model, first choose a bisimilar pointed model except for variable $\atomb$, such that $\atomb$ is (only) false in some states that are accessible for $\agent$, then remove all those pairs from the accessibility relation for $\agent$.
\end{quote}
In other words: \begin{quote} Given a pointed model, first choose a bisimilar pointed model except for variable $\atomb$, then remove all pairs from the accessibility relation for $\agent$ pointing to states where $\atomb$ is false.\end{quote}
If we do this for all agents at the same time (or if we strictly regard tree unwindings of models only), we can even see the latter operation as follows:
\begin{quote} Given a pointed model, first choose a bisimilar pointed model except for variable $\atomb$, then restrict the model to the states where $\atomb$ is true.\end{quote} 
Formally, the result is as follows. First, let $M$ be a model with accessibility relation (set of accessibility relations) $R$, and let $R'$ be such that for all $\agent\in\Agents$, $R'_\agent\subseteq R_\agent$, then (analogously to a model restriction) $M | R'$ is the model that is the same as $M$ but with the accessibility {\em restricted} to $R'$.
\begin{lemma} \label{lemma.xx} %@hans separated in two lemmata and one proposition
Given $M_s \succeq_\agent N_t$, there is an $N'_t$ (with accessibility function $R'$) and some $R''$ that is the same as $R'$ except that $R''_\agent \subseteq R'_\agent$, such that $M_s \bisim N'_t$ and $N'_t | R'' \bisim N_t$.
\end{lemma}
\begin{proof} %@hans
Let an $\agent$-refinement relation $\bisrel_\agent \subseteq \States^M \times \States^N$ be given, such that $(s,t) \in \bisrel_\agent$. We expand the model $N$ and this relation $\bisrel_\agent$ as follows to a model $N'$ and a bisimulation $\bisrel \subseteq \States^M \times \States^{N'}$. Consider $\States^M_- := \States^M\setminus\bisrel_\agent^{-1}(\States^N)$ ($\States^M_-$ is the set of all states in $M$ that do not have an image in $N$). Now consider $N' = (\States',R',V')$ with domain $\States' = \States^N \union \States^M_-$,  such that for each agent $\agentb$ (including $\agent$), $(u',v') \in R'_\agentb$ iff: \begin{itemize} \item $(u',v') \in R^N_\agentb$, or \item $(u',v') \in R^M_\agentb$, or \item $\agentb=\agent$, $u' \in \States^N$, $v' \in \States^M_-$, there is a $u$ such that $(u,u') \in \bisrel_\agent$, and $(u,v') \in R^M_\agent$;\end{itemize}
and such that $V' = V^N$ on the $\States^N$ part of the domain whereas $V' = V^M$ on the new $\States^M_-$ part of the domain. Now define $\bisrel: \States^M \imp \States'$ as follows: $(u,u') \in \bisrel$ iff $(u,u') \in \bisrel_\agent$ or ($u \in \States^M_-$ and $u=u'$). Then $\bisrel$ is a bisimulation linking $M_s$ and $N'_t$. If we restrict $R'_\agent$ to $R^N_\agent$, we get $N_t$ back (states in the $\States^M_-$ part of $N'$ have become unreachable). We have satisfied the proof requirement that $M_s \bisim N'_t$ and $N'_t | R''_\agent \bisim N_t$ (for $R''_\agent = R^N_\agent$).
\end{proof}

\begin{lemma} \label{lemma.prev}
Given $M_s \succeq_\agent N_t$, there is an $N'_t$ (with accessibility function $R'$) and some $\atom\in\Atoms$ such that $M_s \bisim^\atom N'_t$ and $N'_t | R'' \bisim^\atom N_t$, where $R''$ is the same as $R'$ except that $(\statec,\statec') \in R''_\agent$ iff $N'_{\statec'} \models \atom$.
\end{lemma}
\begin{proof}
To satisfy the requirement for $\atom$, we make $\atom$ false on the $\States^M_-$ part of the domain of $N'$, and true anywhere else on $N'$ (i.e., on the part of $N'$ corresponding to the $\bisrel_\agent^{-1}(\States^N)$ part of $M$). (We do not change the value of other propositional letters on $N'$.)
\end{proof}
Below, $M | \atom$ is the restriction of $M$ to the set of states satisfying $\atom$.
\begin{proposition} \label{lemma.relatag2} %@hans 
Given $M_s \succeq_\agent N_t$, there is a $N'_t$ and some $\atom\in\Atoms$ such that $M_s \bisim^\atom N'_t$ and $N'_t | \atom$ is identical to $N_t$ except for maybe the value of $\atom$.
\end{proposition}
\begin{proof} 
Clearly, in Lemma \ref{lemma.prev}, $N'_t | R'' \bisim N'_t | \atom$. The model restriction gets rid of the the $\States^M_-$ part of $N'$, so we now have that $N'_t | \atom$ is identical to (and not merely bisimilar to) $N_t$ except for maybe the value of $\atom$.
\end{proof}

In Section \ref{sec.relativization} we build upon this semantic result by translating the logic with refinement quantifiers into the logic with bisimulation quantifiers plus relativization of formulas.

\subsection{Refinement and action models} \label{subsec.am}

We recall another important result connecting structural refinement to action model execution \cite{baltagetal:1998}. For full details, see \cite{hvdetal.loft:2009}. An {\em action model} $\amodel = (\Actions, \arel, \pre)$ is like a model $M = (\States, R, V)$ but with the valuation replaced by a precondition function $\pre: \Actions \imp \lang$ (for a given language $\lang$). The elements of $\Actions$ are called {\em action points}. A restricted modal product $(M \otimes \amodel)$ consists of pairs $(\state,\action)$ such that $M_\state\models\pre(\action)$, the product of accessibility relations namely such that $((\state,\action),(\stateb,\actionb)) \in R_\agent$ iff $(\state,\stateb) \in R_\agent$ and $(\action,\actionb) \in \arel_\agent$, and keeping the valuation of the state in the pair: $(\state,\action)\in V(\atom)$ iff $\state\in V(\atom)$. A pointed action model $\amodel_\action$ is an {\em epistemic action}. 
\begin{proposition} \label{prop.actionmodel} \cite[Prop.\ 4, 5]{hvdetal.loft:2009} 
The result of executing an epistemic action in a pointed model is a refinement of that model. Dually, for every refinement of a {\em finite} pointed model there is an epistemic action such that the result of its execution in that pointed model is a model bisimilar to the refinement. 
\end{proposition}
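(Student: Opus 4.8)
\emph{Plan.} The statement splits into two independent parts, which I would treat in turn.

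\emph{Part 1: executing an epistemic action yields a refinement.} Given an epistemic action $\amodel_\action$ with $\amodel=(\Actions,\arel,\pre)$ and a pointed model $M_\state$, the plan is simply to exhibit an ($\Agents$-)refinement relation from $M_\state$ to $(M\otimes\amodel)_{(\state,\action)}$. The obvious candidate is
\[
\refrel \ := \ \{\,(\statec,(\statec,\actionc)) \mid (\statec,\actionc)\in\States^{M\otimes\amodel}\,\},
\]
which is non-empty because $(\state,(\state,\action))\in\refrel$. Then \textbf{atoms} is immediate from the valuation clause of the restricted modal product, and \textbf{back-$\agent$} from its accessibility clause: every $R_\agent$-successor of $(\statec,\actionc)$ is of the form $(\stated,\actiond)$ with $(\statec,\stated)\in R_\agent^M$, and $(\stated,(\stated,\actiond))\in\refrel$. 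Since a plain refinement requires nothing more, $\refrel$ witnesses $M_\state\lumis_\Agents (M\otimes\amodel)_{(\state,\action)}$. Note that \textbf{forth} can genuinely fail, as an $R_\agent^M$-successor of $\statec$ need satisfy the precondition of no $\arel_\agent$-successor of $\actionc$; this is exactly why one obtains a refinement and not a bisimulation.

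\emph{Part 2: every refinement of a finite model arises this way.} Fix a finite pointed model $M_\state$ and a refinement $N_\stateb$, witnessed by a refinement relation $\refrel\subseteq\States^M\times\States^N$ with $(\state,\stateb)\in\refrel$. Since the restricted modal product is invariant under bisimulation of its model argument, I would first assume $M$ to be a bisimulation contraction, hence finite with pairwise non-bisimilar states. As finite models are image-finite, Hennessy--Milner yields, for each pair $\statec\neq\statec'$ of $\States^M$, a modal formula true at $\statec$ and false at $\statec'$; let $\delta_\statec\in\lang$ be the (finite) conjunction of these over all $\statec'\neq\statec$, so that $M_{\statec''}\models\delta_\statec$ iff $\statec''=\statec$. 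I would then set $\amodel:=(\States^N,R^N,\pre)$ with $\pre(n):=\bigvee\{\,\delta_\statec \mid (\statec,n)\in\refrel\,\}$ — a finite disjunction, since $\States^M$ is finite — and $\action:=\stateb$. By construction $M_\statec\models\pre(n)$ iff $(\statec,n)\in\refrel$, so the states of $M\otimes\amodel$ are exactly the pairs in $\refrel$.

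It then remains to check that $\bisrel:=\{\,((\statec,n),n) \mid (\statec,n)\in\refrel\,\}$ is a bisimulation between $M\otimes\amodel$ and $N$ linking $(\state,\action)$ to $\stateb$. Non-emptiness and the linking come from $(\state,\stateb)\in\refrel$; \textbf{atoms} uses \textbf{atoms} of $\refrel$ together with the product's valuation; \textbf{forth} is automatic because $\arel_\agent=R_\agent^N$, so the second coordinate of any transition of $M\otimes\amodel$ is already an $R_\agent^N$-transition; and \textbf{back} is the one place where \textbf{back-$\agent$} of $\refrel$ is used — from $(\statec,n)\in\refrel$ and $(n,n')\in R_\agent^N$ one obtains $\statec'$ with $(\statec,\statec')\in R_\agent^M$ and $(\statec',n')\in\refrel$, whence $M_{\statec'}\models\pre(n')$ and $((\statec,n),(\statec',n'))$ is a legal transition of $M\otimes\amodel$. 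This gives $(M\otimes\amodel)_{(\state,\action)}\bisim N_\stateb$, as required.

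\emph{Where the difficulty lies.} Everything but one point is routine unwinding of the definitions of restricted modal product, refinement and bisimulation (Definition~\ref{def.bisim}). The genuinely delicate step is the definability of the preconditions: one is tempted to invoke characteristic modal formulas for finite pointed models, but these need not exist in finitary modal logic, so the argument must instead exploit that $M$ is \emph{fixed and finite} and only needs to separate its finitely many bisimulation classes — which, after contracting, is exactly what $\delta_\statec$ does. A minor caveat is that when $N$ is infinite so is $\amodel$; this is harmless under the convention in force here (action models, like models, may be infinite), and in any case one may restrict attention to finite refinements, which suffices for the intended applications.
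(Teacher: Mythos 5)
Your proposal is correct and follows essentially the same route as the paper's own outline: the projection relation $(\statec,(\statec,\actionc))$ witnesses the refinement in one direction, and in the other an action model isomorphic to the refinement is built whose preconditions are distinguishing formulas for the (finitely many, pairwise non-bisimilar) states of $M$, after which the product is checked to be bisimilar to $N_\stateb$. Your use of Hennessy--Milner on the bisimulation contraction to obtain the distinguishing formulas is the same device the paper invokes via its citation to Browne et al., and your closing caveats (infinite action models for infinite $N$, the essential role of finiteness of $M$) match the paper's own remarks.
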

It is instructive to outline the proof of these results.

Given pointed model $M_\state$ and epistemic action $\amodel_\action$, the resulting $(M \otimes \amodel)_{(\state,\action)}$ is a refinement of $M_\state$ by way the relation $\bisrel$ consisting of all pairs $(\stateb,(\stateb,\actionb))$ such that $M_\stateb \models \pre(\actionb)$. Some states of the original model may get lost in the modal product, namely if there is no action whose precondition can be executed there. But all `surviving' (state,action)-pairs simply can be traced back to their first argument: clearly a refinement.

For the other direction, construct an epistemic action $\amodel_{\state'}$ that is isomorphic to a given refinement $N_{\state'}$ of a model $M_\state$, but wherein valuations (determining the value of propositional variables) in states $\stateb\in N$ are replaced by preconditions for action execution of the corresponding action points (also called) $\stateb$. Precondition $\pre(\stateb)$ should be satisfied in exactly those states $\state \in M$ such that $(\state,\stateb) \in \bisrel$, where $\bisrel$ is the refinement relation linking $M_\state$ and $N_{\state'}$. Now in a {\em finite} model, we can single out states (up to bisimilarity) by a distinguishing formula \cite{browneetal:1987}. One then shows that $(M  \otimes \amodel, (\state,\state'))$ can be bisimulation-contracted to $N_{\state'}$. It is unknown if the finiteness restriction can be lifted, because the existence of distinguishing formulas plays a crucial part in the proof.

Example \ref{ex-1} presents an action model and its execution in an initial information state, and we will there continue our reflections on the comparison of the frameworks. 

\weg{
\subsection{Refinement and pruning}

Just as refinement is not mere model restriction, it is also immediate to see that refinement is not mere pruning: consider
a model $M$ consisting of a single state $s$ with an
$\agent$-loop. The model $M'$ with three states $s_1,s_2,s_3$ such
that $R_a=\{(s_1,s_2), (s_1,s_3), (s_3,s_3)\}$ satisfies $M
\lumis_\agent M'$ but is not bisimilar to any pruning of $M$.

For refinement and pruning to coincide, one can for example restrict the semantics to the class of deterministic models, that is models such that every accessibility
relation $R_\agent$ is a functional.  This is precisely the classic
setting considered in control theory. We refer to Section~\ref{ex-2} where
an example will be given.

Also, pruning plays an important role in game theory, where strategies
are in one-to-one correspondence with prunings of the unraveled
arena. However, refinement is enough to consider: for example,
concerning turn-based 2-player zero-sum games with
$\omega$-regular winning conditions \cite{booklncs2500}, we have the
following: if $G$ and $G'$ are two bisimilar arenas, then a player
has a winning strategy in $G$ iff she has winning strategy in
$G'$. Therefore, for a given arena $G$, the existence of a refinement
of $G$ such that the winning conditions hold is equivalent to
determining the existence of a winning strategy in $G$ itself. This last
remark strengthens the relevance of our refinement operator.
}

\subsection{Modal specifications refinement}

Modal specifications are classic, convenient, and expressive
mathematical objects that represent interfaces of component-based
systems \cite{LarsenNW07a,TheseJBR07,Raclet2007b,RB-acsd09,AHLNW-08,ryanetal:2002}.
Modal specifications are deterministic automata equipped with
transitions of two types: {\emph{may}} and {\emph{must}}. The
components that implement such interfaces are deterministic automata;
an alternative language-based semantics can therefore be considered,
as presented in \cite{TheseJBR07,Raclet2007b}. Informally, a
must-transition is available in every component that implements the
modal specification, while a may-transition need not be. Modal
specifications are interpreted as logical specifications matching the
conjunctive $\nu$-calculus fragment of the $\mu$-calculus
\cite{jdeds-feuillade-pinchinat07}. In order to abstract from a
particular implementation, an entire theory of modal specifications has
been developed, which relies on a refinement preorder, known as
\emph{modal refinement}. However, although its definition is close to
our definition of refinement, the two notions are incomparable: there
is no way to interpret may and must as different agents (agent $a$ and another agent $b \neq a$ have clearly independent roles in the semantics of $a$-refinement), because `must' is a subtype of `may'.

\section{Refinement modal logic}\label{lang}
\label{sec-synt-sem}

In this section we present the {\em refinement modal logic}, wherein we add a modal operator that we call a refinement quantifier to the language of multi-agent modal logic, or to the language of the modal $\mu$-calculus. From prior publications \cite{hvdetal.loft:2009,hvdetal.felax:2010} refinement modal logic is known as `future event logic'. In that interpretation different $\Box_a$ operators stand for different epistemic operators (each describing what an agent knows), and refinement modal logic is then able express what informative events are consistent with a given information state. However, here we take a more general stance.

We list some relevant validities and semantic properties, and also relate the logic to well-known logical frameworks such as bisimulation quantified modal logic (by way of relativization), and dynamic epistemic logics.

\subsection{Syntax and semantics of refinement modal logic} \label{subsec.syntaxrml}

The syntax and the semantics
of refinement modal logic are as follows.  
\begin{definition}[Languages $\lang_\G$ and $\lang^\mu_\G$] \label{def.langrml}
Given a finite set of agents $\Agents$ and a countable set of propositional atoms $\Atoms$, the language $\lang_\G$ of refinement modal logic is inductively defined as
\[ \begin{array}{l}
\phi ::= \atom \ | \ \neg \phi \ | \ (\phi \et \phi) \ | \ \know_\agent \phi \ | \ \G_\agent \phi \end{array} \]
where $\agent\in\Agents$ and $\atom\in\Atoms$. Similarly, the language $\lang^\mu_\G$ of refinement $\mu$-calculus has an extra inductive clause $\mu x. \phi$, where $X$ is the set of variables and $x \in X$.
\[ \begin{array}{l}
\phi ::= x \ | \ \atom \ | \ \neg \phi \ | \ (\phi \et \phi) \ | \ \know_\agent \phi \ | \ \G_\agent \phi \ | \ \mu x. \phi \end{array} \] 
\end{definition}
We write $\F_\agent \phi$ for $\neg \G_\agent \neg \phi$. For a subset $\{\agent_1,\dots,\agent_n\} = \Group \subseteq \Agents$ of agents we introduce the abbreviation $\F_\Group \phi$ for $\F_{\agent_1} \dots \F_{\agent_n} \phi$ (in any order), where we write $\F \phi$ for $\F_\Agents \phi$, and similarly for $\G_\Group$ and $\G$. (So in the single-agent version we are also entitled to write $\G$ and $\F$.)

Note the two differences between bisimulation quantifiers $\bqall \atom$ and the refinement quantifier $\forall$. The former we write with a `tilde'-symbol over the quantifier. The latter (and also $\forall_\agent$) has no variable. A refinement quantifier can be seen as {\em implicitly} quantifying over a variable, namely over a variable that does not occur in the formula $\phi$ that it binds (nor should it occur in a formula of which $\F \phi$ is a subformula). Section \ref{sec.relativization} will relate bisimulation quantification to the refinement operator.

% We propose a dynamic modal way to interpret the refinement operator: $\F_\agent \phi$ is true now, iff there is an $\agent$-refinement after which $\phi$ is true.

\begin{definition}[Semantics of refinement] \label{def:satisfaction}
Assume a model $M = (\States, R, V)$.
\[ \begin{array}{l}
M_\state \models \G_\agent \phi \ \mbox{iff} \  \text{for all } M'_{\state'}: M_\state \lumis_\agent M'_{\state'} \text{ implies } M'_{\state'} \models \phi
\end{array} \]
The set of validities of $\lang_\G$ is the logic $\logicRML$ ({\em refinement modal logic}) and the set of validities of $\lang^\mu_\G$ is the logic $\logicRML^\mu$ ({\em refinement $\mu$-calculus}).\footnote{As is usual in the area, we will continue to use the term `logic' in a general sense, beyond that of a set of validities.}
\end{definition}
In other words, $\G_\agent \phi$ is true in a pointed model iff $\phi$ is true in all its $\agent$-{\em refinements}. Typical model operations that produce an $\agent$-refinement are: blowing up the model (to a bisimilar model) such as adding copies that are indistinguishable from the current model and one another, and removing pairs of the accessibility relation for the agent $\agent$ (or, alternatively worded: removing states accessible only by agent $\agent$). In the final part of this section we relate these semantics to the well-known frameworks action model logic and bisimulation quantified logic (and see also \cite{hvdetal.loft:2009}).

\begin{proposition}[Bisimulation invariance] \label{lem:bisim.invar}
Refinement modal logic and refinement $\mu$-calculus are bisimulation invariant.
\end{proposition}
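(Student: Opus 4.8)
The plan is to show that if $M_s \bisim N_u$ then $M_s \models \phi \iff N_u \models \phi$ for every $\phi \in \lang^\mu_\G$ (the $\lang_\G$ case being a special case), proceeding by induction on the structure of $\phi$. The propositional, boolean, and $\know_\agent$ clauses are exactly as in the standard bisimulation-invariance proof for modal logic (for the fixed-point clause $\mu x.\phi$ one uses the standard argument that a bisimulation between $M$ and $N$ lifts to a bisimulation between $M^{[x\mapsto T]}$ and $N^{[x\mapsto U]}$ whenever $T$ and $U$ are matched by the bisimulation, and then that the least fixed points are likewise matched — I would simply cite the corresponding result for the modal $\mu$-calculus, e.g.\ \cite{venema:2012}). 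So the only genuinely new case is the refinement quantifier $\G_\agent \phi$.

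For the $\G_\agent$ clause, suppose $\refrel$ is a bisimulation with $(s,u)\in\refrel$, and assume $M_s \models \G_\agent \psi$; I want $N_u \models \G_\agent \psi$. Take any $N'_{u'}$ with $N_u \lumis_\agent N'_{u'}$, witnessed by an $\agent$-refinement relation $\refbcal \subseteq \States^N \times \States^{N'}$. Since a bisimulation is in particular an $\agent$-refinement (it satisfies \textbf{atoms}, \textbf{back-$\agent$}, and even \textbf{forth-$\agent$}), the composition $\refrel \circ \refbcal \subseteq \States^M \times \States^{N'}$ links $s$ to $u'$; and by the same composition argument used in the proof of Proposition~\ref{lem:lumis} (and Proposition~\ref{lem:compost}), the composition of an $\agent$-refinement after an $\agent$-refinement is again an $\agent$-refinement — here one factor is actually a bisimulation, which only helps. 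Hence $M_s \lumis_\agent N'_{u'}$, so by the assumption $N'_{u'}\models\psi$. As $N'_{u'}$ was an arbitrary $\agent$-refinement of $N_u$, we get $N_u \models \G_\agent\psi$. The converse direction uses that $\refrel^{-1}$ is also a bisimulation, so the argument is symmetric. (Note this does not even need the induction hypothesis on $\psi$: bisimilar models have, up to bisimulation, the same $\agent$-refinements, and bisimilar pointed models satisfy the same formulas by the induction hypothesis applied one level down — actually it is cleanest to invoke the IH on $\psi$ to pass from $N'_{u'}\models\psi$ back to whatever we need, but since we reach $N'_{u'}\models\psi$ directly from the semantic clause, the IH is applied to $\psi$ only in that the models compared there, $N'_{u'}$ and itself, are trivially bisimilar — so in fact the IH is not used in this clause at all, only transitivity of refinement under composition with a bisimulation.)

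The main obstacle — such as it is — is simply getting the composition bookkeeping right: one must check that pre-composing an $\agent$-refinement with a (full) bisimulation, in the correct direction, again yields an $\agent$-refinement linking the designated points, and that the other direction works by taking inverses. This is routine given Propositions~\ref{lem:lumis} and~\ref{lem:compost}, which already establish that $\lumis_\agent$ is a transitive relation closed under relational composition; the only thing to add is the trivial observation that $\bisim \,\subseteq\, \lumis_\agent$ (every bisimulation is an $\agent$-refinement) so that a bisimulation step can be absorbed into the refinement. I would therefore keep the written proof short: state the induction, dispatch the standard cases by reference to \cite{venema:2012}, and spell out only the $\G_\agent$ case via the composition argument above.
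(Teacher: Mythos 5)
Your proof is correct and follows essentially the same route as the paper: the standard cases are dispatched by bisimulation invariance of $\logicK$ and the $\mu$-calculus, and the $\G_\agent$ case is handled by observing that a bisimulation is in particular an $\agent$-refinement, so that transitivity of $\lumis_\agent$ (Proposition~\ref{lem:lumis}) yields $M_s \lumis_\agent N'_{u'}$ and hence $N'_{u'}\models\psi$ directly from the semantic clause, with the converse by symmetry. Your closing observation that the induction hypothesis is not actually needed in the $\G_\agent$ clause is accurate and matches the paper's own argument.
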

\begin{proof}
Bisimulation invariance is the following property: given $M_s\bisim N_t$ and a formula $\phi$, then $M_s \models \phi$ iff $N_t \models \phi$. If the logic has operators beyond the standard modalities $\Box_a$, this property does not automatically follow from bisimilarity.

For refinement modal logic bisimulation invariance is straightforward, noting that $\Box_\agent$ is bisimulation invariant, and that $\mu x$ is bisimulation invariant. The new operator $\G_\agent$ is bisimulation invariant, because $\agent$-refinement is transitive and bisimulation is just a specific type of $\agent$-refinement. Formally, let $M_s\bisim N_t$, and $M_s \models \G_\agent \phi$, we have to prove that $N_t \models \G_\agent \phi$. Let $O_u$ be arbitrary such that $N_t \lumis_\agent O_u$. From $M_s\bisim N_t$ follows $M_s\lumis_\agent N_t$. From $M_s \lumis_\agent N_t$ and $N_t\lumis_\agent O_u$ follows by Proposition~\ref{lem:lumis} that $M_s \lumis_\agent O_u$. From $M_s \models \G_\agent \phi$ and $M_s \lumis_\agent O_u$ follows $O_u \models \phi$. As $O_u$ was arbitrary, we therefore conclude $N_t\models\G_\agent\phi$. The reverse direction is symmetric.
\end{proof} 

The following result justifies our notation $\F_\Group$ for sets of agents.
\begin{proposition} \label{prop-abba}
For all agents $\agent,\agentb$, $\models \F_\agent \F_\agentb \phi \eq \F_\agentb \F_\agent \phi$.
\end{proposition}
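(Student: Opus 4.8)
The plan is to prove the semantic equivalence directly from the definition of the refinement quantifier, by showing that the models reachable as $\agentb$-refinements-then-$\agent$-refinements of a given pointed model are exactly those reachable as $\agent$-refinements-then-$\agentb$-refinements. Concretely, I would first unfold the definitions: $M_s \models \F_\agent\F_\agentb\phi$ means there exists $M'_{s'}$ with $M_s \lumis_\agent M'_{s'}$ and there exists $M''_{s''}$ with $M'_{s'} \lumis_\agentb M''_{s''}$ and $M''_{s''} \models \phi$; symmetrically for $\F_\agentb\F_\agent\phi$. So it suffices to prove the commutation of relations: for all pointed models, $(\lumis_\agent \circ \lumis_\agentb)$ relates the same pairs as $(\lumis_\agentb \circ \lumis_\agent)$, up to bisimilarity of the endpoint (bisimilarity being enough by Proposition~\ref{lem:bisim.invar}).

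The cleanest route is to invoke Proposition~\ref{lem:compost}: both $\lumis_\agent \circ \lumis_\agentb$ and $\lumis_\agentb \circ \lumis_\agent$ coincide with $\lumis_{\{\agent,\agentb\}}$, since $\{\agent,\agentb\} = \{\agentb,\agent\}$ as a set and the proposition asserts $M_s(\lumis_{\agent_1}\circ\dots\circ\lumis_{\agent_n})M_t$ iff $M_s \lumis_\Group M_t$ regardless of the enumeration of $\Group$. Hence $M_s \lumis_\agent M'_{s'} \lumis_\agentb M''_{s''}$ for some $M'_{s'}$ iff $M_s \lumis_{\{\agent,\agentb\}} M''_{s''}$ iff $M_s \lumis_\agentb M'''_{s'''} \lumis_\agent M''_{s''}$ for some $M'''_{s'''}$. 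Combining this with the unfolded semantics above gives $M_s \models \F_\agent\F_\agentb\phi$ iff there is $M''_{s''}$ with $M_s \lumis_{\{\agent,\agentb\}} M''_{s''}$ and $M''_{s''}\models\phi$, and the same characterization holds for $\F_\agentb\F_\agent\phi$; therefore the two are equivalent on every pointed model, which is the claim. (Passing to the $\forall$/$\G$ form is then just contraposition, or one can observe the $\G$-version follows by duality from $\neg\F_\agent\F_\agentb\neg\phi \eq \neg\F_\agentb\F_\agent\neg\phi$.)

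If one prefers not to lean on Proposition~\ref{lem:compost} as a black box, the self-contained argument is to take the two composed refinement relations explicitly: given an $\agent$-refinement $\refrel_1$ from $M_s$ to $M'_{s'}$ and an $\agentb$-refinement $\refrel_2$ from $M'_{s'}$ to $M''_{s''}$, their composition $\refrel_2 \circ \refrel_1$ is a relation from $M_s$ to $M''_{s''}$ that satisfies \textbf{atoms} (both components do), satisfies \textbf{forth-$\agentc$} and \textbf{back-$\agentc$} for every $\agentc \notin \{\agent,\agentb\}$ (both components do), and satisfies \textbf{back-$\agent$} and \textbf{back-$\agentb$} (each failure of \textbf{forth} is localized to one agent in one component, and composition of back-conditions gives back-conditions). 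This exhibits $M_s \lumis_{\{\agent,\agentb\}} M''_{s''}$ directly, and the order of composition was irrelevant, so the symmetric decomposition works too.

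The main obstacle, such as it is, is purely bookkeeping: one must be careful that the relational composition of refinements really does preserve the \textbf{back} clauses for the relevant agents and the \textbf{forth}/\textbf{back} clauses for the irrelevant ones — this is the content underlying Proposition~\ref{lem:compost} and is entirely routine — and that the existential witnesses in the nested $\F$'s line up with the existential witness in the single $\lumis_{\{\agent,\agentb\}}$ step. There is no genuine difficulty: the result is essentially a corollary of the fact that composition of refinements is a refinement (Proposition~\ref{lem:lumis}) together with the commutativity of set union in the subscript of $\lumis_\Group$.
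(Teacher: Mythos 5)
Your proposal is correct and follows essentially the same route as the paper: the paper's proof also reduces the claim to Proposition~\ref{lem:compost}, identifying both $\lumis_\agent \circ \lumis_\agentb$ and $\lumis_\agentb \circ \lumis_\agent$ with $\lumis_{\{\agent,\agentb\}}$ and then unfolding the existential semantics of the nested $\F$'s. Your additional self-contained verification that composed refinements satisfy the right \textbf{atoms}/\textbf{back}/\textbf{forth} clauses is the content the paper leaves implicit in Proposition~\ref{lem:compost}, and is a fine (if unnecessary) supplement.
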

\begin{proof}
Let $M_s$ be given and let $M_t$ and $M_u$ be such that $M_s \lumis_\agent M_t$ and  $M_t \lumis_\agentb M_u$. We have that $M_s(\lumis_\agent \circ \lumis_\agentb) M_u$ iff $M_s \lumis_{\{\agent,\agentb\}} M_u$ iff $M_s(\lumis_\agentb \circ \lumis_\agent) M_u$. (See Proposition \ref{lem:compost}.)
\end{proof}

\begin{proposition} \label{prop.validities}
The following are validities of $\logicRML$.
\begin{itemize}
\item $\G_\agent \phi \imp \phi$ (reflexivity)
\item $\G_\agent \phi \imp \G_\agent\G_\agent \phi$ (transitivity)
\item $\F_\agent (\phi \vel\psi) \eq (\F_\agent\phi \vel \F_\agent \psi)$ and $\G_\agent (\phi \et\psi) \eq (\G_\agent\phi \et \G_\agent \psi)$
\item $\F_\agent \G_\agent \phi \imp \G_\agent \F_\agent \phi$ (Church-Rosser)
\item $\F_\agent \Dia_\agent \phi \eq \Dia_\agent \F_\agent \phi$ 
\end{itemize}
\end{proposition}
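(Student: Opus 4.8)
The four items in Proposition~\ref{prop.validities} are each proved by a direct semantic argument, chaining together the basic properties of $\lumis_\agent$ already established (Proposition~\ref{lem:lumis}: reflexivity, transitivity, Church--Rosser on models) with the definition of $\G_\agent$ as a universal quantifier over $\agent$-refinements. The first three are, in fact, the ``modal shadows'' of the corresponding properties of $\lumis_\agent$: the standard correspondence-theory translation of a frame condition on an accessibility relation into a modal validity. I would prove them in the order listed, then treat the last item separately since it couples $\G_\agent$ with $\Dia_\agent$.

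\textbf{Reflexivity and transitivity.} For $\G_\agent\phi\imp\phi$: suppose $M_s\models\G_\agent\phi$. Since $\lumis_\agent$ is reflexive (Proposition~\ref{lem:lumis}), $M_s\lumis_\agent M_s$, so by the semantics of $\G_\agent$ we get $M_s\models\phi$. For $\G_\agent\phi\imp\G_\agent\G_\agent\phi$: suppose $M_s\models\G_\agent\phi$, and let $N_t$ with $M_s\lumis_\agent N_t$ and $O_u$ with $N_t\lumis_\agent O_u$ be arbitrary; by transitivity of $\lumis_\agent$ we get $M_s\lumis_\agent O_u$, hence $O_u\models\phi$; since $O_u$ was an arbitrary $\agent$-refinement of $N_t$, $N_t\models\G_\agent\phi$, and since $N_t$ was arbitrary, $M_s\models\G_\agent\G_\agent\phi$. (The converse directions of these two implications follow from the converse validities, but only the stated directions are asserted.)

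\textbf{Church--Rosser.} For $\F_\agent\G_\agent\phi\imp\G_\agent\F_\agent\phi$: unfold $\F_\agent$ as $\neg\G_\agent\neg$. Suppose $M_s\models\F_\agent\G_\agent\phi$, i.e.\ there is $N_t$ with $M_s\lumis_\agent N_t$ and $N_t\models\G_\agent\phi$. To show $M_s\models\G_\agent\F_\agent\phi$, take an arbitrary $M'_{s'}$ with $M_s\lumis_\agent M'_{s'}$; I must produce an $\agent$-refinement of $M'_{s'}$ satisfying $\phi$. Apply the Church--Rosser property of $\lumis_\agent$ (Proposition~\ref{lem:lumis}) to $N_t\lumis_\agent$-below-$M_s$ and $M'_{s'}\lumis_\agent$-below-$M_s$: this yields a model $N'_{t'}$ with $N_t\lumis_\agent N'_{t'}$ and $M'_{s'}\lumis_\agent N'_{t'}$. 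From $N_t\models\G_\agent\phi$ and $N_t\lumis_\agent N'_{t'}$ we get $N'_{t'}\models\phi$; so $N'_{t'}$ witnesses $M'_{s'}\models\F_\agent\phi$. Since $M'_{s'}$ was arbitrary, $M_s\models\G_\agent\F_\agent\phi$.

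\textbf{The fourth validity and the main obstacle.} For $\F_\agent\Dia_\agent\phi\eq\Dia_\agent\F_\agent\phi$ the left-to-right and right-to-left directions need separate witnesses, and this is where the genuine work lies, since it is not a pure frame-condition shadow but a commutation between the refinement quantifier and the modality. For $\Dia_\agent\F_\agent\phi\imp\F_\agent\Dia_\agent\phi$: if $M_s\models\Dia_\agent\F_\agent\phi$ there is $s'$ with $(s,s')\in R_\agent^M$ and a refinement $N'_{t'}$ of $M_{s'}$ with $N'_{t'}\models\phi$; one builds an $\agent$-refinement of $M_s$ by replacing the submodel generated at $s'$ by (a disjoint copy of) $N'$, keeping the arrow $(s,t')$ — this is an $\agent$-refinement because it leaves everything except the $s'$-subtree untouched and locally just substitutes a refinement, and in it $s$ still sees a state ($t'$) satisfying $\phi$, giving $\Dia_\agent\phi$. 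For the converse $\F_\agent\Dia_\agent\phi\imp\Dia_\agent\F_\agent\phi$: from an $\agent$-refinement $N_t$ of $M_s$ with $N_t\models\Dia_\agent\phi$, pick $t'\in tR_\agent^N$ with $N_{t'}\models\phi$; by the {\bf back}-$\agent$ clause of the refinement relation there is $s'\in sR_\agent^M$ with $M_{s'}\lumis_\agent N_{t'}$, hence $M_{s'}\models\F_\agent\phi$, hence $M_s\models\Dia_\agent\F_\agent\phi$. The only delicate point — the ``hard part'' — is checking carefully that the substitution-of-a-refined-subtree construction in the two ``forward'' directions really is an $\agent$-refinement of the whole model (one must verify {\bf atoms}, {\bf back}-$\agentb$, {\bf forth}-$\agentb$ for $\agentb\neq\agent$, and {\bf back}-$\agent$ hold on the glued relation), and that bisimulation invariance (Proposition~\ref{lem:bisim.invar}) lets us pass freely between a model and a disjoint-copy-based surgery on it; everything else is a routine unfolding of definitions.
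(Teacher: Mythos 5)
Your proof is correct and follows essentially the same route as the paper: the first three items are read off from the reflexivity, transitivity and Church--Rosser properties of $\lumis_\agent$ in Proposition~\ref{lem:lumis}, the direction $\F_\agent\Dia_\agent\phi\imp\Dia_\agent\F_\agent\phi$ uses \textbf{back}-$\agent$ exactly as in the paper, and your glue-in-a-disjoint-copy-of-$N'$ construction for the converse is a minor variant of the paper's construction (the paper deletes \emph{all} outgoing $\agent$-arrows from $s$ and adds the single arrow to $t'$, whereas you redirect only the arrow to $s'$; both yield valid $\agent$-refinements via the relation that is the identity on $S^M$ united with the given refinement relation). The only cosmetic slip is your reference to ``the two forward directions'' needing the surgery --- only the $\Dia_\agent\F_\agent\phi\imp\F_\agent\Dia_\agent\phi$ direction does.
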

\begin{proof} 
The first three items directly follow from Proposition \ref{lem:lumis}. The trivial refinement is an $\agent$-refinement; composition of two refinements is a refinement; and indeed it satisfies the Church-Rosser property. The fourth item directly follows from the semantics; consider the diamond form of the equivalence: the right-to-left direction is trivial, for the left-to-right direction note that if $\phi \vel \psi$ is true in some refinement of a given model, then $\phi$ is true or $\psi$ is true in that refinement, so $\F_\agent\phi$ is true or $\F_\agent \psi$ is true in the given model.

For the fourth, from left to right: let $M_s$ be such that $M_s \models \F_\agent \Dia_\agent \phi$, and let $M'_{s'}$ and $t'\in s'R'_\agent$ be such that $M_s \succeq_\agent M'_{s'}$,  $M'_{s'} \models \Dia_\agent \phi$, and $M'_{t'} \models \phi$. Because of {\bf back}, there is a $t\in sR_\agent$ such that  $M_t \succeq_\agent M'_{t'}$. Therefore $M_t \models \F_\agent \phi$ and thus $M_s \models \Dia_\agent \F_\agent \phi$. 

From right to left: let $M_s$ be such that $M_s \models \Dia_\agent \F_\agent \phi$, and let $t\in sR_\agent$ and $M'_{t'}$ be such that $M_t \succeq_\agent M'_{t'}$, $M_{t} \models \F_\agent \phi$, and $M'_{t'} \models \phi$. Consider the model $N$ with point $s$ that is the disjoint union of $M$ and $M'$ except that: all outgoing $a$-arrows from $s$ in $M$ are removed (all pairs $(s,t) \in R_a$), a new $a$-arrow links $s$ to $t'$ in $M'$ (add $(s,t')$ to the new $R_a$). Then $N_s$ is an $a$-refinement of $M_s$ that, obviously, satisfies $\Dia_\agent \phi$, so $M_s$ satisfies $\F_\agent \Dia_\agent \phi$. (This construction is typical for refinement modal logic semantics. It will reappear in various more complex forms later, e.g., in the soundness proof of the axiomatization $\axiomRML$.)
\end{proof}

The semantics of refinement modal logic is with respect to the class $\mathcal K$ of all models (for a given set of agents and atoms). If we restrict the semantics to a specific model class only, we get a very different logic. For example $\F \Box \bot$ is a validity in $\logicRML$: just remove all access. But in refinement epistemic logic, interpreted on ${\mathcal S5}$ models, this is not a validity: seriality of models must be preserved in every refinement. See \cite{hvdetal.felax:2010,halesetal:2011}. 

% In the next subsections we give a number of more elaborate examples to tune our intuitions on the refinement operator. More general semantic results and a detailed comparison of our notion to similar well-known notions in computer science and logic are deferred to Section \ref{sec.charact}, after that.

\subsection{Examples} \label{sec.examples}

\paragraph{Change of knowledge} \label{ex-1}

Given are two agents that are uncertain about the
  value of a fact $\atom$, and where this is common knowledge, and
  where $\atom$ is true. Both accessibility relations
  are equivalence relations, so the epistemic operators model
  the agents' knowledge. An informative event is possible after which
  $\agent$ knows that $\atom$ but $\agentb$ does not know that; this is expressed by
  $$\F_\agent (\know_\agent \atom \et \neg \know_\agentb \know_\agent   \atom)$$

  In Figure~\ref{fig-knowledge}, the initial state of information is on the left, and its
  refinement validating the postcondition is on the right. 
  In the visualization the actual states are underlined. If states are accessible for both $\agent$ and $\agentb$ we have labelled the (single) arrow with $\agent\agentb$. 

\begin{figure}[h]
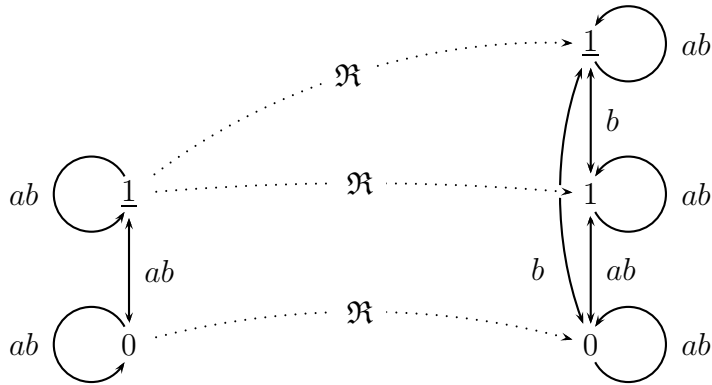

\[
\psset{border=2pt, nodesep=4pt, radius=2pt, tnpos=a}
\pspicture(0,0)(2,4)
$
\rput(0,0){\rnode{00}{0}}
\rput(0,2){\rnode{10}{\underline{1}}} 
\ncline{<->}{00}{10} \nbput{ab}
\nccircle[angle=90]{->}{00}{.5} \nbput{\agent\agentb}
\nccircle[angle=90]{->}{10}{.5} \nbput{\agent\agentb}
$
\endpspicture
\hspace{2cm} \ \hspace{2cm}
\psset{border=2pt, nodesep=4pt, radius=2pt, tnpos=a}
\pspicture(0,0)(2,2)
$
\rput(0,0){\rnode{01a}{0}}
\rput(0,2){\rnode{11a}{1}}
\rput(0,4){\rnode{10a}{\underline{1}}} 
\ncline{<->}{01a}{11a} \nbput{ab}
\ncline{<->}{10a}{11a} \naput{b}
\ncarc[arcangle=20,npos=0.15]{<->}{01a}{10a} %\naput{b}
\rput(-.7,1){\rnode{trick}{b}} % npos does not work... 
\nccircle[angle=270]{->}{01a}{.5} \nbput{\agent\agentb}
\nccircle[angle=270]{->}{10a}{.5} \nbput{\agent\agentb}
\nccircle[angle=270]{->}{11a}{.5} \nbput{\agent\agentb}
\ncarc[arcangle=15,linestyle=dotted]{->}{00}{01a} \ncput*{\bisrel}
\ncarc[arcangle=20,linestyle=dotted]{->}{10}{10a} \ncput*{\bisrel}
\ncarc[arcangle=5,linestyle=dotted]{->}{10}{11a} \ncput*{\bisrel}
$
\endpspicture
\]
\caption{An example of refinement as change of knowledge}
\label{fig-knowledge}
\end{figure}

On the left, the formula $\F (\know_\agent \atom \et \neg \know_\agentb \know_\agent \atom)$ is true, because $\know_\agent \atom \et \neg \know_\agentb \know_\agent \atom$ is true on the right. On the right, in the actual state there is no alternative for agent $\agent$ (only the actual state itself is considered possible by $\agent$), so $\know_\agent \atom$ is true, whereas agent $\agentb$ also considers another state possible, wherein agent $\agent$ considers it possible that $\atom$ is false. Therefore, $\neg \know_\agentb \know_\agent \atom$ is also true in the actual state on the right.

The model on the right in the figure is neither an $\agent$-refinement of the model on the left, nor a $\agentb$-refinement of it, but an ${\{\agent,\agentb\}}$-refinement.

Recalling Section \ref{subsec.am} on action models, a refinement of a pointed model can also be obtained by executing an epistemic action (Proposition \ref{prop.actionmodel}). Therefore, we should be able to see the refinement in this example as produced by an epistemic action. This is indeed the case. The epistemic action consists of two action points $\mathsf{t}$ and $\mathsf{p}$, they can be distinguished by agent $\agent$ but not by agent $\agentb$. What really happens is $\mathsf{p}$; it has precondition $\atom$. Agent $\agentb$ cannot distinguish this from $\mathsf{t}$ with precondition $\top$. 

The execution of this action is depicted in Figure \ref{fig.action}. The point of the structure is the one with precondition $\atom$: in fact, $\agent$ is learning that $\atom$, but $\agentb$ is uncertain between that action and the `trivial' action wherein nothing is learnt. The trivial action has precondition $\T$. It can be executed in both states of the initial model. The actual action can only be executed in the state where $\atom$ is true. Therefore, the resulting structure is the refinement with three states.

\begin{figure}[h]
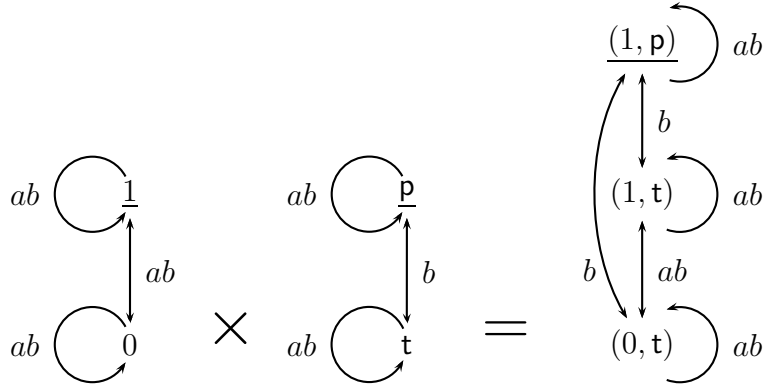

\[
\psset{border=2pt, nodesep=4pt, radius=2pt, tnpos=a}
\pspicture(-1,0)(0,4)
$
\rput(0,0){\rnode{00}{0}}
\rput(0,2){\rnode{10}{\underline{1}}} 
\ncline{<->}{00}{10} \nbput{ab}
\nccircle[angle=90]{->}{00}{.5} \nbput{\agent\agentb}
\nccircle[angle=90]{->}{10}{.5} \nbput{\agent\agentb}
$
\endpspicture
\hspace{1cm} \text{\huge $\times$} \hspace{1cm}
\pspicture(-1,0)(0,4)
$
\rput(0,0){\rnode{00}{\mathsf{t}}}
\rput(0,2){\rnode{10}{\underline{\mathsf{p}}}} 
\ncline{<->}{00}{10} \nbput{b}
\nccircle[angle=90]{->}{00}{.5} \nbput{\agent\agentb}
\nccircle[angle=90]{->}{10}{.5} \nbput{\agent\agentb}
$
\endpspicture
\hspace{1cm} \text{\huge $=$} \hspace{1cm}
\psset{border=2pt, nodesep=4pt, radius=2pt, tnpos=a}
\pspicture(-.5,0)(1,4)
$
\rput(0,0){\rnode{01a}{(0,\mathsf{t})}}
\rput(0,2){\rnode{11a}{(1,\mathsf{t})}}
\rput(0,4){\rnode{10a}{\underline{(1,\mathsf{p})}}} 
\ncline{<->}{01a}{11a} \nbput{ab}
\ncline{<->}{10a}{11a} \naput{b}
\ncarc[arcangle=30,npos=0.15]{<->}{01a}{10a} %\naput{b}
\rput(-.7,1){\rnode{trick}{b}} % npos does not work... 
\nccircle[angle=270]{->}{01a}{.5} \nbput{\agent\agentb}
\nccircle[angle=270]{->}{10a}{.5} \nbput{\agent\agentb}
\nccircle[angle=270]{->}{11a}{.5} \nbput{\agent\agentb}
$
\endpspicture
\]
\caption{The refinement in Example~\ref{ex-1}.}
\label{fig.action}
\end{figure}

Action models can also be added as primitives to the multi-agent modal logical language and are then interpreted with a dynamic modal operator --- similar to automata-PDL. To get a well-defined logical language, the set of action model frames needs to be enumerable, and therefore such action models must be finite. Thus we get {\em action model logic}. We now recall the result in Proposition \ref{prop.actionmodel} that on finite models every refinement corresponds to the execution of an action model and vice versa (where the action model constructed from a given refinement may be infinite), but that it is unknown if that finiteness restriction can be lifted. If that result can be generalized, that would be of interest, as that would suggest that refinement modal logic is equally expressive as action model logic with quantification over action models. If these logics were equally expressive, action model logic with quantification would be decidable---a surprising fact, given that public announcement logic with quantification over public announcements (singleton action models) is undecidable \cite{frenchetal:2008}.

\paragraph{Software verification and design} \label{ex-2}
Consider a class of discrete-event systems, whose elements represent
devices that interact with an environment. Each device is described
by means of actions $c$ and $u$, respectively called `controllable'
and `uncontrollable' actions.  Given an expected property described
by some formula $\phi$, say in $\lang^\mu$, we use refinement
quantifiers to express several classic verification/synthesis
problems. We let $\Box\phi$ stand for $\Box_c\phi\et\Box_u\phi$.

The \emph{the control problem} \cite{DBLP:conf/mfcs/RiedwegP03}, known
as the question ``is there a way to control actions $c$ of the system
$S$ so that property $\phi$ is guaranteed?'', can be expressed in
$\lang_\G$ by wondering whether \[ S \models \F_c \phi \ . \]

The \emph{module checking problem} \cite{kupferman01} is the problem
of determining whether an \emph{open} system satisfies a given
property. In other words, whether the property holds when the system
is composed with an arbitrary environment. Let us say that action $c$
is an abstract action that denotes internal ones, while action $u$
abstracts all external actions, i.e.\ actions performed by the environment. Also, assume there
is an atomic proposition $e$ that distinguishes states where it is
the turn of the system to act (thus only action $c$ is available) from states
where it is the turn of the environment (thus only action $u$ is
available). In this setting, we answer positively to the module
checking problem iff $S \models \G_u \phi$. As arbitrary environments
are too permissive, we may force hypotheses such as restricting to
\emph{non-blocking} environments: the property can be captured by the
$\lang^\mu$-formula $\text{\tt NonBlockingEnv} := \nu x. (e
\Rightarrow \M_u \top) \wedge \know x$, which formally says that it is
always the case ($ \nu x. (....) \wedge \know x$) that
whenever in an environment state, there is an outgoing transition from
that state ($e \Rightarrow \M_u \top$).  Now, by `guarding' the
universal quantification over all $u$-refinements (i.e.\ all
environments) with the $\text{\tt NonBlockingEnv}$ assumption, the
statement becomes \[ S \models \G_u (\text{\tt NonBlockingEnv}
\Rightarrow \phi) \]

The \emph{generalized control problem} is the combination of the two
previous problems, by questioning the existence of a control such
that the controlled system satisfies the property in all possible
environments. This is expressed by wondering whether \[ S \models \F_c
\G_u (\text{\tt NonBlockingEnv} \Rightarrow \phi) \ . \]

A last example is borrowed from \emph{protocol synthesis
  problems}. Consider a specification, \text{\tt MUTEX}, of a mutual exclusion
protocol involving processes $1,2,\ldots k$, and some property
$\phi$ specified in $\lang^\mu$. Now we may ask if we can find a refinement
of \text{\tt MUTEX} that satisfies $\phi$ but also such that if
process $i$ is in the critical section ($cs_i$) at time $n+1$, then
this is known at time $n$. This is expressed as
\begin{equation*}
\text{\tt MUTEX} \models \F[\AG(\M cs_i \Rightarrow \know cs_i) \wedge \phi]
\end{equation*}
where $\AG$ is the CTL-modality, which is defined in $\lang^\mu$ as
$\AG(\psi)\equiv\nu x.\psi \wedge \know x$ and meaning that this is
true at any time. The refinement consists in moving the
nondeterministic choices forward, so that a fork at time $n$ becomes a
fork at time $n-1$ with each branch having a single successor at time
$n$, as depicted in Figure~\ref{mutex}.
\begin{figure}[h]
\begin{center}\scalebox{0.5}{
\begin{picture}(0,0)%
 \epsfig{file=aiml45Fig2.pstex}%
 \end{picture}%
 \setlength{\unitlength}{3947sp}%
 \begingroup\makeatletter\ifx\SetFigFont\undefined%
 \gdef\SetFigFont#1#2#3#4#5{%\reset@font
   \fontsize{#1}{#2pt}%
   \fontfamily{#3}\fontseries{#4}\fontshape{#5}%
    \selectfont}%
    \fi\endgroup%
    \begin{picture}(5930,5574)(1201,-5173)
    \put(6301,-4861){\makebox(0,0)[lb]{\smash{{\SetFigFont{17}{20.4}{\rmdefault}{\mddefault}{\updefault}{\color[rgb]{0,0,0}$cs(2)$}%
    }}}}
    \put(3826,-2161){\makebox(0,0)[lb]{\smash{{\SetFigFont{17}{20.4}{\rmdefault}{\mddefault}{\updefault}{\color[rgb]{0,0,0}$\lumis$}%
    }}}}
    \put(2776,-4786){\makebox(0,0)[lb]{\smash{{\SetFigFont{17}{20.4}{\rmdefault}{\mddefault}{\updefault}{\color[rgb]{0,0,0}$cs(2)$}%
    }}}}
    \put(1201,-4786){\makebox(0,0)[lb]{\smash{{\SetFigFont{17}{20.4}{\rmdefault}{\mddefault}{\updefault}{\color[rgb]{0,0,0}$cs(1)$}%
    }}}}
    \put(4876,-4861){\makebox(0,0)[lb]{\smash{{\SetFigFont{17}{20.4}{\rmdefault}{\mddefault}{\updefault}{\color[rgb]{0,0,0}$cs(1)$}%
    }}}}
    \end{picture}%
}
\end{center}
\caption{The refinement of \text{\tt MUTEX}.\label{mutex}}
\end{figure}

\subsection{Refinement quantification is bisimulation quantification plus relativization} \label{sec.relativization}

In Section \ref{sec.modelrestriction} we presented a semantic perspective of refinement as bisimulation followed by model restriction, or, alternatively and equivalently, as a restricted bisimulation, namely except for some propositional variable, followed by a model restriction to that variable. We now lift this result to a corresponding syntactic, logical, perspective of the refinement quantifier as a bisimulation quantifier followed by relativization.

More precisely, in this section we will show that a refinement formula $\F_\agent \phi$ is equivalent to a bisimulation quantification over a variable not occurring in $\phi$, followed by a (non-standard) relativization for that agent to that variable, for which we write $\bqis \atomb \phi^{(\agent,\atomb)}$ (to be defined shortly). For refinement $\succeq$ for the set of all agents (recall that we write $\succeq$ for $\succeq_\Agents$, and $\F$ for $\F_\Agents$) we can expand this perspective to even more familiar ground: a refinement formula $\F \phi$ is equivalent to a bisimulation quantification over a variable not in $\phi$ followed by (standard) relativization to that variable: $\bqis \atomb \phi^\atomb$. These results immediately clarify in what sense the refinement modality constitutes `implicit' quantification, namely over a variable not occurring in the formula bound by it.

For the syntactic correspondence we first introduce the notion of relativization (for settings in modal logic, see \cite{jfak.lonely:2006,milleretal:2005}). We propose a  definition of relativization that may be considered non-standard for several reasons. Firstly, it is relativization not merely to a propositional variable but also for a given agent only. The standard definition is then the special case of relativization to that variable for all agents (we will prove that consecutive relativization to the same variable for two different agents is commutative, in other words, order independent). Secondly, the relativization that we propose corresponds in the semantics to arrow elimination and not to state elimination (in other words, it does not correspond to submodel restriction). %@hans
From the modal logical literature, the approach in \cite{milleretal:2005} is arrow-eliminating but that in \cite{jfak.lonely:2006} is state-eliminating. % The relation to (first-order) model theory is closer for the state-eliminating approach, therefore our approach may also be considered non-standard for that reason.

The arrow-eliminating relativization need only be done in accessible states but not in the actual state (e.g., the relativization of a variable $\atomb$ to a variable $\atom$ is that same variable $\atomb$ and not $\atom \et \atomb$). 

The difference between state-eliminating relativization and arrow-eliminating relativization is similar to the difference between state-eliminating public announcement semantics \cite{plaza:1989,baltagetal:1998} and arrow-eliminating public announcement semantics \cite{kooi.jancl:2007,gerbrandyetal:1997}, in the area of dynamic epistemic logic. As our relativization is with respect to a given agent, we have no option but to use arrow-eliminating relativization. 

Given our purpose to translate refinement modal logic into bisimulation quantified modal logic, we also expand the definition of relativization to include quantifiers. This definition will then be used in Section \ref{sec-FEL-mu}.
\begin{definition}[Relativization] \label{def-relativization} 
Relativization $\bullet^{(\agent,\atom)}: \lang_\bqall \imp \lang_\bqall$ to propositional variable $\atom$ for agent $\agent\in\Agents$ is defined as follows. 
\[ \begin{array}{lcl}
\atomb^{(\agent,\atom)} & = & \atomb \\
(\neg\phi)^{(\agent,\atom)} & = & \neg\phi^{(\agent,\atom)} \\
(\phi\et\psi)^{(\agent,\atom)} & = & \phi^{(\agent,\atom)} \et \psi^{(\agent,\atom)} \\
(\Box_\agent\phi)^{(\agent,\atom)} & = & \Box_\agent(\atom \imp \phi^{(\agent,\atom)}) \\
(\Box_\agentb\phi)^{(\agent,\atom)} & = & \Box_\agentb \phi^{(\agent,\atom)} \hspace{3cm} \hfill \text{for } \agentb\neq\agent \\
(\bqall \atomb \phi)^{(\agent,\atom)} & = & \bqall \atomb \phi^{(\agent,\atom)} \hspace{3cm} \hfill \text{for } \atomb\neq\atom \\
(\bqall \atom \phi)^{(\agent,\atom)} & = & \bqall \atomb \phi[\atomb\backslash\atom]^{(\agent,\atom)} \hspace{3cm} \hfill \text{choose } \atomb\text{ that does not occur in } \phi
\end{array} \]
\end{definition}
%We now have the obvious
\begin{lemma} \label{lemma.relatother}
Let $M_s$ be a model with accessibility function $R$ and $R'_\agent \subseteq R_\agent$ such that: $(t,t') \in R'_\agent$ iff $M_{t'} \models \atom$. Then $M_s \models \phi^{(\agent,\atom)}$ if and only if $M_s | R'_\agent \models \phi$.
\end{lemma}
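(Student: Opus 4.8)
The plan is to prove the equivalence by induction on the structure of $\phi$, keeping the point $s$ universally quantified so that the induction hypothesis can be applied at successor states (and, if the lemma is intended to cover formulas with bisimulation quantifiers, at bisimilar models). Two preliminary remarks make the bookkeeping trivial. First, $M$ and $M|R'_\agent$ have the same domain and valuation and the same accessibility relation $R_\agentb$ for every agent $\agentb\neq\agent$; they differ only in that the $\agent$-relation of $M|R'_\agent$ is $R'_\agent$. Second, the hypothesis relating $R'_\agent$, $R_\agent$ and $\atom$ is a property of the unpointed model --- equivalently, $sR'_\agent$ equals $\{t'\in sR_\agent \mid M_{t'}\models\atom\}$ for every state $s$ --- so it is inherited unchanged when we recurse into a subformula at another state.

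The atomic and boolean cases are immediate: $\atomb^{(\agent,\atom)}=\atomb$ and the valuation is untouched, while relativization commutes with $\neg$ and $\et$, so those cases reduce directly to the induction hypothesis. The case $\phi=\Box_\agentb\psi$ with $\agentb\neq\agent$ is equally routine: since $(\Box_\agentb\psi)^{(\agent,\atom)}=\Box_\agentb\psi^{(\agent,\atom)}$ and $sR_\agentb$ is the same in $M$ and in $M|R'_\agent$, both sides unfold to ``$M_{t'}\models\psi^{(\agent,\atom)}$ for all $t'\in sR_\agentb$'', which match by the induction hypothesis applied at each $t'$.

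The heart of the argument is the case $\phi=\Box_\agent\psi$, and it is the one step where the constraint on $R'_\agent$ is actually used. By definition $(\Box_\agent\psi)^{(\agent,\atom)}=\Box_\agent(\atom\imp\psi^{(\agent,\atom)})$, so $M_s$ satisfies the relativized formula iff for every $t'\in sR_\agent$ with $M_{t'}\models\atom$ one has $M_{t'}\models\psi^{(\agent,\atom)}$. By the second preliminary remark the set of such $t'$ is exactly $sR'_\agent$, so this is equivalent to ``$M_{t'}\models\psi^{(\agent,\atom)}$ for all $t'\in sR'_\agent$'', hence, by the induction hypothesis at each $t'$, to ``$(M|R'_\agent)_{t'}\models\psi$ for all $t'\in sR'_\agent$'', which is just $(M|R'_\agent)_s\models\Box_\agent\psi$ since $sR'_\agent$ is the set of $\agent$-successors of $s$ in $M|R'_\agent$. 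I expect this to be the only step needing genuine care: the equivalence hinges on $sR'_\agent$ being the \emph{full} set of $R_\agent$-successors of $s$ at which $\atom$ holds, so it is the exact shape of the hypothesis on $R'_\agent$, not merely $R'_\agent\subseteq R_\agent$, that drives the $\Box_\agent$ step.

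Finally, if the statement is meant to apply to formulas with bisimulation quantifiers (Definition \ref{def-relativization} relativizes those too), the case $\phi=\bqall\atomc\psi$ reduces, after the bound-variable renaming prescribed there, to $\atomc\neq\atom$, where $(\bqall\atomc\psi)^{(\agent,\atom)}=\bqall\atomc\psi^{(\agent,\atom)}$. The extra ingredient is the easy observation that, since $\atomc\neq\atom$, a witnessing relation for $\bisim^\atomc$ still preserves the interpretation of $\atom$ and the $\agent$-transition structure, hence it also witnesses $\bisim^\atomc$ between the corresponding $\atom$-restrictions of the two models' $\agent$-relations; with this, a given $\atomc$-variant of $M$ restricts to an $\atomc$-variant of $M|R'_\agent$ (to which the induction hypothesis applies), and conversely every $\atomc$-variant of $M|R'_\agent$ is, up to $\bisim^\atomc$, such a restriction, obtained by re-adjoining fresh $\neg\atom$-states in the style of the blow-up construction of Proposition \ref{lemma.relatag2}. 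Checking this last correspondence is the only mildly delicate point in the quantifier cases; everything else is a plain unfolding of the semantics.
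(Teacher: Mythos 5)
Your proof is correct and follows essentially the same route as the paper's: structural induction on $\phi$ with the hypothesis on $R'_\agent$ doing its real work only in the $\Box_\agent$ case, and the same restriction/expansion correspondence for the bisimulation-quantifier case (the paper's step marked $(*)$). Like the paper's own proof, you read the hypothesis as the equivalence $sR'_\agent=\{t'\in sR_\agent\mid M_{t'}\models\atom\}$, which is indeed what the $\Box_\agent$ step requires.
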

\begin{proof}
The proof is by induction on the structure of $\phi$.
\begin{itemize}
\item $
M_s \models \atomb^{(\agent,\atom)} \Eq \\
M_s \models \atomb \Eq \hspace{2cm} \text{propositional variables do not change value} \\
M_s | R'_\agent \models \atomb
$
\item The clauses for negation and conjunction are elementary.
%$
%M_s \models (\neg\phi)^{(\agent,\atom)} \Eq \\
%M_s \models \neg \phi^{(\agent,\atom)} \Eq \\
%M_s \not\models \phi^{(\agent,\atom)} \Eq \hspace{2cm} \text{I.H.} \\
%M_s | R'_\agent \not\models \phi \Eq \\
%M_s | R'_\agent \models \neg\phi
%$
%\item 
%$
%M_s \models (\phi\et\psi)^{(\agent,\atom)} \Eq \\ 
%M_s \models \phi^{(\agent,\atom)} \et \psi^{(\agent,\atom)} \Eq \\ 
%M_s \models \phi^{(\agent,\atom)} \text{ and } M_s \models \psi^{(\agent,\atom)} \Eq \hspace{2cm} \text{I.H.} \\ 
%M_s | R'_\agent \models \phi \text{ and } M_s  | R'_\agent \models \psi \Eq \\ 
%M_s | R'_\agent \models \phi\et\psi
%$
\item $
M_s \models (\Box_\agent \phi)^{(\agent,\atom)} \Eq \\
M_s \models \Box_\agent (\atom \imp \phi^{(\agent,\atom)}) \Eq \\
\text{for all } t \in sR_\agent: M_t \models \atom \imp \phi^{(\agent,\atom)} \Eq \\
\text{for all } t \in sR_\agent: M_t \models \atom \text{ implies } M_t \models \phi^{(\agent,\atom)} \Eq \hspace{2cm} \text{I.H.} \\
\text{for all } t \in sR_\agent: M_t \models \atom \text{ implies } M_t  | R'_\agent \models \phi \Eq \hspace{1cm} t \in sR_\agent \text{ and } t \models \atom \text{ iff } t \in sR'_\agent \\
\text{for all } t \in sR'_\agent: M_t | R'_\agent \models \phi \Eq \\
M_s | R'_\agent \models \Box_\agent \phi
$
\item $
M_s \models (\Box_\agentb \phi)^{(\agent,\atom)} \Eq \\
M_s \models \Box_\agentb \phi^{(\agent,\atom)} \Eq \\
\text{for all } t \in sR_\agentb: M_t \models \phi^{(\agent,\atom)} \Eq  \hspace{2cm} \text{I.H.} \\
\text{for all } t \in sR_\agentb \text{ (in } M_t): M_t | R'_\agent \models \phi \Eq \hspace{2cm} sR_\agentb \text{ in } M \text{ equals } sR_\agentb \text{ in } M | R'_\agent \\
\text{for all } t \in sR_\agentb \text{ (in } M_t | R'_\agent): M_t | R'_\agent \models \phi \Eq \\
M_s | R'_\agent \models \Box_\agentb \phi
$
%@hans from here on, differently proved
\item For a more natural argument we take the existential quantifier instead of the universal quantifier. First, observe that: 

\medskip

\noindent $
M_s \models (\bqis\atomb\phi)^{(\agent,\atom)} \Eq \\
M_s \models \bqis\atomb \phi^{(\agent,\atom)} \Eq \\
\text{there is an } N_t \bisim^\atomb M_s: N_t \models \phi^{(\agent,\atom)} \Eq \hspace{.5cm} \text{I.H.} \\
\text{there is an } N_t \bisim^\atomb M_s: N_t| R''_\agent \models \phi \hspace{.5cm} \text{where } R''_\agent \subseteq R^N_\agent \text{ s.t. } (u,u') \in R''_\agent \text{ iff } M_{u'} \models \atom
$

\medskip

We also have that, by definition:

\medskip

\noindent $
M_s | R'_\agent \models \bqis\atomb\phi \Eq \\ 
\text{there is an } N'_{t'} \bisim^\atomb M_s | R'_\agent: N'_{t'} \models \phi
$

\medskip

It remains to show that the two final statements in these chains of equivalences are also equivalent. 

From left to right is easy. If $\bisrel: N_t \bisim^\atomb M_s$, then also $\bisrel: N_t | R''_\agent \bisim^\atomb M_s| R'_\agent$. In $N$ and $M$ we remove all $\agent$-arrows to $\neg\atom$ states; and if it is already a bisimulation, then the forth and back requirements still hold for fewer pairs in the accessibility relation for $\agent$. So we can take $N'_{t'} = N_t | R''_\agent$. 

From right to left is not easy. Let us first explain this informally. Given that $N'_{t'} \bisim^\atomb M_s | R'_\agent$, the part of $M$ that is inaccessible from $M_s | R'_\agent$ (i.e., not in the $s$-generated submodel) may not be bisimilar to anything in $N'$. This is problematic, because we need to transform $N'_{t'}$ to some $N_t$ in a way that establishes a $\atomb$ restricted bisimulation between $N_t$ and all of $M_s$. Fortunately, the transformation can be an extension of $N'_{t'}$, wherein we uniformly treat states in that inaccessible part of $M$ and other states of $M$: we do not need to be economic in our construction. These are the details.

Let $\bisrel: N'_{t'} \bisim^\atomb M_s | R'_\agent$ be the restricted bisimulation. To be explicit, let $M = (\States,R,V)$ and let $N' = (\States',R',V')$. Consider $\States^{\agent\overline{p}} = \{ u \in \States^M \mid \exists v \in S, (v,u) \in R_\agent\setminus R'_\agent \}$. For each $u \in \States^{\agent\overline{p}}$ we need an exact copy $M^u$ of $M$ (let $M^u = (\States^u,R^u,V^u)$) in our construction. We now define $N = (\States^N,R^N,V^N)$ as follows: \begin{itemize} \item $\States^N = \States' \union \Union \{ \States^u \mid u \in \States^{\agent\overline{p}} \}$; \item for all $\agentb\neq\agent$, $R^N_\agentb = R'_\agentb \union \Union \{ R^u_\agentb \mid u \in \States^{\agent\overline{p}} \}$; \item $R^N_\agent = R'_\agent \union \Union \{ R^u_\agent \mid u \in \States^{\agent\overline{p}} \} \union \{(v',u) \mid (v',v) \in \bisrel$ and $(v,u) \in R_\agent\setminus R'_\agent \}$; \item for all $\atom\in\Atoms$, $V^N(\atom) = V'(\atom) \union \Union \{ V^u(\atom) \mid u \in \States^{\agent\overline{p}} \}$. 
\end{itemize}
Now take $t = t'$, and let $R''_\agent$ as before be restriction of $R^N_\agent$ to pairs $(u,u') \in R^N_\agent$ such that $u'$ satisfies $p$. It is now immediate that $N_t \bisim^\atomb M_s$ and therefore also $N_t| R''_\agent \models \phi$.

\item The other clause for the universal quantifier starts with a renaming operation (that equally applies to the existential quantifier), and then proceeds as in the previous clause.
\end{itemize}
\end{proof}
Agent relativization relates as expected to the standard notion of relativization (for the set of all agents simultaneously). This is because relativization to different variables for different agents is commutative.
\begin{lemma} \label{lemma.relswap}
Let $\phi \in \lang_\bqall$. Then $(\phi^{(\agent,\atom)})^{(\agentb,\atomb)} = (\phi^{(\agentb,\atomb)})^{(\agent,\atom)}$.
\end{lemma}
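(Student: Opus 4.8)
The plan is to prove Lemma~\ref{lemma.relswap} by a straightforward induction on the structure of $\phi \in \lang_\bqall$, where the only subtlety is the bookkeeping for the bisimulation-quantifier clauses in Definition~\ref{def-relativization}. Throughout I would keep in mind that the two relativizations commute at the level of the \emph{defining equations}: relativizing $\Box_\agent$ introduces a guard $\atom \imp (\cdot)$ and relativizing $\Box_\agentb$ introduces a guard $\atomb \imp (\cdot)$, and since $\agent \neq \agentb$ (the interesting case) these guards attach to \emph{different} modal operators, so their order is irrelevant; and relativizing does not touch propositional variables, so neither side ever disturbs $\atom$ or $\atomb$ as atoms.

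\textbf{Base and propositional cases.} For $\phi = \atomc$ an atom, both sides equal $\atomc$ by the first clause of Definition~\ref{def-relativization}. For $\neg\phi$ and $\phi\et\psi$ the relativizations push through the connective on both sides, so the identity follows from the induction hypothesis applied to the immediate subformulae. (One should note that if $\agent = \agentb$ and $\atom = \atomb$ the statement is trivial, and if $\agent = \agentb$ but $\atom \neq \atomb$ the two operations do not literally commute as written, so the lemma is understood for $\agent \neq \agentb$, or else requires $\atom = \atomb$; I would state this restriction explicitly if it is not already implicit in the intended reading, matching its only use in Lemma~\ref{lemma.relatother}'s successors.)

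\textbf{Modal cases.} Here one splits on which agent the box belongs to. If the box is $\Box_\agentc$ with $\agentc \notin \{\agent,\agentb\}$, both relativizations merely pass through, and we invoke the induction hypothesis on the body. If the box is $\Box_\agent$: on the left, $(\Box_\agent\phi)^{(\agent,\atom)} = \Box_\agent(\atom\imp\phi^{(\agent,\atom)})$, and then relativizing for $(\agentb,\atomb)$ turns this into $\Box_\agent(\atom\imp(\phi^{(\agent,\atom)})^{(\agentb,\atomb)})$ (the guard $\atom$ is an atom, untouched by $(\cdot)^{(\agentb,\atomb)}$, and the outer box is $\Box_\agent$ with $\agent\neq\agentb$ so no new guard appears); on the right, $(\Box_\agent\phi)^{(\agentb,\atomb)} = \Box_\agent\phi^{(\agentb,\atomb)}$ and then $(\cdot)^{(\agent,\atom)}$ yields $\Box_\agent(\atom\imp(\phi^{(\agentb,\atomb)})^{(\agent,\atom)})$. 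These agree by the induction hypothesis. The case $\Box_\agentb$ is symmetric.

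\textbf{Quantifier case — the main obstacle.} The only real care is needed for $\bqall\atomc\phi$, because the clause $(\bqall\atom\phi)^{(\agent,\atom)} = \bqall\atomc'\,\phi[\atomc'\backslash\atom]^{(\agent,\atom)}$ (with $\atomc'$ fresh) performs a renaming when the bound variable clashes with the relativization variable, and we must check the renamings done on the two sides cannot interfere. If $\atomc \notin \{\atom,\atomb\}$, both relativizations just push inside the quantifier and we apply the induction hypothesis to $\phi$. If $\atomc = \atom$ (so $\atomc \neq \atomb$): on one side we first rename $\atom$ to a fresh $\atomc_1$ and relativize, then relativize for $(\agentb,\atomb)$ (which leaves the now-bound $\atomc_1$ alone since $\atomc_1 \neq \atomb$); on the other side we first relativize for $(\agentb,\atomb)$ — which does \emph{not} rename, since the bound variable is $\atom \neq \atomb$ — and then relativize for $(\agent,\atom)$, which renames $\atom$ to a fresh variable. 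To make the two resulting formulae \emph{literally} equal (the lemma asserts syntactic identity, not just equivalence), I would either (i) fix a deterministic choice of fresh variable depending only on the formula, so both sides pick the same one, or (ii) observe that the freshness condition makes the choice of fresh variable immaterial up to $\alpha$-equivalence and read the identity in Lemma~\ref{lemma.relswap} modulo $\alpha$-renaming of bound variables — which is the reading implicitly used everywhere else in the paper. With that convention, the renamings commute because they act on disjoint variables ($\atom$ versus $\atomb$) and the induction hypothesis closes the case; $\atomc = \atomb$ is symmetric. The whole argument is routine once the $\alpha$-conversion convention is pinned down; that pinning-down is the one place worth an explicit sentence.
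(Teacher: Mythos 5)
Your proof is correct and follows essentially the same route as the paper's: a structural induction in which the only non-trivial cases are the boxes $\Box_\agent$, $\Box_\agentb$ (handled via the guards $\atom\imp(\cdot)$ and $\atomb\imp(\cdot)$ attaching to different modalities) and the quantifiers $\bqall\atom$, $\bqall\atomb$ (handled via renaming to a fresh variable). Your explicit remarks that the lemma is read for $\agent\neq\agentb$ and that the quantifier case needs either a canonical choice of fresh variable or identity modulo $\alpha$-equivalence make precise what the paper leaves implicit (its proof just says ``choose $\atomc\neq\atomb$ (or else, yet another step)''), but this is a sharpening of the same argument rather than a different one.
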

\begin{proof}
By induction on the structure of $\phi$. The non-trivial cases are $\Box_\agent \phi$, $\Box_\agentb \phi$ (follows dually), $\bqall \atom \phi$, and $\bqall \atomb \phi$ (also follows dually). Note that $(\agent,\atom)$-relativization distributes over implication.
\begin{itemize}
\item $((\Box_\agent \phi)^{(\agent,\atom)})^{(\agentb,\atomb)} \Eq \\
(\Box_\agent (\atom \imp \phi^{(\agent,\atom)}))^{(\agentb,\atomb)} \Eq \\
\Box_\agent (\atom \imp \phi^{(\agent,\atom)})^{(\agentb,\atomb)} \Eq \\
\Box_\agent (\atom^{(\agentb,\atomb)} \imp (\phi^{(\agent,\atom)})^{(\agentb,\atomb)}) \Eq \hspace{2cm} \text{I.H., and clause for variables} \\
\Box_\agent (\atom \imp (\phi^{(\agentb,\atomb)})^{(\agent,\atom)}) \Eq \\
(\Box_\agent \phi^{(\agentb,\atomb)})^{(\agent,\atom)} \Eq \\
((\Box_\agent \phi)^{(\agentb,\atomb)})^{(\agent,\atom)}
$

\item $((\bqall \atom \phi)^{(\agent,\atom)})^{(\agentb,\atomb)} \Eq \hspace{2cm} \text{choose $\atomc\neq\atomb$ (or else, yet another step)} \\
(\bqall \atomc \phi[\atomc \backslash \atom]^{(\agent,\atom)})^{(\agentb,\atomb)} \Eq \\
\bqall \atomc (\phi[\atomc \backslash \atom]^{(\agent,\atom)})^{(\agentb,\atomb)} \Eq \hspace{2cm} \text{I.H.} \\
\bqall \atomc (\phi[\atomc \backslash \atom]^{(\agentb,\atomb)})^{(\agent,\atom)} \Eq  \hspace{2cm} \text{substitution of other variables than } \atomb \\
\bqall \atomc (\phi^{(\agentb,\atomb)}[\atomc \backslash \atom])^{(\agent,\atom)} \Eq \\
(\bqall \atom \phi^{(\agentb,\atomb)})^{(\agent,\atom)} \Eq \\
((\bqall \atom \phi)^{(\agentb,\atomb)})^{(\agent,\atom)}
$
\end{itemize}
\end{proof}
%@hans 
Given Lemma \ref{lemma.relswap}, we may view a nesting of relativizations $( \dots (\phi^{(\agent_1,\atom)})\dots^{(\agent_n,\atom)})$ as a relativization $\phi^{(\{\agent_1,\dots,\agent_n\},\atom)}$ for the set of agents $\{\agent_1,\dots,\agent_n\}$. Furthermore, for $\phi^{(\Agents,\atom)}$ we can write $\phi^\atom$: the usual relativization for all agents simultaneously. % In our arrow-eliminating relativization $\phi^\atom$ does not entail the truth of $\atom$, but on serial models, it does.

To make the syntactic correspondence we now introduce a translation.
\begin{definition} \label{def.tsjakka}
The translation $t: \lang_\G \imp \lang_\bqall$ is defined by induction on $\phi\in\lang_\G$. All clauses except $\G_\agent \phi$ are trivial.
\[ \begin{array}{lcl}
t(\atom) & = & \atom \\
t(\neg \phi) & = & \neg t(\phi) \\
t(\phi\et\psi) & = & t(\phi) \et t(\psi) \\
t(\Box_\agent\phi) & = & \Box_\agent t(\phi) \\
t(\G_\agent \phi) & = & \bqall \atom \ t(\phi)^{(\agent,\atom)} \hspace{2cm} \hfill \text{where } \atom\text{ does not occur in } \phi
\end{array} \]
\end{definition}
\begin{example}
\[ \begin{array}{llllll}
t(\F_\agent \F_\agentb \atomc) &=& 
\bqis \atom \ t(\F_\agentb \atomc)^{(\agent,\atom)} &= \\ 
\bqis \atom (\bqis \atom \ t(\atomc)^{(\agentb,\atom)})^{(\agent,\atom)} &=& 
\bqis \atom (\bqis \atom \ \atomc^{(\agentb,\atom)})^{(\agent,\atom)} &= \\ 
\bqis \atom (\bqis \atom \ \atomc)^{(\agent,\atom)} &=& 
\bqis \atom \bqis \atomb \ \atomc^{(\agent,\atomb)} &= & 
\bqis \atom \bqis \atomb \ \atomc
\end{array} \]
\end{example}
\begin{proposition} \label{prop-ref-and-bisimquant} Let $\phi\in\lang_\G$. Then $\phi$ is equivalent to $t(\phi)$.
\end{proposition}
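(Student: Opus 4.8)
The plan is to prove Proposition~\ref{prop-ref-and-bisimquant} by induction on the structure of $\phi \in \lang_\G$, showing for every pointed model $M_s$ that $M_s \models \phi$ iff $M_s \models t(\phi)$. The propositional and boolean cases ($\atom$, $\neg\phi$, $\phi\et\psi$) and the modal case $\Box_\agent\phi$ are immediate from the definition of $t$ and the induction hypothesis, since $t$ commutes with all these constructs and the semantic clauses for them are identical in $\lang_\G$ and $\lang_\bqall$. So the entire content is the clause $\G_\agent\phi$, equivalently (dualizing) the clause $\F_\agent\phi$; I will do $\F_\agent$ since the refinement semantics is phrased existentially via $\lumis_\agent$ and the translation $t(\F_\agent\phi) = \bqis\atomb\, t(\phi)^{(\agent,\atomb)}$ (with $\atomb$ fresh).

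For that clause I would argue both directions using Proposition~\ref{lemma.relatag2} (refinement as restricted bisimulation followed by arrow-restriction to a fresh variable) together with Lemma~\ref{lemma.relatother} (relativization $\psi^{(\agent,\atom)}$ captures exactly the arrow-restriction $M|R'_\agent$ where $R'_\agent$ keeps only $\agent$-arrows into $\atom$-states). Concretely, suppose $M_s \models \F_\agent\phi$, so there is a model $N_t$ with $M_s \lumis_\agent N_t$ and $N_t \models \phi$. By the second item of Proposition~\ref{lemma.relatag2}, applied with the fresh variable $\atomb$, there is $N'_t$ with $M_s \bisim^\atomb N'_t$ and $N'_t|R'' = N_t$, where $R''$ restricts $R'_\agent$ to arrows into $\atomb$-states. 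Since $\atomb$ does not occur in $\phi$, by bisimulation invariance (Proposition~\ref{lem:bisim.invar}) applied to $t(\phi)$ (which also does not mention $\atomb$) together with the induction hypothesis $N_t \models \phi \iff N_t \models t(\phi)$, and by Lemma~\ref{lemma.relatother} we get $N'_t \models t(\phi)^{(\agent,\atomb)}$; hence, since $N'_t \bisim^\atomb M_s$, we conclude $M_s \models \bqis\atomb\, t(\phi)^{(\agent,\atomb)} = t(\F_\agent\phi)$. For the converse, from $M_s \models \bqis\atomb\, t(\phi)^{(\agent,\atomb)}$ pick $N'_t \bisim^\atomb M_s$ with $N'_t \models t(\phi)^{(\agent,\atomb)}$; by Lemma~\ref{lemma.relatother}, $N'_t|R'_\agent \models t(\phi)$, and $M_s \bisim^\atomb N'_t$ implies (restricting arrows into $\atomb$-false states is exactly an $\agent$-refinement composed with a bisimulation) that $M_s \lumis_\agent N'_t|R'_\agent$; by the induction hypothesis $N'_t|R'_\agent \models \phi$, so $M_s \models \F_\agent\phi$.

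The main obstacle I expect is getting the bookkeeping around the fresh variable $\atomb$ exactly right: the translation $t$ introduces a fresh $\atomb$ at each $\G$-occurrence, and one must check that freshness is genuinely available (the variable set $\Atoms$ is countable, and at each step only finitely many variables occur) and that Lemma~\ref{lemma.relatother}'s hypothesis --- that $R'_\agent$ keeps precisely the $\agent$-arrows into $\atomb$-states --- is matched by the $R''$ produced in Proposition~\ref{lemma.relatag2}. I would also need to be slightly careful that $t(\phi)^{(\agent,\atomb)}$ does not itself reintroduce $\atomb$ in a conflicting way; but the relativization clauses for quantifiers in Definition~\ref{def-relativization} rename bound variables precisely to avoid this, and Lemma~\ref{lemma.relswap} guarantees the agent-relativizations behave coherently. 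Once these freshness and matching conditions are verified, the proof is a routine combination of Lemma~\ref{lemma.relatother}, Proposition~\ref{lemma.relatag2}, bisimulation invariance, and the induction hypothesis, exactly as the sentence preceding the proposition (``From Lemma~\ref{lemma.relatother} and Definition~\ref{def.tsjakka} we now immediately get'') suggests.
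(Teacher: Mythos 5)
Your proof is correct and follows essentially the same route the paper intends: the paper gives no explicit proof, merely asserting the proposition is "immediate" from Lemma~\ref{lemma.relatother} and Definition~\ref{def.tsjakka}, and your structural induction with the $\F_\agent$ case handled via Proposition~\ref{lemma.relatag2} plus the relativization lemma is exactly the missing argument. The one bookkeeping point worth adding in the right-to-left direction is that the composite of the restricted bisimulation $\bisim^\atomb$ with the arrow-restriction is a priori only an $\atomb$-\emph{restricted} refinement, so one must use that $\atomb$ is fresh for $\phi$ to adjust the valuation of $\atomb$ (or pass to a suitable bisimilar copy) and obtain a genuine $\agent$-refinement witnessing $\F_\agent\phi$ --- a routine step the paper also leaves implicit.
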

\begin{proof} %@hans case added for reviewer sake. Please check!
In the proposition we allowed ourselves a slight abuse of language: it means that, given any $M_s$, the value of $\phi$ in the semantics for refinement modal logic is equivalent to the value of $t(\phi)$ in the semantics for bisimulation quantified modal logic. The proposition follows from Lemma \ref{lemma.prev}, Lemma \ref{lemma.relatother} and Def.\  \ref{def.tsjakka}. We show the case $\G_\agent\phi$ of the inductive proof---and to suit the intuition we take the existential quantifier $\F_\agent$.

\medskip

$M_\state \models \F_\agent \phi$ 

iff 

there is an $M'_{\state'}$ such that $M_\state \lumis_\agent M'_{\state'}$ and $M'_{\state'} \models \phi$

iff (I.H.) 

there is an $M'_{\state'}$ such that $M_\state \lumis_\agent M'_{\state'}$ and $M'_{\state'} \models t(\phi)$

iff (Lemma \ref{lemma.prev})

there is an $N'_{\stateb'}$ with $R''_\agent \subseteq R'_\agent$ (restr.\ to $\atom$ true) s.t.\ $M_\state \bisim^\atom N'_{\stateb'}$ and $N'_{\stateb'} | R''_\agent \models t(\phi)$

% iff

% there is an $N'_{\stateb'}$ with $R''_\agent \subseteq R'_\agent$ (restr.\ to $\atom$ true) s.t.\ $M_\state \bisim^\atom N'_{\stateb'}$ and $N'_{\stateb'} | R''_\agent \models t(\phi)$

iff (Lemma \ref{lemma.relatother})

there is an $N'_{\stateb'}$ such that $M_\state \bisim^\atom N'_{\stateb'}$ and $N'_{\stateb'} \models t(\phi)^{(\agent,\atom)}$

iff

$M_\state \models \bqis\atom \ t(\phi)^{(\agent,\atom)}$ 

iff 

$M_\state \models t(\F_\agent \phi)$.
\end{proof}
This corollary makes the characteristic cases of Proposition \ref{prop-ref-and-bisimquant} stand out.
\begin{corollary} \label{cor-ref-and-bisimquant}
Consider $\F\phi$ with $\phi\in\lang$ (i.e., $\F$-free). Then
\begin{itemize}
\item $\agent$-refinement is bisimulation quantification plus $\agent$-relativization: \\ $\F_\agent\phi$ is equivalent to $\bqis\atom\phi^{(\agent,\atom)}$;
\item refinement is bisimulation quantification plus relativization: \\ $\F\phi$ is equivalent to $\bqis\atom\phi^\atom$.
\end{itemize} 

\end{corollary}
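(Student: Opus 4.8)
The plan is to read off the corollary from Proposition~\ref{prop-ref-and-bisimquant} (every $\phi\in\lang_\G$ is equivalent to $t(\phi)$) by unfolding the translation $t$ of Definition~\ref{def.tsjakka} on the two formulas in question. The first thing I would record is that $t$ is the identity on $\lang$ (the $\F$-free fragment): a one-line structural induction shows $t(\phi)=\phi$ whenever $\phi$ contains no $\G$, since every clause of $t$ except the $\G_\agent$-clause is homomorphic. Hence, whenever $t$ is applied to a formula built from an $\F$-free $\phi$ using refinement quantifiers, the occurrences of $\phi$ pass through untouched.

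For the first item, write $\F_\agent\phi=\neg\G_\agent\neg\phi$, pick $\atom$ not occurring in $\phi$, and compute, commuting $t$ and relativization with negation (Definitions~\ref{def-relativization} and \ref{def.tsjakka}) and using $\bqis\atom\,\psi=\neg\bqall\atom\,\neg\psi$,
\[
t(\F_\agent\phi)=\neg\, t(\G_\agent\neg\phi)=\neg\,\bqall\atom\, t(\neg\phi)^{(\agent,\atom)}=\neg\,\bqall\atom\,\neg\, t(\phi)^{(\agent,\atom)}=\bqis\atom\, t(\phi)^{(\agent,\atom)}.
\]
Since $\phi\in\lang$ this is $\bqis\atom\,\phi^{(\agent,\atom)}$, and Proposition~\ref{prop-ref-and-bisimquant} gives $\F_\agent\phi\equiv\bqis\atom\,\phi^{(\agent,\atom)}$. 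This step is pure bookkeeping.

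For the second item, fix an enumeration $\Agents=\{\agent_1,\dots,\agent_n\}$, so $\F\phi=\F_{\agent_1}\cdots\F_{\agent_n}\phi$ (any order, by Proposition~\ref{prop-abba}). Iterating the previous computation — equivalently, repeatedly unfolding $t$ on $\G_{\agent_1}\cdots\G_{\agent_n}\neg\phi$ and pushing each outer relativization inward through the bisimulation quantifiers it meets, which is legitimate because the clause $(\bqall\atomb\,\chi)^{(\agent,\atom)}=\bqall\atomb\,\chi^{(\agent,\atom)}$ of Definition~\ref{def-relativization} applies once the bound variables $\atom_1,\dots,\atom_n$ are chosen pairwise distinct and fresh — and then reordering the resulting block of relativizations by the commutation Lemma~\ref{lemma.relswap}, yields
\[
t(\F\phi)\;=\;\bqis\atom_1\cdots\bqis\atom_n\;\phi^{(\agent_1,\atom_1)(\agent_2,\atom_2)\cdots(\agent_n,\atom_n)}.
\]
By Proposition~\ref{prop-ref-and-bisimquant}, $\F\phi$ is equivalent to this formula, and what remains is to collapse the $n$ distinct fresh variables into one, i.e.\ to show
\[
\bqis\atom_1\cdots\bqis\atom_n\;\phi^{(\agent_1,\atom_1)\cdots(\agent_n,\atom_n)}\;\equiv\;\bqis\atom\;\phi^{(\agent_1,\atom)\cdots(\agent_n,\atom)}\;=\;\bqis\atom\,\phi^\atom .
\]

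That last equivalence is the step I expect to be the main obstacle; it is precisely the informal content of the remarks preceding Definition~\ref{def.tsjakka}. One direction is routine: a single joint witness for $\atom$ also witnesses the nested formula, since one may spread the valuation of $\atom$ to each of the (irrelevant, non-occurring) variables $\atom_i$ along identity restricted bisimulations. For the converse I would use that $\bqis$ is interpreted up to restricted bisimulation, so one may pass to a tree-unwound model: on a tree every non-root state has a unique incoming edge carrying a unique agent label, so declaring $\atom$ true at a state exactly when the variable $\atom_i$ of its incoming agent $\agent_i$ is true makes arrow-elimination relative to $\atom$ coincide, simultaneously for all agents, with the per-agent arrow-eliminations relative to the $\atom_i$; Lemma~\ref{lemma.relatother} then turns this into the displayed equivalence. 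As a cross-check I would note that item~2 can alternatively be obtained semantically and directly from the ``all agents at once'' version of Proposition~\ref{lemma.relatag2} together with Lemma~\ref{lemma.relatother}, sidestepping the collapse entirely.
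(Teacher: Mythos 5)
Your proof is correct and follows the same route as the paper: Corollary~\ref{cor-ref-and-bisimquant} is presented there as an immediate consequence of Proposition~\ref{prop-ref-and-bisimquant}, obtained by unfolding the translation $t$ exactly as you do. The only substantive step, the collapse of $\bqis\atom_1\cdots\bqis\atom_n\,\phi^{(\agent_1,\atom_1)\cdots(\agent_n,\atom_n)}$ into $\bqis\atom\,\phi^\atom$, is asserted by the paper only informally in the discussion following Lemma~\ref{lemma.relswap} (a sequence of relativizations to distinct fresh variables ``is'' a relativization to a single fresh variable); you correctly isolate it as the crux, and your tree-unwinding argument, together with the semantic fallback via Proposition~\ref{lemma.relatag2}, is a sound way to discharge it.
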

In the logic of public announcements, the latter is written as: $\F\phi$ is equivalent to $\bqis\atom\dia{\atom!}\phi$.

\subsection{Alternating refinement relations}

Alternating transition systems (ATS) were introduced
\cite{alur98alternating} to model multi-agent systems, where in each move of
the game between the agents of an ATS, the choice of an agent at a
state is a set of states and the successor state is determined by
considering the intersection of the choices made by all agents. A
notion of \emph{$a$-alternating refinement} was introduced to reflect
a refined behavior of agent $a$ while keeping intact the behavior of
the others. When restricting to \emph{turn-based} ATS where only one
agent plays at a time (concurrent moves are also allowed in the full
setting), $a$-alternating refinement amounts to requiring `forth' for
all $b \in A \setminus \{a\}$ as we do, but `back' just for agent
$a$. As a consequence, an $a$-refinement is a particular
$a$-alternating refinement. A logical characterization of
$a$-alternating refinement has been proposed (it essentially relies on
the modality $\F_\agent$ combined with the linear time temporal logic
LTL) in the sense that if an ATS $\mathcal{S'}$ $a$-refines an ATS
$\mathcal{S}$, every formula true in $\mathcal{S'}$ is also true in
$\mathcal{S}$. Notice however that the operator $\F_\agent$ has a more
restricted semantics than the one we propose, since the quantification
does not range over all possible refinements of the structure but only
over refinements obtained by pruning the unraveling of the ATS. Soon
after, the more general setting of \emph{alternating-time temporal
  logics} \cite{alur98} considered universal and
existential quantifications over $a$-refinements, for arbitrary $a$,
combined with LTL formulas. It is worthwhile noticing that the
quantifiers still range over particular refinements, and always in the
original structure. As a consequence, the language cannot express the
ability to nest refinements for different agents. This is easily done
in our language $\lang_\G$, as the formula $\F_a (\know_b p \wedge
\M_a (\F_b \know_a p))$ exemplifies. This formula tells us that one of the choices that $a$ can make, results in $b$ knowing $p$ and $a$ contemplating a subsequent choice by $b$ that makes her to get to know $p$ as well. 

\section{Axiomatization $\axiomRML$}\label{axiom1}
\label{sec-FEL}

Here we present the axiomatization $\axiomRML$ for the logic $\logicRML$. We show the axioms and rules to be sound, we give example derivations, and this is followed by the completeness proof.

The axiomatization presented is a substitution schema, since the
substitution rule is not valid. The {\em substitution rule} says that: if $\phi$ is a theorem, and $\atom$ occurs in $\phi$, and $\psi$ is any formula, then $\phi[\psi\backslash\atom]$ is a theorem. Note that for all atomic propositions $p$, $p\imp \G p$ is valid, but the same is not true for an arbitrary formula, e.g.\ $\susp_\agent \T \imp \G \susp_\agent \T$ is not valid, because after the maximal refinement there is no accessible state, so that $\susp_\agent \T$ is then false even if it was true before. The logic $\logicRML$ is therefore not a normal modal logic.

\begin{definition}[Axiomatization $\axiomRML$] \label{def:FEL}
The axiomatization $\axiomRML$ consists of all substitution instances of the axioms
\[ \begin{array}{rl}
{\bf Prop} & \text{All tautologies of propositional logic}\\
{\bf K} & \know_\agent(\phi\imp\psi)\imp \know_\agent\phi\imp \know_\agent\psi\\
{\bf R} & \G_\agent(\phi\imp\psi)\imp \G_\agent\phi \imp \G_\agent\psi\\
{\bf RProp} & \G_\agent \atom \eq \atom \text{ and } \G_\agent \neg\atom \eq\neg\atom \\
{\bf RK} & \F_\agent\covers_\agent\Phi\eq\bigwedge\susp_\agent\F_\agent\Phi  \\
{\bf RKmulti} & \F_\agent\covers_\agentb\Phi\eq \covers_\agentb\F_\agent\Phi  \hspace{3cm} \hfill \text{ where } \agent\neq\agentb \\
{\bf RKconj} & \F_\agent \Et_{\agentb\in\Group}\covers_\agentb \Phi^\agentb \eq\Et_{\agentb\in\Group} \F_\agent \covers_\agentb \Phi^\agentb
\end{array}\]
and the rules
\[ \begin{array}{rl} 
{\bf MP} & \text{From } \phi\imp\psi \text{ and } \phi \text{ infer } \psi\\
{\bf NecK} & \text{From } \phi \text{ infer } \know_\agent\phi\\
{\bf NecR} & \text{From } \phi \text{ infer } \G_\agent\phi
\end{array} \]
where $\agent,\agentb\in\Agents$, $\atom\in\Atoms$, and $\Group\subseteq\Agents$. If $\phi$ is derivable, we write $\proves \phi$, and $\phi$ is called a {\em theorem}, as usual. The well-known axiomatization {\bf K} for the logic $\logicK$ consists of the axioms {\bf Prop}, {\bf K}, and the rules {\bf MP} and {\bf NecK}.
\end{definition}
In the definition, given $\Phi = \{ \phi_1,\dots,\phi_n \}$, note that $\F_\agent\covers_\agent\Phi \eq \bigwedge\susp_\agent\F_\agent\Phi$ stands for $\F_\agent\covers_\agent\Phi \eq \bigwedge_{\phi\in\Phi}\susp_\agent\F_\agent\phi$ (see the technical preliminaries) and so for $\F_\agent\covers_\agent \{\phi_1,\dots,\phi_n\} \eq \susp_\agent\F_\agent\phi_1 \et \dots \et \susp_\agent\F_\agent\phi_n$. The axiomatization $\axiomRML$ is surprisingly simple given the complexity of the semantic definition of the refinement operator $\G$; and given the well-known complexity of axiomatizations for logics involving bisimulation quantifiers instead of this single refinement quantifier. We note that while refinement is reflexive, transitive and
satisfies the Church-Rosser property (Proposition~\ref{lem:lumis}, and Proposition \ref{prop.validities}), the
corresponding modal axioms are not required. These properties are  schematically derivable. First, we demonstrate soundness of $\axiomRML$.

\begin{quote} {\em Given the definitions of $\Box$ and $\Dia$ in terms of cover, it may be instructive to see how the {\bf RK} axiom works as a reduction principle for $\F\Box\phi$ and $\F\Diamond\phi$---note that we need both, as there is no principle for $\F\neg\phi$. For simplicity we do not label the operators with agents. We get:
\[\begin{array}{rcl} \F\Box\phi & \eq & \F (\covers \{\phi\} \vel \covers \emptyset) \\ & \eq & \F \covers \{\phi\} \vel \F \covers \emptyset \\ \text{(use {\bf RK})} & \eq & \F \covers \{\phi\} \vel \Et \Dia \F \emptyset \\ \text{(empty conj.\ is true)} & \eq & \F \covers \{\phi\} \vel \top \\ & \eq &  \top 
\end{array}\]
and
\[\begin{array}{lcl} \F\Diamond \phi & \eq & \F \covers \{\phi,\top\} \\ \text{(use {\bf RK})} & \eq & \Diamond \F \phi \et \Diamond \F \top \\ & \eq & \Diamond \F \phi \end{array}\]
One may wonder why we did not choose $\F\Box\phi \eq \top$ and $\F\Diamond \phi \eq \Diamond \F \phi$ (we recall Proposition \ref{prop.validities}) as primitives in the axiomatization, as, after all, these are very simple axioms. They are of course valid, but the axiomatization would not be complete. The axiom ${\bf RK}$ is much more powerful, as this not merely allows $\Phi = \{\phi\}$, $\Phi = \emptyset$, and $\Phi = \{\phi,\top\}$, but any finite set of formulas.
}
\end{quote}

\subsection{Soundness}\label{sect:soundness}

\begin{theorem}%[Soundness]
\label{theo:soundness}
The axiomatization $\axiomRML$ is sound for $\logicRML$.
\end{theorem}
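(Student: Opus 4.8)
The plan is to establish soundness by checking each axiom is valid and each rule preserves validity. The propositional tautologies \textbf{Prop} and the rule \textbf{MP} are immediate. The normality axiom \textbf{K} and the rule \textbf{NecK} are standard for $\Box_\agent$. The axiom \textbf{R} (normality for $\G_\agent$) and the rule \textbf{NecR} follow from the fact that $\G_\agent$ is a universal quantifier over the class of $\agent$-refinements: if $\G_\agent(\phi\imp\psi)$ and $\G_\agent\phi$ hold at $M_s$, then for every $M'_{s'}$ with $M_s\lumis_\agent M'_{s'}$ both $\phi\imp\psi$ and $\phi$ hold there, hence $\psi$ does; and if $\phi$ is valid, it holds in every refinement, so $\G_\agent\phi$ is valid. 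The axiom \textbf{RProp}, $\G_\agent\atom\eq\atom$ (and dually for $\neg\atom$), follows from Definition~\ref{def.bisim}: refinements satisfy \textbf{atoms}, so the truth value of a propositional variable is preserved along $\lumis_\agent$ (reflexivity gives one direction, the \textbf{atoms} clause the other).

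The substantive work is in the three \textbf{RK}-type axioms, which are reduction principles pushing $\F_\agent$ inward past cover operators. For \textbf{RKmulti}, $\F_\agent\covers_\agentb\Phi\eq\covers_\agentb\F_\agent\Phi$ with $\agentb\neq\agent$: since an $\agent$-refinement must satisfy \textbf{forth-$\agentb$} and \textbf{back-$\agentb$}, the $\agentb$-successor structure is essentially preserved, so an $\agent$-refinement of a model satisfying $\covers_\agentb\Phi$ refines each $\agentb$-successor componentwise; conversely, given $\agent$-refinements of the successors witnessing $\covers_\agentb\F_\agent\Phi$, one assembles them into an $\agent$-refinement of the whole model (here Proposition~\ref{lem:lumis} on transitivity and the compositional view of refinement from Section~\ref{sec.modelrestriction} are the key tools). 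For \textbf{RK}, $\F_\agent\covers_\agent\Phi\eq\bigwedge\susp_\agent\F_\agent\Phi$: the left-to-right direction uses \textbf{back-$\agent$}, which guarantees every $\agent$-successor in the refinement traces back, so each $\phi\in\Phi$ is still reached; the right-to-left direction is the typical refinement construction — take disjoint copies of the witnessing refinements of the successors and hang exactly one $\agent$-arrow to each, which by dropping all other $\agent$-arrows yields a legitimate $\agent$-refinement satisfying $\covers_\agent\Phi$ (the construction used in the proof of Proposition~\ref{prop.validities} for $\F_\agent\Dia_\agent\phi\eq\Dia_\agent\F_\agent\phi$, generalized from a singleton to a finite set). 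For \textbf{RKconj}, $\F_\agent\Et_{\agentb\in\Group}\covers_\agentb\Phi^\agentb\eq\Et_{\agentb\in\Group}\F_\agent\covers_\agentb\Phi^\agentb$: one direction is \textbf{R} plus monotonicity; the other requires combining, for each $\agentb\in\Group$, a witnessing $\agent$-refinement into a single $\agent$-refinement respecting all the $\covers_\agentb$ constraints simultaneously — this works because the refinements agree (up to bisimulation) on the $\agent$-free structure, and one glues along a common bisimilar core, again invoking Church-Rosser from Proposition~\ref{lem:lumis}.

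Alternatively, and perhaps more cleanly, one can leverage Proposition~\ref{prop-ref-and-bisimquant} and Corollary~\ref{cor-ref-and-bisimquant}: since $\F_\agent\phi$ is equivalent to $\bqis\atom\,\phi^{(\agent,\atom)}$, soundness of each axiom reduces to a validity about bisimulation quantifiers and relativization, which can be verified directly in the bisimulation-quantified semantics using Lemma~\ref{lemma.relatother}. Under this route, \textbf{RK}, \textbf{RKmulti}, and \textbf{RKconj} become statements about how bisimulation quantification interacts with the cover operator after arrow-eliminating relativization, and each can be checked by unfolding the relativization clauses of Definition~\ref{def-relativization}.

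The main obstacle is the right-to-left (existential, model-building) direction of \textbf{RK} and the simultaneous gluing in \textbf{RKconj}: one must explicitly construct an $\agent$-refinement of the whole model from given $\agent$-refinements of the pieces, verifying that the assembled relation still satisfies \textbf{atoms}, \textbf{back-$\agent$}, and \textbf{back}/\textbf{forth} for the other agents, and that no unwanted $\agent$-successors are introduced. The disjoint-union-plus-selective-arrow construction from Proposition~\ref{prop.validities} handles this, but care is needed to ensure exactly the successors in $\Phi$ (and no more) are reachable, and that the construction is compatible across all agents in $\Group$ in the \textbf{RKconj} case; bisimulation invariance (Proposition~\ref{lem:bisim.invar}) is used freely to identify the glued copies. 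Everything else is routine verification.
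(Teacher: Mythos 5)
Your proposal is correct and follows essentially the same route as the paper: direct semantic verification of each axiom and rule, with the substantive work being the disjoint-union construction (a fresh copy of the point with $\agent$-arrows only to the witnessing refinements, and the other agents' arrows duplicated) for the existential directions of \textbf{RK}, \textbf{RKmulti}, and \textbf{RKconj}, exactly as in the paper's proof and its Figure~\ref{fig.rk}. The only cosmetic difference is that for \textbf{RKconj} the paper does not glue separate refinements via Church--Rosser but instead first derives $\Dia_\agentb\F_\agent\phi$ for all $\agentb\in\Group$ and $\phi\in\Phi^\agentb$ from the already-established \textbf{RK} and \textbf{RKmulti} and then runs the \textbf{RK}-style construction once; your alternative route through bisimulation quantifiers is viable but is the one the paper reserves for the $\mu$-calculus axioms.
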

\begin{proof}
As all models of $\lang_\G$ are models of 
$\lang$, the schemas {\bf Prop}, {\bf K} and the rule {\bf MP} and {\bf NecK} are all sound. We deal with the remaining schemas and rules below.

\bigskip

\noindent {\bf R} 

Suppose that $M_s$ is a model such that $M_s\models \G_\agent(\phi\imp\psi)$, and $M_s\models\G_\agent\phi$. Then for every $N_t$, where $N_t\simul_\agent M_s$, we have $N_t\models \phi\imp\psi$, and also $N_t\models \phi$. From $N_t\models \phi\imp\psi$ and $N_t\models \phi$ follows $N_t\models\psi$. As $N_t$ was arbitrary model such that $N_t\simul_\agent M_s$, from that and $N_t\models\psi$ follows $M_s \models \G_\agent\psi$.

\bigskip

\noindent {\bf RProp} 

Let $M_s$ and $N_t$ be given such that $N_t\simul_\agent M_s$. By Definition~\ref{def.bisim} for the semantics of refinement, we have that $s\in V^M(p)$ if and only if $t\in V^N(p)$. Therefore $M_s\models\atom$ iff $N_t\models\atom$, for every $M_s$ and $N_t$ with $N_t\simul_\agent M_s$. Therefore $M_s\models\atom$ iff $M_s\models\G_\agent\atom$ for every $M_s$, i.e.\ $\models \atom \eq \G_\agent \atom$. Similarly, for $\models \neg\atom \eq \G_\agent \neg\atom$, using that $s\not\in V^M(p)$ if and only if $t\not\in V^N(p)$.

\bigskip

\noindent {\bf RK} 

Suppose $M_s$ is a model, where $M = (S,R,V)$, such that for some set $\Phi$, $M_s\models\F_\agent\covers_\agent\Phi$. Therefore, there is a model $N_t\simul_\agent M_s$ such that $N_t\models\covers_\agent\Phi$---where $N = (S^\Phi,R^\Phi,V^\Phi)$. Expanding the definition, we have that for every $\phi\in\Phi$ there is some $u\in tR_\agent^\Phi$ such that $N_u\models\phi$. Also, because of {\bf back}, for every such $u\in tR_\agent^\Phi$ there is some $v\in sR_\agent$ such that $N_u\simul_\agent M_v$. Combining these statements we have that for every $\phi\in\Phi$ there is some $v\in sR_\agent$ such that $M_v\models\F_\agent\phi$, and thus $M_s\models\bigwedge\susp_\agent\F_\agent\Phi$.

Conversely, suppose that 
  $M_s\models\bigwedge\susp_\agent\F_\agent\Phi$. Therefore, for
  every $\phi\in \Phi$ there is some $t^\phi\in sR_\agent$ such that
  $M_{t^\phi}\models\F_\agent\phi$. Thus, for each $\phi\in\Phi$,
  there is some model $N^\phi_{u^\phi}\simul_\agent M_{t^\phi}$, where $N^\phi = (S^\phi,R^\phi,V^\phi)$, such that $N^\phi_{u^\phi}\models\phi$. Without loss of generality, we may assume that for all $\phi,\phi'\in\Phi$ the models $N^\phi$ and $N^{\phi'}$ are disjoint. 

We construct the model $M^\Phi = (S^\Phi, R^\Phi, V^\Phi)$ such that: 
\[ \begin{array}{lcl} 
S^\Phi & = & \{s'\}\cup S \cup \bigcup_{\phi\in\Phi} S^\phi \\
R_\agent^\Phi & = & \{(s',u^\phi)\ |\ \phi\in\Phi\}\cup R_\agent \cup \bigcup_{\phi\in\Phi} R_\agent^\phi  \\
R_\agentb^\Phi & = & \{(s',t)\ |\ (s,t)\in R_\agentb \}\cup R_\agentb \cup \bigcup_{\phi\in\Phi} R_\agentb^\phi \hfill \text{ for } \agentb \neq \agent \\
V^\Phi(p) & = & \overline{\{s'\}} \cup V(p)\cup\bigcup_{\phi\in\Phi} V^\phi(p) \hspace{4cm} \hfill \text{ for } p\in\Atoms
\end{array} \]
where $\overline{\{s'\}} = \{s'\}$ if $s \in V(p)$ and else $\overline{\{s'\}} = \emptyset$.
 
We can see that $M_s \lumis_\agent M^\Phi_{s'}$, via the relation ${\cal R}^\Phi = \{(s,s')\} \cup {\mathcal I} \cup \bigcup_{\phi\in\Phi}{\cal R}^\phi$ where ${\mathcal I}$ is the identity on $S$ and each ${\cal R}^\phi$ is the refinement relation corresponding to $M_{t^{\phi}} \lumis_\agent N^\phi_{u^\phi}$ (see also \cite{hales:2011}). Furthermore, for each $t\in s'R_\agent^\Phi$ it is clear that $M^\Phi_t\bisim N^\phi_{u^\phi}$ for some $\phi$, and thus $M^\Phi_t\models\phi$, and so $M^\Phi_t\models\Vel\Phi$. Therefore $M^\Phi_{s'}\models\knows_\agent\bigvee\Phi$. Finally, for each $\phi\in\Phi$ there is some $u^\phi\in s'R_\agent^\Phi$ where $M^\Phi_{u^\phi}\models\phi$, so for each $\phi\in\Phi$ we have $M^\Phi_s\models\Diamond_\agent\phi$, so we have $M^\Phi_{s'}\models\bigwedge\susp_\agent\Phi$. Combined, $M^\Phi_{s'}\models\knows_\agent\bigvee\Phi$ and $M^\Phi_{s'}\models\bigwedge\susp_\agent\Phi$ state that $M^\Phi_{s'}\models\covers_\agent\Phi$, and therefore $M_s\models\F_\agent\covers_\agent\Phi$.

\bigskip

\noindent {\bf RKmulti}

Suppose that $M_s\models\F_\agent\covers_\agentb\Phi$. Therefore, there is a model $M'_t\simul_\agent M_s$ such that $M'_t\models\covers_\agentb\Phi$---let the accessibility relation for agent $\agentb$ in $M'$ be $R'_\agentb$. Expanding the definition, we have that for every $\phi\in\Phi$ there is some $u\in tR'_\agentb$ such that $M'_u\models\phi$. Also, because of {\bf back}, for every such $u\in tR'_\agentb$ there is some $v\in sR_\agentb$ such that $M'_u\simul_\agent M_v$. Combining these statements we have that for every $\phi\in\Phi$ there is some $v\in sR_\agentb$ such that $M_v\models\F_\agent\phi$, and thus $M_s\models\bigwedge\susp_\agentb\F_\agent\Phi$. However, as {\bf forth} also holds for agent $\agentb$, the $v\in sR_\agentb$ we could construct above are also {\em all} the states $v$ accessible from $s$. Therefore we also have $M_s\models\Box_\agentb\Vel\F_\agent\Phi$, so together we get $M_s \models \covers_\agentb \F_\agent \Phi$.

For the converse direction, suppose that $M_s\models\covers_\agentb\F_\agent\Phi$. From the definition of $\covers_\agentb$ it follows that 
  $M_s\models\bigwedge\susp_\agentb\F_\agent\Phi$. We now proceed in a similar way as in the case {\bf RK}. From $M_s\models\bigwedge\susp_\agentb\F_\agent\Phi$ it follows that for
  every $\phi\in \Phi$ there is some $t^\phi\in sR_\agentb$ such that
  $M_{t^\phi}\models\F_\agent\phi$. Thus, for each $\phi\in\Phi$,
  there is some model $N^\phi_{u^\phi}\simul_\agent M_{t^\phi}$, where $N^\phi = (S^\phi,R^\phi,V^\phi)$, such that $N^\phi_{u^\phi}\models\phi$. Define the model $M^\Phi = (S^\Phi, R^\Phi, V^\Phi)$ similar to the case {\bf RK}, except that: the roles of $\agent$ and $\agentb$ have been swapped, and the accessibility relation for all agents $\agentc$ different from $\agent$ and $\agentb$ is defined as that for $\agent$.
\[ \begin{array}{lcl} 
S^\Phi & = & \{s'\}\cup S \cup \bigcup_{\phi\in\Phi} S^\phi \\
R_\agentb^\Phi & = & \{(s',u^\phi)\ |\ \phi\in\Phi\}\cup R_\agentb \cup \bigcup_{\phi\in\Phi} R_\agentb^\phi  \\
R_\agentc^\Phi & = & \{(s',t)\ |\ (s,t)\in R_\agentc \}\cup R_\agentc \cup \bigcup_{\phi\in\Phi} R_\agentc^\phi \hfill \text{ for } \agentc \neq \agentb \\
V^\Phi(p) & = & \overline{\{s'\}} \cup V(p)\cup\bigcup_{\phi\in\Phi} V^\phi(p) \hspace{4cm} \hfill \text{ for } p\in\Atoms
\end{array} \]
where $\overline{\{s'\}} = \{s'\}$ if $s \in V(p)$ and else $\overline{\{s'\}} = \emptyset$ ($R_\agentc^\Phi$ also defines $R_\agent^\Phi$, namely for $\agentc=\agent$).
 
We can see that $M_s \lumis_\agent M^\Phi_{s'}$, via the relation ${\cal R}^\Phi = \{(s,s')\} \cup {\mathcal I} \cup \bigcup_{\phi\in\Phi}{\cal R}^\phi$ where ${\mathcal I}$ is the identity on $S$ and each ${\cal R}^\phi$ is the refinement relation corresponding to $M_{t^{\phi}} \lumis_\agent N^\phi_{u^\phi}$ (see also \cite{hales:2011}). Furthermore, for each $t\in s'R_\agentb^\Phi$ it is clear that $M^\Phi_t\bisim N^\phi_{u^\phi}$ for some $\phi$, and thus $M^\Phi_t\models\phi$, and so $M^\Phi_t\models\Vel\Phi$. Therefore $M^\Phi_{s'}\models\knows_\agentb\bigvee\Phi$. Finally, for each $\phi\in\Phi$ there is some $u^\phi\in s'R_\agentb^\Phi$ where $M^\Phi_{u^\phi}\models\phi$, so for each $\phi\in\Phi$ we have $M^\Phi_s\models\Diamond_\agentb\phi$, so we have $M^\Phi_{s'}\models\bigwedge\susp_\agentb\Phi$. Combined, $M^\Phi_{s'}\models\knows_\agentb\bigvee\Phi$ and $M^\Phi_{s'}\models\bigwedge\susp_\agentb\Phi$ state that $M^\Phi_{s'}\models\covers_\agentb\Phi$, and therefore $M_s\models\F_\agent\covers_\agentb\Phi$.

\bigskip

\noindent {\bf RKconj}

The direction $\F_\agent \Et_{\agentb\in\Group}\covers_\agentb \Phi^\agentb \imp\Et_{\agentb\in\Group} \F_\agent \covers_\agentb \Phi^\agentb$ is merely a more complex form of pattern $\F_\agent (\phi \et \psi) \imp (\F_\agent \phi \et \F_\agent \psi)$ which is derivable similar to $\Dia_a (\phi \land \psi) \imp (\Dia_a \phi \land \Dia_a \psi)$ in the modal logic $\logicK$, using the axiom {\bf R} in place of {\bf K}.

For the other direction, suppose that $M_s$ is such that $M_s \models
\bigwedge_{b \in B} \F_a \covers_b \Phi^b$, where $B \subseteq A$. We need to show that $M_s \models \F_a \bigwedge_{b \in B} \covers_b
\Phi^b$. To do this we follow the same strategy as for proving {\bf RK}: we
construct an $a$-refinement $N_t$ of $M_s$, and show that $N_t \models
\bigwedge_{b \in B} \covers_b \Phi^b$.

We begin by constructing the model $N_t$. Suppose that $a \in B$. Then we have
$M_s \models \F_a \covers_a \Phi^a$, and by {\bf RK} this implies that
$M_s \models \bigwedge \Dia_a \F_a \Phi^a$. We also have that for every $b \in B - \{a\}$, $M_s \models \F_a \covers_b \Phi^b$, and by {\bf RKmulti} this implies that $M_s \models \covers_b \F_a \Phi^b$, and by the definition of the cover operator, this implies that $M_s \models \bigwedge \Dia_b \F_a \Phi^b$. Hence for every $b \in B$ and $\phi \in \Phi^b$, we have that $\Dia_b \F_a \phi$. (In other words, for some big set of formulas $\Psi$ we have that $M_s \models\Et\Dia_b \F_a \Psi$.) At this stage it suffices to refer to the very similar construction in the soundness proof for axiom {\bf RK}, from which, similarly to there, it follows that $N_t \models \bigwedge_{b \in B} \covers_b \Phi^b$.

\bigskip

\noindent {\bf NecR} 

If $\phi$ is a validity, then it is satisfied by every model, so for any model $M_s$, $\phi$ is satisfied by every model $N_t\simul_\agent M_s$, and hence every model $M_s$ satisfies $\G_\agent\phi$.
\end{proof}

\begin{figure}
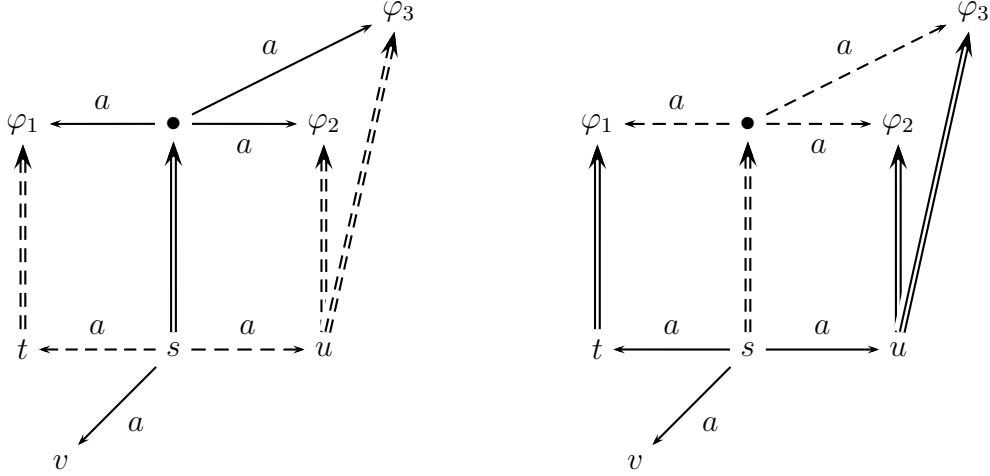

\psset{border=2pt, nodesep=4pt, radius=2pt, tnpos=a}
\pspicture(0,-2)(5.5,5)
$
\rput(0,0){\rnode{00}{t}}
\rput(2,0){\rnode{10}{s}}
\rput(4,0){\rnode{20}{u}}
\rput(0.5,-1.5){\rnode{0m}{v}}
\rput(0,3){\rnode{01}{\phi_1}}
\rput(2,3){\rnode{11}{\bullet}}
\rput(4,3){\rnode{21}{\phi_2}}
\rput(5,4.5){\rnode{2m}{\phi_3}}
\ncline[linestyle=dashed]{<-}{00}{10}  \naput{\agent}
\ncline[linestyle=dashed]{->}{10}{20}  \naput{\agent}
\ncline{->}{10}{0m}  \naput{\agent}
\ncline[doubleline=true]{->}{10}{11} 
\ncline{<-}{01}{11}  \naput{\agent}
\ncline{->}{11}{21}  \nbput{\agent}
\ncline{->}{11}{2m}  \naput{\agent}
\ncline[doubleline=true,linestyle=dashed]{->}{00}{01} 
\ncline[doubleline=true,linestyle=dashed]{->}{20}{21} 
\ncline[doubleline=true,linestyle=dashed]{->}{20}{2m} 
$
\endpspicture
\hspace{2cm}
\psset{border=2pt, nodesep=4pt, radius=2pt, tnpos=a}
\pspicture(0,-2)(5.5,5)
$
\rput(0,0){\rnode{00}{t}}
\rput(2,0){\rnode{10}{s}}
\rput(4,0){\rnode{20}{u}}
\rput(0.5,-1.5){\rnode{0m}{v}}
\rput(0,3){\rnode{01}{\phi_1}}
\rput(2,3){\rnode{11}{\bullet}}
\rput(4,3){\rnode{21}{\phi_2}}
\rput(5,4.5){\rnode{2m}{\phi_3}}
\ncline{<-}{00}{10}  \naput{\agent}
\ncline{->}{10}{20}  \naput{\agent}
\ncline{->}{10}{0m}  \naput{\agent}
\ncline[doubleline=true,linestyle=dashed]{->}{10}{11} 
\ncline[linestyle=dashed]{<-}{01}{11}  \naput{\agent}
\ncline[linestyle=dashed]{->}{11}{21}  \nbput{\agent}
\ncline[linestyle=dashed]{->}{11}{2m}  \naput{\agent}
\ncline[doubleline=true]{->}{00}{01} 
\ncline[doubleline=true]{->}{20}{21} 
\ncline[doubleline=true]{->}{20}{2m} 
$
\endpspicture
\caption{The interaction between refinement and modality involved in axiom {\bf RK}.}
\label{fig.rk}
\end{figure}

The soundness of axiom {\bf RK} is visualized in Figure \ref{fig.rk}. It depicts the interaction between refinement and modality involved in this axiom $\F_\agent\covers_\agent\Phi \eq \bigwedge\susp_\agent\F_\agent\Phi$, for the case that $\Phi = \{ \phi_1, \phi_2, \phi_3 \}$. The single lines are modal accessibility, and the double lines the refinement relations. The solid lines are given, and the dashed lines are required. Accessibility relations for other agents than $\agent$ are omitted. The picture on the left depicts the implication from left to right in the axiom, and the picture on the right depicts the implication from right to left. Note that the states satisfying $\phi_2$ and $\phi_3$ have the same origin $u$ in $M$---the typical sort of duplication (resulting in non-bisimilar states) allowed when having {\bf back} but not {\bf forth}. Apart from $u$ and $t$, state $s$ in $M$ has yet another accessible state $v$, that does not occur in the refinement relation: the other typical sort of thing when having {\bf back} but not {\bf forth}. Therefore, on the right side of the equivalence in axiom {\bf RK} we only have $\bigwedge\susp_\agent\F_\agent\Phi$ and we cannot guarantee that $\Box_\agent \Vel \F_\agent \Phi$ also follows from the left-hand side. 

The axiom {\bf RKmulti}, defined as $\F_\agent\covers_\agentb\Phi\eq\covers_\agentb\F_\agent\Phi$ for $\agent\neq\agentb$, says that refinement with respect to one agent does not interact with the modalities (the uncertainty, say) for another agent: the operators $\covers_\agentb$ and $\F_\agent$ simply commute. This in contrast to the axiom {\bf RK} where on the right-hand side a construct $\Box_\agent\Vel\F_\agent\Phi$ is `missing', so to speak. If it had been $\Box_\agent\Vel\F_\agent\Phi \et \bigwedge\susp_\agent\F_\agent\Phi$, then we would have had $\covers_\agent \F_\agent \Phi$, as in {\bf RKmulti} but with $\agent=\agentb$. 

The axioms {\bf RK} and {\bf RKmulti} are different, because in an $\agent$-refinement the condition {\bf forth} is not required, whereas for other agents $\agentb$ {\bf forth} is required. Given some refinement wherein we have a cover of $\Phi$, so that at least one of $\Phi$ is necessary (the $\F_\agent\covers_\agent\Phi$ bit), for each of the covered states we can trace an origin before the refinement, because of {\bf back}. But there may be more originally accessible states, so whatever holds in those origins, although it is all possible, is not necessary. So we have $\bigwedge\susp_\agent\F_\agent\Phi$, but we do not have $\Box_\agent\Vel\F_\agent\Phi$. In contrast, when the agents are different, {\bf back} and {\bf forth} must hold for agent $\agentb$ in a refinement $\succeq_\agent$ witnessing the operator $\F_\agent$: for an $\agent$-refinement, {\bf back} and {\bf forth} must hold for all agents $\agentb \neq \agent$. Figure \ref{fig.rkmulti} should further clarify the issue---compare this to Figure \ref{fig.rk}. The main difference between the figures is that there cannot now be yet another state $v$ accessible from $s$ but not `covered' as the origin of one of the refined states. In Figure \ref{fig.rk} what holds in $t$ and $u$ is not necessary for $\agent$, but in Figure \ref{fig.rkmulti} what holds in $t$ and $u$ is necessary for $\agentb$.

\begin{figure}
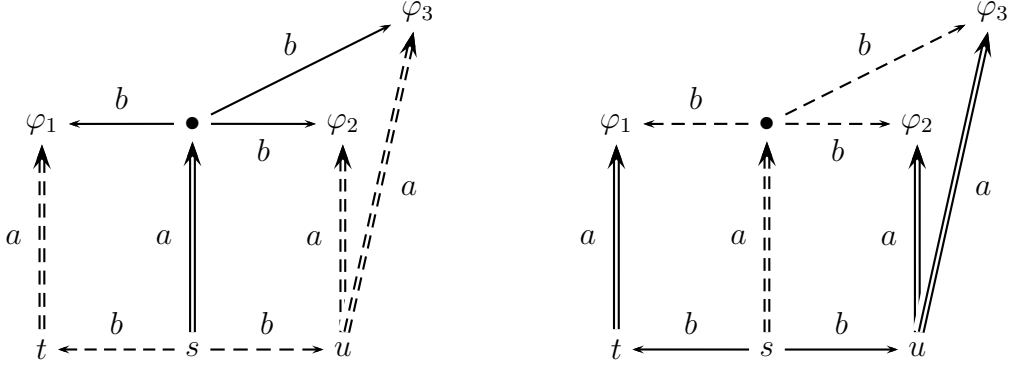

\psset{border=2pt, nodesep=4pt, radius=2pt, tnpos=a}
\pspicture(0,-.5)(5.5,5)
$
\rput(0,0){\rnode{00}{t}}
\rput(2,0){\rnode{10}{s}}
\rput(4,0){\rnode{20}{u}}
\rput(0,3){\rnode{01}{\phi_1}}
\rput(2,3){\rnode{11}{\bullet}}
\rput(4,3){\rnode{21}{\phi_2}}
\rput(5,4.5){\rnode{2m}{\phi_3}}
\ncline[linestyle=dashed]{<-}{00}{10}  \naput{\agentb}
\ncline[linestyle=dashed]{->}{10}{20}  \naput{\agentb}
\ncline[doubleline=true]{->}{10}{11} \naput{\agent}
\ncline{<-}{01}{11} \naput{\agentb}
\ncline{->}{11}{21} \nbput{\agentb}
\ncline{->}{11}{2m} \naput{\agentb}
\ncline[doubleline=true,linestyle=dashed]{->}{00}{01} \naput{\agent}
\ncline[doubleline=true,linestyle=dashed]{->}{20}{21} \naput{\agent}
\ncline[doubleline=true,linestyle=dashed]{->}{20}{2m} \nbput{\agent}
$
\endpspicture
\hspace{2cm}
\psset{border=2pt, nodesep=4pt, radius=2pt, tnpos=a}
\pspicture(0,-.5)(5.5,5)
$
\rput(0,0){\rnode{00}{t}}
\rput(2,0){\rnode{10}{s}}
\rput(4,0){\rnode{20}{u}}
\rput(0,3){\rnode{01}{\phi_1}}
\rput(2,3){\rnode{11}{\bullet}}
\rput(4,3){\rnode{21}{\phi_2}}
\rput(5,4.5){\rnode{2m}{\phi_3}}
\ncline{<-}{00}{10}  \naput{\agentb}
\ncline{->}{10}{20}  \naput{\agentb}
\ncline[doubleline=true,linestyle=dashed]{->}{10}{11} \naput{\agent}
\ncline[linestyle=dashed]{<-}{01}{11}  \naput{\agentb}
\ncline[linestyle=dashed]{->}{11}{21}  \nbput{\agentb}
\ncline[linestyle=dashed]{->}{11}{2m}  \naput{\agentb}
\ncline[doubleline=true]{->}{00}{01} \naput{\agent}
\ncline[doubleline=true]{->}{20}{21} \naput{\agent}
\ncline[doubleline=true]{->}{20}{2m} \nbput{\agent}
$
\endpspicture
\caption{The interaction between refinement and modality involved in axiom {\bf RKmulti}.}
\label{fig.rkmulti}
\end{figure}

\subsection{Example derivations} \label{sec.axiomex}

In these examples we also use `substitution of equivalents', see Proposition~\ref{lem:substeq}, ahead.

%@hans I have rewritten this example
\begin{example} \label{ex.gettingtok}
$\proves \susp_\agent \T \imp \F_\agent (\susp_\agent \T \et (\Box_\agent \atom \vel\Box_\agent \neg \atom))$
\end{example}
In an epistemic setting, where $\Box_\agent \atom$ means that the agent knows $\atom$, and where (in $S5$ models) the condition $\susp_\agent \T$ is always satisfied, this validity expresses that the agent can always find out the truth about $\atom$: if true, announce $\atom$ to the agent (and announcement is a model restriction, and therefore a refinement), after which $\atom$ is known by the agent to be true, and if false, announce that $\atom$ is false, after which $\atom$ is known to be false. This validity is indeed also a theorem of $\logicRML$. For that, it suffices to derive the equivalent $\susp_\agent \T \imp \F_\agent(\covers_\agent \{p\} \vel\covers_\agent \{\neg p\})$. In some cases several deductions have been combined into single statements, but this is restricted to cases of well-known modal theorems.

\[\begin{array}{ll}
\proves \susp_\agent \T \eq \susp_\agent(p \vel \neg p) & {\bf Prop},{\bf NecK},{\bf K} \\
\proves \susp_\agent(p \vel \neg p) \eq (\susp_\agent p \vel \susp_\agent \neg p) & {\bf Prop},{\bf NecK},{\bf K} \\
\proves \susp_\agent p \imp \F_\agent \covers_\agent\{ p\} & \text{See below} \\
\proves \susp_\agent \neg p \imp \F_\agent\covers_\agent\{ \neg p\} & \text{See below} \\
\proves \susp_\agent p \imp \F_\agent (\covers_\agent\{ p\} \vel \covers_\agent\{ \neg p\}) & {\bf Prop}, {\bf NecR},{\bf R} \\
\proves \susp_\agent \neg p \imp \F_\agent (\covers_\agent\{ p\} \vel \covers_\agent\{\neg p\}) \hspace{2cm}  & {\bf Prop}, {\bf NecR}, {\bf R} \\
\proves \susp_\agent \T \imp \F_\agent (\covers_\agent\{ p \}\vel \covers_\agent\{ \neg p \}) & {\bf Prop}, {\bf MP}
\end{array}\]
Lines 3 and 4 of the derivation require the following derivation,
where $\phi$ is a propositional formula (i.e., $\phi\in\lang_0$).
\[\begin{array}{ll}
\proves \phi \eq \F_\agent \phi & \text{Proposition~\ref{prop.prop}, ahead} \\
\proves \susp_\agent \phi \eq \susp_\agent\F_\agent\phi & {\bf Prop},{\bf NecK},{\bf K} \\
\proves \susp_\agent \phi \eq \F_\agent\covers_\agent\{\phi\} \hspace{2cm} & {\bf RK} [\Phi = \{\phi\}]
\end{array}\]

\begin{example}
$\proves (\susp_\agent \atom \et \susp_\agentb \atom \et \susp_\agent \neg \atom \et \susp_\agentb \neg \atom) \imp \F_\agent (\know_\agent \atom \et \neg \know_\agentb \atom)$
\end{example}

Consider the informative development described in Example~\ref{ex-1}: given an initial information state wherein agents $\agent$ and $\agentb$ consider either value of $\atom$ possible, $\agent$ can be informed such that afterwards $\agent$ believes that $\atom$ but not $\agentb$. This theorem formalizes that. In the following, let $\phi$ be $(\susp_\agent \atom \et \susp_\agentb \atom \et \susp_\agent \neg \atom \et \susp_\agentb \neg \atom)$.
$$
\begin{array}{ll}
%\proves \phi \imp \neg
%\knows_\agent \neg p \land \neg \knows_\agentb p & \text{({\bf Prop})}\\
\proves \phi \imp
\Dia_\agent p \land \Dia_\agentb \neg p & \text{\bf Prop}\\
\proves \phi \imp
\Dia_\agent p \land \covers_\agentb \{\neg p, \top\} & \text{Definition of $\covers$}\\
\proves \phi \imp
\Dia_\agent \neg \neg p \land \covers_\agentb \{\neg \neg \neg p, \neg \neg \top\} &
\text{\bf Prop}\\
\proves \phi \imp
\Dia_\agent \neg \G_\agent \neg p \land \covers_\agentb \{\neg \G_\agent \neg \neg
p, \neg \G_\agent \neg \top\} \hspace{2cm}  & \text{\bf RProp}\\
\proves \phi \imp
\Dia_\agent \F_\agent p \land \covers_\agentb \{\F_\agent \neg
p, \F_\agent \top\} & \text{Definition of $\F$}\\
\proves \phi \imp
\F_\agent \covers_\agent \{p\} \land \covers_\agentb \{\F_\agent \neg
p, \F_\agent \top\} & \text{\bf RK}\\
\proves \phi \imp
\F_\agent \covers_\agent \{p\} \land \F_\agent \covers_\agentb \{\neg
p, \top\} & \text{\bf RKmulti}\\
\proves \phi \imp
\F_\agent (\covers_\agent \{p\} \land \covers_\agentb \{\neg
p, \top\}) & \text{\bf RKconj} \\
\proves \phi \imp
\F_\agent (\knows_\agent p \land \Dia_\agent p \land \Dia_\agentb \neg
p) & \text{Definition of $\covers$}\\
\proves \phi \imp
\F_\agent (\knows_\agent p \land \Dia_\agentb \neg
p) & \text{\bf Prop}\\
\proves \phi \imp
\F_\agent (\knows_\agent p \land \neg\knows_\agentb p) & \text{Definition of $\Dia$}\\
\end{array}
$$

\subsection{Completeness}\label{completeness}

Completeness is shown by a fairly but not altogether straightforward reduction argument: every formula in refinement modal logic is equivalent to a formula in modal logic. So it is a theorem, if its modal logical equivalent is a theorem. In the axiomatization $\axiomRML$ we can observe that all axioms involving refinement operators $\F$ are equivalences, except for {\bf R}; however, $\F_\agent (\phi\vel\psi) \eq \F_\agent\phi\vel\F_\agent\psi$ is a derivable theorem. This means that by so-called `rewriting' we can push the $\F$ operators further inward into a formula, until we reach some expression $\F \phi$ where $\phi$ contains no refinement operators. Now we come to the less straightforward part. Because there is a hitch: there is no general way to push a $\F$ beyond a negation (or, for that matter, beyond a conjunction). For that, we use another trick, namely that all modal logical formulas are equivalent to formulas in the cover logic syntax, and that all those are equivalent to formulas in disjunctive form (see the introduction) in cover logic. Using that, once we reached some innermost $\F \phi$ where $\phi$ contains no refinement operators, we can continue pushing that refinement operator downward until it binds a propositional formula only, and disappears in smoke because of the {\bf RProp} axiom. Then, iterate this. All $\F$ operators have disappeared in smoke. We have a formula in modal logic.

%@hans big changes here and the next couple o pages

For a smooth argument we first give some general results, after which we apply the reduction argument and demonstrate completeness. 

\begin{definition}[Substitution of equivalents]
An axiomatization satisfies {\em substitution of equivalents} if the following holds. Let  $\phi_1,\phi_2,\phi_3\in\lang$ and $\atom\in\Atoms$. If $\proves \phi_1 \eq \phi_2$ then $\proves \phi_3[\phi_2\backslash\atom] \eq\phi_3[\phi_1\backslash\atom]$. 
\end{definition} 
\begin{proposition}\label{lem:substeq}
The axiomatization $\axiomRML$ satisfies substitution of equivalents.
\end{proposition}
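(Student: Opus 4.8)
The plan is to reduce substitution of equivalents to a \emph{congruence lemma}: if $\proves \phi_1 \eq \phi_2$, then $\proves \chi \eq \chi[\phi_2\backslash\phi_1]$ for every $\chi\in\lang_\G$. Granting this, the proposition follows in one step: if in addition $\phi_1$ is a subformula of $\phi_3$ and $\proves\phi_3$, then the lemma gives $\proves \phi_3 \eq \phi_3[\phi_2\backslash\phi_1]$, whence $\proves \phi_3[\phi_2\backslash\phi_1]$ by a propositional manipulation ($\textbf{Prop}$, $\textbf{MP}$). So the entire content is in the congruence lemma.

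I would prove the congruence lemma by induction on the structure of $\chi$. If $\chi=\phi_1$, then $\chi[\phi_2\backslash\phi_1]=\phi_2$ and the claim is the hypothesis. If $\chi\neq\phi_1$ and $\chi$ is a propositional atom, then $\phi_1$ does not occur in $\chi$, so $\chi[\phi_2\backslash\phi_1]=\chi$ and $\chi\eq\chi$ is a $\textbf{Prop}$ instance. For $\chi=\neg\psi$ or $\chi=\psi_1\et\psi_2$ with $\chi\neq\phi_1$, the substitution commutes with the outermost connective; the induction hypothesis supplies $\proves\psi\eq\psi[\phi_2\backslash\phi_1]$ (resp. for $\psi_1,\psi_2$), and $\textbf{Prop}$ together with $\textbf{MP}$ propagates the equivalence through $\neg$ and $\et$.

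The two modal cases are where the specific axioms enter. For $\chi=\know_\agent\psi$ with $\chi\neq\phi_1$: from the induction hypothesis $\proves\psi\eq\psi[\phi_2\backslash\phi_1]$ I would extract $\proves\psi\imp\psi[\phi_2\backslash\phi_1]$ and its converse by $\textbf{Prop}$/$\textbf{MP}$, apply $\textbf{NecK}$ to each, then $\textbf{K}$ and $\textbf{MP}$ to get $\proves\know_\agent\psi\imp\know_\agent\psi[\phi_2\backslash\phi_1]$ and the converse, and finish by $\textbf{Prop}$. For $\chi=\G_\agent\psi$ the argument is word-for-word the same with $\textbf{NecR}$ replacing $\textbf{NecK}$ and axiom $\textbf{R}$ replacing $\textbf{K}$. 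This is precisely the point worth stressing: although $\G_\agent$ is \emph{not} a normal modality closed under uniform substitution (as noted just before Definition~\ref{def:FEL}), it \emph{is} congruential, and $\textbf{R}$ plus $\textbf{NecR}$ give exactly the monotonicity needed to push a provable equivalence through $\G_\agent$.

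I do not expect a genuine obstacle; the only care required is bookkeeping. First, $\chi[\phi_2\backslash\phi_1]$ replaces \emph{all} occurrences of $\phi_1$, possibly nested or multiple, but the structural induction handles this uniformly, and on any subterm not containing $\phi_1$ the substitution is the identity and the equivalence is a $\textbf{Prop}$ instance. Second, one must check in each inductive case that the substitution genuinely commutes with the outermost constructor, which holds because there $\chi$ is assumed syntactically distinct from $\phi_1$. Since $\lang_\G$ has no fixed-point binder there is no variable-capture subtlety; the analogous statement for $\lang^\mu_\G$ would additionally require congruence through $\mu x$ (using monotonicity of the operators in $x$), which is handled in Section~\ref{sec-FEL-mu}.
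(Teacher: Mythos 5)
Your proof is correct, and it takes a genuinely different route from the one sketched in the paper. The paper proves the proposition by induction on $\phi_3$ while maintaining the hypothesis that $\phi_3$ is a theorem: for the cases $\know_\agent\psi$ and $\G_\agent\psi$ it invokes the converse of necessitation (if $\know_\agent\psi$ is a theorem then so is $\psi$, and likewise for $\G_\agent$) and then re-applies {\bf NecK} or {\bf NecR} after using the induction hypothesis on $\psi$. You instead factor the statement through the stronger congruence lemma $\proves \phi_1 \eq \phi_2$ implies $\proves \chi \eq \chi[\phi_2\backslash\phi_1]$, proved by structural induction on an \emph{arbitrary} $\chi$, pushing the equivalence through $\know_\agent$ via {\bf NecK} plus {\bf K} and through $\G_\agent$ via {\bf NecR} plus {\bf R}. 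What your decomposition buys is uniformity and robustness: the inductive invariant "is a theorem" that the paper carries along does not descend through negation (from $\proves\neg\psi$ nothing about $\psi$'s theoremhood follows), and the denecessitation step it relies on is itself an admissibility claim that would need separate justification (and cannot be obtained from completeness here without circularity, since completeness is proved later using this very proposition). Your version uses the rules only in their stated, forward direction and isolates exactly the feature that matters --- that $\G_\agent$, though not closed under uniform substitution, is congruential thanks to {\bf R} and {\bf NecR} --- which is the standard and safer argument.
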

\begin{proof}
This can be shown by induction on $\phi_3$. All cases are standard. The case $\Box_\agent\phi$ is shown by using an inductive hypothesis $\proves \phi[\phi_2\backslash\atom] \eq\phi[\phi_1\backslash\atom]$ and then successively applying {\bf NecK}, {\bf K}, and some elementary tautologies and applications of {\bf MP}. (The required pattern is: from $\proves x \imp y$, to $\proves \Box (x \imp y)$, to $\proves \Box x \imp \Box y$. Then, similarly, for the other direction of the equivalence $x \eq y$. Then, some more propositional steps to wind it up.) Whereas the case $\G_\agent\phi$ is shown with the same inductive hypothesis but applying {\bf NecR} and {\bf R} instead of {\bf NecK} and {\bf K}. \end{proof}

\begin{proposition} \label{prop.ggg} \
\begin{enumerate}
\item $\proves \G_\agent (\phi\et\psi) \eq \G_\agent\phi\et\G_\agent\psi$ \label{zxcvzxcv}
\item $\proves \F_\agent (\phi\vel\psi) \eq \F_\agent\phi\vel\F_\agent\psi$ \label{fiets}
\item $\proves \F_\agent (\phi\et\psi) \imp \F_\agent\phi\et\F_\agent\psi$
\end{enumerate}
\end{proposition}
\begin{proof}
Item 1.\ can be easily derived from {\bf R}, {\bf NecR} and {\bf MP},  similarly to the way that in modal logic we derive $\proves \Box(\phi\et\psi) \eq \Box\phi\et\Box\psi$. Item 2.\ is the dual of item 1.\ and requires mere propositional reasoning. Item 3.\ can be derived using the tautologies $\phi\et\psi \imp \phi$ and $\phi\et\psi \imp \psi$, respectively, propositional reasoning, and {\bf R}. (Alternatively, for Item 3., we can think of deriving its dual, with the crucial steps in the derivation that $\phi \imp \phi \vel \psi$ is a tautology, from which with {\bf R} and {\bf MP} we get $\G\phi \imp \G(\phi \vel \psi)$.)
\end{proof}

\begin{proposition} \label{prop.prop} \ 
\begin{enumerate}
\item $\proves \G_\agent \phi \eq \phi$ for all propositional $\phi$. \label{eeneen}
\item $\proves \F_\agent \phi \eq \phi$ for all propositional $\phi$. \label{tweetwee}
\end{enumerate}
\end{proposition}
\begin{proof}
We show $\proves \G_\agent \phi \eq \phi$ for all propositional $\phi$.\footnote{Of course we do not have for all $\phi\in\lang_\G$ that $\proves \G_\agent \phi \eq \phi$. But we then still have $\proves \G\phi \imp \phi$, or, dually, $\proves \phi \imp \F\phi$. This can be easily shown by induction on the disjunctive form structure of a formula.}  The proof of $\proves \F_\agent \phi \eq \phi$ for all propositional $\phi$ is similar. For convenience in the proof we omit the agent label and write $\G$.

We first show $\proves \phi \imp \G\phi$. Assume that $\phi$ is in disjunctive normal form (i.e., for propositional logic, different from the {\em disjunctive form}, {\em df}, often used in this work). Formula $\phi$ therefore has the form $\Vel_{\gamma \in \Gamma}$, where each formula $\gamma$ is a conjunction of atoms or their negation, for which we write, slightly abusing the language, $\gamma = \Et_{\atom\in\gamma}\overline{\atom}$ --- where $\overline{\atom} = \atom$ if $\atom$ is a conjunct of $\gamma$ and $\overline{\atom} = \neg\atom$ if $\neg\atom$ is a conjunct of $\gamma$. We now get the following. We omit trivial steps of chaining implications and applying {\bf MP}. For readability we assume the `$\phi\imp$' part in some derived formulas.

\[ \begin{array}{llll}
\proves \phi & \imp & \Vel_{\gamma \in \Gamma} \Et_{\atom\in\gamma}\overline{\atom} & \text{DNF of } \phi, {\bf Prop} \\
\proves \dots && \Vel_{\gamma \in \Gamma} \Et_{\atom\in\gamma}\G\overline{\atom} \hspace{2cm} & {\bf RProp} \\
\proves \dots && \Vel_{\gamma \in \Gamma} \G\Et_{\atom\in\gamma}\overline{\atom} & \text{{\bf R}, {\bf NecR}, and Prop.\ \ref{prop.ggg}.\ref{zxcvzxcv} } (\G (\phi\et\psi) \eq \G\phi\et\G\psi) \\
\proves \dots && \G\Vel_{\gamma \in \Gamma} \Et_{\atom\in\gamma}\overline{\atom} & \text{{\bf R}, {\bf NecR}, and tautology } \phi \imp \phi\vel\psi \\
\proves \phi & \imp & \G\phi & \text{DNF of } \phi
\end{array} \]

For the converse direction we convert $\phi$ to the conjunctive normal form for propositional formulas, i.e., $\phi$ is equivalent to $\Et_{\gamma \in \Gamma} \Vel_{\atom\in\gamma}\overline{\atom}$ (where we now write $\overline{\atom} = \atom$ if $\atom$ is a {\em disjunct}---not conjunct---of $\gamma$ and $\overline{\atom} = \neg\atom$ if $\neg\atom$ is a disjunct of $\gamma$).
\[ \begin{array}{llll}
\proves \G\phi & \imp & \G\Et_{\gamma \in \Gamma} \Vel_{\atom\in\gamma}\overline{\atom} \hspace{2cm} & \text{CNF of } \phi, {\bf Prop}, {\bf NecR}, {\bf R} \\
\proves \dots && \Et_{\gamma \in \Gamma} \G\Vel_{\atom\in\gamma}\overline{\atom} & \text{Prop.\ \ref{prop.ggg}.\ref{zxcvzxcv}} \\
\proves \dots && \Et_{\gamma \in \Gamma} \Vel_{\atom\in\gamma}\overline{\atom} & * \\
\proves \G\phi & \imp & \phi & \text{CNF of } \phi
\end{array} \]
We show why * holds by outlining the method and giving an example: write the conjunct $\Vel_{\atom\in\gamma}\overline{\atom}$ in implicative fashion, e.g., instead of $p \vel q \vel \neg r \vel s$ we write $\neg p \imp \neg q \imp r \imp s$. Then, applying {\bf NecR} and {\bf R} and {\bf MP} repeatedly, we get first $\G (\neg p \imp \neg q \imp r \imp s)$ and then $\G \neg p \imp \G \neg q \imp \G r \imp \G s$. Then, applying {\bf RProp}, we get $\neg p \imp \neg q \imp r \imp s$, in other words, we have $p \vel q \vel \neg r \vel s$ back.
\end{proof}

\begin{proposition} \label{prop.prop2} \ 
$\proves (\phi \et \F_\agent\psi) \eq \F_\agent(\phi\et\psi)$ for all propositional $\phi$ (and any $\psi\in\lang_\G$).
\end{proposition}
\begin{proof}
Proposition \ref{prop.ggg} demonstrated that $\F_\agent(\phi\et\psi) \imp \F_\agent \phi \et \F_\agent \psi$ from which, using Proposition \ref{prop.prop}.\ref{tweetwee}, also follows $\phi \et \F_\agent \psi$. For the other direction we first derive $(\G_\agent\phi \et \F_\agent\psi) \imp \F_\agent(\phi\et\psi)$ by propositional means and applications of {\bf Nec} and {\bf R}. This goes as follows. For convenience of applying the available axioms, instead of $(\G_\agent\phi \et \F_\agent\psi) \imp \F_\agent(\phi\et\psi)$ use the equivalent form $\G_\agent\neg(\phi\et\psi) \imp \G_\agent\phi \imp \G_\agent\neg\psi$. Now we observe that $\neg(\phi\et\psi) \imp \phi \imp \neg\psi$ is a tautology and therefore derivable, applying {\bf NecR} gets us $\G_\agent( \neg(\phi\et\psi) \imp \phi \imp \neg\psi)$ and successively applying {\bf R} gets us $\G_\agent\neg(\phi\et\psi) \imp \G_\agent\phi \imp \G_\agent\neg\psi$. Then, finally, we use that $\G_\agent \phi \eq \phi$ (Proposition \ref{prop.prop}.\ref{eeneen}) and thus get $(\phi \et \F_\agent\psi) \imp \F_\agent(\phi\et\psi)$.
\end{proof}

We now first show that every $\lang_\G$ formula is logically equivalent to a $\lang$ formula. We then show that if the latter is a theorem in {\bf K}, the former is a theorem in $\axiomRML$.

\begin{proposition}\label{lem:express}
Every formula of $\lang_\G$ is logically equivalent to a formula of $\lang$.
\end{proposition}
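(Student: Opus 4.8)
The plan is a double induction. The inner part establishes that $\F_\agent\phi$ is logically equivalent to a formula of $\lang$ whenever $\phi$ is already in $\lang$; the outer part then peels refinement operators off an arbitrary $\Psi\in\lang_\G$ one at a time. The inner part is where the real work is, and the tool that makes it go through is the disjunctive normal form for $\lang$ (fact \eqref{prop.yde}): it removes exactly the two syntactic shapes — negation and unrestricted conjunction at the modal level — past which, with the axioms of $\axiomRML$, a refinement quantifier cannot be pushed.

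\emph{Inner claim: for every disjunctive $\lang$ formula $\phi$ and agent $\agent$, $\F_\agent\phi$ is equivalent to a $\lang$ formula.} I would prove this by structural induction on $\phi$. If $\phi=\alpha\vel\beta$, then $\F_\agent\phi\eq\F_\agent\alpha\vel\F_\agent\beta$ by Proposition~\ref{prop.ggg}(2) and the hypothesis applies to the smaller disjuncts. If $\phi=\phi_0\et\Et_{\agentb\in\Group}\covers_\agentb\Phi^\agentb$ with $\phi_0\in\lang_0$, then $\F_\agent\phi\eq\phi_0\et\F_\agent\Et_{\agentb\in\Group}\covers_\agentb\Phi^\agentb$ by Proposition~\ref{prop.prop}(3), this equals $\phi_0\et\Et_{\agentb\in\Group}\F_\agent\covers_\agentb\Phi^\agentb$ by \textbf{RKconj}, and for each $\agentb\in\Group$ I push the quantifier through the cover operator using \textbf{RK} when $\agentb=\agent$ (obtaining $\bigwedge_{\chi\in\Phi^\agent}\susp_\agent\F_\agent\chi$) and \textbf{RKmulti} when $\agentb\neq\agent$ (obtaining $\covers_\agentb\{\F_\agent\chi\mid\chi\in\Phi^\agentb\}$). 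Every $\chi$ appearing here is a strictly smaller disjunctive $\lang$ formula, so the hypothesis rewrites each $\F_\agent\chi$ as a $\lang$ formula; substituting these back, and using that $\susp_\agentb$ and $\covers_\agentb$ applied to $\lang$ formulas are again $\lang$ formulas, gives a $\lang$ formula equivalent to $\F_\agent\phi$. The base case $\Group=\emptyset$ has $\phi$ propositional and $\F_\agent\phi\eq\phi$ by Proposition~\ref{prop.prop}(2); the degenerate subcases $\Phi^\agentb=\emptyset$ are absorbed, since \textbf{RK} and \textbf{RKmulti} then deliver $\top$ and $\covers_\agentb\emptyset$ respectively, both in $\lang$.

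\emph{Outer induction.} For an arbitrary $\phi\in\lang$, fact \eqref{prop.yde} gives a disjunctive $\lang$ formula $\phi'$ with $\phi\eq\phi'$; since $\phi$ and $\phi'$ hold in exactly the same pointed models, $\F_\agent\phi\eq\F_\agent\phi'$, so the inner claim yields a $\lang$ formula equivalent to $\F_\agent\phi$, and via $\G_\agent\phi\eq\neg\F_\agent\neg\phi$ also one equivalent to $\G_\agent\phi$. Now take any $\Psi\in\lang_\G$ and induct on the number of occurrences of refinement operators in $\Psi$. If there are none, $\Psi\in\lang$. Otherwise pick an occurrence of maximal depth; its argument contains no refinement operator, so it is a subformula $\mathsf{Q}_\agent\chi$ with $\mathsf{Q}\in\{\F,\G\}$ and $\chi\in\lang$. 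By the foregoing there is a $\lang$ formula $\theta$ equivalent to $\mathsf{Q}_\agent\chi$; replacing that subformula in $\Psi$ by $\theta$ produces $\Psi'\in\lang_\G$ with one fewer refinement operator and $\Psi\eq\Psi'$, since replacing a subformula by a logically equivalent one preserves truth in every pointed model (its proof-theoretic counterpart is Proposition~\ref{lem:substeq}). The induction hypothesis applied to $\Psi'$ finishes the argument.

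\emph{Main obstacle.} The one delicate point is the inner claim: one must verify that the reductions through \textbf{RKconj}, \textbf{RK} and \textbf{RKmulti} move the quantifier onto \emph{strictly smaller} disjunctive $\lang$ subformulas — so that the structural induction is well founded — and that no new negation or unrestricted conjunction reappears at the modal level; the disjunctive shape is precisely what guarantees this. Note that the normal form \eqref{prop.yde} is invoked only once, on the argument of the outermost $\F_\agent$, with all recursive calls already on disjunctive formulas, so there is no concern about the normal form affecting modal depth. The remaining ingredients — distributing $\F_\agent$ over $\vel$, absorbing a propositional conjunct, and the fact that a cover of $\lang$ formulas abbreviates a $\lang$ formula — are routine and already recorded in Propositions~\ref{prop.ggg} and \ref{prop.prop}.
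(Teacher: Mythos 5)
Your proposal is correct and follows essentially the same route as the paper: an outer induction on the number of refinement-operator occurrences, selecting an innermost one, converting its argument to disjunctive normal form via \eqref{prop.yde}, and an inner structural induction on the disjunctive formula that pushes $\F_\agent$ inward using Propositions~\ref{prop.ggg} and~\ref{prop.prop} together with \textbf{RK}, \textbf{RKmulti} and \textbf{RKconj} until it evaporates on a propositional formula. Your explicit remarks on well-foundedness of the inner induction and on the degenerate cases $\Group=\emptyset$ and $\Phi^\agentb=\emptyset$ are details the paper leaves implicit, but the argument is the same.
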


\begin{proof} 
Given a formula $\psi\in\lang_\G$, we prove by induction on the number of the occurrences of $\F_\agent$ in $\psi$ (for any $\agent\in\Agents$) that it is equivalent to an $\F_\agent$-free formula, and therefore to a formula $\phi\in\lang$, the standard modal logic. The base is trivial. Now assume $\psi$ contains $n+1$ occurrences of $\F_\agent$-operators for some $\agent\in\Agents$ (so these may be refinement operators for different agents). Choose a subformula of type $\F_\agent \phi$ of our given formula $\psi$, where $\phi$ is $\F_\agentb$-free for any $\agentb\in\Agents$ (i.e.\, choose an innermost $\F_\agent$). Let $\phi'$ be a disjunctive formula that is equivalent to $\phi$. We prove by induction on the structure of $\phi'$ that $\F_\agent \phi'$ is logically equivalent to a formula $\chi$ without $\F_\agent$. There are two cases:
\begin{itemize}
\item $\F_\agent (\phi \vee \psi)$;
\item $\F_\agent (\phi_0 \et \Et_{\agentb\in\Group}\covers_\agentb \Phi^\agentb)$ \hfill where $\phi_0$ is propositional, $\Group\subseteq\Agents$, and each $\Phi^\agentb$ a set of {\em df}s.
\end{itemize}
In the first case, apply Proposition \ref{prop.ggg}.\ref{fiets}, we get $\F_\agent \phi \vee \F_\agent \psi$, and then apply induction. In the second case, if $\Group = \emptyset$ we use that $\F_\agent\phi_0 \eq \phi_0$ (Proposition \ref{prop.prop}.\ref{tweetwee}). If $\Group \neq \emptyset$, then from Proposition \ref{prop.prop2} follows that this is equivalent to $\phi_0 \et \F_\agent \Et_{\agentb\in\Group}\covers_\agentb \Phi^\agentb$, and we further reduce the right conjunct with one of the axioms {\bf RK} (if $\Group = \{\agent\}$), {\bf RKmulti} (if $\Group = \{\agentb\}$ with $\agentb\neq\agent$), or {\bf RKconj} (if $|\Group| > 1$), and apply induction again.

Thus we are able to push the refinement operators deeper into the formula until they eventually reach a propositional formula, at which point they disappear and we are left with the required $\F$-free formula $\chi$ that is equivalent to $\F\phi$. Replacing $\F \phi'$ by $\chi$ in $\psi$ gives a result with one less $\F$-operator, to which the (original) induction hypothesis applies. 
\end{proof}

\begin{proposition}\label{lem:expressb}
Let $\phi\in\lang_\G$ be given and $\psi\in\lang$ be equivalent to $\phi$. If $\psi$ is a theorem in {\bf K}, then $\phi$ is a theorem in $\axiomRML$.
\end{proposition}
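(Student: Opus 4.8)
The plan is to observe that the proof of Proposition~\ref{lem:express} is in fact \emph{proof-theoretic}: every time it replaces a subformula by a ``logically equivalent'' one, the equivalence in question is a theorem of $\axiomRML$. Hence, running that proof on $\phi$ produces not just \emph{some} $\lang$-formula equivalent to $\phi$, but a specific $\psi'\in\lang$ together with a derivation $\proves_{\axiomRML}\phi\eq\psi'$. Given this, and given that $\axiomRML$ extends the base axiomatization $\mathbf{K}$, the statement follows by chaining $\psi'$ with the arbitrary $\psi$ via completeness of $\mathbf{K}$ and one application of {\bf MP}.

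\textbf{Key steps.} First, I would restate the induction of Proposition~\ref{lem:express} tracking derivability: every rewrite used there is a theorem of $\axiomRML$. Indeed, pushing $\F_\agent$ through a disjunction uses Proposition~\ref{prop.ggg}.2 ($\proves\F_\agent(\phi\vel\psi)\eq\F_\agent\phi\vel\F_\agent\psi$); the case $\Group=\emptyset$ uses Proposition~\ref{prop.prop}.2 ($\proves\F_\agent\phi_0\eq\phi_0$ for propositional $\phi_0$); the case $\Group\neq\emptyset$ first uses Proposition~\ref{prop.prop}.3 ($\proves(\phi_0\et\F_\agent\psi)\eq\F_\agent(\phi_0\et\psi)$) and then one of the axioms {\bf RK}, {\bf RKmulti}, {\bf RKconj} to reduce the cover conjunct; and when the innermost $\F_\agent$ finally binds a propositional formula it vanishes by {\bf RProp}. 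Two ingredients need care: (i) the step that replaces the $\F_\agent$-free formula $\phi$ by an equivalent disjunctive $\lang$-formula $\phi'$ --- this is a \emph{semantic} equivalence (fact~\eqref{prop.yde}), but since $\mathbf{K}$ is complete for $\logicK$ and $\mathbf{K}\subseteq\axiomRML$, we have $\proves_{\axiomRML}\phi\eq\phi'$; and (ii) carrying these local equivalences back into $\psi$ inside arbitrary contexts, which is licensed by substitution of equivalents, Proposition~\ref{lem:substeq}. Composing all these, the (outer) induction on the number of $\F_\agent$-occurrences yields $\proves_{\axiomRML}\phi\eq\psi'$ for the constructed $\psi'\in\lang$.

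\textbf{Conclusion.} Now let $\psi\in\lang$ be any formula semantically equivalent to $\phi$. Then $\psi$ and $\psi'$ are both semantically equivalent to $\phi$, hence to each other, and both lie in $\lang$; by completeness of $\mathbf{K}$ for $\logicK$, $\proves_{\mathbf{K}}\psi\eq\psi'$, and since every axiom and rule of $\mathbf{K}$ is present in $\axiomRML$, also $\proves_{\axiomRML}\psi\eq\psi'$. Combining with $\proves_{\axiomRML}\phi\eq\psi'$ gives $\proves_{\axiomRML}\phi\eq\psi$. Finally, if $\psi$ is a theorem of $\mathbf{K}$, then $\psi$ is a theorem of $\axiomRML$, and by {\bf MP} with the derived equivalence we conclude $\proves_{\axiomRML}\phi$. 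The main obstacle is entirely in the bookkeeping of the second step: making sure that \emph{each} ``logically equivalent'' appeal in the proof of Proposition~\ref{lem:express} is genuinely an $\axiomRML$-theorem --- in particular the disjunctive-normal-form rewriting, which is what forces us to invoke the completeness of the base logic $\mathbf{K}$ rather than only its soundness.
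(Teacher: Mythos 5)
Your proof is correct and follows essentially the same route as the paper's: show that every rewrite step in Proposition~\ref{lem:express} is a provable equivalence in $\axiomRML$ (using Propositions~\ref{prop.ggg}, \ref{prop.prop}, the {\bf RK}-family of axioms, and substitution of equivalents, Proposition~\ref{lem:substeq}), so that $\proves_{\axiomRML}\phi\eq\psi$, and then import the $\mathbf{K}$-derivation of $\psi$ and apply {\bf MP}. Your explicit handling of the disjunctive-normal-form step via completeness of $\mathbf{K}$, and of the gap between the \emph{constructed} equivalent $\psi'$ and an \emph{arbitrary} equivalent $\psi$, are welcome refinements of details the paper leaves implicit, but they do not change the argument.
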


\begin{proof} Given a $\phi\in\lang_\G$, Proposition~\ref{lem:express} gives us an equivalent $\psi\in\lang$. Assume that $\psi$ is a theorem in {\bf K}. We can extend the derivation of $\psi$ to a derivation of
$\phi$ by observing that all steps used in Proposition~\ref{lem:express} are not merely logical but also provable equivalences --- where we also apply Proposition \ref{lem:substeq} of substitution of equivalents.
\end{proof}

\begin{theorem}\label{theo:SandC}
The axiom schema $\axiomRML$ is sound and complete for the logic $\logicRML$.
\end{theorem}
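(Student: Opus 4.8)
The plan is to obtain the theorem purely by assembling results already in hand, so there is almost nothing new to prove. Soundness is exactly Theorem~\ref{theo:soundness}, and I would simply invoke it. For completeness, suppose $\phi\in\lang_\G$ is valid in $\logicRML$, i.e.\ $\models\phi$ over the model class $\mathcal K$ under the refinement semantics of Definition~\ref{def:satisfaction}. By Proposition~\ref{lem:express} there is a formula $\psi\in\lang$ that is logically equivalent to $\phi$; since logical equivalence means the two formulas take the same truth value in every pointed model, $\psi$ is itself valid, now as a formula of plain multi-agent modal logic. By the well-known completeness of the axiomatization $\mathsf K$ for $\logicK$ we get $\proves_{\mathsf K}\psi$. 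Finally, Proposition~\ref{lem:expressb} lifts this derivation to $\proves_{\axiomRML}\phi$. Together with soundness this yields $\proves_{\axiomRML}\phi$ iff $\models\phi$, which is the assertion.

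The point worth stressing in the write-up is that all the genuine content has been moved into Propositions~\ref{lem:express} and~\ref{lem:expressb}. Proposition~\ref{lem:express} is the substantive step: it repeatedly strips an innermost refinement operator $\F_\agent\phi'$, first rewriting the bound subformula $\phi'$ into a disjunctive $\lang$ formula (using~\eqref{prop.yde}), then pushing $\F_\agent$ through disjunctions via Proposition~\ref{prop.ggg}.2 and through the cover operators via the reduction axioms {\bf RK}, {\bf RKmulti}, {\bf RKconj}, until it meets a propositional formula, where it vanishes by Proposition~\ref{prop.prop}. The delicate part there — and the one place where I would be careful — is that each of these rewriting steps is a two-sided equivalence that is not merely semantically valid but provable in $\axiomRML$, and that replacing a subformula inside $\psi$ by a provably equivalent one preserves theoremhood; this is precisely what Proposition~\ref{lem:substeq} (substitution of equivalents) secures, and what Proposition~\ref{lem:expressb} exploits to turn the reduction into a proof-theoretic transformation.

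Thus the only genuine obstacle is bookkeeping rather than mathematics: one must make sure the equivalence delivered by Proposition~\ref{lem:express} is used in the correct direction in each half — semantically to transfer validity of $\phi$ to validity of $\psi$, and proof-theoretically to transfer a $\mathsf K$-derivation of $\psi$ back to an $\axiomRML$-derivation of $\phi$. Since both directions have already been discharged, the proof of the present theorem is a one-line combination, and I would present it as such: cite Theorem~\ref{theo:soundness} for soundness, and chain Proposition~\ref{lem:express}, completeness of $\mathsf K$, and Proposition~\ref{lem:expressb} for completeness.
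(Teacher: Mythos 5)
Your proposal is correct and follows exactly the paper's own argument: soundness is cited from Theorem~\ref{theo:soundness}, and completeness is obtained by reducing a valid $\phi\in\lang_\G$ to an equivalent $\psi\in\lang$ via Proposition~\ref{lem:express}, invoking completeness of $\mathsf K$, and lifting the derivation back with Proposition~\ref{lem:expressb}. Your additional remarks about where the real content lives (the provable reduction through disjunctive forms and substitution of equivalents) accurately reflect the paper's own division of labour.
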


\begin{proof} The soundness proof is given in Theorem~\ref{theo:soundness}, so
we are left to show completeness. Suppose that $\phi\in\lang_\G$ is valid:
$\models\phi$. Applying Lemma~\ref{lem:express} we know that there is
some equivalent formula $\psi\in\lang$, i.e., not containing any refinement
operator. As $\phi$ is valid, from that and the validity
$\phi\eq\psi$ it follows that $\psi$ is also valid in refinement modal logic, and therefore also valid in the logic {\bf K} (note that the model class is the same). From the completeness of {\bf K} it follows that $\psi$ is derivable, i.e.\ it is
a theorem. From Proposition~\ref{lem:expressb} it follows that $\phi$ is a
theorem.  \end{proof}

\subsection{The single-agent case} \label{sec.singleax}

The axiomatization for the single-agent case is the unlabelled version of $\axiomRML$, minus the axioms {\bf RKmulti} and {\bf RKconj}.\footnote{It is clear that axiom {\bf RKmulti} is not needed in the single-agent case, as this is for different agents. But axiom {\bf RKconj} is also not necessary in the single-agent case. We recall that $\covers_\agent \Phi \et \covers_\agent \Psi$ is equivalent to $\covers_\agent ((\Phi\et\Vel\Psi)\union (\Psi\et\Vel\Phi))$, see page \pageref{sec.coverlogic}. So, we can assume that there are no conjunctions of cover formulas in the single-agent case.} The single-agent axiomatization was presented in \cite{hvdetal.felax:2010}. The completeness proof there is (slightly) different from the multi-agent case of the proof here. In \cite{hvdetal.felax:2010} it is used that every refinement modal logical formula is equivalent to a formula in cover logic with the special syntax $\phi \ ::= \ \bot \mid \top \mid \phi \vee \phi \mid \atom \wedge \phi \mid \neg \atom \wedge \phi \mid \covers \{ \phi, \dots, \phi \}$ \cite{bilkovaetal:2008,kupkeetal:2008}, plus induction on that form. (This syntax is of course very `disjunctive formula like'.) That proof was suggested by Yde Venema, as a shorter alternative to the proof with disjunctive forms.

\subsection{Refinement epistemic logic}

Refinement modal logic $\logicRML$ is presented with respect to the class of all models. As mentioned in Section \ref{subsec.syntaxrml}, by restricting the class of models that the logic is interpreted over, we may associate different meanings with the modalities. 
For example, the epistemic logic {\bf S5}, a.k.a.\ the logic of knowledge, is interpreted over  the model class ${\mathcal S}5$, and the  logic of belief {\bf KD45} is interpreted over the class ${\mathcal KD}45$. Given any class of models $\mathcal{C}$, the semantic interpretation of $\G$ is given by:
$$
M_\state \models \G_\agent \phi  \text{ iff for all } 
M'_{\state'}\in \mathcal{C}:\ M_\state \lumis_\agent M'_{\state'} \text{ implies } M'_{\state'} \models \phi.
$$
Thus we can consider various {\em refinement epistemic logics}. Although $\F \Box \bot$ is a validity in $\logicRML$ (just remove all access) it is not so in the refinement logic of knowledge, interpreted on ${\mathcal S5}$ models, because seriality of models must be preserved in every refinement. And therefore it is also not valid in the refinement logic of belief. 

Our axiomatization $\axiomRML$ may not be sound for more restricted model classes. Let us consider the single-agent case, and the axiom \[ {\bf RK} \qquad \F\covers\Phi\eq\bigwedge\susp\F\Phi. \] For example, in $\mathcal{S}5$ we have that $\F\covers\{\knows p,\neg\knows p\}$ is inconsistent, but that $\susp\F\knows p\land\susp\F\neg\knows p$ is consistent: you do not consider an informative development possible after which you both know and don't know $p$ at the same time. Therefore, axiom {\bf RK} is invalid for that class.

The axioms replacing {\bf RK} in refinement logic of knowledge and refinement logic of belief are, respectively:
\[ {\bf RS5}\qquad \F\covers\Phi\eq(\bigvee\Phi\land\bigwedge\susp\Phi) , \]
and, for $\Phi\neq\emptyset$,
\[ {\bf RKD45}\qquad \F\covers\Phi\eq\bigwedge\susp\Phi , \]
where $\Phi$ is a set of purely propositional formulas. Now if apart from {\bf RS5} we also add the usual ${\bf S5}$ axioms {\bf T}, {\bf 4}, and {\bf 5}, we have a complete axiomatization for the refinement logic of knowledge. In the case of the refinement logic of belief, we add axioms {\bf D} (for seriality), {\bf 4}, and {\bf 5} and {\bf RKD45} to get a complete axiomatization. For details, see \cite{halesetal:2011}.

A study of how various classes of models affect the properties of
bisimulation quantified logics is given in \cite{french:2006}. Refinement epistemic logics are investigated in \cite{halesetal:2011,hales:2011}. In \cite{hales:2011} a multi-agent $KD45$ axiomatization is also reported. (For multi-agent $S5$, see `Recent results' in Section~\ref{sec.conclusions}.)

\section{Axiomatization $\axiomRML^\mu$}\label{muAxio}
\label{sec-FEL-mu}

In this section we give the axiomatization for refinement modal $\mu$-calculus. We restrict ourselves to {\em single-agent} refinement modal $\mu$-calculus. The axiomatization is an extension of the (single-agent) axiomatization $\axiomRML$ for refinement modal logic. 

We recall the definition of modal $\mu$-calculus in the technical introductory Section \ref{sec.tech}. In \cite[Lemma 2.43]{french:2006} a bisimulation quantifier characterization of fixed points is given. The characterization employs the universal modality $\blacksquare$ which quantifies over all states in the model. Let $\lang_{\bqall\blacksquare}$ be the language of bisimulation quantified modal logic with $\blacksquare$ as well. First, observe that this impacts the semantics of bisimulation quantification. For two models to be bisimilar, it must now also be the case that every state in one model is bisimilar to a state in the other.

We can inductively define a truth-preserving translation $t: \lang^\mu \mapsto \lang_{\bqall\blacksquare}$. The crucial clauses are those for the fixed-point operators. The atoms $p$ introduced in the translation are required not to occur in $\phi$.
%\begin{subequations} % @@
%\begin{equation}
%\label{nu-and-bisimquant}
\[\begin{array}{l}
t(\nu x.\phi) \text{ is equivalent to } \bqis p (p \land \blacksquare(p \imp t(\phi[p\backslash x]))) \\
%\end{equation}
%\begin{equation}
t(\mu x.\phi) \text{ is equivalent to } \bqall p(\blacksquare(t(\phi[p\backslash x])\imp p)\imp p) 
\end{array} \]
%\label{mu-and-bisimquant}
%\end{equation}
%\end{subequations}
The first equation captures the intuition of a greatest fixed point as a least upper bound of the set of states that are postfixed points of $\phi$, whereas the second equation captures a least fixed point as the greatest lower bound of the set of states that are prefixed points of $\phi$. From \cite{dagostinoetal:2000} we know that bisimulation
quantifiers are also expressible in the modal $\mu$-calculus, and thus these equivalences also hold in the modal $\mu$-calculus.

\medskip

Having these tools for modal $\mu$-calculus at our disposition, let us now apply them in  {\em refinement} modal $\mu$-calculus. In order to demonstrate the soundness of the axiomatization defined below, we need to expand the relativization $\bullet^\atom: \lang_\bqall \imp \lang_\bqall$ (Definition \ref{def-relativization}), single-agent version, to a version $\bullet^\atom: \lang_{\bqall\blacksquare} \imp \lang_{\bqall\blacksquare}$ by including a clause for the universal modality: \[ (\blacksquare\phi)^\atom = \blacksquare\phi^\atom \] 
Employing that expanded relativization we can expand the translation $t: \lang_\G \imp \lang_\bqall$ (Definition \ref{def.tsjakka}) to a translation \[ t: \lang_\G^\mu \imp \lang_{\bqall\blacksquare} \] by adding the two clauses above for fixed points (this explains why we also wrote $t(\bullet)$ there). This translation $t$ remains truth-preserving (due to Proposition~\ref{prop-ref-and-bisimquant} and \cite[Lemma 2.43]{french:2006}). We recall the crucial interaction of the translation and the relativization, namely that $t(\F\phi)$ is equivalent to $\bqis \atom \ t(\phi)^\atom$. The translation plays an important role in the soundness proof: axioms are shown to be sound by showing that their translations are valid.

\begin{definition}[axiomatization $\axiomRML^\mu$] \label{def.refmu}  
The axiomatization $\axiomRML^\mu$ is a substitution schema of the
(single-agent) axioms and rules of $\axiomRML$ along with the axiom and rule for the modal $\mu$-calculus:
$$
\begin{array}{rl} 
{\bf F1} & \phi[\mu x.\phi\backslash x]\imp \mu x.\phi\\
{\bf F2} & \text{From } \phi[\psi\backslash x]\imp\psi \text{ infer } \mu x.\phi\imp \psi
\end{array}
$$   
and two new interaction axioms:
$$
\begin{array}{rl}
{\bf R^\mu} & \G\mu x.\phi \eq \mu x. \G\phi\ {\rm where }\ \phi\ {\rm is\ a\ {\it df}} \\
{\bf R^\nu} & \G\nu x.\phi \eq \nu x. \G\phi\ {\rm where }\ \phi\ {\rm is\ a\ {\it df}}
\end{array}
$$
\end{definition}
For single-agent $\axiomRML$, see Definition~\ref{def:FEL} and Section~\ref{sec.singleax}. We recall that single-agent $\axiomRML$ does not contain the axioms {\bf RKmulti} and {\bf RKconj}.

We emphasize that the interaction axioms have the important associated condition that the refinement quantification will only commute with a fixed-point operator if the fixed-point formula is a disjunctive formula.

\subsection{Soundness}\label{sect:soundness-mu}

% (only) this section is Tim's version 22 August plus Hans' edits 24 August
%@hans: various changes up to 11 August, only use THAT version for future revision

The soundness proofs of Section~\ref{sect:soundness} still apply and
the soundness of {\bf F1} and {\bf F2} are well known \cite{arnoldetal:2001}, so we are left to show that {\bf R$^\mu$} and {\bf R$^\nu$} are sound. In the proof we use the characterization of refinement quantification in terms of bisimulation quantification and relativization that was established in Proposition~\ref{prop-ref-and-bisimquant}. We will also use the characterization of both fixed points in terms of bisimulation quantification as in the previous subsection. 
\begin{theorem}\label{theo:soundness-mu}
The axioms {\bf R$^\mu$} and {\bf R$^\nu$} are sound.
\end{theorem}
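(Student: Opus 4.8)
The plan is to prove both equivalences by pushing them through the translation $t$ into bisimulation quantified logic. By Proposition~\ref{prop-ref-and-bisimquant}, extended to $\lang^\mu_\G$ and $\lang_{\bqall\blacksquare}$ as just described, it suffices to prove the corresponding equivalences after applying $t$, and there one may replace the fixed points by their bisimulation-quantifier characterizations \eqref{nu-and-bisimquant} and \eqref{mu-and-bisimquant}. I will carry out $\mathbf{R}^\nu$ in detail and treat $\mathbf{R}^\mu$ in the same way, with \eqref{mu-and-bisimquant} in place of \eqref{nu-and-bisimquant}.

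First the reduction. Fix a disjunctive formula $\phi$, choose an atom $q$ not occurring in $\phi$ and a fresh atom $p$, and put $\psi:=\phi[q\backslash x]$; since $\phi$ is disjunctive it is $\G$-free, so $t(\psi)=\psi$. Unfolding Definition~\ref{def.tsjakka}, substituting \eqref{nu-and-bisimquant}, and using that relativization distributes over $\bqis$, $\blacksquare$, $\land$, $\imp$ and fixes atoms (Definition~\ref{def-relativization} together with the added clause $(\blacksquare\chi)^\atom \eq \blacksquare\chi^\atom$), one gets
\[ t(\G\nu x.\phi)\ \equiv\ \bqall p\ \bqis q\ \bigl(q \land \blacksquare(q\imp \psi^{(\agent,p)})\bigr) \]
\[ t(\nu x.\G\phi)\ \equiv\ \bqis q\ \bigl(q \land \blacksquare(q\imp \bqall p\ \psi^{(\agent,p)})\bigr). \]
Thus $\mathbf{R}^\nu$ reduces to the equivalence of these two $\lang_{\bqall\blacksquare}$ formulas: the quantifier $\bqall p$ coming from $\G$ must be interchanged with the quantifier--modality pattern $\bqis q\,(q\land\blacksquare(q\imp{-}))$ coming from the $\nu$-characterization.

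One of the two inclusions, the one amounting to $\nu x.\G\phi \imp \G\nu x.\phi$, will hold for any monotone context and needs no disjunctiveness: since $\G$ distributes over arbitrary conjunctions and $\nu x.\phi \equiv \bigwedge_{\alpha}\phi^{\alpha}(\T)$ on ordinal approximants, we get $\G\nu x.\phi \equiv \bigwedge_{\alpha}\G\phi^{\alpha}(\T)$, and a transfinite induction using only reflexivity $\G\chi\imp\chi$ and monotonicity of $\phi$ in $x$ yields $(\G\phi)^{\alpha}(\T)\imp\G\phi^{\alpha}(\T)$ for all $\alpha$, hence $\nu x.\G\phi \imp \G\nu x.\phi$. (For $\mathbf{R}^\mu$ the analogous free direction is $\mu x.\G\phi\imp\G\mu x.\phi$, via $\bigvee_{\alpha}$-approximants, $\G\B\equiv\B$, and $\bigvee_i\G\chi_i\imp\G\bigvee_i\chi_i$.)

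The other inclusion, $\G\nu x.\phi \imp \nu x.\G\phi$ (and dually $\G\mu x.\phi\imp\mu x.\G\phi$), is where the disjunctive hypothesis is used, and I expect it to be the main obstacle. In the quantifier formulation one must pass from ``for every $p$-variation of the model there is a $q$-variation making $q$ true at the point and $\blacksquare(q\imp\psi^{(\agent,p)})$ true'' to ``there is a single $q$-variation that works for every $p$-variation''. The idea is that, up to unfolding $\covers$, $\psi=\phi[q\backslash x]$ is disjunctive, so on a tree unravelling it makes only a local demand at each node --- a choice of one disjunct, together with an assignment of that node's successors to the $\covers_\agentb\{\dots\}$ conjuncts occurring in the chosen disjunct --- with the occurrences of $q$ (the former fixed-point variable) guarded beneath those modal operators. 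Hence the witnessing valuation of $q$ can be fixed node by node from the tree structure alone, and $p$-relativization, which only deletes some $\agent$-arrows (equivalently, prunes the unravelling), cannot invalidate it; it can only make more of the universal obligations vacuous. The hard part will be to make this precise: to extract the uniform $q$-colouring while accounting for the inner fixed points that a disjunctive $\phi$ may still contain and for the interaction of $p$-relativization with the $\covers$-requirements. This is exactly the point at which the restriction of $\mathbf{R}^\mu$ and $\mathbf{R}^\nu$ to disjunctive bodies is needed --- without it the two formulas are genuinely inequivalent.
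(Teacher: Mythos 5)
Your setup is fine: the reduction through $t$ and the relativization clauses is computed correctly, and your argument for the inclusions $\nu x.\G\phi\imp\G\nu x.\phi$ and $\mu x.\G\phi\imp\G\mu x.\phi$ (ordinal approximants, reflexivity of $\G$, distribution of $\G$ over conjunctions) is sound --- these directions indeed hold for arbitrary $\phi$, which matches the paper, where they fall out of a chain of bisimulation-quantifier manipulations with a single one-way step. The genuine gap is that the converse inclusions --- the only place where the hypothesis that $\phi$ is disjunctive is used --- are not proved. You describe the intended mechanism (``the witnessing valuation of $q$ can be fixed node by node'') and then explicitly defer it (``the hard part will be to make this precise''). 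That deferred step \emph{is} the theorem: the counterexamples the paper records immediately after its proof (e.g.\ $\phi=p\land\know(\susp\top\imp x)$ for {\bf R$^\nu$}) show the equivalences fail for non-disjunctive bodies, so no soft monotonicity or quantifier-logic argument can close it.

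What is missing, concretely, is the paper's frontier construction. For {\bf R$^\mu$} the paper proves the missing implication $\bqis q(q\land\blacksquare\,\bqis p(q\imp t(\phi)^p))\imp\bqis q(q\land\bqis p\,\blacksquare(q\imp t(\phi)^p))$ on a countable tree model by iteration: since $\phi$ is disjunctive, every occurrence of $q$ inside a conjunction is guarded by a modality, so a local witness for $\bqis p\,t(\phi)^p$ constrains $p$ and $q$ only up to a frontier of states $\{u_0,u_1,\dots\}$; one re-runs the argument at each frontier state, accumulates the partial valuations $V_i$, and must then verify that the limit $V_\omega$ is a single global witness --- your sketch says nothing about this limit stage, nor about how the inner fixed points still present in a disjunctive $\phi$ are handled there. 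For {\bf R$^\nu$} the paper abandons the translation altogether for the hard direction and proves $\mu x.\F\phi\imp\F\mu x.\phi$ by transfinite induction on the approximant ordinal $\tau$, cutting the model at a frontier and gluing in the refinements $N^i$ supplied by the induction hypothesis at the frontier states (a surgery analogous to the soundness proof of {\bf RK}). Until you supply constructions of this kind --- including the convergence argument at limits --- the proposal establishes only the half of each axiom that needs no disjunctiveness.
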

\begin{proof}
The proof consists of two cases, {\bf R$^\mu$} and {\bf R$^\nu$}.

\bigskip

\noindent {\bf Case {\bf R$^\mu$}}

It is more convenient in this proof to reason about the axiom in its contrapositive form: $\F\nu x.\phi\leftrightarrow\nu x.\F\phi$. The proof demonstrates that $t(\F\nu x.\phi)$ is equivalent to $t(\nu x.\F\phi)$ in bisimulation quantified logic (with the universal modality). Using the translation and relativization equivalences above we have that, for any $\phi\in\lang_\G$:
\[
\begin{array}{lcl}
t(\F\nu x.\phi) &\Eq& \bqis p \ t(\nu x.\phi)^p  \\
&\Eq& \bqis p (\bqis q (q \et \blacksquare (q \imp t(\phi[q\backslash x]))))^p  \\
&\Eq& \bqis p \bqis q (q \et (\blacksquare (q \imp t(\phi[q\backslash x])))^p)  \\
&\Eq& \bqis p \bqis q (q \et \blacksquare (q \imp t(\phi[q\backslash x])^p))  \\
&\Eq& \bqis q \bqis p (q \et \blacksquare (q \imp t(\phi[q\backslash x])^p))  \\
&\Eq& \bqis q (q \et \bqis p \blacksquare (q \imp t(\phi[q\backslash x])^p))  \\
&\Imp& \bqis q (q \et \blacksquare \bqis p (q \imp t(\phi[q\backslash x])^p)) \hspace{2cm} (*) \\
&\Eq& \bqis q (q \et \blacksquare (q \imp \bqis p \ t(\phi[q\backslash x])^p))  \\
&\Eq& \bqis q (q \et \blacksquare (q \imp t(\F \phi[q\backslash x]))) \\
&\Eq& t(\nu x. \F \phi) \\
\end{array}
\]
This proof simply applies known validities of bisimulation
quantifiers. Note that line $(*)$ is not an equivalence. The other direction holds if $\phi$ is a $df$. This we now prove.

We may assume w.l.o.g.\ that disjunctive formula $\nu x.\phi$ contains no free variables, i.e., $\phi$ is (also) a disjunctive formula with only the free variable $x$. We recall that in a disjunctive formula, a conjunction can only be between a purely propositional part and a cover modality part, and that fixed-point variables are not allowed in the propositional part (see Section \ref{sec.tech}). Importantly this means that propositional variable $q$ (witnessing fixed-point variable $x$), that occurs in the formula $(\phi[q\backslash x])^p$, can only appear in a conjunction, if it appears in the scope of a cover operator within that conjunction. This has the following significant consequence:

\begin{quote} {\em 
If $M_s\models\phi[q\backslash x]$, where $\phi$ is a disjunctive formula, then there is a model $N_u\bisim^q M_s$ such that $N_u^*\models\phi[q\backslash x]$ where $N_u^*$ is the restriction of $N_u$ to states that are not successors of $q$ states.
}
\end{quote}
That is, whether or not $N_u$ satisfies $\phi[q\backslash x]$ is invariant to any successors of states in $V^{N^*}(q)$.\footnote{Throughout this proof we will assume that all models are trees or forests (i.e.\ every state has at most one predecessor). As every model is bisimilar to a tree, and $\lang_\G^\mu$ and $\lang_{\bqall\blacksquare}$ are bisimulation invariant, this will not affect the validity of the presented argument.} 
To see this, we note that a disjunctive formula $\phi[q\backslash x]$ is true at $M_s$, if and only if there is some pointed model $N_u$ that is bisimilar to $M_s$, and some minimal relation $\rho$ between the states of $S^{N_u}$ and subformulas of $\phi[q\backslash x]$ such that:
\begin{enumerate}
\item $u \ \rho \ \phi[q\backslash x]$;
\item if $v \ \rho \ (\psi_1\lor \psi_2)$, then either $v \ \rho \ \psi_1$ or $v \ \rho\ \psi_2$ but not both;
\item if $v \ \rho \ (\chi\land\covers\Phi)$, then $N_v\models\chi$ and for every successor $v'$ of $v$ there is a unique $\psi\in\Phi$ such that $v' \ \rho \ \psi$, and for every $\psi\in\Phi$, there is at least one successor $v'$ of $v$ where $v' \ \rho \ \psi$;
\item if $v \ \rho \ \nu y.\psi$, then $v \ \rho \ \psi[\nu y.\psi\backslash y]$.
\end{enumerate}
It is clear that if such a relation exists then $N_u\models\phi$. As $q$ is replacing the fixed-point variable $x$ (which can only appear in the scope of a cover operator), the minimality of $\rho$ guarantees that if $v \ \rho \ q$, then there is no formula $\psi\neq q$ such that $v \ \rho \ \psi$, and hence, for all successors $v'$ of $v$ there is no formula $\psi$ such that $v' \ \rho \ \psi$. Consequently these successors do not impact the existence of the relation $\rho$, and thus do not affect whether or not $N_u\models\phi[q\backslash x]$.  

An explicit construction for $N_u$ can be given via the tableaux of Janin and Walukiewicz \cite{janinetal:1995}. Using their tableaux \cite[Def.\ 3.1]{janinetal:1995}, the concept of a marking \cite[Def.\ 3.6]{janinetal:1995} can be adapted to give the required model, $N_u$. This construction is important for the proof now to follow.

Suppose $M_s$ is any countable model such that
  $M_s\models\bqis q(q \land \blacksquare\bqis p(q\imp t(\phi[q\backslash x])^p))$,
  where $\phi$ is a $df$. 
  %By definition of the bisimulation quantifiers,
  %there exists some model $N_u=(S^N, R^N, V^N,u)$ such that $N_u\bisim^q M_s$ and
  %$N_u\models q\land \blacksquare\bqis p(q\imp t(\phi^p[q\backslash x]))$. 
  %Moreover, as
  %the bisimulation quantified modal logic enjoys the tree-model property, we may assume without loss of generality
  %that $N_u$ is some tree-like model. 
  We would like to build some model $M^\omega_{u}$ such that
  \begin{itemize}
  \item $M^\omega_{u}\bisim^{p,q} M_s$,
  \item $M^\omega_{u}\models q\land \blacksquare(q\imp t(\phi[q\backslash x])^p)$
  \end{itemize}
%model choices S^MxPhi*, just an inductive construction over the space of all models....? The limit still exists right?
%Build by induction on N_u, after N^i, the first i levels of the tree (at least) are fixed, so a limit exists, and the limit is N

  We inductively build a sequence of (pointed) models $M^i_u = (S^i, R^i, V^i,u)$ 
  such that $M^i_u\bisim^{p,q} M_s$, and furthermore, the models $M^i$ are fixed 
  up to a given set of states. 

%  Specifically, if $T\subseteq S^i$ and $M^i|T$ is the model $M^i$ restricted 
%  to the states that are not proper descendants of any state in $T$,
%  then for all $i$, there exists a set $T^i\subseteq S^i$ such that 
%  for all $j>i$, $M^i|{T^i} = M^j|{T^i}$.

\begin{definition}
Suppose that $M_s = (S, R, V,s)$ is a pointed tree like model (so for each $t\in S$, there is at most one $t'\in S$ such that $(t',t)\in R$). Let $T\subseteq S-\{s\}$. The model $M_s$ {\em up to }$T$ (written $M_s\uparrow T$) is the model $(S', R', V, s)$ where $S'$ is the set of states that are not proper descendants of $T$ and $R' = R\cap (S'\times S')$.
\end{definition}
Effectively, the model $M_s\uparrow T$ is the model $M_s$ with all the successors of any state in $T$ removed.
For each $i$ there will be a set of states $T^i \subset S^i$ such that for all $j>i$, $M^i\uparrow{T^i} = M^j\uparrow{T^i}$.

  This means we are able to give a well-defined limit for this sequence.  
  At each point of the induction, $T^i$ will represent a frontier of states in the model where we require $q\et t(\phi[q\backslash x])^p$ to be true.
  Because we are working with disjunctive formulas, we can change the submodels rooted at states in $T^i$, 
  without affecting the interpretation of $t(\phi[q\backslash x])^p$ in other parts of the model.
  This way we are able to find a single model $M^\omega_s$ with the required properties.

  %Moreover, the sequence of valuations $V_i$ is such that the
  %sets $V_i(q)$ and $V_i(p)$ strictly increase. Its limit yields a valuation $V_\omega$ such that the model
  %$N^\omega=(S^N,R^N,V_\omega)$ satisfies $q \land \blacksquare(q\imp t(\phi[q\backslash x])^p)$ at
  %state $t$, and $N_t^\omega\bisim^{p,q} M_s$. As a consequence $M_s
  %\models \bqis q(q \land \bqis p(\blacksquare(q\imp t(\phi[q\backslash x])^p)))$.

  We now define the sequence of models $M^i$. For each $i$ we define a model 
  and a set of states $T^i\subseteq S^i$ on which we will extend the construction.
  The proposition to be shown by inductive proof is
  $$\begin{array}{l}
  M^i_u\bisim^{p,q} M_s\\
  \forall u'\in T^i,\ M^i_{u'}\models\bqis q(q \et \blacksquare \bqis p (q\imp \ t(\phi[q\backslash x])^p)),\\
  \forall j<i, \forall u'\in T^j,\ M^i_{u'}\models q\et t(\phi[q\backslash x])^p,\ \text{and} \\
  \forall j<i, N^i\uparrow{T^j} = N^j\uparrow{T^j}.
  \end{array}
  $$
  To define the base case, it is sufficient to let $M^0 = M$ and $T^0 = \{s\}$.
  It is clear that the induction hypothesis holds here. 
  Now, for the inductive step, assume that the proposition holds for $i$. For each $u\in T^i$, we have 
  $$M^i_{u}\models \bqis q(q\land\blacksquare\bqis p(q\imp \ t(\phi[q\backslash x])^p)).$$
  Hence, for each $u\in T^i$, there is some (tree-like) $N^u_{v^u}\bisim^{p,q} M^i_{u}$ such that 
  $$ N^u_{v^u}\models q\et t(\phi[q\backslash x])^p\et \blacksquare\bqis p(q\imp t(\phi[q\backslash x])^p).$$
We will assume w.l.o.g.\ that all models $N^u$ for $u\in T^i$ and $M^i$ have disjoint sets of states.  As $\phi$ is a disjunctive formula, we may further assume that $N^u_{v^u}\models t(\phi[q\backslash x])^p$ is invariant to any successors of $V^{N^u}(q)\backslash\{v^u\}$. 
  This allows us (as the induction proceeds) to replace the submodels rooted at $v^u$ without affecting whether $t(\phi[q\backslash x])^p$ is satisfied in other parts of the model. 

%It follows that $N_v\models \phi^p[q\backslash x]$, where $\phi^p$ is disjunctive $\lang^\mu$ formula with 
%  one free variable $x$ (which is witnessed by $q$). 
%  Therefore, there is a $\phi^p$ marking of the model $N^i_{u'}$: 
%  $\tabmark\subset S^{N'}\times\Phi^*$ where $\Phi$ is the set of subformulas of $\phi^p$.
%  We build a model from this marking: $M^{u'}= (S^{u'}, R^{u'}, V^{u'})$ where
%  \begin{itemize}
%  \item $S^{u'} = \{(s, w) | s\tabmark w\} cup S^{N'}$,
%  \item $(s, w) R^{u'} (s',w')$ iff $s R^{N'} s'$ and $w$ is a prefix of $w'$,
%  \item $(s,w\psi) R^{u'} s'$ iff $\psi = \top$ or $\psi = q$ and $s R^{N'} s'$,
%  \item $s R^{u'} s'$ iff $s R^{N'} s'$, 
%  \item $(s,w)\in V^{u'}(r)$ iff $s\in V^{N'}(r)$,
%  \item $s\in V^{u'}(r)$ iff $s\in V^{N'}(r)$.
%  \end{itemize}
  We now append the models $N^u_{v^u}$ (for $u\in T^i$) to the model $M^i$. Formally, let $M' = M^i\uparrow{T^i} = (S', R', V')$, then
  $$
  \begin{array}{l}
  S^{i+1} = S'\cup \bigcup_{u\in T^i} S^{N^u};\\
  R^{i+1} = R'\cup\bigcup_{u\in T^i} R^{N^u} \cup \{(u,v)\ |\ u\in T^i,\ v^u R^{N^u} v\};\\
  \text{for all } r: V^{i+1}(r) = V'(r) \bigcup_{u\in T^i} V^{N^u}(r).
  \end{array}
  $$
  Finally, we let $T^{i+1} = \bigcup_{u\in T^i} V^{N^u}(q)$. 

  %Note that $q$ plays the role of a variable in a disjunctive formula, 
  %so the integrity of the $\phi^p$-marking for all $u'\in T^i$ is not 
  %affected by any state in the model $N^{i+1}$ that is a proper descendent of a state in $T^{i+1}$.

  We can see that the proposition to be shown holds for $i+1$ as follows:
  \begin{itemize}
  \item $M^{i+1}_s\bisim^{p,q} M_s$ since, for all $u\in T^i$,  $M^{i}_u\bisim^{p,q} N^u_{v^u}$, and $M^{i}_s\bisim^{p,q} M_s$ from the induction hypothesis. 
  	A $\{p,q\}$-bisimulation between $M^{i+1}_s$ and $M_s$ can be constructed by composing these bisimulations.\footnote{Specifically, let $\bisrel^u$ be the $\{p,q\}$-bsimulation between $M^i_u$ and $N^u_{v^u}$, and $\bisrel^i$ be the $\{p,q\}$-bisimulation between $M^i_s$ and $M_s$. We define the $\{p,q\}$-bisimulation $\bisrel^{i+1}$ from $M^{i+1}_s$ to $M_s$ by: for all $t\in S^{i+1}$, for all $t'$ in $S$, $(t,t')\in \bisrel^{i+1}$ if and only if either ($t\in S^i$ and $(t,t')\in \bisrel^i$), or ($t\in S^{N^u}$, and for some $v\in S^i$, $(t,v)\in \bisrel^u$ and $(v,t')\in \bisrel^i$). It is straightforward to check that $\bisrel^{i+1}$ is a bisimulation.}
  \item $\forall v\in T^{i+1},\ M^{i+1}_{v}\models \bqis q(q\land \blacksquare \bqis p (q\imp \ t(\phi[q\backslash x])^p))$, since
  	for all $u\in T^{i+1}$, $u\in V^{i+1}(q)$, and $N^u_{v^u}\models q\et\blacksquare\bqis p(q\imp t(\phi[q\backslash x])^p)$.
  \item $\forall j<i+1, \forall u\in T^j,\ M^j_{u}\models q\et t(\phi[q\backslash x])^q$; by the reasoning presented above,
	$M^j_u\models t(\phi[q\backslash x])^q$ is invariant to the successors of the states in $T^{j+1}$. 
	Therefore, if $M^j_u\models q\et t(\phi[q\backslash x])^q$, then $M^{j+1}_u\models t(\phi[q\backslash x])^q$.
  \item $ \forall j<i, M^i\uparrow{T^j} = M^j\uparrow{T^j}$ follows immediately from the construction.
  \end{itemize}
  We now let $M^{\omega} = (S^{\omega}, R^{\omega}, V^{\omega})$ where 
  \begin{itemize}
  \item $s'\in S^{\omega}$ iff for some $i$, $s'\in S^j$ for all $j>i$,
  \item $u R^\omega v$ iff for some $i$, $u R^j v$ for all $j>i$,
  \item $u\in V^\omega(r)$ iff for some $i$, $u\in V^j(r)$ for all $j>i$,
  \end{itemize}
  and let $T^\omega=\emptyset$. It is clear that the limit step will also preserve the induction hypothesis, 
  so we have $M^\omega_u\bisim^{p,q}M_s$ and $N^\omega\models q\land\blacksquare(q\imp \ t(\phi[q\backslash x])^p)$, since
  by construction $V^{\omega}(q) = \bigcup_{i<\omega} T^i$.
  Thus, $M_s\models\bqis p\bqis q(q\land\blacksquare(q\imp \ t(\phi[q\backslash x])^p))$ (i.e., $M_s\models\F\nu x.\phi$) as required.

The construction is represented in Figure~\ref{mu-sound-1}.

\begin{figure}[h]
\begin{center}
\scalebox{0.4}{
\includegraphics{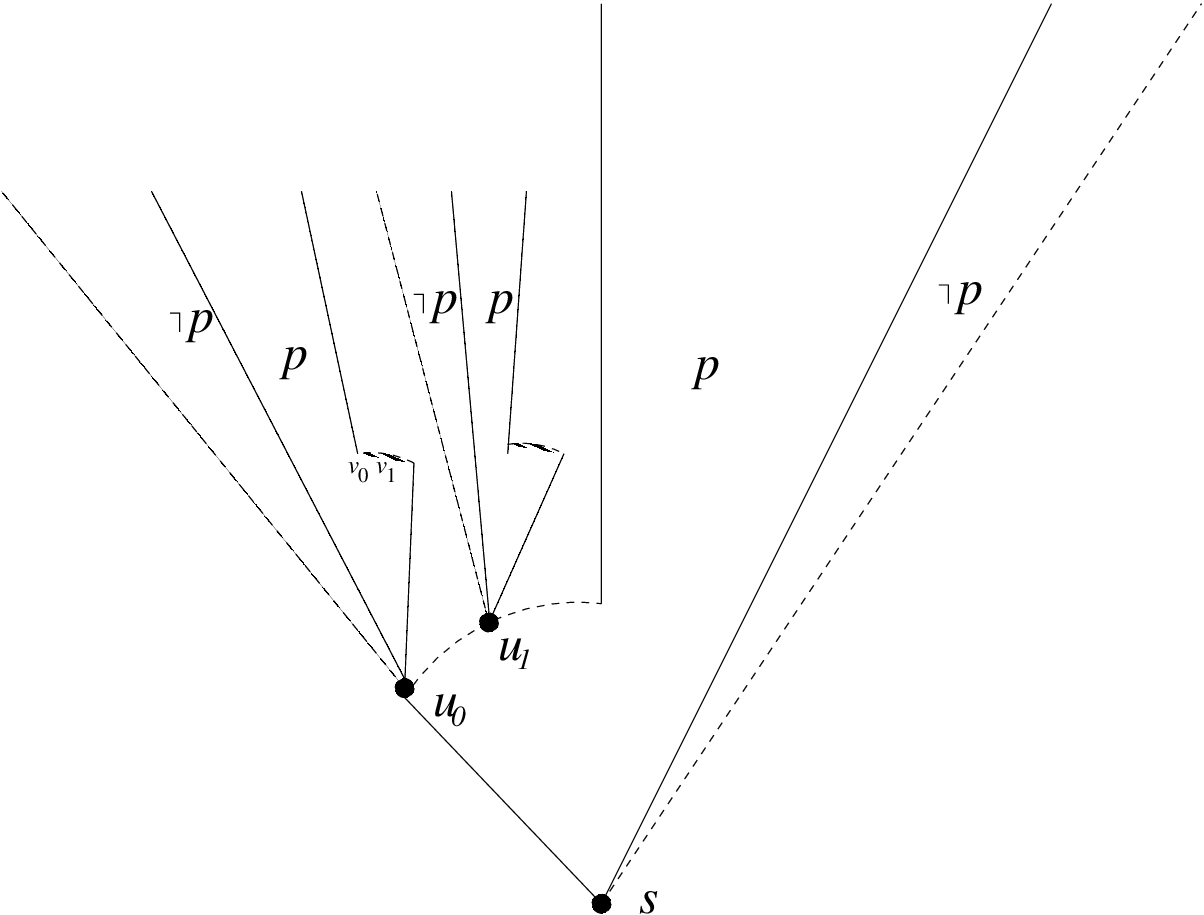}
}
\end{center}
\caption{The inductive step for the construction of $M^\omega$. The
  formula $t(\phi[q\backslash x])^p$ is independent of any state where $p$ is not
  true, or any state beyond the frontier defined by
  $u_0,u_1,...$.\label{mu-sound-1}}
\end{figure}

\bigskip

\noindent {\bf Case R$^\nu$}

We also use the contrapositive form of the axiom:
  $\F\mu x.\phi\leftrightarrow\mu x.\F\phi$. For any $\phi\in\lang_\G$ we have that:
\[
\begin{array}{lcl}
t(\F\mu x.\phi) &\Eq& \bqis p \ t(\mu x.\phi)^p \\
&\Eq& \bqis p (\bqall q (\blacksquare(t(\phi[q \backslash x])\imp q)\imp q))^p \\
&\Eq& \bqis p \bqall q (\blacksquare(t(\phi[q \backslash x])^p\imp q)\imp q) \\
&\Imp& \bqall q \bqis p (\blacksquare(t(\phi[q \backslash x])^p\imp q)\imp q) \hspace{2cm} (**) \\
&\Eq& \bqall q\bqis p(\blacklozenge(t(\phi[q \backslash x])^p\land \lnot q)\lor q)\\
&\Eq& \bqall q(\bqis p\blacklozenge(t(\phi[q \backslash x])^p\land\lnot q)\lor q)\\
&\Eq& \bqall q(\blacklozenge\bqis p(t(\phi[q \backslash x])^p\land\lnot q)\lor q) \hspace{2cm} (***) \\
&\Eq& \bqall q(\blacklozenge(\bqis p \ t(\phi[q \backslash x])^p\land\lnot q)\lor q)\\
&\Eq& \bqall q(\blacksquare(\bqis p \ t(\phi[q \backslash x])^p\imp q)\imp q)\\
&\Eq& \bqall q(\blacksquare(\F \phi[q \backslash x] \imp q)\imp q)\\
&\Eq& t(\mu x.\F\phi)
\end{array} \]
The equivalence in (***) is true because $\blacklozenge$ is the existential modality which quantifies over all states in the model. Obviously, the implication in line (**) is only true in one direction (the usual quantifier swap $\exists\forall \imp \forall\exists$). 

To prove the other direction in the equivalence $\F\mu x.\phi\leftrightarrow\mu x.\F\phi$, we now show directly that $\models \mu x.\F\phi \imp \F\mu x.\phi$ in refinement $\mu$-calculus, for $\phi$ a $df$ (observe that $\mu x.\phi$ is then a $df$ as well). We use the inductive characterization of $\mu x.\F\phi$ of \cite{arnoldetal:2001} which tells that $M_s\models\mu x.\F\phi$ if and only if $s\in\|\F\phi\|_\tau$ for some ordinal $\tau$, where we recall the definition of the semantic operation $\|\bullet\|$: $\|\F\phi\|_0 = \emptyset$, and $s \in \|\F\phi\|_\tau$ whenever $M^\tau_s\models\F\phi$, where $M^\tau = M^{[\sigma]}$ with $\sigma = x\mapsto\bigcup_{\tau'<\tau}\|\F\phi\|_{\tau'}$.

Suppose $M_s\models\mu x.\F\phi$. Since
$\lang_\G^\mu$ is bisimulation invariant, without loss of generality
we may suppose that $M$ is a countable tree-like model. As $M_s$
satisfies $\mu x.\F\phi$, there must be some least ordinal $\tau$
whereby $s\in\|\F\phi\|_\tau$. We give a proof by induction over
$\tau$ that $s\in\|\F\phi\|_\tau$ implies $M_s\models \F \mu
x.\phi$. The base case where $\tau = 0$ is trivial. Now consider $M^\tau = M^{[\sigma]}$ with $\sigma = x\mapsto\bigcup_{\tau'<\tau}\|\F\phi\|_{\tau'}$. Then
$M^\tau_s\models\F\phi$. As $\mu x.\phi$ is a $df$, there 
is a refinement of $M^\tau$ with a frontier such
that $x$ may only be true at $s$ or on this frontier, and no point
beyond the frontier affects the interpretation of $\phi$. Formally,
there is a set of states $\{u_0,u_1,...\}\in V^\tau(x)$ such that
$M'_s\models \F \phi$ (i.e., $M'_s\models \bqis \atom \ t(\phi)^\atom$), where $M' = (S', R', V')$ with
\begin{itemize}
\item  $S'\subseteq S^\tau$ is the set of states reachable from $s$, but not from any $u_i$;
\item $V'(x) = \{t,u_0,u_1,...\}$, $V'(y) = V^{M^\tau}(y)$ for $y\neq x$; and
\item $R' = R^\tau\backslash\{(u_i,t) \mid t\in S^\tau,i = 0,1,...\}$.
\end{itemize}
We note that $M_s'$ is a refinement of $M_s^\tau$. Now as for each $i$,
$u_i\in\|\F\phi\|_j$ for some $j<\tau$, by the inductive hypothesis
we may assume there is some model $N^i = (S^i,R^i,V^i)$ where
$N^i_{v_i}\simul M^\tau_{u_i}$ and $N^i_{u_i}\models\mu x.\phi$. We
may append these models to $M'$, to define $M^* = (S^*, R^*, V^*)$
where $S^*=S'\cup\bigcup_i S^i$, $R^* = R'\cup \bigcup_i R^i\cup \{(t,
v_i)\ |\ (t,u_i)\in R'\}$, and $V^*(y) = V'(y)\cup\bigcup_i V^i(y)$
for all $y\in\Atoms$. (Notice the similar construction in the soundness proof of axiom {\bf RK}.)  It is clear that $M_s^*$ is a refinement of $M_s$,
and by the axiom {\bf F1} we can see $M_s^*\models\mu x.\phi$ as
required.
\end{proof}

The general form of {\bf R$^\mu$} is not sound. For
example, take $\phi = \mu z.\susp(p\imp q)\imp\susp(\neg p\imp
x)$. Then $\G\mu x.\phi$ is true if $p$ is true at every immediate
successor of the current state, whereas $\mu x.\G\phi$ is only true at
states with no successor. Likewise {\bf R$^\nu$} is not true in the
general case, as can be seen by taking $\phi =
p\land\know(\susp\top\imp x)$. Then $\nu x. \G\phi$ is true if and
only if $p$ is true at every reachable state, and $\G\nu x.\phi$ is
true only if $p$ is true at every state within one step.

\subsection{Completeness}

The completeness proof of $\axiomRML^\mu$ proceeds
exactly as for Theorem~\ref{theo:SandC}, replacing the formulas in
cover logic with disjunctive formulas, to get a statement similar to that of Proposition~\ref{lem:express}.

\begin{proposition}\label{lem:express-mu}
Every formula of $\lang^\mu_\G$ is equivalent to a formula of $\lang^\mu$.
\end{proposition}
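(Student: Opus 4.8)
\emph{Proof plan.} The plan is to mirror the proof of Proposition~\ref{lem:express}, carrying through two extra clauses for the fixed-point operators and one extra base case for fixed-point variables. I would argue by induction on the number of occurrences of the refinement quantifier $\F$ in a given $\psi\in\lang^\mu_\G$. The base case (no $\F$) is immediate since then $\psi\in\lang^\mu$. For the inductive step I would pick an \emph{innermost} subformula $\F\phi$ of $\psi$, so that $\phi$ is $\F$-free and hence $\phi\in\lang^\mu$ (possibly with free fixed-point variables, since the occurrence of $\F\phi$ may sit inside a $\mu$ or $\nu$ binder). By fact~\eqref{prop:dnf} there is a disjunctive $\lang^\mu$ formula $\phi'$ equivalent to $\phi$, and as $\phi$ is $\F$-free so is $\phi'$. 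The heart of the argument is then a secondary structural induction on disjunctive $\lang^\mu$ formulas $\theta$ showing that $\F\theta$ is equivalent to an $\F$-free $\lang^\mu$ formula; applying this with $\theta=\phi'$ and replacing $\F\phi$ in $\psi$ by the resulting $\F$-free formula (semantic equivalence being a congruence, in particular preserved under forming least and greatest fixed points of formulas positive in the bound variable) yields a formula with one fewer occurrence of $\F$, to which the primary induction hypothesis applies.

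For the secondary induction on a disjunctive $\lang^\mu$ formula $\theta$, the clauses follow the grammar~\eqref{def:df}. The disjunction and $\covers$ clauses are handled exactly as in Proposition~\ref{lem:express}: $\F(\theta_1\vee\theta_2)\eq\F\theta_1\vee\F\theta_2$ by Proposition~\ref{prop.ggg}.2; $\F\phi_0\eq\phi_0$ for propositional $\phi_0$ by Proposition~\ref{prop.prop}.2; and, since we are in the single-agent case, $\F(\phi_0\et\covers\Phi)\eq\phi_0\et\bigwedge\susp\F\Phi$ by Proposition~\ref{prop.prop}.3 followed by axiom {\bf RK}, after which the induction hypothesis is applied to each $\F\theta_i$ for $\theta_i\in\Phi$ (each $\theta_i$ being again a disjunctive formula, structurally smaller than $\theta$). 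The two new clauses are the fixed points: for $\F\mu x.\theta$ I would invoke axiom {\bf R$^\nu$} in its contrapositive form $\F\mu x.\theta\eq\mu x.\F\theta$ (valid by Theorem~\ref{theo:soundness-mu}, and applicable precisely because $\theta$ is a disjunctive formula), then apply the induction hypothesis to $\F\theta$; dually, for $\F\nu x.\theta$ I would use {\bf R$^\mu$} in the form $\F\nu x.\theta\eq\nu x.\F\theta$. The remaining base case is a bare fixed-point variable $x$, where $\F x\eq x$ --- the fixed-point-variable analogue of {\bf RProp}, holding because refinement, like bisimulation, preserves the valuation of fixed-point variables (equivalently, the translation of Section~\ref{sec.relativization} gives $t(\F x)=\bqis p\,x$, which is equivalent to $x$).

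Two routine checks remain. First, one must verify that the rewriting steps preserve \emph{syntactic positivity} of each fixed-point variable, so that every $\mu x.(\cdot)$ and $\nu x.(\cdot)$ produced along the way is a legitimate $\lang^\mu$ formula; this is straightforward since $\F$ is monotone and the disjunctive normal form of~\eqref{prop:dnf} may be taken to respect the polarity of free variables. Second, both inductions terminate: each rewrite of $\F\theta$ either removes the $\F$ outright or replaces it by occurrences of $\F$ applied to strictly smaller disjunctive subformulas, and the primary induction strictly decreases the $\F$-count.

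I expect the main obstacle to be exactly the fixed-point clauses --- or rather, arranging matters so that the interaction axioms are applicable at all. Axioms {\bf R$^\mu$} and {\bf R$^\nu$} commute $\F$ past a fixed point \emph{only} when the body is a disjunctive formula; their unrestricted forms are unsound, as the examples following Theorem~\ref{theo:soundness-mu} demonstrate. This is why the passage to disjunctive normal form, which in Proposition~\ref{lem:express} merely exposed the $\covers$-structure, becomes indispensable in the $\mu$-calculus setting, and why the secondary induction must be re-run on the disjunctive form rather than on the original subformula.
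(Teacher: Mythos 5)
Your proposal is correct and follows essentially the same route as the paper's own proof: an outer induction on the number of $\F$-occurrences, selection of an innermost $\F\phi$, passage to a disjunctive $\lang^\mu$ formula, and an inner structural induction in which the disjunction, propositional, and cover clauses are treated as in Proposition~\ref{lem:express} while the fixed-point clauses are discharged by the (contrapositive forms of the) interaction axioms {\bf R$^\mu$} and {\bf R$^\nu$}, whose restriction to disjunctive bodies is exactly what forces the normal-form step. Your explicit treatment of the bare fixed-point-variable base case and of polarity preservation only makes explicit what the paper leaves implicit.
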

\begin{proof}
Given a formula $\psi$, we prove by induction on the number of
the occurrences of $\F$ in $\psi$ that it is equivalent to an
$\F$-free formula, and therefore to a formula in the modal
$\mu$-calculus $\lang^\mu$. The base is trivial. Now assume $\psi$
contains $n+1$ $\F$-operators. Choose a subformula of type $\F \phi$
of our given formula $\psi$, where $\phi$ is $\F$-free (i.e.\, choose
an innermost $\F$). As $\phi$ is $\F$-free, it is semantically equivalent to a formula in disjunctive normal form, and by the completeness of Kozen's
axiom system \cite{Walukiewicz00} this equivalence is provable in $\axiomRML^\mu$. By {\bf NecR} and {\bf R} it follows that $\F\phi$ is
provably equivalent to some formula $\F\psi$ where $\psi$ is a
disjunctive formula (analogously to Proposition \ref{lem:substeq} one can easily show that $\axiomRML^\mu$ satisfies substitution of equivalents). Thus without loss of generality, we may
assume in the following that $\phi$ is in disjunctive normal form. We
may now proceed by induction over the complexity of $\phi$, and conclude 
that $\F
\phi$ is logically equivalent to a formula $\chi$ without $\F$. All cases 
of this induction are as before, we only show the final two, different 
cases:
\begin{itemize}
\item $\F\mu x.\phi$ iff $\mu x.\F\phi$ (by {\bf R$^\nu$} noting that all 
subformulas of a disjunctive formula are themselves disjunctive); IH.
\item $\F\nu x.\phi$ iff $\nu x.\F\phi$ (by {\bf R$^\mu$}); IH.
\end{itemize}
Replacing $\F \phi$ by $\chi$ in $\psi$ gives a result with one less 
$\F$-operator, to which the (original) induction hypothesis applies.
\end{proof}

\begin{theorem}\label{theo:SandC-mu}
The axiom schema $\axiomRML^\mu$ is sound and complete for the logic $\logicRML^\mu$
\end{theorem}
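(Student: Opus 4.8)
The proof follows the same two-step pattern as Theorem~\ref{theo:SandC}, lifted one level from modal logic to the modal $\mu$-calculus (and, as stipulated in Definition~\ref{def.refmu}, in the single-agent case).

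\textbf{Soundness.} Nothing beyond what has already been collected is required: the axioms and rules inherited from $\axiomRML$ are sound by Theorem~\ref{theo:soundness}; the Kozen axiom {\bf F1} and rule {\bf F2} are sound for the $\mu$-calculus semantics of Section~\ref{sec.tech} \cite{arnoldetal:2001}; and the two interaction axioms {\bf R$^\mu$} and {\bf R$^\nu$} are sound by Theorem~\ref{theo:soundness-mu}. Hence every theorem of $\axiomRML^\mu$ is valid in $\logicRML^\mu$.

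\textbf{Completeness.} Let $\phi\in\lang^\mu_\G$ with $\models\phi$. By Proposition~\ref{lem:express-mu} there is a refinement-quantifier-free $\psi\in\lang^\mu$ with $\models\phi\eq\psi$; since $\phi$ is valid, so is $\psi$, and $\psi$ is then valid in the ordinary modal $\mu$-calculus because $\lang^\mu_\G$ and $\lang^\mu$ are interpreted over the same model class $\mathcal K$. By the completeness of Kozen's axiomatisation \cite{Walukiewicz00}---all of whose axioms and rules ({\bf Prop}, {\bf K}, {\bf F1}, {\bf MP}, {\bf NecK}, {\bf F2}) occur in $\axiomRML^\mu$---we get $\proves\psi$. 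It remains, exactly as in Proposition~\ref{lem:expressb}, to transport this derivation back to $\phi$ by showing $\proves\phi\eq\psi$: every semantic equivalence invoked in the proof of Proposition~\ref{lem:express-mu} is in fact a provable equivalence of $\axiomRML^\mu$. The relevant steps are the $\F$-distribution laws of Proposition~\ref{prop.ggg}, the {\bf RProp}-driven collapse of refinement over propositional formulas, the rewriting of an $\F$-free formula into a provably equivalent disjunctive formula (provable because the equivalence between a formula and its disjunctive normal form, cf.\ \eqref{prop:dnf}, is a valid $\lang^\mu$ formula and hence a $\mu$-calculus theorem), and the two interaction axioms {\bf R$^\mu$}, {\bf R$^\nu$}, which apply precisely because at each step the body is a disjunctive formula and every subformula of a disjunctive formula is again disjunctive. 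Chaining these provable equivalences and applying {\bf MP} with $\proves\psi$ yields $\proves\phi$.

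\textbf{Main obstacle.} The delicate point is not the quantifier-pushing but the bookkeeping that makes it \emph{provable}: one must know that provable equivalence in $\axiomRML^\mu$ is a congruence for all connectives, including the fixed-point binders, so that the transformations of Proposition~\ref{lem:express-mu}---in particular the passage through disjunctive normal form and the steps licensed only for disjunctive formulas---can be mirrored step by step inside the proof system. Establishing this congruence for the context $\mu x.(\cdot)$ (the extension of Proposition~\ref{lem:substeq}), which rests on {\bf F1}, {\bf F2} and the positivity requirement on bound variables, together with checking that the disjunctive-normal-form equivalence used is itself derivable in the $\mu$-calculus fragment, is the real content; everything else is a direct replay of Theorem~\ref{theo:SandC}.
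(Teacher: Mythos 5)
Your proposal is correct and follows essentially the same route as the paper: soundness is inherited from Theorems~\ref{theo:soundness} and~\ref{theo:soundness-mu} together with the known soundness of {\bf F1}/{\bf F2}, and completeness reduces a valid $\phi\in\lang^\mu_\G$ via Proposition~\ref{lem:express-mu} to a provably equivalent $\psi\in\lang^\mu$, invokes Walukiewicz's completeness of Kozen's system for $\proves\psi$, and recovers $\proves\phi$ by {\bf MP}. Your closing remarks on extending substitution of equivalents to the fixed-point binders and on the derivability of the disjunctive-normal-form equivalence spell out bookkeeping that the paper leaves implicit, but they do not change the argument.
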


\begin{proof}
Soundness follows from Theorem~\ref{theo:soundness-mu} and Theorem~\ref{theo:soundness}. To see $\axiomRML^\mu$ is complete, suppose $\phi\in\lang_\G^\mu$ is a valid formula. Then by Lemma~\ref{lem:express-mu}, $\phi$ is provably equivalent to some valid formula $\psi\in\lang^\mu$. As $\psi$ is valid, it must be provable since {\bf Prop}, {\bf K}, {\bf F1}, {\bf F2}, {\bf NecK}, and {\bf MP} give a sound and complete proof system for the modal $\mu$-calculus \cite{Walukiewicz00}. A proof of $\phi$ follows by {\bf MP}.
\end{proof}

\section{Complexity}
\label{sec-complexity}

Decidability for both $\lang_\G$ and $\lang_\G^{\mu}$ follows from the
fact that a computable translation is given in the completeness proofs
of Sections~\ref{sec-FEL} and~\ref{sec-FEL-mu}: note that the given
translations to $\lang$ and $\lang^{\mu}$ respectively, are recursive
and involve transforming formulas into their disjunctive normal
forms. Hence they are non-elementary in the size of the original
formula. This non-elementary procedure for $\lang_\G^\mu$ is
optimal as shown in Section~\ref{sec-RMLmuisNonElem} below. 

%% @@omitted for the time being@@ Regarding the complexity of $\lang_\G$, we show in Section~\ref{sec-tableau} a $\TWOEXPTIME$ procedure usign tableau. Whether this procedure is optimal or not remains an open question. Instead, 
Unfortunately we were not able to corroborate in this paper the upper complexity
claims for $\logicRML$ reported in \cite{hvdetal.felax:2010}. But
towards some indication of a result in that direction, we further
establish a doubly exponential succinctness proof for $\lang_\G$ in
Section~\ref{sec-succinctness}. (On complexity, see also `Recent results' in Section~\ref{sec.conclusions}.)

\subsection{$\logicRML^\mu$ is non-elementary}
\label{sec-RMLmuisNonElem}

This section is dedicated to the proof of the following result.

\begin{theorem}\label{theo:RLmuNonElem}
The satisfiability problem for $\logicRML^\mu$ is non-elementary, even for the single-agent setting.
\end{theorem}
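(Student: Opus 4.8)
The plan is to prove non-elementariness in the standard way for logics of this kind: for each $k \ge 1$ we give a reduction, computable in time polynomial in its input (with a polynomial whose degree may depend on $k$), from the acceptance problem of Turing machines that work in $k$-fold exponential space to satisfiability of $\lang^\mu_\G$. Since $k$-fold exponential space (that is, \EXPSPACE{} for $k=1$, \TWOEXPSPACE{} for $k=2$, and so on) is, for $k$ large enough, not contained in any fixed level of the time hierarchy, it follows that satisfiability of $\logicRML^\mu$ is not elementary. Everything is done with a single agent, so that part of the statement needs no separate treatment. By bisimulation invariance (Proposition~\ref{lem:bisim.invar}) we may work with tree-shaped models (modal unravellings), and then the refinement quantifier behaves, via Corollary~\ref{cor-ref-and-bisimquant} (i.e.\ $\F\phi \eq \bqis\atomb\,\phi^\atomb$) and the relativization machinery of Section~\ref{sec.relativization}, exactly as ``guess a fresh marking and delete every edge into an unmarked node''. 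This second-order-guessing reading of $\F$ over tree models, combined with the counting power of the fixpoint operators, is the engine of the whole argument.

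The technical core is a family of \emph{yardsticks}. For each $i \le k$ we build a formula $\mathrm{Yard}_i \in \lang^\mu_\G$ of size $O(i\cdot q)$ for a fixed polynomial $q$, which, evaluated along a path in a tree model, forces that path to enumerate the binary strings $0,1,\dots,2\!\uparrow\!\uparrow i-1$ (a tower of exponentials of height $i$), a level-$i$ value occupying one block of $2\!\uparrow\!\uparrow(i-1)$ bit-nodes. The base case is a plain modal-$\mu$-calculus binary-counter gadget, of the kind witnessing the exponential model property of the $\mu$-calculus. In the inductive step we must, along the path, recognise the boundary of a level-$i$ block and verify that the following block is its successor; positions \emph{inside} a block are addressed by a level-$(i-1)$ yardstick run in parallel, and to state a property of ``the bit at address $a$'' we spend one extra refinement quantifier to guess a marking that singles out the two matching positions in consecutive blocks, then use $\mathrm{Yard}_{i-1}$ together with a bounded-size fixpoint formula to certify that the marking is correct and that ``successor with carry'' holds bit by bit. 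This is the familiar ``one exponential per quantifier alternation'' pattern; the only delicate bookkeeping is (a) to invoke $\mathrm{Yard}_{i-1}$ as a black box so that sizes add rather than multiply, and (b) to phrase the step positively enough that it survives the arrow-pruning semantics of relativization --- here $\F\Dia\phi\eq\Dia\F\phi$ (Proposition~\ref{prop.validities}) and $\F(\phi\vee\psi)\eq\F\phi\vee\F\psi$ (Proposition~\ref{prop.ggg}) are what is used.

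With $\mathrm{Yard}_k$ in hand the rest is routine. A halting run of a machine $M$ on input $w$ using $k$-fold exponential space is a sequence of configurations, each a tape of length $2\!\uparrow\!\uparrow k$ indexed by a level-$k$ yardstick; the initial configuration is pinned down from $w$, consecutive configurations are related by the local window rules of $M$ (comparing a cell with the cell at the same address one step later, again addressed by a parallel yardstick and a guessed marking), and a greatest fixpoint checks that the run reaches an accepting state. Packaging this as a formula $\rho_k(M,w)\in\lang^\mu_\G$ of size polynomial in $|M|+|w|$ (with a constant depending on $k$) that is satisfiable iff $M$ accepts $w$ gives the reduction; as this is uniform in $k$, satisfiability of $\logicRML^\mu$ is non-elementary, already in the single-agent case. (One could equally reduce from a known non-elementary decision problem, such as satisfiability of first-order logic over finite linear orders or the $k$-exponential corridor-tiling problem; the yardstick construction is the same.) I expect the inductive yardstick step to be the only real obstacle: making the ``compare two blocks a level-$(i-1)$-yardstick apart'' gadget expressible with a constant number of additional refinement quantifiers and a constant-size fixpoint body, while keeping every subformula ``refinement-stable'' so that $\bqis\atomb(\cdot)^\atomb$ genuinely implements honest guessing of the auxiliary markings on the relevant part of the (tree) model. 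The remaining steps are bookkeeping of a by-now standard kind.
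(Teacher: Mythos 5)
Your overall strategy (a direct, level-by-level yardstick encoding of $k$-fold exponential space Turing machines) is a genuinely different route from the paper's, which instead reduces from satisfiability of Quantified Propositional Temporal Logic ($\QLTL$): by Sistla--Vardi--Wolper, $\QLTL$ formulas of quantifier alternation depth $k$ are already $k$-\EXPSPACE-hard, so the paper never builds towers of exponentials itself. It only has to show that one refinement quantifier can simulate one propositional quantifier $\qis p_j$ over an infinite word. This is done by encoding words as ``well-formed'' pointed models in which every position carries both a $p_j$-successor and a $\overline{p}_j$-successor for each still-unquantified variable, so that a refinement \emph{chooses} a valuation for $p_j$ precisely by pruning one of the two; the formulas $\psi_j$ police that the refinement prunes exactly one branch per position and leaves the backbone intact. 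The tower-of-exponentials bookkeeping is thus outsourced to the known $\QLTL$ lower bound, and the reduction stays inside a fragment $\RCTL$ whose refinement alternation depth equals the $\QLTL$ quantifier alternation depth, which also yields the sharper Theorem~\ref{theo:lowerBoundRefCTL}.

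The gap in your proposal is exactly the step you flag yourself: the inductive construction of $\mathrm{Yard}_i$. You treat $\F$ as ``guess a fresh marking'', but by Corollary~\ref{cor-ref-and-bisimquant} it is ``guess a fresh marking \emph{and delete every edge into an unmarked node}''. A marking is therefore only guessable where the model has been pre-equipped with optional, prunable branches, and your formula must simultaneously (i) force those branches to exist before the quantifier is applied, (ii) force the refinement to prune them in a functional way (exactly one survivor per position), and (iii) guarantee that the backbone path, the parallel $\mathrm{Yard}_{i-1}$ structure, and all previously guessed markings survive the pruning --- all while keeping sizes additive in $i$ and while nesting $\F$ correctly with the fixpoint operators used to traverse blocks. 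None of this is carried out, and it is not routine: it is precisely the content of the paper's well-formedness formulas $\psi_j$ and of Lemmas~\ref{lemma:Aux2NonElem} and~\ref{lemma:Aux1NonElem}, which in the paper only need to handle the comparatively simple word-shaped models inherited from $\QLTL$, not nested block-comparison gadgets. Until $\mathrm{Yard}_i$ is actually constructed and its correctness under arrow-pruning refinements is proved, the reduction does not exist; I would either complete that induction in full detail or switch to the paper's modular reduction from $\QLTL$.
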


In the rest of this section, we only consider a single-agent setting.

First, we recall a fragment, written $\CTLfrag$, of the standard
branching-time logic \emph{Computation Tree Logic} ($\CTL$)
\cite{CE81}, which in turn is a fragment of $\lang^\mu$ (see also the example Section~\ref{sec.examples}). 
$$ \CTLfrag \ni \phi::=\top\ |\ \bot\\ |\ p\ |\ \lnot \phi\ |\ \phi\land\phi\ |  \know\phi\ |\ \susp\phi\ |\ \EFR\phi\ |\ \AFR\phi$$

Let $M$ be a model and $\state$ be an $M$-state. A \emph{path from
  $\state$} is a finite or infinite sequence of states
$\pi=\state_0,\state_1,\ldots$ s.t. $\state_0=\state$ and each
$\state_{i+1}$ is an successor of $\state_i$. Only the semantics of $\AFR$
and $\EFR$ is recalled (as for other formulas it is clear):

\[ \begin{array}{l}
  M_\state \models \EFR\varphi \ \mbox{iff} \  \text{ there are a maximal path }\pi=\state_0,\state_1,\ldots \text{ from  }\state \text{ and  }i\geq 0\\
  \text{\hspace{2.8cm}}\text{ such that  } M_{\state_i}\models\varphi \\%  \text{ and  } M_{\state_h}\models\varphi_1 \text{ for all  }0\leq h<i \\
  M_\state \models \AFR\varphi \ \mbox{iff} \  \text{ for each maximal path }\pi=\state_0,\state_1,\ldots \text{ from  }\state, \\
  \text{\hspace{2.8cm}}\text{ there is  }i\geq 0 \text{ such that  } M_{\state_i}\models\varphi %\\ \text{ and }   M_{\state_h}\models\varphi_1 \text{ for all }0\leq h<i
\end{array} \]

Directly translating $\CTLfrag$ in $\lang^\mu$ is routine via the
following mapping $\CTLtoLmu : \CTLfrag \to \lang^\mu$, defined by induction
over the formulas\label{page-CTLtoLmu}:  $\CTLtoLmu(\top)=\top$, $\CTLtoLmu(p)=p $,
$\CTLtoLmu(\lnot \phi)=\lnot \CTLtoLmu(\phi)$,
$\CTLtoLmu(\phi\land\phi')= \CTLtoLmu(\phi) \land \CTLtoLmu(\phi')$,
$\CTLtoLmu(\know\phi)=\know\CTLtoLmu(\phi)$, $\CTLtoLmu(\M\phi) = \M\CTLtoLmu(\phi)$,
$\CTLtoLmu(\EFR\phi)=\mu x. \CTLtoLmu(\phi) \lor \M x$,
$\CTLtoLmu(\AFR\phi) =\mu x. \CTLtoLmu(\phi) \lor \know x$.

We also use standard abbreviations for the duals $\AGR\varphi$ iff
$\neg\EFR\neg\varphi$ (`universal always'), and $\EGR\varphi$ iff
$\neg\AFR\neg\varphi$ (`existential always').  A $\CTLfrag$ formula is
in \emph{positive form} if negation is applied only to propositional
variables. A $\CTLfrag$ formula $\varphi$ is \emph{existential} if it
is in positive form and there are no occurrences of the universal
modalities $\AFR$ and $\AGR$ and the modality $\know$. The following can
be proved by using
Proposition~\ref{prop-logicalcharacterization-of-ref}, enriched for
the case of $\EFR$ formulas (with a transfinite induction argument for
this fixed-point formula).

\begin{proposition}\label{prop:ExistCTL} Let $M_\state$ and $N_{\stateb}$ be two models with $M_{\state} \lumis N_{\stateb}$.
  Then for each existential $\CTLfrag$ formula $\varphi$,
  $N_{\stateb}\models \varphi$ implies $M_{\state}\models \varphi$.
\end{proposition}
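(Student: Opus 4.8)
The plan is to reduce Proposition~\ref{prop:ExistCTL} to the already-established Proposition~\ref{prop-logicalcharacterization-of-ref} by an induction on the structure of the existential $\CTLfrag$ formula $\varphi$, with the only genuinely new work being the inductive case for the temporal operator $\EFR$, which is handled by a transfinite induction on the fixed-point approximants. First I would observe that, since $\varphi$ is existential, it is built from literals (propositional variables and their negations), $\wedge$, $\vee$, $\susp$, and $\EFR$ only --- in particular there are no occurrences of $\AFR$ or $\know$. Actually, to make the induction go through smoothly it is cleaner to establish the claim for the richer fragment where $\vee$ is also allowed and to note that $\lang^{a+}$ (for the single agent $\agent$) contains literals, $\wedge$, $\vee$, $\knows_\agentb$ and $\M_\agentb$ for $\agentb\ne\agent$, and $\M_\agent$; so every existential $\CTLfrag$ formula without $\EFR$ already lies in $\lang^{a+}$ and the claim for such formulas is exactly (a special case of) Proposition~\ref{prop-logicalcharacterization-of-ref}. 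Hence the base cases (literals) and the boolean and $\susp$ cases are immediate by the induction hypothesis together with that proposition.

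The heart of the argument is the case $\varphi = \EFR\psi$, where by the induction hypothesis we already know that $M_{\state}\lumis N_{\stateb}$ and $N_{\stateb}\models\psi$ imply $M_{\state}\models\psi$ (this must be checked simultaneously for all relevant state pairs, so the outer induction is really on formula structure and the statement being proved is the universally-quantified-over-state-pairs version). Recall the unfolding $\EFR\psi \equiv \mu x.(\psi \vee \M x)$, so $N_{\stateb}\models\EFR\psi$ means $\stateb$ belongs to some approximant $\|\EFR\psi\|^N_\tau$, where $\|\EFR\psi\|^N_0=\emptyset$ and $\|\EFR\psi\|^N_{\tau}=\{t : N_t\models\psi\}\cup\{t : \exists t'\in tR^N.\ t'\in\bigcup_{\tau'<\tau}\|\EFR\psi\|^N_{\tau'}\}$. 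I would prove by transfinite induction on $\tau$ the statement: for all states $t\in S^N$ and $s\in S^M$, if $M_s\lumis N_t$ and $t\in\|\EFR\psi\|^N_\tau$, then $M_s\models\EFR\psi$. The zero and limit ordinal cases are trivial (the former vacuously, the latter because a limit approximant is the union of the earlier ones). For the successor step $\tau=\tau'+1$: if $t\in\|\EFR\psi\|^N_\tau$ then either $N_t\models\psi$, in which case the formula-structure induction hypothesis gives $M_s\models\psi$ and hence $M_s\models\EFR\psi$; or there is $t'\in tR^N$ with $t'\in\|\EFR\psi\|^N_{\tau'}$, and then by the {\bf back}-$\agent$ clause of the refinement relation witnessing $M_s\lumis N_t$ there is an $s'\in sR^M$ with $M_{s'}\lumis N_{t'}$, so by the transfinite induction hypothesis $M_{s'}\models\EFR\psi$, whence $M_s\models\M\,\EFR\psi$ and therefore $M_s\models\EFR\psi$. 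Collecting the ordinals, $N_\stateb\models\EFR\psi$ gives some $\tau$ with $\stateb\in\|\EFR\psi\|^N_\tau$, and applying the transfinite claim to the pair $(\state,\stateb)$ yields $M_\state\models\EFR\psi$, completing the case.

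The main obstacle, and the only place where care is genuinely needed, is exactly this $\EFR$ case: one must set up the outer induction so that the hypothesis is available at \emph{all} pairs of refinement-related states (not just the designated pair), must invoke the {\bf back} condition of refinement at each step down the witnessing path, and must carry out the transfinite induction on approximants rather than a naive structural induction (since $\EFR\psi$ is not a finite unfolding). It is worth noting that this is precisely why the proposition is restricted to \emph{existential} $\CTLfrag$: the {\bf back} clause only lets us trace successors \emph{downward} from $N$ to $M$, which is what a diamond-style operator such as $\EFR$ and $\susp$ needs, whereas $\knows$ and $\AFR$ would require {\bf forth} for agent $\agent$, which a refinement does not provide. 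Once the $\EFR$ case is in hand, the remaining cases are routine appeals to Proposition~\ref{prop-logicalcharacterization-of-ref} and the inductive hypothesis, so I would keep the write-up of those cases brief.
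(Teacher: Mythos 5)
Your proof is correct and follows exactly the route the paper itself indicates: the paper gives no detailed proof of this proposition, only the one-line remark that it ``can be proved by using Proposition~\ref{prop-logicalcharacterization-of-ref}, enriched for the case of $\EFR$ formulas (with a transfinite induction argument for this fixed-point formula)'', and your write-up --- structural induction with the $\EFR$ case handled by transfinite induction on approximants via the \textbf{back} clause --- is precisely a faithful elaboration of that sketch. The only small point worth noting is that Proposition~\ref{prop-logicalcharacterization-of-ref} is stated for finitely branching models, but the direction you invoke (refinement implies preservation of $\lang^{a+}$ formulas) is proved there by a structural induction that never uses finite branching, so your appeal to it is harmless.
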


\begin{definition}[Refinement $\CTLfrag$] \emph{Refinement} $\CTLfrag$
  ($\RCTL$, for short) is the extension of $\CTLfrag$ with the
  refinement quantifiers $\F$ and $\G$.
\end{definition}

\begin{definition}[Refinement Quantifier Alternation Depth] 
  We first define the \emph{alternation length} $\ell(\seqQ)$ of
  finite sequence $\seqQ \in \{\F,\G\}^*$ of quantifiers, as the
  number of alternations of existential and universal refinement
  quantifiers in $\seqQ$. Formally, $\ell(\epsilon)=0$, $\ell(Q)=0$
  for every $Q \in \{\F,\G\}^*$, and $\ell(QQ'\seqQ)=\ell(Q'\seqQ)$ if
  $Q=Q'$, $\ell(Q'\seqQ)+1$ otherwise.

  Given a $\lang_\G$ (resp., $\lang^\mu_\G$, resp., $\RCTL$) formula
  $\varphi$, the \emph{refinement quantifier alternation depth}
  $\ad(\phi)$ of $\varphi$ is defined via the standard tree-encoding
  $T(\varphi)$ of $\varphi$, where each node is labeled by either a
  modality, a boolean connective, or a propositional
  variable. Then, $\ad(\phi)$ is the maximum of the alternation
  lengths $\ell(\seqQ)$ where $\seqQ$ is the sequence of refinement
  quantifiers along a maximal path of $T(\varphi)$ from the
  root.
\end{definition}

%\section{Lower Bound for refinement $\CTLfrag$}
%In this section, we establish the following result.
%We now prove the following.

\begin{theorem}\label{theo:lowerBoundRefCTL} Let the class ${\mathcal C}_k=\{\phi \in \RCTL
  \,|\, \ad(\phi) \leq k\}$. The satisfiability problem for ${\mathcal
    C}_k$ is $k$-\EXPSPACE-hard, for any $k$.
\end{theorem}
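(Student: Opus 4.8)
The plan is to prove $k$-\EXPSPACE-hardness by a polynomial-time reduction from the word problem of a deterministic Turing machine $M$ that on input $w$ of length $n$ runs in space $T(k,n)$, where $T(0,n)=n$ and $T(i{+}1,n)=2^{T(i,n)}$ (a complete problem for $k$-\EXPSPACE, using Savitch to dispose of nondeterminism; equivalently one may reduce from corridor tiling for corridors of width $T(k,n)$). A computation of $M$ on $w$ is a sequence of configurations, each a word of length exactly $T(k,n)$ over a fixed alphabet. I would encode such a computation inside a model: one state per tape cell of each configuration, cells of a configuration strung along a path, with each cell-state carrying (i) its symbol and head marker, and (ii) a ``side path'' spelling out in binary the address of that cell within its configuration — a number with $T(k{-}1,n)$ bits. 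Each bit-state of such an address in turn carries a side path spelling out the address of that bit, a $T(k{-}2,n)$-bit number, and so on, bottoming out at level $0$, where addresses are just the $n$ explicit positions. All navigation inside this nested structure is done with the $\CTLfrag$ modalities: $\know,\susp$ for successors and $\EFR$, $\AGR=\neg\EFR\neg$ for ``some/every reachable state''.

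The crux is a family of counter gadgets built by induction on $i\le k$: $\RCTL$ formulas $\psi_i$ such that every pointed model of $\psi_i$ embeds a faithful \emph{level-$i$ counter} — a path of length $T(i,n)$ whose $j$-th state is decorated with a level-$(i{-}1)$ counter encoding the number $j$ — with $\psi_0$ a plain $\CTLfrag$ formula of size $O(n)$. The only non-local constraint to check inside $\psi_{i+1}$ is that the level-$i$ addresses of consecutive states increment correctly (whence, given the first address is $0$, the exact length $T(i{+}1,n)$ follows); and checking ``$B=A+1$'' for two $T(i,n)$-bit addresses means quantifying over all $T(i,n)$ bit positions. This is where a refinement quantifier is spent: a universal refinement quantifier $\G$ ranges over refinements (prunings/blow-ups) that isolate a single bit position within the level-$i$ address substructure — in the spirit of Proposition~\ref{prop:ExistCTL} and Corollary~\ref{cor-ref-and-bisimquant}, under which $\G$ behaves like a monadic second-order universal quantifier followed by relativization — after which the bits at that position are compared using the level-$(i{-}1)$ comparison machinery. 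Thus each level contributes one refinement-quantifier alternation, $\psi_{i+1}$ can be arranged to have alternation depth at most $i+1$, and the final reduction formula $\Phi_{M,w}$ — asserting ``I encode a valid computation of $M$ on $w$: every configuration has width $T(k,n)$, the initial one is the starting configuration, every consecutive pair is consistent with $M$'s transition table, and some reachable configuration is accepting'' — is built on top of the level-$(k{-}1)$ machinery and has alternation depth at most $k$. Since $\Phi_{M,w}$ is polynomial-time computable from $(M,w)$ for each fixed $k$ and is satisfiable iff $M$ accepts $w$, this gives $k$-\EXPSPACE-hardness of ${\mathcal C}_k$-satisfiability; letting $k$ range and using $\RCTL\subseteq\lang^\mu_\G$ re-derives the non-elementary lower bound of Theorem~\ref{theo:RLmuNonElem}.

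The main obstacle is holding the refinement-quantifier alternation depth down to exactly $k$ rather than a constant multiple of $k$. The naive encoding of the increment relation between level-$i$ addresses needs both an existential step (``there is a least bit position that flips'') and a universal step (``below it all bits of $A$ are $1$ and of $B$ are $0$, above it they agree''), i.e.\ alternation $2$ per level, hence $2k$ in total. To avoid this I would, at the single outermost existential guess (implicit in satisfiability, or made explicit by one leading $\F$), also guess throughout the whole nested structure all the auxiliary labels needed for every comparison — for each pair of consecutive cells or bits, a label marking the unique carry-stop position together with the two monotone threshold labels — so that the increment relation becomes \emph{locally} checkable, its correctness reducing to a single ``for all positions, local consistency holds'' statement ($\G$, no nested alternation). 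Making these labels simultaneously consistent across all $k$ levels of the tower is the delicate, Stockmeyer/Fürer-style part of the bookkeeping; the rest is routine modal gadgetry. I would also need the minor but essential faithfulness check that the refinements witnessing $\F$/$\G$ in the comparison gadgets genuinely isolate positions uniquely — exploiting that a refinement may delete single-agent transitions and duplicate non-bisimilar successors (Definition~\ref{def.bisim}, Proposition~\ref{lemma.relatag2}) — so that the second-order reading of the quantifiers is sound in both directions of the reduction.
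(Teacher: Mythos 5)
Your reduction is from a space-bounded Turing machine via a Stockmeyer-style tower of nested counters, whereas the paper reduces from satisfiability of $\QLTL$ in prenex positive normal form, importing the $k$-\EXPSPACE{} lower bound for quantifier alternation depth $k$ from \cite{SVW87} as a black box: each propositional quantifier $Q_j p_j$ is simulated by a single refinement quantifier guarded by a refinement-free $\CTLfrag$ well-formedness formula $\psi_j$, so the tower of exponentials never has to be built inside $\RCTL$ at all. This is not merely a stylistic difference; it is what makes both the guarding of the quantifiers and the alternation-depth accounting go through.

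The gap in your proposal is concentrated in the counter gadgets, and I do not see how to patch it without essentially redoing \cite{SVW87}. First, $\G$ ranges over \emph{all} refinements, so every use of it must take the form $\G(\text{guard}\to\cdots)$, where the guard characterizes exactly those refinements that ``isolate a single bit position of a level-$i$ address''. But ``single bit position'' at level $i$ means ``all surviving bit-states carry the same level-$(i-1)$ address'', and testing equality of level-$(i-1)$ addresses is itself the problem you are solving by recursion; hence the guard is not a refinement-free $\CTLfrag$ formula, and nesting the level-$(i-1)$ comparison machinery inside the guard of the level-$i$ quantifier multiplies the alternation depth instead of adding $1$ per level. (In the paper's reduction the analogous guards $\psi_j$ are pure $\CTLfrag$ formulas of size $O(n)$ precisely because the ``positions'' there are explicitly enumerable strings of length $n+2$, not addresses of tower-of-exponentials length.) Second, your fix for the $2k$-versus-$k$ problem --- guessing all carry and threshold labels at one outermost existential step and verifying local consistency with a single unalternated $\G$ --- presupposes that local consistency of the labelling at level $i$ can be checked without invoking the level-$(i-1)$ comparison machinery, which runs into the same circularity. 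Until the guards and the labelling scheme are written out and shown to keep $\ad$ at exactly $k$, the proposal yields at best a non-elementary lower bound with some unbounded function of $k$ as the alternation depth, not the per-level statement of the theorem.
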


Theorem~\ref{theo:lowerBoundRefCTL} is proved by a polynomial-time
reduction from satisfiability of \emph{Quantified Propositional
  Temporal Logic} ($\QLTL$) \cite{SVW87}.  First, we recall the syntax
and the semantics of $\QLTL$.  The syntax of $\QLTL$ formulas
$\varphi$ over a countable set $P$ of propositional variables is
defined as follows:
$$\varphi ::= p \ | \ \neg \varphi \ | \ \varphi \wedge \varphi \ | \ \varphi \vee \varphi\ | \ \Next \varphi \ 
| \ \eventually \varphi \ | \ \qis \atom.\varphi$$ where $p\in P$,
$\Next$ is the `next' modality, $\eventually$ is the `eventually'
modality, and $\qis$ is the existential quantifier.\footnote{We distinguish (domain) quantifiers $\qis$ and $\qall$ in use here, from the refinement quantifiers $\F$ and $\G$, and from the bisimulation quantifiers $\bqis$ and $\bqall$.} We also use
standard abbreviation $\always\varphi$ for
$\neg\eventually\neg\varphi$ (`always').  

The semantics is given w.r.t.\ elements of $(2^P)^\omega$, namely
infinite words $w$ over $2^{P}$.  Beforehand, we need some technical
notions. Let $w \in (2^P)^\omega$. For each $i\geq 0$, $w(i)$ denotes
the $i$th symbol of $w$. Moreover, for each $P'\subseteq P$, we define
the equivalence relation $\weq{P'}$ over $(2^P)^\omega$: two infinite
words $w_1$ and $w_2$ are $\weq{P'}$-equivalent whenever their
\emph{projections onto $P'$} are equal. The projection of an infinite
word $w$ onto $P'$, written $\proj(w,P')$, is obtained by removing
from each symbol of $w$ all the propositions in $P\setminus
P'$. Hence, $w_1\weq{P'}w_2$ iff $\proj(w_1,P')=\proj(w_2,P')$.

Given a $\QLTL$ formula $\varphi$, an infinite word $w$ over
$2^{P}$, and a position $h\geq 0$ along $w$, the satisfaction relation
$(w,h)\models \varphi$ is inductively defined as follows (we omit the
clauses for the boolean connectives):
\[ \begin{array}{l}
(w,h) \models p \ \mbox{iff} \  p \in w(h) \\
(w,h) \models \Next \varphi \ \mbox{iff} \  (w,h+1) \models \varphi \\
(w,h) \models \eventually \varphi \ \mbox{iff} \  \text{ there is  } h'\geq h \text{ such that }(w,h) \models \varphi \\
(w,h) \models \qis p. \varphi \ \mbox{iff} \  \text{ there is  } w', w'\weq{P\setminus\{p\}}w \text{ and } (w',h) \models \varphi
\end{array} \]

We say that the word $w$ satisfies $\varphi$, written $w\models
\varphi$, if $(w,0)\models \varphi$. A $\QLTL$ formula $\varphi$ is in
\emph{positive normal form} if it is of the form $Q_1 p_1.Q_2
p_2.\ldots Q_n p_n.\varphi_{n+1}$, where $Q_j\in\{\qis,\qall\}$ for
each $1\leq j\leq n$, and $\varphi_{n+1}$ is a quantification-free
$\QLTL$-formula in which negation is applied only to propositional
variables\footnote{Every $\QLTL$ formula is constructively
  equivalent to a formula in positive normal form, with linear size.}. The
\emph{quantifier alternation depth} of $Q_1 p_1.Q_2
p_2.\ldots Q_n p_n.\varphi_{n+1}$ is the number of alternations of
(existential and universal) quantifiers in the string
$Q_1Q_2\ldots Q_n$. The following is a well-known result.

\begin{theorem}\cite{SVW87}\label{theo:QLTLisNonElem} Let $k\geq
  0$. Then, the satisfiability problem for the class of $\QLTL$
  formulas in positive normal form whose quantifier alternation depth
  is $k$ is $k$-\EXPSPACE-hard.
\end{theorem}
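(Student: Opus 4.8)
The plan is to establish the lower bound by a polynomial-time reduction from the word problem for deterministic Turing machines running in space $T_k(n)$, where $T_0(n)=n$ and $T_{k+1}(n)=2^{T_k(n)}$; this problem is $k$-\EXPSPACE-complete (for $k=0$ it is \PSPACE, and one reduces from, e.g., LTL satisfiability). Fix such a machine $M$ and an input $x$ with $|x|=n$. I would build, in time polynomial in $|M|+n$, a $\QLTL$ formula $\Phi_{M,x}$ in positive normal form, with quantifier alternation depth exactly $k$, that is satisfiable iff $M$ accepts $x$. The intended models are $\omega$-words over $2^P$ encoding an accepting computation of $M$ as a sequence of configurations, each a block of $T_k(n)$ symbols (one per tape cell), with separator propositions between blocks and a marker for the accepting state.

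The core device is a hierarchy of counters. A level-$i$ counter names a position in $\{0,\dots,T_i(n)-1\}$; it is stored as a block of $T_{i-1}(n)$ cells, the $j$-th of which carries one bit of the value together with a level-$(i-1)$ counter naming the bit position $j$. Level-$0$ counters have length $n$ and are handled directly by a fixed quantifier-free LTL formula. Each tape cell of a configuration is tagged with a level-$k$ counter giving its address, so that ``the cell at address $m$ in the next configuration is the $M$-successor of the cells at addresses $m-1,m,m+1$ in the current configuration'' becomes expressible. First I would write, for each level $i$, quantifier-free temporal predicates asserting that a block is a well-formed level-$i$ counter, that it encodes the value $0$, that it is the all-ones (final) value, and that two consecutive level-$i$ blocks increment correctly; all comparisons between blocks reduce to comparing the embedded level-$(i-1)$ addresses bit by bit.

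The crucial point is the interaction with the quantifier prefix. To compare two level-$i$ blocks I would mark, with a quantified auxiliary proposition (a set of positions, per the semantics of $\qis$), a single pair of cells and assert, using $\Next$ and $\eventually$ inside the quantifier-free body, that their level-$(i-1)$ addresses coincide and that the claimed relation holds on the carried bits; by reading such a marking alternately universally (``for every bit position \dots'') and existentially (``there is a position where \dots''), each additional level of the tower consumes exactly one quantifier alternation. Since the whole formula must be \emph{prenex} in positive normal form, I would not nest fresh quantifiers per comparison but instead place a single prefix $Q_1\bar p_1\,Q_2\bar p_2\cdots Q_k\bar p_k$ with $k$ alternations, where block $\bar p_i$ supplies the marker propositions needed at level $i$, letting the quantifier-free LTL body orchestrate all level-$i$ comparisons simultaneously through temporal navigation. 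Correctness then splits into the two usual implications, each by induction on $i$: every word satisfying the body corresponds to a genuine $T_k(n)$-space computation, and conversely an accepting computation yields a satisfying word together with correct witness/refuter choices for the prefix.

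The main obstacle I anticipate is precisely this prenex alternation bookkeeping: designing the quantifier-free body so that a single alternating prefix of depth $k$ validates all level-$k$ comparisons, rather than the depth growing with the number of comparisons performed. This forces the marker propositions of each level to be shared across all comparisons at that level, and the temporal formula to locate the relevant blocks unambiguously; the delicate inductive invariant is that the shared markers cannot be ``cheated'' (e.g.\ that a universally chosen bit position is genuinely matched in both compared blocks, so that equality asserted at level $i$ truly propagates from level $i-1$). Once this is in place, the remaining ingredients — well-formedness of configurations, faithful simulation of the transition relation on equally-addressed cells, and the polynomial size bound — are routine, and the construction yields $k$-\EXPSPACE-hardness of satisfiability for alternation depth $k$, as stated in \cite{SVW87}.
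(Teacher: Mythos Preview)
The paper does not prove this theorem at all: it is quoted as a known result from \cite{SVW87} and used as a black box in the reduction that follows (Theorem~\ref{theo:FromQLTLtoRCTL}). So there is no ``paper's own proof'' to compare against.

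That said, your sketch is a reasonable outline of the classical argument behind the cited result. The tower-of-counters encoding of $T_k(n)$-space Turing machine computations, with each additional level of the exponential tower consuming one quantifier alternation to perform the ``for every bit position the addresses agree'' comparison, is indeed the mechanism in Sistla--Vardi--Wolper. You have also correctly identified the genuine technical point: in prenex form the marker propositions at each level must be shared across all comparisons at that level, and the quantifier-free LTL body must navigate temporally to the right blocks so that a single alternation suffices per level rather than per comparison. What you have written is a proof plan rather than a proof --- the inductive invariant you allude to (that shared markers cannot be cheated) is where the real work lies --- but as a high-level roadmap it is sound and matches the literature. For the purposes of the present paper, however, a citation is all that is expected.
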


Note that Theorem~\ref{theo:QLTLisNonElem} holds even if we assume
that formulas in positive normal form like $Q_1 p_1.Q_2 p_2.\ldots Q_n
p_n.\varphi_{n+1}$ (with $\varphi_{n+1}$ quantification-free) are
such that $p_1,\ldots,p_n$ are pairwise distinct, each proposition
occurring in $\varphi_{n+1}$ is in $\{p_1,\ldots,p_n\}$, and
$Q_n=\qall$.\\

Theorem~\ref{theo:lowerBoundRefCTL} directly follows from Theorem~\ref{theo:QLTLisNonElem} and the
following theorem, whose proof is given in the rest of this section.

\begin{theorem}\label{theo:FromQLTLtoRCTL} For every $\varphi \in \QLTL$, one can construct in time 
  polynomial in the size of $\varphi$ a formula
  $\translate{\varphi} \in \RCTL$, such that  $\varphi$ is
  satisfiable if, and only if, $\translate{\varphi}$ is satisfiable. Moreover, the refinement
  quantifier alternation depth of $\translate{\varphi}$, $\ad(\translate{\varphi})$, is equal to
  the quantifier alternation depth of $\varphi$.
\end{theorem}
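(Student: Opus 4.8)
The plan is to give a polynomial-time reduction from satisfiability of $\QLTL$ to satisfiability of $\RCTL$; then Theorem~\ref{theo:FromQLTLtoRCTL}, and with it Theorem~\ref{theo:lowerBoundRefCTL}, follows from Theorem~\ref{theo:QLTLisNonElem}. By that theorem and the remark following it, it suffices to treat $\varphi$ in positive normal form, $\varphi = Q_1 p_1.Q_2 p_2.\cdots Q_n p_n.\varphi_{n+1}$ with the $p_j$ pairwise distinct, every propositional variable of the quantifier-free part $\varphi_{n+1}$ among $p_1,\dots,p_n$, and $Q_n=\qall$; the quantifier alternation depth of $\varphi$ is then the number of alternations in the string $Q_1\cdots Q_n$. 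I will build $\translate{\varphi}$ using exactly one refinement quantifier for each $Q_j$ (an $\F$ for a $\qis$, a $\G$ for a $\qall$) and no other refinement quantifier, so that $\ad(\translate{\varphi})$ equals the alternation depth of $\varphi$ automatically.

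First I would fix a way of encoding infinite words by pointed models built from small ``gadgets''. The \emph{canonical model} $M^\star$ has an infinite spine $s_0\,R\,s_1\,R\,s_2\cdots$ whose states all satisfy a fresh marker $\mathtt{sp}$; moreover each $s_i$ has, for every $j\in\{1,\dots,n\}$, a child $c_{i,j}$ satisfying a fresh marker $g_j$, and each $c_{i,j}$ has two leaf children, one satisfying a fresh marker $\mathtt{T}$ and one satisfying a fresh marker $\mathtt{F}$. In $M^\star$ the ``value of $p_j$ at position $i$'' is still \emph{undetermined}: both the $\mathtt{T}$- and the $\mathtt{F}$-witness below $c_{i,j}$ are present. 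A $\CTLfrag$ formula $\chi_{\mathrm{str}}$ pins this shape down up to bisimulation --- which suffices since $\RCTL$ is bisimulation invariant --- by asserting that the point is a spine node, every reachable spine node has a spine-child and, for every $j$, a $g_j$-child, and every $g_j$-node has a $\mathtt{T}$-leaf and an $\mathtt{F}$-leaf and nothing else below it. The quantifier-free part is translated by a homomorphism $\mathrm{tr}$ from quantifier-free $\QLTL$ into $\CTLfrag$, with $\mathrm{tr}(p_j)=\M(g_j\wedge\M\mathtt{T})$, $\mathrm{tr}(\lnot p_j)=\M(g_j\wedge\M\mathtt{F})$, $\mathrm{tr}(\Next\psi)=\M(\mathtt{sp}\wedge\mathrm{tr}(\psi))$, $\mathrm{tr}(\eventually\psi)=\EFR(\mathtt{sp}\wedge\mathrm{tr}(\psi))$, $\mathrm{tr}(\always\psi)=\lnot\EFR(\mathtt{sp}\wedge\lnot\mathrm{tr}(\psi))$, and $\mathrm{tr}$ commuting with $\wedge,\vee$; I note that $\mathrm{tr}(p_j)$, $\mathrm{tr}(\lnot p_j)$ and the $\EFR$-clauses are \emph{existential} $\CTLfrag$ formulas, which will be essential below.

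The quantifiers are translated using the guarded-quantification idiom already seen in Section~\ref{sec.examples}. The point is that a single-agent refinement of $M^\star$ may --- by Corollary~\ref{cor-ref-and-bisimquant}, which equates $\F$ with a bisimulation quantifier $\bqis q$ over a fresh variable followed by arrow-elimination for $q$ --- at each $g_j$-node independently \emph{delete exactly one} of its two witnesses, thereby fixing the value of $p_j$ at that position either way; this is precisely the freedom of the $\QLTL$-quantifier $\qis p_j$, which re-guesses the whole $p_j$-track. Accordingly I set $\translate{\qis p_j.\varphi'}\mapsto \F(\chi_j\wedge\translate{\varphi'})$ and $\translate{\qall p_j.\varphi'}\mapsto \G(\chi_j\imp\translate{\varphi'})$, where $\chi_j$ is a refinement-quantifier-free $\CTLfrag$ formula expressing that the current refinement is a \emph{legal re-guess of $p_j$}: the spine is still infinite and retains all its gadget children; every $g_j$-node now has exactly one of a $\mathtt{T}$-leaf and an $\mathtt{F}$-leaf; every $g_{j'}$-node with $j'<j$ still has at least one witness; and every $g_{j'}$-node with $j'>j$ still has both. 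Then $\translate{\varphi}$ is $\chi_{\mathrm{str}}$ conjoined with the formula obtained from $\varphi$ by replacing each $\qis p_j.(\cdot)$ by $\F(\chi_j\wedge(\cdot))$, each $\qall p_j.(\cdot)$ by $\G(\chi_j\imp(\cdot))$, and $\varphi_{n+1}$ by $\mathrm{tr}(\varphi_{n+1})$. This formula is of size polynomial in $|\varphi|$ and computable in polynomial time; and since $\chi_{\mathrm{str}}$, the $\chi_j$ and $\mathrm{tr}(\varphi_{n+1})$ contain no refinement quantifier while the $n$ new quantifiers are nested along a single branch in the same pattern as $Q_1\cdots Q_n$, $\ad(\translate{\varphi})$ is exactly the quantifier alternation depth of $\varphi$.

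Equisatisfiability then follows from two lemmas. The first is \emph{faithfulness of $\mathrm{tr}$}: if $M_s$ encodes a genuine word $w$ (structure as prescribed, every $g_j$-node carrying exactly one witness reflecting $w$), then for every quantifier-free $\psi$ and position $i$, $M_{s_i}\models\mathrm{tr}(\psi)$ iff $(w,i)\models\psi$ --- a routine structural induction. The second, which is the crux, is \emph{adequacy of the guards}: the refinements $N$ of $M^\star$ with $N\models\chi_j$ are, up to bisimulation, exactly the models obtained from $M^\star$ by keeping one witness at each $g_j$-node and leaving everything else as in $M^\star$; hence quantifying $\F$ (resp.\ $\G$) over the $\chi_j$-refinements quantifies existentially (resp.\ universally) over all $p_j$-tracks, matching the $\QLTL$ clause for $\qis p_j$ (resp.\ $\qall p_j$). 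Granting these two lemmas, an induction on $n$ gives $\varphi$ satisfiable $\Leftrightarrow \translate{\varphi}$ satisfiable; for the ``$\Leftarrow$'' direction $\chi_{\mathrm{str}}$ forces any model of $\translate{\varphi}$ to be bisimilar to $M^\star$, and bisimulation invariance of $\RCTL$ reduces it to the canonical model. The main obstacle is the adequacy lemma: since a refinement is a \emph{blow-up followed by arrow-elimination} and not mere pruning (Section~\ref{sec.modelrestriction}), one must exclude pathological refinements that duplicate part of the model and then prune the copies asymmetrically --- for instance producing two spine-successors of one spine node that encode different word-tails, which would corrupt the reading of $\Next$ and $\eventually$. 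This is where Proposition~\ref{prop:ExistCTL} enters: refinement is monotone on existential $\CTLfrag$ formulas, and since $\mathrm{tr}(p_j)$, $\mathrm{tr}(\lnot p_j)$ and the $\EFR$-clauses are such formulas, the incoherences that would break $\mathrm{tr}$ can be detected by existential $\CTLfrag$ formulas and thereby fenced off, forcing every $\chi_j$-refinement of $M^\star$ to stay a coherent word-encoding. Designing the markers and the $\chi_j$ so that every dangerous incoherence is caught either directly by $\chi_j$ or by an appeal to Proposition~\ref{prop:ExistCTL} together with bisimulation invariance is the technical heart of the proof.
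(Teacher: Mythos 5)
Your reduction has the same skeleton as the paper's (one guarded $\F$ or $\G$ per $\QLTL$ quantifier, a branching gadget encoding of words, an $\EFR$-based translation of the matrix), but the step you yourself flag as ``the technical heart'' is exactly where the argument breaks, and the repair you sketch does not work. Your adequacy lemma --- that the $\chi_j$-refinements of $M^\star$ are, up to bisimulation, \emph{exactly} the legal single-witness re-guesses of $p_j$ --- is false and cannot be rescued by any refinement-free guard: a refinement may first blow $M^\star$ up to a bisimilar copy in which a spine node has two bisimilar spine-successors and then prune the two copies into \emph{different} word-tails. Each resulting branch is locally unambiguous, so every structural condition you list (and any $\CTLfrag$ formula, which cannot compare values at the same unbounded depth across branches) is satisfied, yet the model encodes two distinct words and corrupts $\Next$ and $\eventually$. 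Proposition~\ref{prop:ExistCTL} cannot ``fence off'' such models either: it is a preservation statement (existential formulas transfer from a refinement back to the refined model), not a detection mechanism. The paper's proof never attempts to force single-word encodings. Instead it associates with each well-formed $M_\state$ the \emph{set} $\words(M_\state)$ of compatible words, observes that $M_\state \lumis N_\stateb$ implies $\words(M_\state)\supseteq\words(N_\stateb)$, and proves the two correctness lemmas (Lemmas~\ref{lemma:Aux2NonElem} and~\ref{lemma:Aux1NonElem}) in terms of these sets, isolating individual words only via \emph{minimal segments} --- further refinements that determine a single word --- and then lifting satisfaction of the translated matrix back up using Proposition~\ref{prop:ExistCTL}. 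That change of invariant (from ``the model encodes the word'' to ``the word belongs to the model's set'') is the missing idea.

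A second, concrete defect: you translate $\always\psi$ as $\lnot\EFR(\mathtt{sp}\wedge\lnot\mathrm{tr}(\psi))$, i.e.\ as a universal $\AGR$. This makes $\mathrm{tr}(\varphi_{n+1})$ non-existential whenever $\varphi_{n+1}$ contains $\always$, so the monotonicity under refinement that your own argument (and the paper's) relies on fails for precisely those formulas; on a multi-word model the universal path quantifier also ranges over the wrong set of traces. The paper deliberately uses the \emph{existential} always, $\Upsilon(\always\xi)=\EGR(p_0\wedge\Upsilon(\xi))$, so that the entire translated matrix is an existential $\CTLfrag$ formula: it coincides with the intended semantics on a minimal (single-path) segment and transfers correctly to coarser well-formed models by Proposition~\ref{prop:ExistCTL}.
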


Before proving Theorem~\ref{theo:FromQLTLtoRCTL}, we need
additional definitions. Let $P=\{p_1,\ldots,p_n\}$ and
$\translate{P}=P\cup \{p_0,\overline{p}_1,\ldots,\overline{p}_n\}$,
where $p_0$, $\overline{p}_1,\ldots,\overline{p}_n$ are fresh
propositional variables (intuitively, $\overline{p}_i$ is used to
encode the negation of $p_i$ for each $1\leq i\leq n$, and $p_0$ is a
new variable that will be used to mark a path). For a model $M$ and
two states $\state$ and $\state'$ in $M$, $\state'$ is \emph{reachable
  from $\state$} if there is a finite path from $\state$ leading to
$\state'$.  Let $0\leq j\leq n$. A pointed model $M_\state$ $($over
$\translate{P}$$)$ is \emph{well-formed w.r.t.\ $j$} if the following
holds:
\begin{enumerate}
\item for each state $\state'$ of $M$ which is reachable from
  $\state$, there is exactly one proposition $p\in \translate{P}$ such
  that $s'\in V^{M}(p)$ (we say that $\state'$ is a $p$-state);
  moreover, $\state$ is a $p_0$-state;
\item each state $\state'$ reachable from $\state$ which is not a
  $p_0$-state has no successor;
\item each $p_0$-state $\state'$ which is reachable from $\state$
  satisfies: (i) $\state'$ has some $p_0$-successor, (ii) for all
  $1\leq i\leq j$, either $\state'$ has a $p_i$-state successor or (exclusive)
  a $\overline{p}_i$-state successor, and (iii) for all $j+1\leq i\leq n$, $\state'$ has both a
  $p_i$-state successor and a $\overline{p}_i$-state successor.
  \end{enumerate}
  For each $0\leq j\leq n$, the following $\CTLfrag$ formula $\psi_j$ over
$\translate{P}$ characterizes the set of pointed models which are
well-formed w.r.t.\ $j$:
\begin{center}
\begin{tabular}{lcl}
$\psi_j$ & $:=$ & $p_0\wedge
  \AGR\Bigl\{\bigl[\bigvee_{p\in \translate{P}} (p\wedge
  \bigwedge_{p'\in \translate{P}\setminus\{p\}}\neg p')\bigr] \wedge
  \bigl[\neg p_0\rightarrow \know \bot \bigr] \wedge$\\
& & $p_0 \rightarrow\bigl[\M p_0\wedge
  \bigwedge_{j+1\leq i\leq n}(\M p_i\wedge \M \overline{p}_i)\wedge
  \bigwedge_{1\leq i\leq j}(\M(p_i\vee \overline{p}_i)\wedge (\know
  \neg p_i\vee \know\neg \overline{p}_i))\bigr]\Bigr\}$
\end{tabular}
\end{center}

\noindent Intuitively, $\psi_j$ enforces the existence of infinite paths  
$\pi=s_0 s_1\ldots $ which visit only $p_0$-states $s_i$ such that the  
following holds: the set of successors of $s_i$ `encodes' a specific  
truth valuation of the variables $p_1,\ldots,p_j$ and all the possible  
truth valuations of the variables
$p_{j+1},\ldots,p_n$.

% In particular, it can be shown that $\psi_0$ enforces the existence of an infinite path labeled with $p_0$ and propositions of $P$ all along.

A pointed model $M_\state$ is \emph{well-formed} if it is well-formed
w.r.t.\ $j$ for some $0\leq j\leq n$. In this case, we say that
$M_\state$ is \emph{minimal} if, additionally, each $p_0$-state which
is reachable from $\state$ has exactly one $p_0$-successor. 

A well-formed pointed model $M_\state$ encodes a set of infinite words
over $2^{P}$, written $\words(M_\state)$, given by: $w\in
\words(M_\state)$ iff there is an infinite path
$\pi=\state_0,\state_1,\ldots$ of $M$ from $\state$ (note that $\pi$
consists of $p_0$-states) such that for all $h\geq 0$ and $1\leq j\leq
n$, either $p_j\in w(h)$ and $\state_h$ has some $p_j$-successor, or
$p_j\notin w(h)$ and $\state_h$ has some $\overline{p}_j$-successor.

Note that if $M_\state$ is well-formed w.r.t.\ $0$, then
$\words(M_\state)=(2^{P})^\omega$. If
instead $M_\state$ is well-formed w.r.t.\ $j$ for some $0< j\leq n$
and $M_\state$ is  also minimal, then there is an infinite word
$u_j \in (2^{\{p_1,\ldots,p_j\}})^\omega$ such that 
%$\words(M_\state)$ is the set of all infinite words over $2^{P}$ whose projection onto $\{p_1,\ldots,p_j\}$ is $u_j$: formally, 
$\words(M_\state) = \{w\in (2^{P})^\omega\,|\, \proj(w,\{p_{1},\ldots,p_j\})=u_j\}$. In
particular, when $j=n$, $\words(M_\state)$ is a singleton.

Also, one can easily see that if $M_\state \lumis N_\stateb$ then $\words(M_\state) \supseteq \words(N_\stateb)$.

\paragraph{Construction of the $\RCTL$ formula $\translate{\varphi}$
  (in Theorem~\ref{theo:FromQLTLtoRCTL}).} Pick an $\QLTL$ formula
$\varphi= Q_1 p_1.Q_2 p_2.\ldots Q_n p_n.\varphi_{n+1}$.  For each
$1\leq j\leq n$, we let $\varphi_j=Q_j p_j.Q_{j+1} p_{j+1}.\ldots Q_n
p_n.\varphi_{n+1}$ (note that $\varphi_1$ corresponds to $\varphi$).

First, we construct a $\RCTL$ formula $\translate{\varphi}_j$ over
$\translate{P}$ by using the $\CTLfrag$ formulas $\psi_{j-1}$,  for each
$1\leq j\leq n+1$. The construction is based on an induction on
$n+1-j=0,\ldots,n$ as follows:
\begin{description}
\item[Base case] ($j=n+1$). Recall that $\varphi_{n+1}$ is a
  quantification-free $\QLTL$ formula in positive normal form over
  $P$. Let $\Upsilon$ be the following mapping from the set of
  quantification-free $\QLTL$ formulas $\xi$ over $P$ in positive
  normal form to the set of \emph{existential} $\CTLfrag$ formulas
  over $\translate{P}$ (it is defined by induction).
\begin{itemize}
\item $\Upsilon(p)=\M p$\, and \, $\Upsilon(\neg p)=\M\overline{p}$ for each $p \in P$;
\item $\Upsilon(\xi_1\vee \xi_2)=\Upsilon(\xi_1)\vee\Upsilon(\xi_2)$ and $\Upsilon(\xi_1\wedge \xi_2)=\Upsilon(\xi_1)\wedge \Upsilon(\xi_2)$;
\item $\Upsilon(\Next\xi)=\M(p_0\wedge\,\Upsilon(\xi))$,\,  $\Upsilon(\eventually\xi)=\EFR (p_0\wedge\,\Upsilon(\xi))$, \,and\,  $\Upsilon(\always\xi)=\EGR(p_0\wedge\,\Upsilon(\xi))$.
\end{itemize}

Then, $\translate{\varphi_{n+1}}:= \Upsilon(\varphi_{n+1})$.
\item[Induction case] ($1\leq j\leq n$).   Recall $\varphi_j=Q_j p_j.\varphi_{j+1}$.

Then, $\translate{\varphi_{j}} := 
\left\{
  \begin{array}{ll}
    \F (\psi_j\,\wedge\, \translate{\varphi}_{j+1}) & \textrm{ if }  Q_j =\qis\\
 \G(\psi_j\,\rightarrow\, \translate{\varphi}_{j+1}) & \textrm{ if } Q_j =\qall
  \end{array}
\right.$
\end{description}

\noindent Finally, the $\RCTL$ formula $\translate{\varphi}$ over
$\translate{P}$ is given by
$\translate{\varphi}:=\psi_0\,\wedge\,\translate{\varphi}_1$.

\paragraph{Correctness of the construction.} Note that the size
of $\translate{\varphi}$ is polynomial in the size of
$\varphi$. Moreover, the refinement quantifier alternation depth of
$\translate{\varphi}$ is equal to the quantifier alternation depth of
$\varphi$. Thus, in order to prove
Theorem~\ref{theo:FromQLTLtoRCTL}, it remains to show that
$\varphi$ is satisfiable iff $\translate{\varphi}$ is satisfiable. For
this, we need three preliminary lemmata.

\begin{lemma}\label{lemma:basicNonElem} Let $M_\state$ be a pointed
  model which is well-formed w.r.t.\ $n$ and \emph{minimal}, with
  $\words(M_\state)=\{w\}$. Then, for each quantification-free $\QLTL$
  formula $\xi$ in positive normal form, $w\models \xi$ if and only if
  $M_s\models \Upsilon(\xi)$.
\end{lemma}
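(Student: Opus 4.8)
The plan is to prove the equivalence by structural induction on $\xi$, after first strengthening it so that the induction hypothesis is available at every point along the relevant path. Since $M_\state$ is well-formed w.r.t.\ $n$ and minimal, every reachable $p_0$-state has exactly one $p_0$-successor (by minimality) and at least one (since a $p_0$-state has a $p_0$-successor), so there is a unique infinite path $\state_0=\state,\state_1,\state_2,\ldots$ consisting of $p_0$-states; and $\words(M_\state)=\{w\}$ forces this path to encode $w$, i.e.\ for every $h\geq 0$ and $1\leq j\leq n$, $\state_h$ has a $p_j$-successor iff $p_j\in w(h)$, and (since a $p_0$-state has exactly one of a $p_j$-successor or a $\overline{p}_j$-successor) has a $\overline{p}_j$-successor iff $p_j\notin w(h)$. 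I would then prove the more general claim: for every $h\geq 0$ and every quantification-free positive-normal-form $\QLTL$ formula $\xi$ over $P$, $(w,h)\models\xi$ if and only if $M_{\state_h}\models\Upsilon(\xi)$; the lemma is the case $h=0$. This generalization is what closes the induction under the temporal operators, and it is legitimate because each suffix model $M_{\state_h}$ is itself well-formed w.r.t.\ $n$ and minimal (well-formedness and minimality constrain only the states reachable from the point, and those reachable from $\state_h$ are among those reachable from $\state$), with $\words(M_{\state_h})$ the singleton containing the suffix of $w$ from position $h$.

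The base cases are the literals. For $\xi=p_j$: $M_{\state_h}\models\M p_j$ iff $\state_h$ has a successor in $V(p_j)$, which (as each reachable state carries exactly one proposition of $\translate{P}$) means exactly that $\state_h$ has a $p_j$-successor, which by the encoding holds iff $p_j\in w(h)$ iff $(w,h)\models p_j$. For $\xi=\neg p_j$ one additionally uses exclusivity of the $p_j$- and $\overline{p}_j$-successors: $p_j\notin w(h)$ iff $\state_h$ has a $\overline{p}_j$-successor iff $M_{\state_h}\models\M\overline{p}_j=\Upsilon(\neg p_j)$. The boolean cases $\xi_1\wedge\xi_2$ and $\xi_1\vee\xi_2$ are immediate from the induction hypothesis and the homomorphic clauses of $\Upsilon$.

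For the temporal cases, the key is that the guard $p_0\wedge(\cdot)$ in $\Upsilon$, together with well-formedness, confines the $\CTLfrag$ path quantifiers to the unique $p_0$-path. For $\Next\xi'$: $M_{\state_h}\models\M(p_0\wedge\Upsilon(\xi'))$ iff $\state_h$ has a $p_0$-successor satisfying $\Upsilon(\xi')$; by minimality the only $p_0$-successor of $\state_h$ is $\state_{h+1}$, so this holds iff $M_{\state_{h+1}}\models\Upsilon(\xi')$, equivalently (induction hypothesis) iff $(w,h+1)\models\xi'$, i.e.\ $(w,h)\models\Next\xi'$. For $\eventually\xi'$, translated by $\EFR(p_0\wedge\Upsilon(\xi'))$: if $M_{\state_h}\models\EFR(p_0\wedge\Upsilon(\xi'))$, fix a maximal path from $\state_h$ and an index $i$ at which a state satisfying $p_0\wedge\Upsilon(\xi')$ occurs; since a non-$p_0$-state has no successors, every state at an index $\leq i$ on that path is a $p_0$-state, hence (minimality and uniqueness) that initial segment is $\state_h,\state_{h+1},\ldots,\state_{h+i}$, so $M_{\state_{h+i}}\models\Upsilon(\xi')$ and the induction hypothesis gives $(w,h+i)\models\xi'$, whence $(w,h)\models\eventually\xi'$. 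Conversely, if $(w,h')\models\xi'$ for some $h'\geq h$ then $M_{\state_{h'}}\models\Upsilon(\xi')$ and $\state_{h'}$ is a $p_0$-state, so the infinite (hence maximal) $p_0$-path from $\state_h$ witnesses $\EFR(p_0\wedge\Upsilon(\xi'))$. The case $\always\xi'$, translated by $\EGR(p_0\wedge\Upsilon(\xi'))=\neg\AFR\neg(p_0\wedge\Upsilon(\xi'))$, is symmetric: the only maximal path from $\state_h$ along which $p_0$ holds at every position is the infinite $p_0$-path $\state_h,\state_{h+1},\ldots$, so $M_{\state_h}\models\EGR(p_0\wedge\Upsilon(\xi'))$ iff $M_{\state_{h+k}}\models\Upsilon(\xi')$ for all $k\geq 0$, iff (induction hypothesis) $(w,h+k)\models\xi'$ for all $k$, iff $(w,h)\models\always\xi'$.

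I expect the only delicate point — the main obstacle — to be exactly this last ingredient: checking that the $\CTLfrag$ path quantifiers can never escape the $p_0$-chain. It is a repeated appeal to three features of a well-formed minimal model: a successor lying in $V(p)$ is a $p$-state, so $\M p$ faithfully reads off the encoded word; a non-$p_0$-state is a dead end, so a path that leaves the $p_0$-chain can never re-enter it and in particular can never later satisfy $p_0\wedge\ldots$; and minimality plus the existence of $p_0$-successors make the $p_0$-states form a single infinite, hence maximal, path from $\state$. Once these are in hand the induction is entirely routine.
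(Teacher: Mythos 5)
Your proof is correct and takes essentially the same route as the paper: the paper's proof also fixes the unique infinite $p_0$-path $\state_0,\state_1,\ldots$, states the strengthened claim that $M_{\state_h}\models\Upsilon(\xi)$ iff $(w,h)\models\xi$ for all $h\geq 0$, and declares the structural induction straightforward. You have simply filled in the details the paper omits, and your key observations (exclusivity of $p_j$/$\overline{p}_j$-successors for the literal cases, and the fact that dead-end non-$p_0$-states plus minimality confine the $\CTLfrag$ path quantifiers to the unique infinite $p_0$-chain) are exactly the right ones.
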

\begin{proof} Let $\pi=\state_0,\state_1,\ldots$ be the unique
  infinite path of $M$ from state $\state$ (note that $\pi$ consists
  of $p_0$-states). Then, by a straightforward structural induction,
  one can show that for each quantification-free $\QLTL$ formula in
  positive normal form $\xi$, the following holds: for all $h\geq 0$,
  $M_{\state_h}\models \Upsilon(\xi)$ iff $(w,h)\models \xi$. Hence, the result
  follows.%\qed
\end{proof}

Let $0\leq j\leq n$ and let $M_\state$ be a pointed model which is  well-formed
w.r.t.\ $j$. For each $j\leq i\leq n$, an \emph{$h$-segment of
  $M_\state$} is a refinement $N_\stateb$ of $M_\state$ which is
well-formed w.r.t.\ $h$ and \emph{minimal}. Note that for each $w\in
\words(M_\state)$ and for each $j\leq h\leq n$, by construction, there exists
an $h$-segment $N_\stateb$ of $M_\state$ such that $w\in
\words(N_\stateb)$.

\begin{lemma}\label{lemma:Aux2NonElem} Let $1\leq j\leq n$ and
  $M_\state$ be a pointed model which is well-formed w.r.t.\ $j-1$ such
  that for each $w\in \words(M_\state)$, $w\models \varphi_j$. Then,
  $M_\state\models \translate{\varphi}_j$.
\end{lemma}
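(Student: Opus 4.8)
The plan is to prove the statement by induction on $n+1-j$, following the recursive definition of $\translate{\varphi}_j$, and in each case to split according to whether $Q_j=\qis$ or $Q_j=\qall$. Since $M_\state$ is well-formed w.r.t.\ $j-1$, the claim to establish is $M_\state\models\translate{\varphi}_j$, i.e.\ (for $Q_j=\qis$) that there is a refinement $N_\stateb$ of $M_\state$ with $N_\stateb\models\psi_j\wedge\translate{\varphi}_{j+1}$, or (for $Q_j=\qall$) that every refinement $N_\stateb$ satisfying $\psi_j$ also satisfies $\translate{\varphi}_{j+1}$.

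For the \emph{existential} case $Q_j=\qis$: since every $w\in\words(M_\state)$ satisfies $\varphi_j=\qis p_j.\varphi_{j+1}$, pick any such $w$ and a witnessing $w'$ with $w'\weq{P\setminus\{p_j\}}w$ and $w'\models\varphi_{j+1}$. First I would replace $w$ by a canonical word in $\words(M_\state)$ — here the hypothesis that $M_\state$ is well-formed w.r.t.\ $j-1$ matters, because it fixes the projection onto $\{p_1,\dots,p_{j-1}\}$ (if $M_\state$ is moreover taken minimal, $\words(M_\state)$ is exactly the set of words with that fixed projection), so that $w'$ (after possibly adjusting its $\{p_j,\dots,p_n\}$-part) still lies in a word-set encoded by a suitable refinement. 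Concretely, using the remark just before the lemma that for each $w'\in\words(M_\state)$ and $j\leq h\leq n$ there is an $h$-segment $N_\stateb$ of $M_\state$ (a minimal refinement well-formed w.r.t.\ $h$) with $w'\in\words(N_\stateb)$, I would take the $j$-segment $N_\stateb$ of $M_\state$ that isolates the $p_j$-choice dictated by $w'$; by construction $N_\stateb\models\psi_j$, and $N_\stateb$ is well-formed w.r.t.\ $j$ with every word of $\words(N_\stateb)$ agreeing with $w'$ on $\{p_1,\dots,p_j\}$, hence (since $w'\models\varphi_{j+1}$ and $\varphi_{j+1}$ depends only on $\{p_{j+1},\dots,p_n\}$ modulo the fixed prefix, invoking the semantics of $\qis/\qall$) every word of $\words(N_\stateb)$ satisfies $\varphi_{j+1}$. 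The induction hypothesis applied to $N_\stateb$ (well-formed w.r.t.\ $j$, i.e.\ w.r.t.\ $(j+1)-1$) then gives $N_\stateb\models\translate{\varphi}_{j+1}$, and since $N_\stateb$ is a refinement of $M_\state$ we conclude $M_\state\models\F(\psi_j\wedge\translate{\varphi}_{j+1})=\translate{\varphi}_j$. The base case $j=n+1$ is Lemma~\ref{lemma:basicNonElem}: a well-formed-w.r.t.-$n$ minimal refinement encodes a single word $w$, and $w\models\varphi_{n+1}$ iff the refinement satisfies $\Upsilon(\varphi_{n+1})=\translate{\varphi}_{n+1}$.

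For the \emph{universal} case $Q_j=\qall$: let $N_\stateb$ be any refinement of $M_\state$ with $N_\stateb\models\psi_j$; I must show $N_\stateb\models\translate{\varphi}_{j+1}$. Since $M_\state\lumis N_\stateb$ we have $\words(N_\stateb)\subseteq\words(M_\state)$ (the monotonicity remark preceding the lemma), and $N_\stateb\models\psi_j$ forces $N_\stateb$ to be well-formed w.r.t.\ $j$, hence w.r.t.\ $(j+1)-1$. For each $w\in\words(N_\stateb)\subseteq\words(M_\state)$ we have $w\models\varphi_j=\qall p_j.\varphi_{j+1}$, and since well-formedness w.r.t.\ $j$ pins down the $p_j$-coordinate of all words in $\words(N_\stateb)$ while the universal quantifier $\qall p_j$ ranges over all its values, it follows that $w\models\varphi_{j+1}$ for every $w\in\words(N_\stateb)$. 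The induction hypothesis applied to $N_\stateb$ yields $N_\stateb\models\translate{\varphi}_{j+1}$, and since $N_\stateb$ was an arbitrary refinement satisfying $\psi_j$, we get $M_\state\models\G(\psi_j\rightarrow\translate{\varphi}_{j+1})=\translate{\varphi}_j$.

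The main obstacle I anticipate is the careful bookkeeping in the existential step connecting the \emph{word-level} existential witness $w'$ (which changes only the $p_j$-column of a given $w$) with a \emph{structure-level} refinement $N_\stateb$: one has to be sure that the $j$-segment one picks realizes exactly the desired $p_j$-choice while leaving the already-fixed coordinates $p_1,\dots,p_{j-1}$ intact and the not-yet-quantified coordinates $p_{j+1},\dots,p_n$ still ``fully branching,'' so that $\words(N_\stateb)$ is precisely $\{w\in\words(M_\state)\mid\proj(w,\{p_j\})=\proj(w',\{p_j\})\}$ and hence every member of it satisfies $\varphi_{j+1}$. This is exactly where the hypothesis ``well-formed w.r.t.\ $j-1$'' in the statement, the minimality clause in the definition of $h$-segment, and the observation $M_\state\lumis N_\stateb\Rightarrow\words(M_\state)\supseteq\words(N_\stateb)$ all have to be combined; once that correspondence is set up cleanly, the rest is a routine induction.
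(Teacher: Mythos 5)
Your induction on $n+1-j$, the case split on $Q_j$, the passage to a $j$-segment containing the witness word $w'$ in the existential case, and the monotonicity $\words(N_\stateb)\subseteq\words(M_\state)$ in the universal case reproduce the paper's argument for the inductive step; that part of your plan is sound. (A minor point: in the universal case the inference from $w\models\qall p_j.\varphi_{j+1}$ to $w\models\varphi_{j+1}$ is just universal instantiation at $w$ itself, since $w\weq{P\setminus\{p_j\}}w$; it does not need, and should not rest on, the claim that well-formedness w.r.t.\ $j$ ``pins down'' the $p_j$-coordinate, which is false for non-minimal refinements.)

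The genuine gap is in the base case. You anchor the induction at $j=n+1$ and justify it by Lemma~\ref{lemma:basicNonElem}, but that lemma applies only to pointed models that are well-formed w.r.t.\ $n$ \emph{and minimal}, whereas the model to which your base case must be applied --- the arbitrary refinement $N_\stateb$ of $M_\state$ satisfying $\psi_n$ produced in the universal step at $j=n$ --- is well-formed w.r.t.\ $n$ but need not be minimal: a $p_0$-state of $N_\stateb$ may have several $p_0$-successors, so $\words(N_\stateb)$ need not be a singleton and Lemma~\ref{lemma:basicNonElem} does not apply to it. This is the one place in the proof where Proposition~\ref{prop:ExistCTL} is needed, and you never invoke it. The paper closes the gap by taking an arbitrary $n$-segment $N'_\statec$ of $N_\stateb$ (which \emph{is} minimal, with $\words(N'_\statec)=\{w\}$; by transitivity of $\lumis$ we get $w\in\words(M_\state)$, hence $w\models\varphi_{n+1}$), applying Lemma~\ref{lemma:basicNonElem} to obtain $N'_\statec\models\Upsilon(\varphi_{n+1})$, and then using that $\Upsilon(\varphi_{n+1})$ is an \emph{existential} $\CTLfrag$ formula, so that Proposition~\ref{prop:ExistCTL} transfers it from the refinement $N'_\statec$ back up to $N_\stateb$. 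Without this detour through a minimal segment and the existential-preservation argument, the step from ``every $w\in\words(N_\stateb)$ satisfies $\varphi_{n+1}$'' to ``$N_\stateb\models\Upsilon(\varphi_{n+1})$'' is not justified. Once you add it, the rest of your plan goes through and coincides with the paper's proof.
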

\begin{proof} The proof is by induction on $n-j=0,\ldots,n-1$.
\begin{description}
\item[Base case] ($j=n$). Recall $\varphi_n = \qall
  p_n.\varphi_{n+1}$, where $\varphi_{n+1}$ is a quantification-free
  $\QLTL$ formula in positive normal form.  By construction,
  $\translate{\varphi}_n= \G (\psi_n \,\rightarrow
  \Upsilon(\varphi_{n+1})\,)$. Let $N_{\stateb}$ be a refinement of
  $M_\state$ which satisfies formula $\psi_n$ (if any). We need to show that
  $N_{\stateb}\models \Upsilon(\varphi_{n+1})$.  By definition of
  $\psi_n$, $N_{\stateb}$ is well-formed w.r.t.\ $n$. Let
  $N'_{\statec}$ be any $n$-segment of $N_{\stateb}$, and let $\words(N'_{\statec})=\{w\}$.  By transitivity, $N'_{\statec}$ is a
  refinement of $M_\state$, so that $w\in
  \words(M_\state)$. Thus, by hypothesis, $w\models \varphi_n=\qall
  \,p_n.\varphi_{n+1}$, which implies $w\models \varphi_{n+1}$.  By
  Lemma~\ref{lemma:basicNonElem}, it follows that $N'_{\statec}\models
  \Upsilon(\varphi_{n+1})$. Since $N'_{\statec}$ is a refinement of
  $N_{\stateb}$ and $\Upsilon(\varphi_{n+1})$ is an \emph{existential}
  $\CTLfrag$ formula, by Proposition~\ref{prop:ExistCTL} we deduce that
  $N_{\stateb}\models \Upsilon(\varphi_{n+1})$ as well. Hence, the
  result holds.  
\item[Induction step] ($1\leq j\leq n-1$). By construction, there are two cases:

(1) $\varphi_j=\qis \,p_j.\varphi_{j+1}$ and $\translate{\varphi}_j=\F
  (\psi_j\,\wedge\,\translate{\varphi}_{j+1})$: let $w_0\in
  \words(M_{\state})$. By hypothesis, $w_0\models \varphi_j$. Hence,
  there is infinite word $w'_0$ over $2^{P}$ such that
  $w'_0\weq{P\setminus\{p_j\}}w_0$ and $w'_0\models
  \varphi_{j+1}$. Since $M_\state$ is well-formed w.r.t.\ $j-1$ and
  $w_0\in \words(M_{\state})$, it follows that $w'_0\in
  \words(M_\state)$ as well. Let $N_\state$ be any $j$-segment of
  $M_\state$ such that $w'_0\in \words(N_\state)$. By definition of
  $\psi_j$, $N_\state\models \psi_j$. Thus, it suffices to show that
  $N_\state\models \translate{\varphi}_{j+1}$. Since $N_\state$ is
  well-formed w.r.t.\ $j$ and minimal, and $w'_0\in
  \words(N_{\state})$, it holds that for each $w'\in
  \words(N_{\state})$, $w'\weq{\{p_{1},\ldots,p_j\}}w'_0$. Since every
  proposition in $\{p_{j+1},\ldots,p_n\}$ does not occur free in
  $\varphi_{j+1}$ and $w'_0\models \varphi_{j+1}$, it follows that for
  each $w'\in \words(N_{\state})$, $w'\models \varphi_{j+1}$. Thus, by
  the induction hypothesis, we obtain that $N_\state\models
  \translate{\varphi}_{j+1}$, and the result holds.

 (2)  $\varphi_j=\qall \,p_j.\varphi_{j+1}$ and $\translate{\varphi}_j=\G
  (\psi_j\,\rightarrow\,\translate{\varphi}_{j+1})$: let $N_{\stateb}$
  be a refinement of $M_{\state}$ which satisfies formula $\psi_j$ (if
  any). We need to show that $N_{\stateb}\models
  \translate{\varphi}_{j+1}$. By definition of $\psi_j$, $N_{\stateb}$
  is well-formed w.r.t.\ $j$. Thus, by the induction hypothesis it
  suffices to show that for each $w\in \words(N_{\stateb})$, $w\models
  \varphi_{j+1}$. Let $w\in \words(N_{\stateb})$. Since $N_{\stateb}$
  is a refinement of $M_\state$, it holds that $w\in
  \words(M_\state)$. Thus, by hypothesis, $w\models \varphi_j=\qall
  \,p_j.\varphi_{j+1}$. Hence, $w\models \varphi_{j+1}$, and the
  result follows.
\end{description}
%\qed
\end{proof}

\begin{lemma}\label{lemma:Aux1NonElem} Let $1\leq j\leq n$ and let $M_\state$ be a 
  pointed model which is well-formed w.r.t.\ $(j-1)$ and such that $M_\state \models \translate{\varphi}_j$. 
  Then, there is a $(j-1)$-segment $N_\stateb$ of $M_\state$ such that
  $N_\stateb\models \translate{\varphi}_j$ and for each $w\in
  \words(N_\stateb)$, $w\models \varphi_j$.
\end{lemma}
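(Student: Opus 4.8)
The plan is to unfold the definition of $\translate{\varphi}_j$ according to whether $Q_j=\qis$ or $Q_j=\qall$, and in each case to extract the required $(j-1)$-segment $N_\stateb$ from the witness supplied by the outermost refinement quantifier, then verify the word-satisfaction condition by descending into the induction on $n-j$ that structures the whole family of lemmas. Concretely, suppose first that $Q_j=\qis$, so that $\translate{\varphi}_j = \F(\psi_j \wedge \translate{\varphi}_{j+1})$ and $\varphi_j = \qis\,p_j.\varphi_{j+1}$. From $M_\state\models\translate{\varphi}_j$ we obtain a refinement $N'_{\statec}$ of $M_\state$ with $N'_{\statec}\models\psi_j \wedge \translate{\varphi}_{j+1}$; by definition of $\psi_j$ this $N'_{\statec}$ is well-formed w.r.t.\ $j$. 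I would then pass to a $(j-1)$-segment $N_\stateb$ of $M_\state$ lying ``above'' $N'_{\statec}$ --- using that $N'_{\statec}$ is itself a refinement of such a segment, since a $(j-1)$-segment only additionally commits to making exactly one of $p_j, \overline{p}_j$ hold along each branch and to minimality, which $N'_{\statec}$ already satisfies up to choosing successors. The monotonicity $M_\state\lumis N'_{\statec}$ together with $M_\state\lumis N_\stateb$ and the inclusion $\words(N_\stateb)\supseteq\words(N'_{\statec})$ (the final displayed remark before the Construction paragraph) gives the word-level information. For each $w\in\words(N_\stateb)$ one has $w\weq{\{p_1,\dots,p_{j-1}\}} w'_0$ for the word $w'_0\in\words(N'_{\statec})$ witnessing $\varphi_{j+1}$; since $p_j,\dots,p_n$ do not occur free in a way that the quantifier $\qis\,p_j$ cannot absorb, one concludes $w\models\qis\,p_j.\varphi_{j+1}=\varphi_j$. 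The clause $N_\stateb\models\translate{\varphi}_j$ follows because $N'_{\statec}$ witnesses the $\F$ from $N_\stateb$ as well (transitivity of refinement, Proposition~\ref{lem:lumis}, and the fact that $N'_{\statec}\models\psi_j\wedge\translate{\varphi}_{j+1}$).

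For the case $Q_j=\qall$, we have $\translate{\varphi}_j=\G(\psi_j\rightarrow\translate{\varphi}_{j+1})$ and $\varphi_j=\qall\,p_j.\varphi_{j+1}$. Here no new refinement is handed to us, so I would simply take $N_\stateb$ to be any $(j-1)$-segment of $M_\state$ --- one exists by well-formedness w.r.t.\ $j-1$ and the construction preceding Lemma~\ref{lemma:Aux2NonElem}. Since $\G$ is preserved downward along refinement (reflexivity/transitivity again), $N_\stateb\models\translate{\varphi}_j$. For the word condition: given $w\in\words(N_\stateb)$, then $w\in\words(M_\state)$ by the inclusion $\words(N_\stateb)\subseteq\words(M_\state)$; it remains to see $w\models\varphi_j$. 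This is where I expect the real work: one must show $w\models\qall\,p_j.\varphi_{j+1}$, i.e.\ that \emph{every} $p_j$-variant $w'$ of $w$ satisfies $\varphi_{j+1}$. Here I would invoke $M_\state\models\translate{\varphi}_j=\G(\psi_j\rightarrow\translate{\varphi}_{j+1})$ directly: for each such variant $w'$ one can build a $j$-segment $N'$ of $M_\state$ with $w'\in\words(N')$ (possible because $M_\state$, being well-formed w.r.t.\ $j-1$, branches on both $p_j$ and $\overline{p}_j$ at every $p_0$-state, so either choice can be realised); then $N'\models\psi_j$, hence $N'\models\translate{\varphi}_{j+1}$, and by Lemma~\ref{lemma:Aux2NonElem} in the form already proved (or rather the induction hypothesis of the present lemma together with Lemma~\ref{lemma:Aux2NonElem}) one extracts $w'\models\varphi_{j+1}$.

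The main obstacle, as indicated, is the $\qall$ case: extracting \emph{universal} information about all $p_j$-variants of a given word $w$ from the single hypothesis $M_\state\models\G(\psi_j\rightarrow\translate{\varphi}_{j+1})$. The key technical point making this work is that well-formedness w.r.t.\ $j-1$ guarantees that $M_\state$ already contains, at every $p_0$-state, both a $p_j$-successor and a $\overline{p}_j$-successor (clause 3.(iii) of well-formedness), so every $p_j$-variant of every word in $\words(M_\state)$ is again realised by some $j$-segment that is a refinement of $M_\state$. Once that is set up, the descent is driven by the outer induction on $n-j$ (exactly as in Lemma~\ref{lemma:Aux2NonElem}), with the base case $j=n$ handled via Lemma~\ref{lemma:basicNonElem} and Proposition~\ref{prop:ExistCTL}, and the inductive step feeding the hypothesis back into the corresponding $\translate{\varphi}_{j+1}$-statement. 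I would also take care, in the $\qis$ case, that the $(j-1)$-segment $N_\stateb$ I choose genuinely refines to $N'_{\statec}$ and not merely that the two share the same words --- a harmless but necessary bookkeeping step using that refinement is only bisimulation-plus-pruning (Proposition~\ref{lemma.relatag2}).
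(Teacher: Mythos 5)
Your overall strategy---the case split on $Q_j$, the induction on $n-j$, taking an arbitrary $(j-1)$-segment in the universal case and extracting one from the $\F$-witness in the existential case---is the same as the paper's, and your $\qall$ case is essentially the paper's argument. One caveat there: the step that turns $N'\models\translate{\varphi}_{j+1}$ into $w'\models\varphi_{j+1}$ is the induction hypothesis of this very lemma applied to the minimal $j$-segment $N'$ (whose every $j$-segment has the same words, by minimality), not Lemma~\ref{lemma:Aux2NonElem}, which runs in the opposite direction; you do name the correct alternative parenthetically.

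The $\qis$ case contains a genuine gap, and it is precisely the step you dismiss as ``harmless bookkeeping.'' The $\F$-witness $N'_{\statec}$ is only guaranteed to be well-formed w.r.t.\ $j$; it need not be minimal, and the lemma's hypothesis does not make $M_\state$ minimal either (it cannot: inside this very proof the induction hypothesis is applied to the $\F$-witness, which is not minimal). A non-minimal well-formed model may realize different projections onto $\{p_1,\ldots,p_{j-1}\}$ along different branches, and such a model refines \emph{no} $(j-1)$-segment, since a $(j-1)$-segment is minimal and all of its words share one projection $u_{j-1}$; even when the words happen to be compatible, refinement is a structural relation and must still be exhibited. So ``passing to a $(j-1)$-segment lying above $N'_{\statec}$'' can fail, and with it the claims $N_\stateb\models\translate{\varphi}_j$ and $\words(N_\stateb)\supseteq\words(N'_{\statec})$ that you derive from it. Relatedly, you speak of ``the word $w'_0\in\words(N'_{\statec})$ witnessing $\varphi_{j+1}$'' before any such word is available: $N'_{\statec}\models\translate{\varphi}_{j+1}$ does not by itself produce a word satisfying $\varphi_{j+1}$---that is exactly what the induction hypothesis delivers. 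The paper repairs both points by reversing the order of operations: first apply the induction hypothesis to the $\F$-witness to obtain a \emph{minimal} $j$-segment $N'_{\statec}$ of it, all of whose words satisfy $\varphi_{j+1}$; because this segment is minimal, its unique $p_0$-path and coherent choice of $p_1,\ldots,p_j$ determine, via the {\bf back} condition, a single $(j-1)$-segment $M'_{\stated}$ of $M_\state$ that it refines, and that segment is the one returned. Your final step for the word condition (modify $w$ only at $p_j$ to match the witnessing word, using that only $p_1,\ldots,p_j$ occur free in $\varphi_{j+1}$) is sound once these two points are fixed.
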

\begin{proof} The proof is by induction on $n-j=0,\ldots,n-1$, for
  which there are two cases. Recall that $\varphi_n=\qall
  p_n.\varphi_{n+1}$.

\noindent (1) $\varphi_j=\qall \,p_j.\varphi_{j+1}$ and $\translate{\varphi}_j=\G
  (\psi_j\,\rightarrow\,\translate{\varphi}_{j+1})$: let $N_\stateb$ be any
  $(j-1)$-segment of $M_\state$. By hypothesis $M_\state\models
  \translate{\varphi}_j$. Since every refinement of $N_\stateb$ is
  also a refinement of $M_\state$, it follows that $N_\stateb\models
  \translate{\varphi}_j$. Thus, it suffices to show that for each
  $w\in \words(N_\stateb)$, $w\models \varphi_j$. Fix $w\in
  \words(N_\stateb)$ and let $w'$ be an infinite word over $2^{P}$
  such that  $w' \weq{P\setminus\{p_j\}}w$. Since
  $N_\stateb$ is well-formed w.r.t.\ $j-1$, $w'\in \words(N_\stateb)$
  as well. Let $N'_\statec$ be a $j$-segment of $N_\stateb$ such that
  $w'\in \words(N'_\statec)$. By definition of $\psi_j$,
  $N'_\statec\models \psi_j$. Thus, since $N_\stateb\models
  \translate{\varphi}_j$, we deduce that $N'_\statec\models
  \translate{\varphi}_{j+1}$. There are two cases:
      \begin{itemize}
      \item $j=n$ (\emph{base step}): by construction,
        $\words(N'_\statec)$ is a singleton,
        $\translate{\varphi}_{n+1}=\Upsilon(\varphi_{n+1})$, and
        $\varphi_{n+1}$ is a quantification-free $\QLTL$-formula in
        positive normal form. Since $w'\in \words(N'_\statec)$ and
        $N'_\statec\models \translate{\varphi}_{n+1}$, by
        Lemma~\ref{lemma:basicNonElem}, it follows that $w'\models
        \varphi_{n+1}$.
      \item $j\leq n-1 $ (\emph{induction step}): since $w'\in
        \words(N'_\statec)$ and $N'_\statec\models
        \translate{\varphi}_{j+1}$, by the induction hypothesis (note
        that since $N'_\statec$ is minimal, for each $j$-segment
        $N''_\stated$ of $N'_\statec$,
        $\words(N''_\stated)=\words(N'_\statec)$), it follows that
        $w'\models \varphi_{j+1}$.
      \end{itemize}
      Thus, in both cases $w'\models \varphi_{j+1}$.  Since $w'$ is an
      arbitrary infinite word over $2^{P}$ such that
      $w'\weq{P\setminus\{p_j\}}w$, we
      obtain that $w\models \qall \,p_j.\varphi_{j+1}=\varphi_j$,
      and the result follows.\\

\noindent (2) $\varphi_j=\qis \,p_j.\varphi_{j+1}$, $\translate{\varphi}_j=\F
      (\psi_j\,\wedge\,\translate{\varphi}_{j+1})$, and $j\leq n-1$
      (\emph{induction step}): since $M_\state\models
      \translate{\varphi}_j$, there is a refinement $N_{\stateb}$ of
      $M_\state$ satisfying both $ \psi_j$ and
      $\translate{\varphi}_{j+1}$. By definition of $\psi_j$,
      $N_{\stateb}$ is well-formed w.r.t.\ $j$. Thus, since
      $N_{\stateb}\models\translate{\varphi}_{j+1}$ and $j\leq n-1$,
      by the induction hypothesis, there is a $j$-segment
      $N'_{\statec}$ of $N_{\stateb}$ such that
      $N'_{\statec}\models\psi_j$,
      $N'_{\statec}\models\translate{\varphi}_{j+1}$, and for each
      $w\in \words(N'_{\statec})$, $w\models \varphi_{j+1}$. Since
      $N_{\stateb}$ is a refinement of $M_\state$, it easily follows
      that $N'_{\statec}$ is the refinement of some $(j-1)$-segment
      $M'_{\stated}$ of $M_\state$.  Since
      $N'_{\statec}\models\psi_j\wedge \translate{\varphi}_{j+1}$, it
      holds that $M'_{\stated}\models \translate{\varphi}_j$. Hence,
      it suffices to show that for each $w\in \words(M'_{\stated})$,
      $w\models \varphi_j$. Let $w\in \words(M'_{\stated})$. Then,
      since $M'_{\stated}$ (resp., $N'_{\statec}$) is \emph{minimal}
      and well-formed w.r.t. $j-1$ (resp., $j$) and $N'_{\statec}$ is
      a refinement of $M'_{\stated}$, it follows that there is $w'\in
      \words(N'_{\statec})$ such that
      $w'\weq{P\setminus\{p_j\}}w$. Since $w'\models \varphi_{j+1}$,
      we obtain that $w\models \qis \,p_j.\varphi_{j+1}=\varphi_j$,
      and the result follows.
%\qed
\end{proof}

Now, we can prove the correctness of the construction.

\begin{theorem}\label{theo:correcNonElem} $\varphi$ is
  satisfiable if, and only if, $\translate{\varphi}$ is satisfiable.
\end{theorem}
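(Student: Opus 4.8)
The plan is to deduce Theorem~\ref{theo:correcNonElem} directly from Lemmas~\ref{lemma:Aux2NonElem} and~\ref{lemma:Aux1NonElem}, each instantiated at $j=1$, together with one elementary observation about closed $\QLTL$ formulas. First I would record that, since $\varphi = Q_1 p_1.\ldots Q_n p_n.\varphi_{n+1}$ has no free propositional variables (by the standing assumption every proposition of $\varphi_{n+1}$ is among the bound $p_1,\ldots,p_n$), the truth value of $(w,0)\models\varphi$ does not depend on $w$; hence $\varphi$ is satisfiable iff $w\models\varphi$ for \emph{some} $w\in(2^P)^\omega$ iff $w\models\varphi$ for \emph{every} $w\in(2^P)^\omega$. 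Second I would isolate the canonical object: a pointed model $M^0_{\state_0}$ that is well-formed w.r.t.\ $0$, e.g.\ the minimal unravelling in which every $p_0$-state has exactly one $p_0$-successor and, for each $1\leq i\leq n$, exactly one $p_i$-successor and one $\overline p_i$-successor, all of the latter being leaves. By the characterization of $\psi_0$ discussed before the construction, $M^0_{\state_0}\models\psi_0$ and $\words(M^0_{\state_0})=(2^P)^\omega$; more generally every model well-formed w.r.t.\ $0$ has this word-set, and every $0$-segment of a well-formed pointed model is, by definition, well-formed w.r.t.\ $0$ and minimal.

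For the left-to-right direction, suppose $\varphi$ is satisfiable, so every $w\in(2^P)^\omega$ satisfies $\varphi=\varphi_1$. Take $M^0_{\state_0}$ as above: it is well-formed w.r.t.\ $0=j-1$ (for $j=1$), and every $w\in\words(M^0_{\state_0})=(2^P)^\omega$ satisfies $\varphi_1$. Lemma~\ref{lemma:Aux2NonElem} with $j=1$ then yields $M^0_{\state_0}\models\translate{\varphi}_1$, and since also $M^0_{\state_0}\models\psi_0$ we get $M^0_{\state_0}\models\psi_0\wedge\translate{\varphi}_1=\translate{\varphi}$, so $\translate{\varphi}$ is satisfiable.

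For the converse, suppose $M_\state\models\translate{\varphi}=\psi_0\wedge\translate{\varphi}_1$. Then $M_\state\models\psi_0$, so $M_\state$ is well-formed w.r.t.\ $0$, and $M_\state\models\translate{\varphi}_1$. Applying Lemma~\ref{lemma:Aux1NonElem} with $j=1$ gives a $0$-segment $N_\stateb$ of $M_\state$ with $N_\stateb\models\translate{\varphi}_1$ and $w\models\varphi_1$ for every $w\in\words(N_\stateb)$. Being a $0$-segment, $N_\stateb$ is well-formed w.r.t.\ $0$, hence $\words(N_\stateb)=(2^P)^\omega\neq\emptyset$; choosing any such $w$ gives $w\models\varphi$, so $\varphi$ is satisfiable. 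This completes Theorem~\ref{theo:FromQLTLtoRCTL} (the polynomial-size and equal-alternation-depth claims having already been verified), and hence, via Theorem~\ref{theo:QLTLisNonElem}, yields Theorem~\ref{theo:lowerBoundRefCTL} and, through the translation $\CTLtoLmu:\CTLfrag\to\lang^\mu$, Theorem~\ref{theo:RLmuNonElem}.

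The proof itself is routine bookkeeping; the genuine difficulty sits upstream in Lemmas~\ref{lemma:basicNonElem}--\ref{lemma:Aux1NonElem}. If anything needs care here it is only to confirm that the hypotheses used are actually available from the normal-form reduction: namely that we may assume $Q_n=\qall$ (so that the innermost translated formula $\translate{\varphi}_n=\G(\psi_n\rightarrow\Upsilon(\varphi_{n+1}))$ is a $\G$-formula with $\Upsilon(\varphi_{n+1})$ an \emph{existential} $\CTLfrag$ formula, making Proposition~\ref{prop:ExistCTL} applicable inside Lemma~\ref{lemma:Aux2NonElem}) and that $\varphi_{n+1}$ uses only the propositions $p_1,\ldots,p_n$; both follow from Theorem~\ref{theo:QLTLisNonElem} and the remark immediately after it. I would also double-check the sole genuine existence claim invoked, that the $0$-segment produced by Lemma~\ref{lemma:Aux1NonElem} really exists, which it does precisely because any model well-formed w.r.t.\ $0$ admits the minimal sub-unravelling described above.
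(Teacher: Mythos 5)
Your proposal is correct and follows essentially the same route as the paper: both directions are obtained by instantiating Lemma~\ref{lemma:Aux2NonElem} and Lemma~\ref{lemma:Aux1NonElem} at $j=1$, using that $\varphi$ is closed (so satisfiability coincides with truth on all words), that $\psi_0$ characterizes well-formedness w.r.t.\ $0$, and that $\words(M_\state)=(2^P)^\omega$ for such models. Your additional explicit checks (existence of a canonical model well-formed w.r.t.\ $0$, non-emptiness of $\words$ of the $0$-segment) are points the paper leaves implicit but do not change the argument.
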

\begin{proof}
  First, assume that $\translate{\varphi}=\psi_0\wedge
  \translate{\varphi}_1$ is satisfiable. Hence, there is a pointed
  model $M_\state$ which satisfies both $\psi_0$ and
  $\translate{\varphi}_1$. By definition of formula $\psi_0$, it
  follows that $M_\state$ is well-formed w.r.t.\ $0$. Since
  $M_\state\models \translate{\varphi}_1$, by
  Lemma~\ref{lemma:Aux1NonElem}, we deduce that there is an infinite
  word $w$ over $2^{P}$ such that $w\models \varphi_1$. Since
  $\varphi=\varphi_1$, it follows that $\varphi$ is satisfiable.

  Now, assume that $\varphi$ is satisfiable. Since any proposition in
  $P$ does not occur free in $\varphi$, it follows that for each
  infinite word $w$ over $2^{P}$, $w\models \varphi$. Let $M_\state$
  be any pointed model which is well-formed w.r.t.\ $0$. By definition
  of formula $\psi_0$, it holds that $M_\state\models
  \psi_0$. Moreover, since $w\models \varphi$ for each $w\in
  \words(M_\state)$, and $\varphi=\varphi_1$, by
  Lemma~\ref{lemma:Aux2NonElem} it follows that $M_\state\models
  \translate{\varphi}_1$. Therefore, $M_\state\models \psi_0\wedge
  \translate{\varphi}_1=\translate{\varphi}$. Hence,
  $\translate{\varphi}$ is satisfiable.%\qed
\end{proof}

%% Theorem~\ref{theo:correcNonElem} + translation  conclude the proof

By using 
Theorem~\ref{theo:lowerBoundRefCTL} and the fact that there exists a
linear time translation of $\RCTL$ into
$\lang^\mu_\G$ (see page~\pageref{page-CTLtoLmu}) we now obtain the required proof of 
Theorem~\ref{theo:RLmuNonElem}.

\weg{
\subsection{Upper bound for $\lang_\G$}
\label{sec-tableau}
Given any $\phi \in \lang_\G$, we describe a tableau
construction that either constructs a model for $\phi$, or reports
that $\phi$ is not satisfiable. This construction is provided for formulas in \emph{positive normal form}.
\begin{definition}
A formula is in \emph{positive normal form} if it is built from the following abstract syntax. 
$$\phi::=\top\ |\ \bot\\ |\ p\ |\ \lnot p\ |\ \phi\land\phi\ |\ \phi\lor\phi\ | \know\phi\ |\ \susp\phi\ |\ \G\phi\ |\ \F\phi$$
\end{definition}
However, we note that every $\lang_\G$ formula may be converted into positive normal
form with linear change to the size of formula.
\begin{center}
{\huge\fbox{TIM @@previous source after end{document} }}
\end{center}
\paragraph{Tableau Definition:}~

\begin{lemma}\label{tableauSoundness}
If the tableau reports that $\phi$ is satisfiable, then $\phi$ has a model.
\end{lemma}

\begin{proof}
~
\end{proof}

\begin{lemma}\label{tableauCompleteness} 
If $\phi$ is satisfiable, then the tableau reports that $\phi$ is satisfiable.
\end{lemma}
\begin{proof}
~
\end{proof}

\begin{corollary}
The satisfiability problem for $\lang_\G$ is in $\TWOEXPTIME$.
\end{corollary}
\begin{proof}
~
\end{proof}
}

\subsection{Succinctness}\label{succinctness}
\label{sec-succinctness}

In this section we establish the following result.

\begin{theorem}\label{theorem:mainSuccinctness}
$\logicRML$ is doubly exponentially more succinct than $\logicK$, and $\logicRML^{\mu}$ is doubly exponentially more succinct than modal $\mu$-calculus.
\end{theorem}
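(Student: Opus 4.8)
To prove that $\logicRML$ is doubly exponentially more succinct than $\logicK$, I would exhibit a family of $\lang_\G$ formulas $\phi_n$, each of size polynomial in $n$, such that every $\logicK$ formula $\psi_n$ logically equivalent to $\phi_n$ has size at least doubly exponential in $n$ (i.e.\ $2^{2^{\Omega(n)}}$), and similarly for the $\mu$-calculus version. The natural source for such a family is the correctness machinery already built in Section~\ref{sec-RMLmuisNonElem}: the translation $\translate{\cdot}$ of $\QLTL$ into $\RCTL$ (Theorem~\ref{theo:FromQLTLtoRCTL}) turns a $\QLTL$ formula with $k$ quantifier alternations and of size $m$ into an $\RCTL$ formula of size polynomial in $m$ whose models ``encode'' the $\QLTL$ semantics. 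One would instantiate $\varphi$ by a $\QLTL$ formula of constant alternation depth (say a purely existential block of $n$ quantifiers, so $k=1$) that forces a computation of doubly-exponential length or a doubly-exponential ``counter'', using the standard trick of $\qis$-quantifying a fresh proposition whose bits describe positions in a $2^{2^n}$-long sequence. Since $\QLTL$ with a single block of existential quantifiers over $\lang^\mu$-expressible (indeed even $\CTLfrag$-expressible) matrices still forces non-elementary behaviour relative to quantifier-free formulas, while the $\RCTL$ (hence $\lang^\mu_\G$) translation stays polynomial, this gives the gap for $\logicRML^\mu$ against $\lang^\mu$ immediately. For the modal-logic (non-$\mu$) statement one replaces the $\eventually$/$\always$ fixpoint-like modalities by bounded iterations and takes a $\QLTL$ fragment whose matrix is expressible in plain modal logic over a bounded-depth encoding, so that $\translate{\varphi_n}$ lies in $\lang_\G$ and its unique (up to bisimulation) modal equivalent must have size matching the doubly-exponential counter it describes.

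The key steps, in order, would be: (1) fix the $\QLTL$ witness family $\varphi_n$ with bounded alternation depth encoding a $2^{2^n}$ object (e.g.\ the set of all binary strings of length $2^n$, enumerated in order, using $\qis$-quantified marker propositions and $\Next$/$\eventually$); (2) verify, by the lemmas of Section~\ref{sec-RMLmuisNonElem} (Lemmas~\ref{lemma:basicNonElem}--\ref{lemma:Aux1NonElem}, Theorem~\ref{theo:correcNonElem}), that $\translate{\varphi_n}$ is satisfiable iff $\varphi_n$ is, and more: that its satisfying models are exactly the well-formed encodings of the $\QLTL$ semantics, so $\translate{\varphi_n}$ genuinely ``defines'' the doubly-exponential object up to bisimulation; (3) give a lower bound argument showing that any $\logicK$ (resp.\ $\lang^\mu$) formula defining the same class of pointed models, up to bisimulation, must have size $2^{2^{\Omega(n)}}$ — this is where one invokes a formula-size game (an Adler--Immerman / Ehrenfeucht--Fra\"iss\'e-style game for modal logic, as in the succinctness literature on LTL and CTL) between two models that agree on all ``small'' modal (resp.\ $\mu$-calculus) formulas but are separated by $\translate{\varphi_n}$; (4) assemble these into Theorem~\ref{theorem:mainSuccinctness}, noting that the translation $\translate{\cdot}$ has polynomial size in $n$ while the $\logicK$/$\mu$ equivalent is doubly exponential, yielding the stated succinctness gap. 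The $\mu$-calculus case uses the full power of $\eventually,\always$ in $\CTLfrag\subseteq\lang^\mu$ directly; the $\logicK$ case requires restricting to a ``bounded-horizon'' sub-family so that the $\RCTL$-translation collapses into $\lang_\G$ proper.

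\textbf{Main obstacle.} The hard part is step (3): proving the doubly-exponential \emph{lower} bound on the size of an equivalent formula in the smaller logic. Non-succinctness-aware correctness of the $\QLTL$-to-$\RCTL$ translation (steps 1--2) is essentially already done in Section~\ref{sec-RMLmuisNonElem}, and the satisfiability-complexity lower bound there (Theorem~\ref{theo:RLmuNonElem}) does \emph{not} by itself give succinctness — one needs a genuine Adler--Immerman-style formula-size game for $\logicK$ (and a fixpoint analogue for $\lang^\mu$) together with an explicit pair of families of pointed models $M_n, M'_n$ that (a) are distinguished by $\phi_n=\translate{\varphi_n}$ but (b) are provably indistinguishable by all modal / $\mu$-calculus formulas of size $< 2^{2^{\Omega(n)}}$. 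Constructing these model families and running the game is the technical core; the combinatorics mirrors the standard succinctness proofs for bounded counters (as used in the LTL/CTL succinctness literature and in bisimulation-quantifier succinctness results), but must be adapted to the refinement/relativization setting and to the fact that the refinement quantifier encodes the $\QLTL$ quantifier only relative to the well-formedness predicates $\psi_j$. A secondary subtlety is ensuring the non-$\mu$ family really stays inside $\lang_\G$ (no hidden fixpoints), which forces a bounded version of the construction and a correspondingly more delicate encoding of the doubly-exponential object by a formula of polynomial size.
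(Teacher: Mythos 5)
Your high-level plan is the right shape (a polynomial-size family of $\lang_\G$ formulas whose $\logicK$/$\lang^\mu$ equivalents must be doubly exponential), but there is a genuine gap: the entire lower bound, which you yourself identify as ``step (3)'' and ``the technical core,'' is never carried out. You name a technique (an Adler--Immerman-style formula-size game) but supply neither the witness model families $M_n, M'_n$ nor the game argument, and for the $\mu$-calculus there is no standard off-the-shelf formula-size game to invoke. A second problem is the upper-bound side: the $\QLTL$-to-$\RCTL$ translation of Section~\ref{sec-RMLmuisNonElem} only establishes \emph{equisatisfiability} relative to the well-formedness predicates $\psi_j$ (Lemmas~\ref{lemma:Aux2NonElem} and~\ref{lemma:Aux1NonElem}); it does not produce a formula that \emph{defines}, up to bisimulation, a class of models whose modal definition is forced to be large, so you cannot simply ``instantiate $\varphi$'' there and read off a succinctness witness. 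Finally, your remark that the non-$\mu$ case ``replaces $\eventually/\always$ by bounded iterations'' glosses over the real difficulty: a naive bounded unfolding of a doubly-exponential counter is itself doubly exponential, so the polynomial size of the $\lang_\G$ formula would be lost.

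The paper's actual proof (Proposition~\ref{proposition:mainSuccinctness}) sidesteps both issues with a direct construction. It defines models encoding \emph{pairs} of $n$-configurations (strings of length $2^{2^n}$ over $\{a,b\}$, with positions addressed by $n$-sub-blocks carrying $n$-bit addresses), and a formula $\varphi_n$ of size $O(n^2)$ and refinement nesting depth $2$ asserting that the two configurations coincide: the inner quantifier $\G(\theta_n\rightarrow\cdots)$ in $\psi_n$ selects, among refinements, exactly those retaining blocks with a common position, so equality of contents at every position is expressed without enumerating the $2^{2^n}$ positions (Lemma~\ref{lemma:constructionSuccinctness1}). The lower bound is then automata-theoretic rather than game-theoretic: any equivalent $\lang^\mu$ formula yields a parity symmetric alternating automaton with linearly many locations (Proposition~\ref{proposition:fromLmutoPSAA}), and a cut-and-paste/fooling-set argument on the left and right successors of well-formed models for distinct configurations $w\neq w'$ forces $2^{|Q|}\geq 2^{2^{2^n}}$, i.e.\ $|Q|\geq 2^{2^n}$. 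Since $\lang\subseteq\lang^\mu$ and $\varphi_n\in\lang_\G$, one family settles both claims of the theorem at once. If you want to salvage your route, you would need to replace the appeal to the $\QLTL$ machinery by an explicit definable doubly-exponential object and then either build the formula-size game from scratch or, more economically, adopt the automata-based counting argument.
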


Theorem~\ref{theorem:mainSuccinctness} directly follows from the following result whose proof is given in the rest of this section.

\begin{proposition}\label{proposition:mainSuccinctness}There is a finite set $P$ of propositional variables and a family 
  $(\varphi_n)_{n\in\N}$ of one-agent $\lang_\G$ formulas over $P$
  such that for each $n\in\N$, $\varphi_n$ has size $O(n^{2})$ and
  refining quantifier alternation depth $2$, and each equivalent one-agent
  $\lang^{\mu}$ formula has size at least
  $2^{2^{\Omega(n)}}$.\footnote{Recall that $f(n) \in \Omega(g(n))$ iff
    $g(n) \in O(f(n))$.}
\end{proposition}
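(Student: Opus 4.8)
\textbf{Proof plan for Proposition~\ref{proposition:mainSuccinctness}.}

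The plan is to exploit exactly the machinery built in Section~\ref{sec-RMLmuisNonElem}, but at a single level of quantifier alternation, to encode a doubly-exponentially large object in a polynomial-size $\lang_\G$ formula, and then invoke a known doubly-exponential lower bound for $\lang^\mu$ on that object. Concretely, I would recall that the translation $\translate{\bullet}$ from $\QLTL$ to $\RCTL\subseteq\lang^\mu_\G$ (Theorem~\ref{theo:FromQLTLtoRCTL}) is polynomial and preserves refinement-quantifier alternation depth, and that $\QLTL$ formulas with bounded (here: $\ad=2$, i.e.\ one $\qall$ over one $\qis$, or a single block plus the implicit outer structure) still express properties whose \emph{smallest} LTL--equivalent is doubly-exponentially large. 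The standard source for this is that $\QLTL_1$ (or prenex $\QLTL$ with one quantifier alternation) can define, over a linear word, a counter running through $2^{2^{n}}$ values, while any equivalent LTL formula has length at least $2^{2^{\Omega(n)}}$ (this is folklore following from the automata-theoretic lower bounds underlying Theorem~\ref{theo:QLTLisNonElem}; see \cite{SVW87}). Pushing this through $\translate{\bullet}$ yields a family $\varphi_n=\translate{\alpha_n}$ of one-agent $\lang_\G$ formulas of size $O(n^2)$ and refinement nesting depth $2$.

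The key steps, in order, are: (i) fix a $\QLTL$ family $\alpha_n$ of size $O(n)$, quantifier alternation depth $2$ (a single $\qis$ block below a single $\qall$ block below a quantifier-free kernel, as permitted by the normal form preceding Theorem~\ref{theo:QLTLisNonElem}), such that every LTL formula equivalent to $\alpha_n$ has size $\geq 2^{2^{\Omega(n)}}$; (ii) set $\varphi_n=\translate{\alpha_n}$ and note, by Theorem~\ref{theo:FromQLTLtoRCTL} and the construction of $\translate{\bullet}$, that $|\varphi_n|=O(n^2)$ and $\ad(\varphi_n)=2$, and moreover that the \emph{refinement nesting depth} (total nesting of $\F,\G$) is also $2$ since each quantifier block contributes one nesting level in $\translate{\bullet}$; (iii) show a ``transfer'' statement: if $\varphi_n$ had an equivalent one-agent $\lang^\mu$ formula $\psi_n$ of size $s$, then $\alpha_n$ would have an LTL-equivalent of size polynomial in $s$. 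Step (iii) is where the work lies and is discussed below; (iv) combine: $\mathrm{poly}(s)\geq 2^{2^{\Omega(n)}}$ forces $s\geq 2^{2^{\Omega(n)}}$, which is the claim.

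For step~(iii), the idea is to run the correspondence between well-formed pointed models and words (the $\words(M_\state)$ machinery of Section~\ref{sec-RMLmuisNonElem}) backwards: a $\lang^\mu$ formula $\psi_n$ equivalent to $\varphi_n=\translate{\alpha_n}$ can be composed with the fixed polynomial-size $\CTLfrag$ ``word-encoding'' gadget $\psi_0$ (and the encoding of the kernel via $\Upsilon$) to yield a $\lang^\mu$-formula over the tree-unwinding of a linear word $w$ that holds iff $w\models\alpha_n$; since over linear (word-like) models $\lang^\mu$ collapses to LTL with at most a single-exponential blow-up via the standard $\mu$-calculus-to-LTL translation on words (using that on a single path $\mu$- and $\nu$-fixpoints become $\eventually$ and $\always$), we obtain an LTL formula of size $2^{O(s)}$ equivalent to $\alpha_n$. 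Plugging into the $2^{2^{\Omega(n)}}$ LTL lower bound for $\alpha_n$ gives $2^{O(s)}\geq 2^{2^{\Omega(n)}}$, hence $s\geq 2^{\Omega(n)}$\,--- wait, that only gives single-exponential; to recover the \emph{double} exponential one must instead avoid the intermediate $\mu$-to-LTL exponential by choosing $\alpha_n$ so that its kernel already encodes the $2^{2^n}$-bit counter \emph{as an LTL property} of size $O(n^2)$ using one quantified auxiliary proposition, so that the transfer is linear rather than exponential. The honest statement is: the main obstacle is getting the blow-ups to line up so that a polynomial (not exponential) transfer from $\lang^\mu$-over-words back to LTL is available; this is handled by using the $\QLTL$ counter construction in which the doubly-exponential gap is already \emph{between $\QLTL$ and LTL directly} (one quantifier suffices to compress a $2^{2^n}$-counter that LTL needs $2^{2^{\Omega(n)}}$ symbols to describe), so that after the polynomial $\translate{\bullet}$ translation and the polynomial word-decoding composition, any $\lang^\mu$ formula equivalent to $\varphi_n$ restricted to word-models is already an LTL formula (the fixpoints being eliminable \emph{in place} because the gadget forces linearity of the relevant paths), incurring only polynomial overhead. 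I expect verifying precisely that the fixpoints in $\psi_n$ become LTL temporal operators with only polynomial cost on the word-shaped refinements — exploiting that $\psi_0$ pins down the $p_0$-path and kills all side-branches after one step — to be the delicate point, and I would devote the bulk of the detailed proof to that lemma.
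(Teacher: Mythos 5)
Your route is genuinely different from the paper's, but it has a gap that I do not think can be repaired as stated. The critical step is your transfer (iii): from a small one-agent $\lang^\mu$ formula $\psi_n$ equivalent to $\varphi_n$ you want to extract an LTL formula of polynomially related size equivalent to the $\QLTL$ kernel $\alpha_n$. Over word-shaped models the modal $\mu$-calculus captures all $\omega$-regular properties, whereas LTL captures only the star-free ones, so there is in general \emph{no} translation from $\lang^\mu$-on-words to LTL, let alone a polynomial one; and even on the regular fragment the known translations are exponential, which (as you yourself compute) degrades the bound to single-exponential. Your proposed fix --- that the fixpoints of an \emph{arbitrary} equivalent $\psi_n$ become ``eliminable in place'' with polynomial overhead because the gadget $\psi_0$ pins down a linear path --- is exactly the unproved claim on which the whole argument rests: nothing constrains how an adversarially chosen $\psi_n$ uses its fixpoints, only what it means on the models of interest. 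Two further mismatches with the statement: the proposition asks for a \emph{fixed finite} $P$ independent of $n$, but the $\QLTL$ construction and its translation use $\Theta(n)$ propositional variables $p_1,\dots,p_n,\overline{p}_1,\dots,\overline{p}_n$; and the claimed doubly-exponential succinctness gap between alternation-depth-$2$ $\QLTL$ and LTL is itself asserted as folklore rather than proved (the cited \cite{SVW87} result is a hardness-of-satisfiability statement, not a succinctness lower bound).

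The paper avoids all of this by never passing through words or LTL. It fixes $P=\{l,r,\#,0,1,a,b\}$, builds models whose root has a left and a right subtree each encoding a string of length $2^{2^n}$ over $\{a,b\}$, and writes $\varphi_n$ (size $O(n^2)$, refinement nesting depth $2$) so that it holds exactly when the two strings coincide: the inner $\G$ uses a refinement to isolate the blocks at a common position, the outer $\G$ then forces equal contents. The lower bound is obtained by compiling any equivalent $\lang^\mu$ formula into a parity symmetric alternating automaton with linearly many states and running a fooling-set cut-and-paste argument: if two distinct configurations induced the same set of automaton locations crossing into the left subtree, one could splice accepting runs to accept a model encoding a mismatched pair. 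Since there are $2^{2^{2^n}}$ configurations and only $2^{|Q|}$ candidate location sets, $|Q|\geq 2^{2^n}$. If you want to salvage your plan, you would essentially have to prove your step (iii) as a standalone lemma, and I would expect that to be at least as hard as the automata-theoretic argument it is meant to replace.
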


\noindent \textbf{Construction of the $\lang_\G$ formulas $\varphi_n$
  in Proposition~\ref{proposition:mainSuccinctness}:} let
$P=\{l,r,\#,0,1,a,b\}$. An \emph{$n$-configuration} is a string on
$\{a,b\}$ of length exactly $2^{2^{n}}$. We define a class
$\mathcal{C}_n$ of pointed models, where each pointed model in the
class encodes, in a suitable way, a pair of $n$-configurations. Then, we
construct the $\lang_\G$ formula $\varphi_n$ in such a way that the
following holds: a pointed model $M_\state\in \mathcal{C}_n$ satisfies
$\varphi_n$ iff the two $n$-configurations encoded by $M_\state$
coincide. In order to formally define the class $\mathcal{C}_n$, we
need additional definitions.  An $n$-block is a pair $bl=(c,i)$ such
that $c\in\{a,b\}$ and $1\leq i\leq 2^{2^{n}}$. We say that $c$ is the
\emph{content} of $bl$ and $i$ is the \emph{position} of
$bl$. Intuitively, $bl$ represents the $i$th symbol of some
$n$-configuration. First, we define an encoding of $(c,i)$ by a set
$\code(c,i)$ of strings over $2^{P}$ of length $n+3$. Since $1\leq
i\leq 2^{2^{n}}$, $i$ can be encoded by a binary string over $\{0,1\}$
of length exactly $2^{n}$. Moreover, we keep track, for each $1\leq
j\leq 2^{n}$, of the binary encoding (a string over $\{0,1\}$ of
length $n$)\footnote{Here, it is not relevant to specify the form of
  the binary encoding which is used.} of the position $j$ of the $j$th
bit in the binary encoding of $i$. This leads to the following
definition. An $n$-\emph{sub-block} is a string over $2^{P}$ of length
$n+2$ of the form $sbl=\{\#\},\{b_1\},\ldots,\{b_n\},\{B\}$, where
$b_1,\ldots,b_n,B\in\{0,1\}$. The \emph{content} of $sbl$ is $B$ and
the \emph{position} of $sbl$ is the integer $1\leq j\leq 2^{n}$ whose
binary encoding is $b_1,\ldots,b_n$. Intuitively, $sbl$ encodes the
position and the content $B$ of a bit along the binary encoding of an
integer $1\leq i\leq 2^{2^{n}}$. Then, $\code(c,i)$ is the set of
strings over $2^{P}$ of length $n+3$ such that
\begin{itemize}
\item for each $u\in \code(c,i)$, $u=sbl\cdot \{c\}$, where $sbl$ is an
  $n$-sub-block whose position $j$ and content $b$ satisfy the
  following: $b$ is the $j$th bit in the binary encoding of $i$.
\item for each $1\leq j\leq 2^{n}$, let $B_j$ be the $j$th bit in the
  binary encoding of $i$ and $sbl_j$ be the $n$-sub-block whose
  position is $j$ and whose content is $B_j$. Then, $sbl_j\cdot
  \{c\}\in \code(c,i)$.
\end{itemize}

\noindent Let $M_\state$ be a pointed model over $P$. We denote by
$\Traces(M_\state)$ the set of finite or infinite strings over $2^{P}$
of the form $(V^{M})^{-1}(\state_0),(V^{M})^{-1}(\state_1),\ldots$
such that $\state_0,\state_1,\ldots$ is a maximal path of $M$ starting
from $\state$.  A pointed model $M_\state$ \emph{encodes an $n$-block
  $(c,i)$} if 
$$\Traces(M_\state)=\code(c,i) \text{ and } M_\state \models \displaystyle{\bigwedge_{d=0}^{n-1}\know^{d}(\M
  1\wedge\M 0)} \in \lang$$
Note that the set of pointed models encoding $(c,i)$
is nonempty. Let $(w_l,w_r)$ be a pair of $n$-configurations.  A
pointed model $M_\state$ \emph{encodes the pair} $(w_l,w_r)$ if it
holds that:
\begin{itemize}
\item $s$ has two successors $\state_l$ and $\state_r$ (called the
  left successor and right successor of $s$, respectively). Moreover,
  $(V^{M})^{-1}(\state)=\emptyset$, $(V^{M})^{-1}(\state_l)=\{l\}$ and
  $(V^{M})^{-1}(\state_r)=\{r\}$;
\item for each $dir\in\{l,r\}$, $s_{dir}$ has $2^{2^{n}}$ successors
  $\state_{1,dir},\ldots,\state_{2^{2^{n}},dir}$. Moreover, for each
  $1\leq i\leq 2^{2^{n}}$, $M_{\state_{i,dir}}$ encodes the $n$-block
  $(c_{i,dir},i)$, where $c_{i,dir}$ is the $i$th symbol of the
  $n$-configuration $w_{dir}$.
\end{itemize}
If additionally $w_l=w_r$, then we say that $M_\state$ is
\emph{balanced}. The class $\mathcal{C}_n$ is the class of pointed
models $M_\state$ such that $M_\state$ encodes some pair $(w_l,w_r)$
of $n$-configurations. In order to define the $\lang_\G$ formula
$\varphi_n$ (for each $n\geq 0$), we first show
Lemma~\ref{lemma:constructionSuccinctness1}. This lemma asserts that there is an
$\lang_\G$ formula $\psi_n$ of size $O(n^{2})$ which allows one to
select, for a given pointed model $M_\state\in\mathcal{C}_n$, only the
$n$-blocks encoded by $M_\state$ having the same position.

\begin{lemma}\label{lemma:constructionSuccinctness1} For each $n\geq
  0$, one can construct a one-agent $\lang_\G$ formula $\psi_n$ of
  size $O(n^{2})$ and refinement quantifier alternation depth $1$ satisfying the
  following for all pairs $(w_l,w_r)$ of $n$-configurations: for each
  $M_\state\in \mathcal{C}_n$ encoding the pair $(w_l,w_r)$ and each
  refinement $M'_{\state'}$ of $M_{\state}$,
\begin{itemize}
\item $M'_{\state'}$ satisfies $\psi_n$ \emph{iff} there is $1\leq
  i\leq 2^{2^{n}}$ such that the set of $\#$-states (i.e.\ states whose label is $\{\#\}$) $\state'_\#$
  reachable from $\state'$ is nonempty and for each of such states
  $\state'_\#$, $M'_{\state'_\#}$ encodes an $n$-block whose position
  is $i$ and whose content is either the $i$th symbol of $w_l$ or the
  $i$th symbol of $w_r$.
\end{itemize}
\end{lemma}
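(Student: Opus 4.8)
The plan is to take $\psi_n$ to be a conjunction of three parts: a modal formula saying that some $\#$-state is reachable, a modal formula $\know\know(\#\rightarrow\beta_n)$ saying that every reachable $\#$-state encodes a \emph{complete} $n$-block, and a formula $\neg\F\,\omega_n$ with $\omega_n$ purely modal, saying that one cannot further refine $M'_{\state'}$ so as to exhibit two reachable $\#$-states that agree on all their address bits but lead to different content bits. The first two conjuncts are routine bookkeeping; the essential point is the third one, and in particular the choice of $\omega_n$ together with the observation that a \emph{single} further refinement (pure arrow removal) already suffices to expose any position mismatch, which is exactly what keeps the refinement nesting depth equal to $1$.

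Concretely, I would first record the structural facts about $\mathcal{C}_n$ and its refinements: in $M_\state\in\mathcal{C}_n$ the $\#$-states lie at distance exactly $2$ from $\state$, and below a $\#$-state sits the full binary address tree of depth $n$, then a content bit, then a $c$-leaf; since a refinement only removes arrows and duplicates bisimilar copies (Proposition~\ref{lemma.relatag2}), any refinement $M'_{\state'}$ keeps these depths and labelings, and a reachable $\#$-state of $M'_{\state'}$ is again a refinement of some genuine block $\state_{i,dir}$, so its traces are prefixes of $\code(c_{i,dir},i)$. Using this I set $\beta_n := \bigwedge_{d=0}^{n-1}\know^{d}(\M 0\wedge\M 1)\ \wedge\ \bigwedge_{d=0}^{n+1}\know^{d}\M\top\ \wedge\ \know^{n+2}\know\bot$, which forces the full binary address tree and maximal paths of length exactly $n+2$ below the $\#$-state; for a refinement of a genuine block this is equivalent to having trace set exactly $\code(c_{i,dir},i)$, i.e. to encoding an $n$-block (whose position is then $i$ and whose content $c_{i,dir}$ is the $i$-th symbol of $w_{dir}$). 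The key formula is
$$\omega_n\ :=\ \bigwedge_{d=1}^{n}\bigl(\know^{2+d}0\ \vee\ \know^{2+d}1\bigr)\ \wedge\ \M\M(\#\wedge\M^{n+1}1)\ \wedge\ \M\M(\#\wedge\M^{n+1}0),$$
whose first conjunct says that at each level $3,\dots,n+2$ (exactly the address-bit levels below the $\#$-states) all states carry the same bit, hence all reachable $\#$-states share one common address, and whose last two conjuncts say that after its $n$ address bits some reachable $\#$-state continues to content bit $1$ and some to content bit $0$. I then take $\psi_n := \M\M\#\ \wedge\ \know\know(\#\rightarrow\beta_n)\ \wedge\ \neg\F\,\omega_n$, which has size $O(n^{2})$ and refinement nesting depth $1$.

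Next I would verify the stated equivalence. If $M'_{\state'}$ is clean at a position $i$ (all reachable $\#$-states are complete blocks at one position $i$), then in any refinement $M''$ the first conjunct of $\omega_n$ forces any two $\#$-state paths to read one and the same address $k$, and since $M''$ still refines the position-$i$ structure any such path that reaches the content level reaches the $k$-th bit of $i$; hence $M''$ cannot satisfy both $\M\M(\#\wedge\M^{n+1}1)$ and $\M\M(\#\wedge\M^{n+1}0)$, so $M'_{\state'}\models\neg\F\,\omega_n$, while the other two conjuncts of $\psi_n$ are immediate, giving $M'_{\state'}\models\psi_n$. Conversely, assume $M'_{\state'}\models\psi_n$; the first two conjuncts give that the reachable $\#$-states are nonempty and each encodes a complete block, hence has a well-defined position. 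If two such positions $i\neq i'$ differed, pick an address $j$ on which the binary encodings of $i$ and $i'$ disagree and let $M''$ be obtained from $M'$ by deleting, below every reachable $\#$-state, every address arrow whose label differs from the $j$-th address (pure arrow removal, so a refinement). Then all address levels of $M''$ carry the single bit pattern of $j$, so the first conjunct of $\omega_n$ holds, while the surviving $i$-copy and $i'$-copy now reach the content bits of $i$ and $i'$ at address $j$, which are $1$ and $0$ in some order, so the last two conjuncts of $\omega_n$ hold too; this contradicts $\neg\F\,\omega_n$. Hence all positions coincide, which is exactly the statement of the Lemma (the content constraint "$\in\{w_l[i],w_r[i]\}$" being automatic since a complete block that refines $\state_{i,dir}$ has content $c_{i,dir}$).

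The main obstacle is precisely the third conjunct: a modal formula of size $O(n^{2})$ cannot compare two block positions directly, as positions range over a doubly exponential set, so a naive "same position" test is impossible. The idea that resolves it is that one refinement --- arrow removal only, hence no quantifier nesting --- can collapse \emph{every} reachable block to one common address branch, converting a mismatch between two positions into a purely local clash between a content bit labelled $0$ and one labelled $1$, which $\omega_n$ detects within bounded modal depth. A secondary (but routine) point to check is that all the structural invariants used above --- the bounded depths, the fixed labelings, and the fact that a reachable $\#$-state of a refinement of $M_\state$ is itself a refinement of a genuine block --- are genuinely preserved under refinement, so that the equivalence arguments go through.
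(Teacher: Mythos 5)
Your proof is correct and follows essentially the same route as the paper: a purely modal conjunct forcing every reachable $\#$-state to encode a complete block, plus a single (depth-one) universal refinement quantifier over a modal formula that collapses all blocks onto one common sub-block address by arrow removal and then detects a content-bit clash --- your $\neg\F\,\omega_n$ is just the dual rendering of the paper's $\G(\theta_n\rightarrow\bigvee_{b\in\{0,1\}}\know^{n+3}b)$, with only cosmetic differences in the bookkeeping conjuncts.
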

\begin{proof}
The $\lang_\G$ formula $\psi_n$ is defined as follows:
$$\psi_n:=\xi_n\,\wedge\,\G(\theta_n\rightarrow \bigvee_{b\in\{0,1\}}\know^{n+3}b)$$
where $\xi_n$ and $\theta_n$ are $\lang$ formulas defined as follows:
$$\xi_n:=\M\top\wedge\know\M\top\wedge\bigwedge_{d=0}^{n-1}\know^{d+2}(\M 1\wedge\M 0)\wedge \know^{n+2}\M\top\wedge\know^{n+3}\M\top$$
$$\theta_n:=\M\top\wedge\know\M\top\wedge\know^{2}\M\top\wedge\bigwedge_{d=1}^{n}\bigvee_{b\in\{0,1\}}\know^{d+2}(b\wedge \M \top)\wedge \know^{n+3}\M\top$$
Note that $\psi_n$ has size $O(n^{2})$ and that $\ad(\psi_n) =1$
(refinement alternation depth). Thus, it remains to prove the second
part of the lemma.  Fix $M_\state\in \mathcal{C}_n$ encoding some pair
$(w_l,w_r)$ of $n$-configurations, and let $M'_{\state'}$ be a
refinement of $M_\state$. By construction, for each $\#$-state
$\state'_\#$ reachable from $\state'$ in $M'$, there is a $\#$-state
$\state_\#$ reachable from $\state$ in $M$ such that $M'_{\state'_\#}$
is a refinement of $M_{\state_\#}$. Moreover, $M_{\state_\#}$ encodes
some $n$-block $(c,i)$, where the content $c$ is either the $i$th
symbol of $w_l$ or the $i$th symbol of $w_r$. Thus, by definition of
$\xi_n$, we obtain the following.

\paragraph{Fact 1:} $M'_{\state'}$ satisfies $\xi_n$ \emph{iff}
the set of $\#$-states $\state'_\#$ reachable from $\state'$ is
nonempty and for each of such states $\state'_\#$, $M'_{\state'_\#}$
encodes some $n$-block $(c,i)$, where the content $c$ is either the
$i$th symbol of $w_l$ or the $i$th symbol of $w_r$.\vspace{0.2cm}

In the second conjunct $\G(\theta_n\rightarrow
\bigvee_{b\in\{0,1\}}\know^{n+3}b)$ of the definition of $\psi_n$, the
formula $\theta_n$ intuitively enforces one to select the refinements
$M'_{\state'}$ of $M_\state$ encoding only $n$-blocks having the same
position. Formally, by definition of $\theta_n$, we obtain the
following.

\paragraph{Fact 2:} Let $M''_{\state''}$ be a refinement of
$M'_{\state'}$. Then, $M''_{\state''}$ satisfies $\theta_n$ \emph{iff}
for all $u,u'\in \Traces(M''_{\state''})$, $u,u'\in
\Traces(M_{\state})$ and the $n$-sub-block in $u$ and the
$n$-sub-block in $u'$ have the same position.\vspace{0.2cm}

Thus, by Fact 2 it follows that the second conjunct
$\G(\theta_n\rightarrow \bigvee_{b\in\{0,1\}}\know^{n+3}b)$ of
definition of $\psi_n$ requires that all the $n$-sub-blocks in
$\Traces(M'_{\state'})$ having the same position have also the same
content, i.e., all the $n$-blocks encoded by $M'_{\state'}$ have the
same position. Thus, by Fact 1~the result follows. %\qed
\end{proof}
For each $n\geq 0$, let $\psi_n$ be the $\lang_\G$ formula  satisfying the statement of Lemma~\ref{lemma:constructionSuccinctness1}. Then, the
one-agent $\lang_\G$ formula $\varphi_n$ is defined as follows:
$$\varphi_n=\G(\psi_n\rightarrow \bigvee_{c\in\{a,b\}}\know^{n+4}c)$$

\noindent By construction and
Lemma~\ref{lemma:constructionSuccinctness1}, we easily obtain the
following result.

\begin{lemma}\label{lemma:constructionSuccinctness2} For each $n\geq
  0$, the $\lang_\G$ formula $\varphi_n$ has size $O(n^{2})$ and
  $\ad(\varphi_n)=2$ (refinement alternation depth). Moreover, for
  each $M_\state\in \mathcal{C}_n$, $M_\state$ satisfies $\varphi_n$
  \emph{iff} $M_\state$ is balanced.
\end{lemma}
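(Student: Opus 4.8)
The plan is to derive Lemma~\ref{lemma:constructionSuccinctness2} almost entirely from Lemma~\ref{lemma:constructionSuccinctness1}, which does the real work in building $\psi_n$. The syntactic claims are routine bookkeeping: since $\varphi_n=\G(\psi_n\rightarrow\bigvee_{c\in\{a,b\}}\know^{n+4}c)$ with $|\psi_n|=O(n^{2})$ and $|\bigvee_c\know^{n+4}c|=O(n)$, we get $|\varphi_n|=O(n^{2})$; and the refinement quantifier alternation depth is $2$ because $\psi_n$ contributes one block of refinement quantifiers (its single $\G$, of depth $1$) while the outer $\G$ of $\varphi_n$ sits over $\neg\psi_n$ in positive normal form — the implication flips the polarity of $\psi_n$'s $\G$ into an $\F$ — so the longest alternating path of refinement quantifiers is $\G$ over $\F$.

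For the semantic equivalence, fix $M_\state\in\mathcal{C}_n$ encoding a pair $(w_l,w_r)$ of $n$-configurations, so that "$M_\state$ well-formed" means $w_l=w_r$. The first step is a structural observation about an \emph{arbitrary} refinement $M'_{\state'}$ of $M_\state$ with $M'_{\state'}\models\psi_n$. Its conjunct $\xi_n$ (via $\M\top,\know\M\top,\dots,\know^{n+3}\M\top$) forbids dead ends at depths $0,\dots,n+3$, and the \textbf{back} property of refinement sends every path of $M'$ from $\state'$ step-by-step to a path of $M$ from $\state$ of equal length with equal atom labels; since all maximal paths of $M$ from $\state$ have length exactly $n+4$, the same holds in $M'$. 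Hence every state at distance $n+4$ from $\state'$ is the last (content-)state of the $n$-block whose $\{\#\}$-root sits at distance $2$ on that path, and by Lemma~\ref{lemma:constructionSuccinctness1} all these $\#$-roots encode blocks of a single common position $i$ with content among $\{w_l(i),w_r(i)\}$; so the set of labels occurring at distance $n+4$ from $\state'$ is a nonempty subset of $\{\{w_l(i)\},\{w_r(i)\}\}$.

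Given this, the $(\Leftarrow)$ direction is immediate: if $w_l=w_r$, then for any refinement $M'_{\state'}\models\psi_n$ there is a single symbol $c=w_l(i)=w_r(i)$ labelling all depth-$(n+4)$ states, so $M'_{\state'}\models\know^{n+4}c$ and hence $M'_{\state'}\models\bigvee_c\know^{n+4}c$; as $M'_{\state'}$ was arbitrary, $M_\state\models\varphi_n$. For $(\Rightarrow)$ I argue the contrapositive: if $w_l(i)\ne w_r(i)$ for some $i$, let $M'_{\state'}$ be the refinement of $M_\state$ obtained by deleting, from the left and right successors $\state_l,\state_r$ of $\state$, every edge except the one to the position-$i$ block $\state_{i,l}$, resp.\ $\state_{i,r}$, keeping everything else (this is a genuine refinement — only edges removed, so \textbf{atoms} and \textbf{back} hold). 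Its reachable $\{\#\}$-states are exactly $\state_{i,l},\state_{i,r}$, encoding blocks of position $i$ with contents $w_l(i),w_r(i)\in\{w_l(i),w_r(i)\}$, so $M'_{\state'}\models\psi_n$ by Lemma~\ref{lemma:constructionSuccinctness1}; but the two content-states at distance $n+4$ carry the distinct labels $\{w_l(i)\}$ and $\{w_r(i)\}$, so neither $\know^{n+4}a$ nor $\know^{n+4}b$ holds, giving $M'_{\state'}\models\psi_n\wedge\lnot\bigvee_c\know^{n+4}c$ and hence $M_\state\not\models\varphi_n$.

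The main obstacle is the structural observation underlying the $(\Leftarrow)$ direction: one must show that an arbitrary refinement $M'$ satisfying $\psi_n$ — which, being a refinement, may be a blow-up of $M$ and is not literally a sub-model — has \emph{all} of its depth-$(n+4)$ states carrying the $i$-th configuration symbol. This is exactly the point where the \textbf{back} clause of refinement (to bound path lengths and to transport atom labels from $M$) has to be combined with the characterization supplied by Lemma~\ref{lemma:constructionSuccinctness1}; once that is in place, both directions and the syntactic estimates are routine.
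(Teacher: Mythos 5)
Your proof is correct and takes the approach the paper intends: the paper gives no actual proof of this lemma (it is dismissed with ``by construction and Lemma~\ref{lemma:constructionSuccinctness1}, we easily obtain the following result''), and your expansion --- using the \textbf{back} clause to transport every depth-$(n+4)$ state of a refinement to a content-state of $M$, invoking Lemma~\ref{lemma:constructionSuccinctness1} for the common position $i$, and exhibiting the two-block pruned refinement for the converse --- is exactly the argument the authors leave implicit. The only quibble is cosmetic: by the paper's formal definition of $\ell$ the sequence $\G\F$ has alternation length $1$, not $2$, but the paper itself uses the same off-by-one block-counting convention (it assigns $\psi_n$, with a single quantifier, depth $1$), so your count of $2$ agrees with the authors' intended reading.
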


\paragraph{Proof of Proposition~\ref{proposition:mainSuccinctness}:}
by Lemma~\ref{lemma:constructionSuccinctness2}, in order to complete
the proof of Proposition~\ref{proposition:mainSuccinctness}, we need
to show that for each $n\geq 0$, each one-agent $\lang^{\mu}$ formula
equivalent to $\varphi_n$ has size at least $2^{2^{\Omega(n)}}$. For
this, we use a well-known automata-characterization of (one-agent)
$\lang^{\mu}$ in terms of \emph{parity symmetric alternating
  $($finite-state$)$ automata} ($\PSAA$) which operate on pointed
models~\cite{Wil99}.  First, we recall the class of $\PSAA$. We need
additional definitions.

A \emph{tree}
$T$ is a prefix closed subset of $\mathbb{N}^{*}$. The elements of
$T$ are called \emph{nodes} and the empty word $\varepsilon$ is
the \emph{root} of $T$. For $x\in T$, the set of \emph{children}
 of $x$ (in $T$) is $\{x\cdot i \in T\mid i\in \mathbb{N}\}$. A \emph{path}
 of  $T$ is a maximal sequence $\pi=x_0 x_1\ldots$ of $T$-nodes such that $x_0=\varepsilon$ and for any $i$, $x_{i+1}$ is a child of $x_i$.
For an alphabet
$\Sigma$, a $\Sigma$-labeled tree is a pair $\tpl{T,r}$ where $T$
is a tree and $r:T \rightarrow \Sigma$.
For a  set $X$,
$\B_+(X)$   denotes the set
of \emph{positive} boolean formulas over $X$,  built from elements in $X$
using $\vee$ and $\wedge$ (we also allow the formulas $\true$ and
$\false$). A subset $Y$ of $X$  \emph{satisfies}
$\theta\in\B_+(X)$ iff the truth assignment that assigns $\true$
to the elements in $Y$ and $\false$ to the elements of $X\setminus
Y$ satisfies $\theta$.

A \emph{parity symmetric alternating automaton ($\PSAA$)} over $P$ is a tuple
 $\mathcal{A}=\tpl{P,Q,q_0,\delta,Acc}$, where
$Q$ is a finite set of locations, $q_0\in Q$ is an initial location,
$\delta:Q\times 2^{P}\rightarrow \B_+\bigl(\{\know,\M\}\times Q)$ is
the transition function, and $Acc: Q\rightarrow \N$ is a parity acceptance condition assigning to each location $q\in Q$ an integer (called \emph{priority}).
Intuitively,  a target of a move of
$\mathcal{A}$ is encoded by an element in $\{\know,\M\}\times Q$. 
%An atom $(\M,q)$
%means that a ``replica'' of $\mathcal{A}$ in location $q$ moves to some successor  of the
%current state (of the pointed input model), while an atom $(\know,q)$ means that for each %successor $\state$ of the current state,
%a replica of $\mathcal{A}$ in location $q$ is sent to state $\state$.
An atom $(\Diamond,q)$ means that from the current state $s$ (of the  
pointed input model) $A$ moves to some successor of $s$ and the  
location is updated to $q$. On the other hand, an atom $(\Box,q)$  
means that from the current state $s$ the automaton splits in  
multiple copies and, for each successor $s'$ of $s$, one of such copies  
moves to $s'$ and the location is updated to $q$.

Formally,
for a pointed model $M_{\state_0}$ over $P$, a \emph{run} of $\mathcal{A}$ over $M_{\state_0}$ is a
$(Q\times \States^{M})$-labeled tree $\tpl{T,r}$.\footnote{Intuitively, each node  of $T$ labeled by $(q,\state)$ describes a copy of $\mathcal{A}$ that is in  location $q$ and reads the state $\state$ of
$M$.} Moreover, we require that $r(\varepsilon)=(q_0,\state_0)$ (initially, $\mathcal{A}$ is in the initial location $q_0$ reading state $\state_0$), and for each  $y\in T$ with $r(y)=(q,\state)$, there is a (possibly empty) \emph{minimal} set
  $H\subseteq \{\know,\M\}\times Q$  satisfying $\delta(q,(V^{M})^{-1}(\state))$ such that  the set $L(y)$ of labels of children of $y$ in $T$ is the smallest set satisfying the following:   for all atoms $at\in H$,
  \begin{itemize}
    \item if $at=(\M,q')$, then for some successor $\state'$ of $\state$ in $M$, $(q',\state')\in L(y)$;
    \item if $at=(\know,q')$, then for each successor $\state'$ of $\state$ in $M$, $(q',\state')\in L(y)$.
  \end{itemize}

  For an infinite path $\pi=y_0y_1\ldots$ of $T$, let $inf(\pi)$ be
  the set of locations in $Q$ that appear in $r(y_0)r(y_1)\ldots$
  infinitely often. The run $\tpl{T,r}$ is \emph{accepting} if for
  each infinite path $\pi$ of $T$, the smallest priority of the
  locations in $inf(\pi)$ is \emph{even}. The \emph{language} of
  $\mathcal{A}$ is the set of pointed models $M_\state$ over $P$ such
  that $\mathcal{A}$ has an accepting run over $M_\state$. The
  following is a well-known result.

 \begin{proposition}\label{proposition:fromLmutoPSAA}\cite{Wil99}
   Given a one-agent $\lang^\mu$ formula $\varphi$ over $P$, one can
   construct a $\PSAA$ $\mathcal{A}_\varphi$ with $O(|\varphi|)$
   locations whose language is the set of pointed models over $P$
   satisfying $\varphi$.
\end{proposition}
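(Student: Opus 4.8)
The statement is the classical translation of the modal $\mu$-calculus into symmetric alternating parity automata, and the plan is to follow the by-now standard route of Emerson--Jutla and Wilke \cite{Wil99}: build $\mathcal{A}_\varphi$ directly from the syntax tree of $\varphi$, so that its locations are essentially the subformulas of $\varphi$ and its acceptance game coincides with the evaluation (parity model-checking) game of $\varphi$. Since a formula of length $O(|\varphi|)$ has $O(|\varphi|)$ subformulas, taking locations to be subformulas will immediately yield the required location bound, and the whole difficulty is then to set up the transition function and the priorities so that automaton acceptance matches $\mu$-calculus satisfaction.

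First I would normalise $\varphi$: rewrite it into positive normal form by pushing negations down to the literals (using $\nu x.\psi = \neg\mu x.\neg\psi[\neg x\backslash x]$ as in the preliminaries), rename bound variables so that each fixpoint variable is bound exactly once, and guard the formula so that every occurrence of a fixpoint variable lies within the scope of a $\know$ or a $\M$; each step costs at most a linear (polynomial for guarding) increase in size, so the result still has length $O(|\varphi|)$ and the same set of pointed models. I would then take $Q$ to be the Fischer--Ladner closure of the normalised formula (its subformulas, with each variable identified with its binder), $q_0=\varphi$, and define $\delta:Q\times 2^{P}\to\B_+(\{\know,\M\}\times Q)$ by a recursion that unfolds boolean and fixpoint structure locally and emits a modal move only at a modality: a literal gives $\true$ or $\false$ according to $\sigma$; $\psi_1\wedge\psi_2$ (resp.\ $\psi_1\vee\psi_2$) gives $\delta(\psi_1,\sigma)\wedge\delta(\psi_2,\sigma)$ (resp.\ the disjunction); $\M\psi$ (resp.\ $\know\psi$) gives the atom $(\M,\psi)$ (resp.\ $(\know,\psi)$); and $\mu x.\psi$, $\nu x.\psi$, and a bound variable $x$ all unfold to $\delta(\psi,\sigma)$, following $x$ back to its binder. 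Guardedness guarantees that this recursion terminates. Finally I would define $Acc$ so that $\nu$-formulas receive even priorities and $\mu$-formulas odd ones, ordered to respect the dependency order of the fixpoints (outer, more significant fixpoints getting the \emph{smaller} priority along any chain) and all non-fixpoint locations getting an irrelevant priority; the number of priorities is then the alternation depth, which is $O(|\varphi|)$.

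For correctness I would prove $\mathcal{A}_\varphi$ accepts $M_\state$ iff $M_\state\models\varphi$ by comparing two parity games on $M_\state$: the acceptance game of $\mathcal{A}_\varphi$ and the parity model-checking game of $\varphi$. A position $(q,\state)=(\psi,\state)$ of the acceptance game corresponds exactly to a subformula/state position of the evaluation game; the $\wedge/\vee$ and $\know/\M$ choices of the two players match move for move, and a play threads through the same sequence of regenerated fixpoints in both games. By the choice of $Acc$, the condition ``the smallest priority occurring infinitely often is even'' holds along an infinite play iff the outermost infinitely often regenerated fixpoint is a greatest fixpoint, which is precisely the winning condition of the evaluation game. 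Hence the two games have the same winner from $(q_0,\state)$, and the classical adequacy of the model-checking game for $\mu$-calculus satisfaction gives the equivalence.

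The main obstacle is twofold and is exactly the part absorbed by the cited construction: getting the priority assignment to encode $\mu/\nu$ alternation faithfully, so that ``smallest priority seen infinitely often is even'' agrees with ``the outermost infinitely regenerated fixpoint is a $\nu$'', is the delicate combinatorial point; and performing the guarding normalisation while staying within the $O(|\varphi|)$ location budget is where the size bound could be lost if done naively. Both are handled by the standard machinery of \cite{Wil99}, which I would invoke for these technical steps.
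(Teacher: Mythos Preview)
The paper does not prove this proposition at all: it is stated with the citation \cite{Wil99} and used as a black box in the succinctness argument. Your outline is the standard Wilke/Emerson--Jutla construction that the citation points to, so there is nothing to compare against in the paper itself; your sketch is a faithful account of what the cited reference does.
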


Proposition~\ref{proposition:mainSuccinctness} directly follows from Proposition~\ref{proposition:fromLmutoPSAA} and the following result.

\begin{lemma} Let $n\geq 0$ and $\mathcal{A}_{n}$ be a $\PSAA$ over
  $P$ whose language is the set of pointed models satisfying the
  $\lang_\G$ formula $\varphi_n$. Then, the number of locations of
  $\mathcal{A}_{n}$ is at least $2^{2^{n}}$.
\end{lemma}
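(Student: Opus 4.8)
The plan is a counting (fooling-set) argument exploiting the locality of runs of symmetric alternating automata. Write $N=2^{2^{n}}$ and, for every $n$-block $(c,i)$, fix once and for all a finite tree-like pointed model $B_{(c,i)}$ encoding $(c,i)$ (one exists by the remark preceding the lemma). For every $n$-configuration $w=c_1\cdots c_{N}\in\{a,b\}^{N}$ let $\mathrm{Lft}(w)$ be the finite tree-like pointed model whose root carries label $\{l\}$ and whose successors are the roots of $B_{(c_1,1)},\ldots,B_{(c_{N},N)}$, let $\mathrm{Rgt}(w)$ be the analogous model with root label $\{r\}$, and for a pair $(w,w')$ of $n$-configurations let $M^{w,w'}$ be the finite tree-like pointed model whose root carries label $\emptyset$ and has exactly the two successors $\mathrm{Lft}(w)$ and $\mathrm{Rgt}(w')$. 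By construction $M^{w,w'}\in\mathcal{C}_n$, and $M^{w,w'}$ is well-formed if and only if $w=w'$; hence by Lemma~\ref{lemma:constructionSuccinctness2} and the hypothesis on $\mathcal{A}_n$,
\[ \mathcal{A}_n\text{ accepts }M^{w,w'}\quad\Longleftrightarrow\quad w=w'. \]
Since all the models above are finite and acyclic, every run of $\mathcal{A}_n$ on them is finite, so the parity condition is vacuous and ``$\mathcal{A}_n$ accepts $N_u$ from location $q$'' simply means ``there is a run of $\mathcal{A}_n$ on $N_u$ whose root is labelled $(q,u)$''.

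The heart of the argument is a compositionality observation. For a finite tree-like pointed model $N_u$ put $\Gamma(N_u)=\{q\in Q\mid \mathcal{A}_n\text{ accepts }N_u\text{ from }q\}$. Consider a node $v$ with label $\sigma$ whose successors are the roots of $N^{1},\ldots,N^{m}$, and a location $q$: $\mathcal{A}_n$ accepts the model rooted at $v$ from $q$ if and only if there is a minimal $H\models\delta(q,\sigma)$ such that $\{q'\mid(\know,q')\in H\}\subseteq\bigcap_{j}\Gamma(N^{j})$ and $\{q'\mid(\M,q')\in H\}\subseteq\bigcup_{j}\Gamma(N^{j})$ --- a $\know$-atom broadcasts its target to \emph{every} successor, a $\M$-atom need only reach \emph{some} successor, and all copies of $\mathcal{A}_n$ landing on a given successor run there independently and must all be accepted. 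In particular $\Gamma$ of the model rooted at $v$ depends only on the pair $\bigl(\bigcap_{j}\Gamma(N^{j}),\bigcup_{j}\Gamma(N^{j})\bigr)$, hence only on the \emph{set} $\{\Gamma(N^{j})\mid 1\le j\le m\}\subseteq 2^{Q}$. Applying this at the $\{l\}$-labelled root of $\mathrm{Lft}(w)$, the value $\Gamma(\mathrm{Lft}(w))$ depends only on the set $\{\Gamma(B_{(c_i,i)})\mid 1\le i\le N\}$; and applying it once more at the $\emptyset$-labelled root of $M^{w,w'}$, whether $\mathcal{A}_n$ accepts $M^{w,w'}$ depends only on the pair $\bigl(\Gamma(\mathrm{Lft}(w)),\Gamma(\mathrm{Rgt}(w'))\bigr)$.

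The proof then concludes by counting. The map $w\mapsto\Gamma(\mathrm{Lft}(w))$ takes $n$-configurations to subsets of $Q$; I claim it is injective. Indeed, if $w\neq w'$ but $\Gamma(\mathrm{Lft}(w))=\Gamma(\mathrm{Lft}(w'))$, then, since $\mathrm{Rgt}(w)$ is the right successor of both $M^{w,w}$ and $M^{w',w}$, the two pairs $\bigl(\Gamma(\mathrm{Lft}(w)),\Gamma(\mathrm{Rgt}(w))\bigr)$ and $\bigl(\Gamma(\mathrm{Lft}(w')),\Gamma(\mathrm{Rgt}(w))\bigr)$ coincide, so by the previous paragraph $\mathcal{A}_n$ accepts $M^{w,w}$ if and only if it accepts $M^{w',w}$; but the displayed equivalence says $M^{w,w}$ is accepted while $M^{w',w}$ is not, a contradiction. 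As there are $2^{N}=2^{2^{2^{n}}}$ distinct $n$-configurations and at most $2^{|Q|}$ subsets of $Q$, injectivity forces $2^{2^{2^{n}}}\le 2^{|Q|}$, i.e.\ $|Q|\ge 2^{2^{n}}$; that is, $\mathcal{A}_n$ has at least $2^{2^{n}}$ locations.

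The only genuinely delicate point is the compositionality observation of the second paragraph: one must be careful with the run semantics of $\PSAA$ --- the role of \emph{minimal} satisfying sets $H$, the asymmetry between $\know$-atoms (broadcast to all successors) and $\M$-atoms (reaching one successor), and the fact that distinct copies of $\mathcal{A}_n$ may arrive at one and the same successor state and must all accept there --- so as to be sure that the entire behaviour of $\mathcal{A}_n$ on $\mathrm{Lft}(w)$ really collapses to the union and the intersection of the sets $\Gamma(B_{(c_i,i)})$, whence $\mathcal{A}_n$ cannot ``see'' which block occupies which position. Once that is established the rest is pure counting; the remaining points (finiteness and acyclicity of the models, so that runs are finite and the acceptance condition trivialises, and the fact that fixing the $B_{(c,i)}$ canonically makes $\Gamma(\mathrm{Lft}(w))$ well-defined) are routine.
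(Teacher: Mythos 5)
Your proof is correct and follows essentially the same route as the paper's: both are fooling-set arguments that splice the left subtree of the model encoding $(w,w)$ onto the right subtree of the model encoding $(w',w')$ to obtain a model encoding a mismatched pair that $\mathcal{A}_n$ must reject yet cannot distinguish, and then count against $2^{|Q|}$. The only difference is bookkeeping: the paper records, for each $w$, the nonempty family $H(w)\subseteq 2^{Q}$ of sets of locations reading the left successor in accepting runs and shows these families are pairwise disjoint, whereas you record the single set $\Gamma(\mathrm{Lft}(w))$ of locations from which the left subtree is accepted and show $w\mapsto\Gamma(\mathrm{Lft}(w))$ is injective via an explicit compositionality lemma (which also makes precise the run-splicing step the paper leaves as ``it easily follows''); both yield $2^{2^{2^{n}}}\le 2^{|Q|}$ and hence $|Q|\ge 2^{2^{n}}$.
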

\begin{proof} Let $n\geq 0$ and $\mathcal{A}_{n}$ as in the statement
  of the lemma (note that $\mathcal{A}_{n}$ exists by
  Proposition~\ref{proposition:fromLmutoPSAA} together with
  Proposition~\ref{lem:express-mu}), and $Q$ be the set of
  $\mathcal{A}_n$-locations. For each $n$-configuration $w$, let
  $M^{w}_{\state_w}$ be some balanced pointed model encoding the pair
  $(w,w)$, and $H(w)$ be the set of sets $Q_l\subseteq Q$ such that
  there is an accepting run $\tpl{T,r}$ of $\mathcal{A}_{n}$ over the
  pointed model $M^{w}_{\state_w}$ so that:
\begin{itemize}
\item $Q_l$ is the set of locations associated with the replicas of
  $\mathcal{A}_n$ in the run $\tpl{T,r}$ which read the left successor
  $\state_l$ of $\state_w$ in $M^{w}$, i.e., $Q_{l}=\{q\in Q\mid$ for
  some $x\in T,\,r(x)=(q,\state_{l})\}$. (Note that $Q_l=\emptyset$ if $\tpl{T,r}$ does not visit the left  successor $s_l$ of $s_w$.)
\end{itemize}
%By hypothesis, $H(w)\neq \emptyset$: indeed, by looking at the
%definition of formula $\phi_n$, each run of $\mathcal{A}_n$
%necessarily ``universally'' goes down deep enough into the model
%$M^w_{s_w}$ so that state $s_l$ is visited by any run. Additionally,
%Lemma~\ref{lemma:constructionSuccinctness2} establishes that an
%accepting run of $\mathcal{A}_n$ on input $M^w_{s_w}$ must
%exist. Moreover,
First, we show that $H(w)\neq \emptyset$. By hypothesis and Lemma \ref{lemma:constructionSuccinctness2},  
there must exist some accepting run of $\mathcal{A}_n$ over the input
$M^{w}_{s_w}$. Now, by construction,  $H(w)$ is a set of subsets of  
$Q$, and $H(w)$ is non-empty if and only if there is some accepting  
run of $\mathcal{A}_n$ over $M^{w}_{s_w}$. (If no  
accepting run of $\mathcal{A}_n$ visits the left successor $s_l$ of  
$s_w$ in $M^{w}$, then
$H(w)$ is a singleton containing just the empty set.) Hence,  
non-emptiness of $H(w)$ follows. Next, we prove the following.

\vspace{0.2cm}

\noindent \textbf{Claim:} for all $n$-configurations $w$ and $w'$ such
that $w\neq w'$, $H(w)\cap H(w')=\emptyset$.\vspace{0.1cm}

\noindent \textbf{Proof of the claim:} for a model $M$ and a set
$\States'\subseteq \States^{M}$, the \emph{restriction} of $M$ to $S'$
is defined in the obvious way. For $s\in\States^{M}$, let $[M_\state]$
denote the restriction of $M$ to the set of states reachable from
$\state$ in $M$. For all $n$-configurations $w$ and $dir\in\{l,r\}$,
let $\state_{w,dir}$ be the $dir$-successor of $\state_w$ in
$M^{w}$. We prove the claim by contradiction. So, assume that there
are two distinct $n$-configurations $w$ and $w'$ such that $H(w)\cap
H(w')\neq \emptyset$. Without loss of generality we can assume that
$M^{w}$ and $M^{w'}$ have no states in common.  Let
$M^{w,w'}_{\state_w}$ be any pointed model satisfying the following:
the successors of $\state_w$ in $M^{w,w'}$ are $\state_{w',l}$ and
$\state_{w,r}$, and
$[M^{w,w'}_{\state_{w',l}}]=[M^{w'}_{\state_{w',l}}]$ and
$[M^{w,w'}_{\state_{w,r}}]=[M^{w}_{\state_{w,r}}]$. Evidently,
$M^{w,w'}_{\state_w}$ is a pointed model encoding the pair
$(w',w)$. Since $w\neq w'$, by hypothesis and
Lemma~\ref{lemma:constructionSuccinctness2}, $\mathcal{A}_n$ does not
accept $M^{w,w'}_{\state_w}$. On the other hand, since there is $Q\in
H(w)\cap H(w')$, by definition of the sets $H(w)$ and $H(w')$ and the
semantics of $\PSAA$, it easily follows that there is an accepting run
of $\mathcal{A}_n$ over $M^{w,w'}_{\state_w}$, which is a
contradiction. Hence, the claim holds.%\qed 
\vspace{0.2cm}

By the claim above, it follows that for each $n$-configuration $w$, there is $Q_w\in H(w)$ (recall that $H(w)\neq \emptyset$) such that
for all $n$-configurations $w'$ distinct from $w$, $Q_w\notin H(w')$. Since the number of distinct $n$-configurations is $2^{2^{2^{n}}}$ and the number of subsets of $Q$ is $2^{|Q|}$, we obtain that $|Q|\geq 2^{2^{n}}$, and the result holds. %\qed
\end{proof}

\section{Conclusions and perspectives} \label{sec.conclusions}

\paragraph{Conclusions}
We conclude that we hope to have established a platform for structural refinement in various modal logics. We established results on axiomatization, complexity, expressivity, and we gave applications to software verification and design, and to dynamic epistemic logics. We clearly established the relation to bisimulation quantified logics: refinement quantification is bisimulation followed by relativization. The multi-agent refinement modal logic and the furthest generalization in the form of refinement $\mu$-calculus are only the beginning. One could think of refinement CTL, refinement PDL, (yet other) refinement epistemic logics, refinement with further structural restrictions or with protocol restrictions, and so on. Each of these logics may have different axiomatizations and complexities, and equal expressivity as the logic without refinement is certainly not to be expected; e.g., we estimate that refinement modal logic is more expressive than the base modal logic on the ${\mathcal KT}$ model class.

%@Hans this paragraph added on August 11 2013.

\paragraph{Recent results}
Following the initial submission of the paper, some further results have been obtained in this area, typically involving one of the authors. In \cite{DBLP:conf/jelia/BozzelliDP12} it was established that the complexity of refinement modal
logic for a single agent is \LINAEXPTIME-complete, which means that
the satisfiability of an $\logicRML$ formula can be decided by an
exponential-time bounded Alternating Turing Machine with a
linearly-bounded number of alternations. In \cite{hales.aiml:2012} an axiomatization of the multi-agent refinement modal logic of knowledge is given, among other results. As a generalization of quantifying over announcements (arbitrary announcements), in \cite{balbianietal:2008} semantics were also given for quantifying over action models and the question was posed how to axiomatize this logic: in \cite{hales2013arbitrary} it is shown that quantifying over action models is equally expressive as the refinement quantifier, i.e., `there is a refinement after which $\phi$ is true' means the same as `there is an action model such that after its execution $\phi$ is true'. This answers one of the open questions on logics with quantification over information change, posed in the recent survey \cite{hvd.wollic:2012}. That survey also puts various other proposals on propositional quantification in perspective, such as \cite{jfak.sabotage:2005}, \cite{economou:2010}, \cite{aucher.planning:2012} (going back to \cite{aucher:2010}), and \cite{wenetal:2011}---for details, see \cite{hvd.wollic:2012}. It should not be forgotten to mention that many of these, including our own proposal, go back to the original publication \cite{Fine:1970}.

\paragraph{Further research}
We wish to determine the complexity of model checking in the various refinement modal logics. On the further horizon loom the detailed investigation of {\em other} refinement logics, mainly refinement PDL and refinement CTL, and the exploration of their applications. The relation of refinement quantification to other forms of propositional quantification over information change also needs further investigation.

\section*{Acknowledgements}

%@hans anyone else anything to add?

We acknowledge a very insightful and detailed review from a journal referee. Hans van Ditmarsch is also affiliated to IMSc (Institute of Mathematical Sciences), Chennai, India. We acknowledge support from ERC project EPS 313360, and from EU 7th Framework Programme under grant agreement no.\ 295261 (MEALS).

\bibliographystyle{plain}
\bibliography{biblio2013,Sophie}

\end{document}